\numberwithin{equation}{section}
\begin{document}


\renewcommand{\theequation}{\arabic{section}.\arabic{equation}}
\theoremstyle{plain}
\newtheorem{theorem}{\bf Theorem}[section]
\newtheorem{lemma}[theorem]{\bf Lemma}
\newtheorem{corollary}[theorem]{\bf Corollary}
\newtheorem{proposition}[theorem]{\bf Proposition}
\newtheorem{definition}[theorem]{\bf Definition}
\newtheorem*{definition*}{\bf Definition}
\newtheorem*{example}{\bf Example}
\theoremstyle{remark}
\newtheorem*{remark}{\bf Remark}

\def\a{\alpha}  \def\cA{{\mathcal A}}     \def\bA{{\bf A}}  \def\mA{{\mathscr A}}
\def\b{\beta}   \def\cB{{\mathcal B}}     \def\bB{{\bf B}}  \def\mB{{\mathscr B}}
\def\g{\gamma}  \def\cC{{\mathcal C}}     \def\bC{{\bf C}}  \def\mC{{\mathscr C}}
\def\G{\Gamma}  \def\cD{{\mathcal D}}     \def\bD{{\bf D}}  \def\mD{{\mathscr D}}
\def\d{\delta}  \def\cE{{\mathcal E}}     \def\bE{{\bf E}}  \def\mE{{\mathscr E}}
\def\D{\Delta}  \def\cF{{\mathcal F}}     \def\bF{{\bf F}}  \def\mF{{\mathscr F}}
\def\c{\chi}    \def\cG{{\mathcal G}}     \def\bG{{\bf G}}  \def\mG{{\mathscr G}}
\def\z{\zeta}   \def\cH{{\mathcal H}}     \def\bH{{\bf H}}  \def\mH{{\mathscr H}}
\def\e{\eta}    \def\cI{{\mathcal I}}     \def\bI{{\bf I}}  \def\mI{{\mathscr I}}
\def\p{\psi}    \def\cJ{{\mathcal J}}     \def\bJ{{\bf J}}  \def\mJ{{\mathscr J}}
\def\vT{\Theta} \def\cK{{\mathcal K}}     \def\bK{{\bf K}}  \def\mK{{\mathscr K}}
\def\k{\kappa}  \def\cL{{\mathcal L}}     \def\bL{{\bf L}}  \def\mL{{\mathscr L}}
\def\l{\lambda} \def\cM{{\mathcal M}}     \def\bM{{\bf M}}  \def\mM{{\mathscr M}}
\def\L{\Lambda} \def\cN{{\mathcal N}}     \def\bN{{\bf N}}  \def\mN{{\mathscr N}}
\def\m{\mu}     \def\cO{{\mathcal O}}     \def\bO{{\bf O}}  \def\mO{{\mathscr O}}
\def\n{\nu}     \def\cP{{\mathcal P}}     \def\bP{{\bf P}}  \def\mP{{\mathscr P}}
\def\r{\varrho} \def\cQ{{\mathcal Q}}     \def\bQ{{\bf Q}}  \def\mQ{{\mathscr Q}}
\def\s{\sigma}  \def\cR{{\mathcal R}}     \def\bR{{\bf R}}  \def\mR{{\mathscr R}}
\def\S{\Sigma}  \def\cS{{\mathcal S}}     \def\bS{{\bf S}}  \def\mS{{\mathscr S}}
\def\t{\tau}    \def\cT{{\mathcal T}}     \def\bT{{\bf T}}  \def\mT{{\mathscr T}}
\def\f{\phi}    \def\cU{{\mathcal U}}     \def\bU{{\bf U}}  \def\mU{{\mathscr U}}
\def\F{\Phi}    \def\cV{{\mathcal V}}     \def\bV{{\bf V}}  \def\mV{{\mathscr V}}
\def\P{\Psi}    \def\cW{{\mathcal W}}     \def\bW{{\bf W}}  \def\mW{{\mathscr W}}
\def\o{\omega}  \def\cX{{\mathcal X}}     \def\bX{{\bf X}}  \def\mX{{\mathscr X}}
\def\x{\xi}     \def\cY{{\mathcal Y}}     \def\bY{{\bf Y}}  \def\mY{{\mathscr Y}}
\def\X{\Xi}     \def\cZ{{\mathcal Z}}     \def\bZ{{\bf Z}}  \def\mZ{{\mathscr Z}}
\def\O{\Omega}

\newcommand{\mc}{\mathscr {c}}

\newcommand{\gA}{\mathfrak{A}}          \newcommand{\ga}{\mathfrak{a}}
\newcommand{\gB}{\mathfrak{B}}          \newcommand{\gb}{\mathfrak{b}}
\newcommand{\gC}{\mathfrak{C}}          \newcommand{\gc}{\mathfrak{c}}
\newcommand{\gD}{\mathfrak{D}}          \newcommand{\gd}{\mathfrak{d}}
\newcommand{\gE}{\mathfrak{E}}
\newcommand{\gF}{\mathfrak{F}}           \newcommand{\gf}{\mathfrak{f}}
\newcommand{\gG}{\mathfrak{G}}           
\newcommand{\gH}{\mathfrak{H}}           \newcommand{\gh}{\mathfrak{h}}
\newcommand{\gI}{\mathfrak{I}}           \newcommand{\gi}{\mathfrak{i}}
\newcommand{\gJ}{\mathfrak{J}}           \newcommand{\gj}{\mathfrak{j}}
\newcommand{\gK}{\mathfrak{K}}            \newcommand{\gk}{\mathfrak{k}}
\newcommand{\gL}{\mathfrak{L}}            \newcommand{\gl}{\mathfrak{l}}
\newcommand{\gM}{\mathfrak{M}}            \newcommand{\gm}{\mathfrak{m}}
\newcommand{\gN}{\mathfrak{N}}            \newcommand{\gn}{\mathfrak{n}}
\newcommand{\gO}{\mathfrak{O}}
\newcommand{\gP}{\mathfrak{P}}             \newcommand{\gp}{\mathfrak{p}}
\newcommand{\gQ}{\mathfrak{Q}}             \newcommand{\gq}{\mathfrak{q}}
\newcommand{\gR}{\mathfrak{R}}             \newcommand{\gr}{\mathfrak{r}}
\newcommand{\gS}{\mathfrak{S}}              \newcommand{\gs}{\mathfrak{s}}
\newcommand{\gT}{\mathfrak{T}}             \newcommand{\gt}{\mathfrak{t}}
\newcommand{\gU}{\mathfrak{U}}             \newcommand{\gu}{\mathfrak{u}}
\newcommand{\gV}{\mathfrak{V}}             \newcommand{\gv}{\mathfrak{v}}
\newcommand{\gW}{\mathfrak{W}}             \newcommand{\gw}{\mathfrak{w}}
\newcommand{\gX}{\mathfrak{X}}               \newcommand{\gx}{\mathfrak{x}}
\newcommand{\gY}{\mathfrak{Y}}              \newcommand{\gy}{\mathfrak{y}}
\newcommand{\gZ}{\mathfrak{Z}}             \newcommand{\gz}{\mathfrak{z}}

\def\ve{\varepsilon}   \def\vt{\vartheta}    \def\vp{\varphi}    \def\vk{\varkappa}

\def\A{{\mathbb A}} \def\B{{\mathbb B}} \def\C{{\mathbb C}}
\def\dD{{\mathbb D}} \def\E{{\mathbb E}} \def\dF{{\mathbb F}} \def\dG{{\mathbb G}} \def\H{{\mathbb H}}\def\I{{\mathbb I}} \def\J{{\mathbb J}} \def\K{{\mathbb K}} \def\dL{{\mathbb L}}\def\M{{\mathbb M}} \def\N{{\mathbb N}} \def\O{{\mathbb O}} \def\dP{{\mathbb P}} \def\R{{\mathbb R}} \def\dQ{{\mathbb Q}}
\def\S{{\mathbb S}} \def\T{{\mathbb T}} \def\U{{\mathbb U}} \def\V{{\mathbb V}}\def\W{{\mathbb W}} \def\X{{\mathbb X}} \def\Y{{\mathbb Y}} \def\Z{{\mathbb Z}}

\newcommand{\1}{\mathbbm 1}
\newcommand{\dd}{\, \mathrm d}



\def\la{\leftarrow}              \def\ra{\rightarrow}            \def\Ra{\Rightarrow}
\def\ua{\uparrow}                \def\da{\downarrow}
\def\lra{\leftrightarrow}        \def\Lra{\Leftrightarrow}


\def\lt{\biggl}                  \def\rt{\biggr}
\def\ol{\overline}               \def\wt{\widetilde}
\def\no{\noindent}


\let\ge\geqslant                 \let\le\leqslant
\def\lan{\langle}                \def\ran{\rangle}
\def\/{\over}                    \def\iy{\infty}
\def\sm{\setminus}               \def\es{\emptyset}
\def\ss{\subset}                 \def\ts{\times}
\def\pa{\partial}                \def\os{\oplus}
\def\om{\ominus}                 \def\ev{\equiv}
\def\iint{\int\!\!\!\int}        \def\iintt{\mathop{\int\!\!\int\!\!\dots\!\!\int}\limits}
\def\el2{\ell^{\,2}}             \def\1{1\!\!1}
\def\sh{\sharp}
\def\wh{\widehat}

\def\all{\mathop{\mathrm{all}}\nolimits}
\def\where{\mathop{\mathrm{where}}\nolimits}
\def\as{\mathop{\mathrm{as}}\nolimits}
\def\Area{\mathop{\mathrm{Area}}\nolimits}
\def\arg{\mathop{\mathrm{arg}}\nolimits}
\def\adj{\mathop{\mathrm{adj}}\nolimits}
\def\const{\mathop{\mathrm{const}}\nolimits}
\def\det{\mathop{\mathrm{det}}\nolimits}
\def\diag{\mathop{\mathrm{diag}}\nolimits}
\def\diam{\mathop{\mathrm{diam}}\nolimits}
\def\dim{\mathop{\mathrm{dim}}\nolimits}
\def\dist{\mathop{\mathrm{dist}}\nolimits}
\def\Im{\mathop{\mathrm{Im}}\nolimits}
\def\Iso{\mathop{\mathrm{Iso}}\nolimits}
\def\Ker{\mathop{\mathrm{Ker}}\nolimits}
\def\Lip{\mathop{\mathrm{Lip}}\nolimits}
\def\rank{\mathop{\mathrm{rank}}\limits}
\def\Ran{\mathop{\mathrm{Ran}}\nolimits}
\def\Re{\mathop{\mathrm{Re}}\nolimits}
\def\Res{\mathop{\mathrm{Res}}\nolimits}
\def\res{\mathop{\mathrm{res}}\limits}
\def\sign{\mathop{\mathrm{sign}}\nolimits}
\def\supp{\mathop{\mathrm{supp}}\nolimits}
\def\Tr{\mathop{\mathrm{Tr}}\nolimits}
\def\AC{\mathop{\rm AC}\nolimits}
\def\pr{\mathop{\mathrm{pr}}\nolimits}
\def\esssup{\displaystyle \mathrm{ess} \sup}
\def\mes{\mathop{\mathrm{mes}}\nolimits}
\def\BBox{\hspace{1mm}\vrule height6pt width5.5pt depth0pt \hspace{6pt}}


\newcommand\nh[2]{\widehat{#1}\vphantom{#1}^{(#2)}}
\def\dia{\diamond}

\def\Oplus{\bigoplus\nolimits}




\def\qqq{\qquad}
\def\qq{\quad}
\let\ge\geqslant
\let\le\leqslant
\let\geq\geqslant
\let\leq\leqslant

\newcommand{\ca}{\begin{cases}}
\newcommand{\ac}{\end{cases}}
\newcommand{\ma}{\begin{pmatrix}}
\newcommand{\am}{\end{pmatrix}}
\renewcommand{\[}{\begin{equation}}
\renewcommand{\]}{\end{equation}}
\def\bu{\bullet}

\title[{Periodic Dirac operator with dislocation}]
{Periodic Dirac operator with dislocation}

\date{\today}

\author[Evgeny Korotyaev]{Evgeny Korotyaev}
\address{Department of Analysis,  Saint Petersburg State University,   Universitetskaya nab. 7/9, St.
Petersburg, 199034, Russia, \ korotyaev@gmail.com, \ e.korotyaev@spbu.ru}
\author[Dmitrii Mokeev]{Dmitrii Mokeev}
\address{Saint Petersburg State University,   Universitetskaya nab. 7/9, St.
Petersburg, 199034, Russia, \ mokeev.ds@yandex.ru}

\subjclass{} \keywords{dislocation, Dirac operator, periodic potential, resonances}

\begin{abstract}
    We consider a Dirac operator with a dislocation potential on the
    real line. The dislocation potential is a fixed periodic potential
    on the negative half-line and the same potential but shifted by real
    parameter $t$ on the positive half-line. Its spectrum has an
    absolutely continuous part (the union of bands separated by gaps)
    plus at most two eigenvalues in each non-empty gap. Its resolvent
    admits a meromorphic continuation onto a two-sheeted Riemann
    surface. We prove that it has only two simple poles on each open
    gap: on the first sheet (an eigenvalue) or on the second sheet (a
    resonance). These poles are called states and there are no other
    poles. We prove: 1) each state is a continuous function of $t$, and
    we obtain its local asymptotic; 2) for each $t$ states in the gap are
    distinct; 3) in general, a state is non-monotone function of $t$ but it
    can be monotone for specific potentials; 4) we construct
    examples of operators, which have: a) one eigenvalue and one resonance
    in any finite number of gaps; b) two eigenvalues or two resonances
    in any finite number of gaps; c) two static virtual states
    in one gap.
\end{abstract}

\maketitle

\section{Introduction} \label{p0}

    \subsection{Dirac operator}

    We consider a Dirac operator $H_t$ with a dislocation potential
    $V_t(x)$ on $L^2(\R, \C^2)$ given by
        $$
        \begin{aligned}
            H_t f = J f' + V_t f,\ \ f = \ma f_1 \\ f_2 \am,\ \ J = \ma 0 & 1 \\ -1 & 0 \am,
        \end{aligned}
        $$
    and
    $$
        V_t(x) = V(x) \c_- (x) + V(x + t) \c_+ (x),\ \ x \in \R.
    $$
    where $t \in \R$ is a dislocation parameter.
    Here $\c_{\pm}$ are characteristic functions of $\R_{\pm}$ and $V$ is a $2 \ts 2$ matrix-valued
    1-periodic function, which belongs to the real Hilbert space $\cP$ defined by
    $$
        \cP = \rt\{ \, \left. V= \ma q_1 & q_2 \\ q_2 & -q_1 \am \, \right| \, q_1,q_2
        \in L_{real}^2(\T) \, \rt\},\qq         \T = \R/\Z,
    $$
    equipped with the norm $\left\| V \right\|_{\cP}^2 ={1\/2} \int_{\T} \Tr V^2(x)dx$. Below we
    show that the operator $H_t$ is self-adjoint and its spectrum consists of an absolutely
    continuous part $\s_{ac}(H_t) = \s(H_0) = \cup_{n \in \Z} \s_n$ plus at most two eigenvalues
    in each non-empty gap $\g_n$, where the bands $\s_n$ and gaps $\g_n$ are given by
    $$
        \s_n = [\a_{n-1}^+,\a_n^-],\qq \g_n = (\a_{n}^-,\a_n^+),\qq \text{satisfying}\qq
        \a_{n-1}^+ < \a_n^- \leq \a_n^+\qq \forall n \in \Z.
    $$
    The sequence $\a_n^\pm$, $n \in \Z$, is the spectrum of the equation $J y' + Vy = \l y$
    with the condition of 2-periodicity, $y(x+2)=y(x)$ $(x\in \R)$.
    If some  gap degenerates, i.e. $\g_n= \es $, then the corresponding bands $\s_{n}$
    and $\s_{n+1}$ touch. This happens when $\a_n^-=\a_n^+$ is a double eigenvalue
    of the 2-periodic problem. Generally, the eigenfunctions corresponding to the eigenvalues
    $\a_{2n}^{\pm}$ are 1-periodic, those for $\a_{2n+1}^{\pm}$ are anti-periodic, i.e.
    $y(x+1)=-y(x)$ ($x\in\R$).

    For the Dirac operator $H_t$ we introduce the two-sheeted Riemann surface $\L$ obtained by joining
    the upper and lower rims of two copies of the cut plane $\C \sm \s(H_0)$ in the usual
    (crosswise) way (see Fig.~1). We denote the $n$-th gap on the first, physical,
    sheet $\L_1$ by $\g_n^{(1)}$ and its counterpart on the second, nonphysical,
    sheet $\L_2$ by $\g_n^{(2)}$, and we set a circle gap
    $\g_n^c=\ol \g_n^{(1)} \cup \ol \g_n^{(2)}$ (see Fig.~2).

    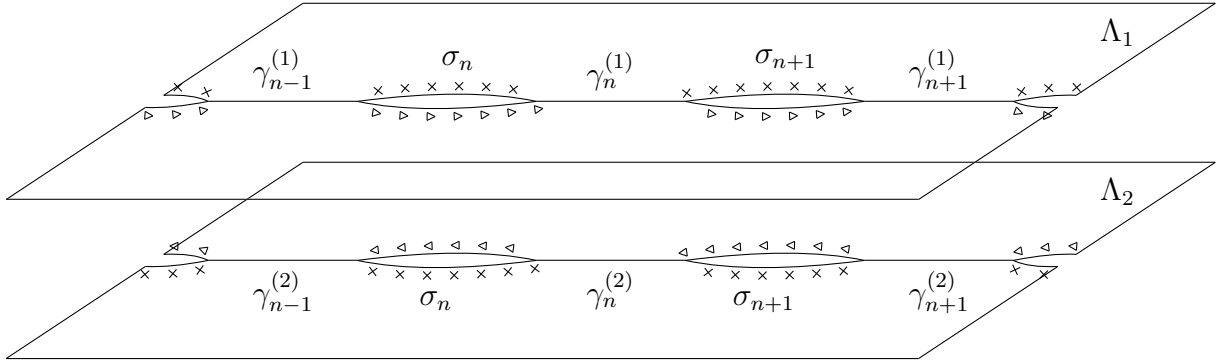
\begin{figure}[h!]
        \begin{center}
            \begin{picture}(452.405,140.078)(0.297988,-5.79967)
              \color[rgb]{1.000000,1.000000,1.000000}
              \qbezier(75.869347,37.287704)(70.168396,40.058857)(59.197762,39.565815)
              \qbezier(59.197762,39.565815)(111.267685,74.277496)(111.267685,74.277496)
              \put(111.267685,74.277496){\line(1,0){341.435608}}
              \qbezier(452.703278,74.277496)(400.633759,39.565815)(400.633759,39.565815)
              \qbezier(400.633759,39.565815)(391.151581,40.058857)(377.135956,37.287704)
              \qbezier(377.135956,37.287704)(382.833740,34.516678)(393.807251,35.009693)
              \qbezier(393.807251,35.009693)(341.737701,0.297988)(341.737701,0.297988)
              \put(341.737701,0.297988){\line(-1,0){341.439728}}
              \qbezier(0.297988,0.297988)(52.367546,35.009693)(52.367546,35.009693)
              \qbezier(52.367546,35.009693)(61.851196,34.516678)(75.869347,37.287704)
              \color[rgb]{0.000000,0.000000,0.000000}
              \qbezier(75.869347,37.287704)(70.168396,40.058857)(59.197762,39.565815)
              \qbezier(59.197762,39.565815)(111.267685,74.277496)(111.267685,74.277496)
              \put(111.267685,74.277496){\line(1,0){341.435608}}
              \qbezier(452.703278,74.277496)(400.633759,39.565815)(400.633759,39.565815)
              \qbezier(400.633759,39.565815)(391.151581,40.058857)(377.135956,37.287704)
              \qbezier(377.135956,37.287704)(382.833740,34.516678)(393.807251,35.009693)
              \qbezier(393.807251,35.009693)(341.737701,0.297988)(341.737701,0.297988)
              \put(341.737701,0.297988){\line(-1,0){341.439728}}
              \qbezier(0.297988,0.297988)(52.367546,35.009693)(52.367546,35.009693)
              \qbezier(52.367546,35.009693)(61.851196,34.516678)(75.869347,37.287704)
              \put(75.869347,37.287704){\line(1,0){55.790535}}
              \put(198.604980,37.287704){\line(1,0){55.790436}}
              \put(321.345520,37.287704){\line(1,0){55.790436}}
              \put(92.455414,21.138973){$\g_{n-1}^{(2)}$}
              \put(217.445358,21.138973){$\g_{n}^{(2)}$}
              \put(337.929474,21.138973){$\g_{n+1}^{(2)}$}
              \qbezier(198.604980,37.287704)(173.292908,42.727982)(131.659882,37.287704)
              \qbezier(131.659882,37.287704)(156.971954,31.847580)(198.604980,37.287704)
              \qbezier(321.345520,37.287704)(296.033203,42.727982)(254.395416,37.287704)
              \qbezier(254.395416,37.287704)(279.712952,31.847580)(321.345520,37.287704)
              \put(409.677551,60.077446){$\L_2$}
              \qbezier(400.735138,41.059696)(397.845337,42.761356)(397.845337,42.761356)
              \qbezier(397.845337,42.761356)(400.938904,44.055187)(400.938904,44.055187)
              \qbezier(400.938904,44.055187)(400.735138,41.059696)(400.735138,41.059696)
              \qbezier(390.513947,40.796917)(387.388275,42.020287)(387.388275,42.020287)
              \qbezier(387.388275,42.020287)(390.238922,43.784679)(390.238922,43.784679)
              \qbezier(390.238922,43.784679)(390.513947,40.796917)(390.513947,40.796917)
              \qbezier(380.480438,39.448177)(377.264465,40.408768)(377.264465,40.408768)
              \qbezier(377.264465,40.408768)(379.962524,42.404522)(379.962524,42.404522)
              \qbezier(379.962524,42.404522)(380.480438,39.448177)(380.480438,39.448177)
              \qbezier(315.358154,39.969627)(312.616943,41.906563)(312.616943,41.906563)
              \qbezier(312.616943,41.906563)(315.812866,42.933800)(315.812866,42.933800)
              \qbezier(315.812866,42.933800)(315.358154,39.969627)(315.358154,39.969627)
              \qbezier(305.296570,41.063610)(302.406799,42.773098)(302.406799,42.773098)
              \qbezier(302.406799,42.773098)(305.508362,44.055187)(305.508362,44.055187)
              \qbezier(305.508362,44.055187)(305.296570,41.063610)(305.296570,41.063610)
              \qbezier(295.219940,41.483078)(292.243805,43.027950)(292.243805,43.027950)
              \qbezier(292.243805,43.027950)(295.267120,44.482582)(295.267120,44.482582)
              \qbezier(295.267120,44.482582)(295.219940,41.483078)(295.219940,41.483078)
              \qbezier(285.155334,41.408600)(282.112946,42.824089)(282.112946,42.824089)
              \qbezier(282.112946,42.824089)(285.069000,44.408104)(285.069000,44.408104)
              \qbezier(285.069000,44.408104)(285.155334,41.408600)(285.155334,41.408600)
              \qbezier(275.109802,40.930214)(272.016235,42.228062)(272.016235,42.228062)
              \qbezier(272.016235,42.228062)(274.910034,43.921890)(274.910034,43.921890)
              \qbezier(274.910034,43.921890)(275.109802,40.930214)(275.109802,40.930214)
              \qbezier(265.092377,40.087265)(261.955658,41.275303)(261.955658,41.275303)
              \qbezier(261.955658,41.275303)(264.790253,43.071110)(264.790253,43.071110)
              \qbezier(264.790253,43.071110)(265.092377,40.087265)(265.092377,40.087265)
              \qbezier(255.113098,38.899223)(251.937225,39.981369)(251.937225,39.981369)
              \qbezier(251.937225,39.981369)(254.705566,41.871231)(254.705566,41.871231)
              \qbezier(254.705566,41.871231)(255.113098,38.899223)(255.113098,38.899223)
              \qbezier(189.865341,40.353859)(187.077911,42.220234)(187.077911,42.220234)
              \qbezier(187.077911,42.220234)(190.241745,43.329876)(190.241745,43.329876)
              \qbezier(190.241745,43.329876)(189.865341,40.353859)(189.865341,40.353859)
              \qbezier(179.800720,41.236057)(176.883835,42.894653)(176.883835,42.894653)
              \qbezier(176.883835,42.894653)(179.961334,44.231647)(179.961334,44.231647)
              \qbezier(179.961334,44.231647)(179.800720,41.236057)(179.800720,41.236057)
              \qbezier(169.728088,41.506668)(166.727890,43.012291)(166.727890,43.012291)
              \qbezier(166.727890,43.012291)(169.732117,44.506073)(169.732117,44.506073)
              \qbezier(169.732117,44.506073)(169.728088,41.506668)(169.728088,41.506668)
              \qbezier(159.666519,41.314548)(156.608078,42.694702)(156.608078,42.694702)
              \qbezier(156.608078,42.694702)(159.549072,44.313953)(159.549072,44.313953)
              \qbezier(159.549072,44.313953)(159.666519,41.314548)(159.666519,41.314548)
              \qbezier(149.629013,40.734180)(146.524414,42.000713)(146.524414,42.000713)
              \qbezier(146.524414,42.000713)(149.398148,43.725857)(149.398148,43.725857)
              \qbezier(149.398148,43.725857)(149.629013,40.734180)(149.629013,40.734180)
              \qbezier(139.619614,39.797081)(136.470840,40.953800)(136.470840,40.953800)
              \qbezier(136.470840,40.953800)(139.290390,42.777012)(139.290390,42.777012)
              \qbezier(139.290390,42.777012)(139.619614,39.797081)(139.619614,39.797081)
              \qbezier(74.720551,39.420673)(72.407211,41.851658)(72.407211,41.851658)
              \qbezier(72.407211,41.851658)(75.739960,42.243721)(75.739960,42.243721)
              \qbezier(75.739960,42.243721)(74.720551,39.420673)(74.720551,39.420673)
              \qbezier(64.432114,41.118420)(61.491428,42.726021)(61.491428,42.726021)
              \qbezier(61.491428,42.726021)(64.545845,44.117920)(64.545845,44.117920)
              \qbezier(64.545845,44.117920)(64.432114,41.118420)(64.432114,41.118420)
              \qbezier(50.771683,33.617794)(53.559498,30.418343)(53.559498,30.418343)
              \qbezier(50.567818,30.622202)(53.763359,33.413933)(53.763359,33.413933)
              \qbezier(60.997383,33.645199)(64.255661,30.928045)(64.255661,30.928045)
              \qbezier(61.267994,30.657534)(63.985146,33.915810)(63.985146,33.915810)
              \qbezier(71.046730,34.868568)(74.520607,32.429756)(74.520607,32.429756)
              \qbezier(71.564262,31.912222)(74.003075,35.386101)(74.003075,35.386101)
              \qbezier(136.164719,34.833233)(138.674088,31.414263)(138.674088,31.414263)
              \qbezier(135.710007,31.869061)(139.128769,34.378437)(139.128769,34.378437)
              \qbezier(146.210236,33.617794)(148.994644,30.414427)(148.994644,30.414427)
              \qbezier(145.998444,30.626118)(149.205414,33.406105)(149.205414,33.406105)
              \qbezier(156.282867,33.115917)(159.235901,30.069338)(159.235901,30.069338)
              \qbezier(156.235687,30.116415)(159.283081,33.068844)(159.283081,33.068844)
              \qbezier(166.348480,33.123749)(169.433990,30.210566)(169.433990,30.210566)
              \qbezier(166.434799,30.124243)(169.347672,33.209969)(169.347672,33.209969)
              \qbezier(176.397018,33.547234)(179.588928,30.751587)(179.588928,30.751587)
              \qbezier(176.597763,30.555555)(179.389191,33.747181)(179.389191,33.747181)
              \qbezier(186.419464,34.339188)(189.704727,31.657372)(189.704727,31.657372)
              \qbezier(186.720581,31.351528)(189.402603,34.641117)(189.402603,34.641117)
              \qbezier(196.405762,35.472324)(199.785385,32.904228)(199.785385,32.904228)
              \qbezier(196.809280,32.500317)(199.377869,35.876232)(199.377869,35.876232)
              \qbezier(261.649506,34.409855)(264.249207,31.057430)(264.249207,31.057430)
              \qbezier(261.273102,31.433836)(264.625610,34.033447)(264.625610,34.033447)
              \qbezier(271.706085,33.421764)(274.541656,30.261456)(274.541656,30.261456)
              \qbezier(271.545502,30.426170)(274.702271,33.257050)(274.702271,33.257050)
              \qbezier(281.779694,33.072758)(284.770905,30.065424)(284.770905,30.065424)
              \qbezier(281.771667,30.073252)(284.778931,33.064926)(284.778931,33.064926)
              \qbezier(291.840302,33.202141)(294.953918,30.320276)(294.953918,30.320276)
              \qbezier(291.957733,30.202635)(294.835510,33.319782)(294.835510,33.319782)
              \qbezier(301.881805,33.727604)(305.100830,30.963379)(305.100830,30.963379)
              \qbezier(302.108673,30.735928)(304.873016,33.954952)(304.873016,33.954952)
              \qbezier(311.892212,34.613716)(315.204590,31.963213)(315.204590,31.963213)
              \qbezier(312.225464,31.633783)(314.875366,34.946960)(314.875366,34.946960)
              \qbezier(376.872986,35.664539)(378.676727,31.821985)(378.676727,31.821985)
              \qbezier(375.853180,32.841496)(379.695526,34.645031)(379.695526,34.645031)
              \qbezier(387.071075,33.511898)(389.960876,30.402683)(389.960876,30.402683)
              \qbezier(386.961670,30.516308)(390.070282,33.402092)(390.070282,33.402092)
              \put(154.882629,20.285786){$\s_n$}
              \put(272.188873,20.285786){$\s_{n+1}$}
              \color[rgb]{1.000000,1.000000,1.000000}
              \qbezier(75.869347,97.289268)(70.168396,100.061272)(59.197762,99.563362)
              \qbezier(59.197762,99.563362)(111.267685,134.278656)(111.267685,134.278656)
              \put(111.267685,134.278656){\line(1,0){341.435608}}
              \qbezier(452.703278,134.278656)(400.633759,99.563362)(400.633759,99.563362)
              \qbezier(400.633759,99.563362)(391.151581,100.061272)(377.135956,97.289268)
              \qbezier(377.135956,97.289268)(382.833740,94.515152)(393.807251,95.011261)
              \qbezier(393.807251,95.011261)(341.737701,60.299477)(341.737701,60.299477)
              \put(341.737701,60.299477){\line(-1,0){341.439728}}
              \qbezier(0.297988,60.299477)(52.367546,95.011261)(52.367546,95.011261)
              \qbezier(52.367546,95.011261)(61.851196,94.515152)(75.869347,97.289268)
              \color[rgb]{0.000000,0.000000,0.000000}
              \qbezier(75.869347,97.289268)(70.168396,100.061272)(59.197762,99.563362)
              \qbezier(59.197762,99.563362)(111.267685,134.278656)(111.267685,134.278656)
              \put(111.267685,134.278656){\line(1,0){341.435608}}
              \qbezier(452.703278,134.278656)(400.633759,99.563362)(400.633759,99.563362)
              \qbezier(400.633759,99.563362)(391.151581,100.061272)(377.135956,97.289268)
              \qbezier(377.135956,97.289268)(382.833740,94.515152)(393.807251,95.011261)
              \qbezier(393.807251,95.011261)(341.737701,60.299477)(341.737701,60.299477)
              \put(341.737701,60.299477){\line(-1,0){341.439728}}
              \qbezier(0.297988,60.299477)(52.367546,95.011261)(52.367546,95.011261)
              \qbezier(52.367546,95.011261)(61.851196,94.515152)(75.869347,97.289268)
              \put(75.869347,97.289268){\line(1,0){55.790535}}
              \put(198.604980,97.289268){\line(1,0){55.790436}}
              \put(321.345520,97.289268){\line(1,0){55.790436}}
              \put(92.455414,105.080582){$\g_{n-1}^{(1)}$}
              \put(217.445358,103.638184){$\g_{n}^{(1)}$}
              \put(337.929474,105.080582){$\g_{n+1}^{(1)}$}
              \qbezier(198.604980,97.289268)(173.292908,102.729385)(131.659882,97.289268)
              \qbezier(131.659882,97.289268)(156.971954,91.848999)(198.604980,97.289268)
              \qbezier(321.345520,97.289268)(296.033203,102.729385)(254.395416,97.289268)
              \qbezier(254.395416,97.289268)(279.712952,91.848999)(321.345520,97.289268)
              \put(409.677551,120.077606){$\L_1$}
              \qbezier(402.232727,100.959183)(399.445313,104.155121)(399.445313,104.155121)
              \qbezier(402.437469,103.951363)(399.241547,101.162941)(399.241547,101.162941)
              \qbezier(392.007507,100.932076)(388.749359,103.649231)(388.749359,103.649231)
              \qbezier(391.733490,103.919235)(389.019379,100.661072)(389.019379,100.661072)
              \qbezier(381.957977,99.708405)(378.484009,102.147621)(378.484009,102.147621)
              \qbezier(381.441040,102.664551)(379.001923,99.190872)(379.001923,99.190872)
              \qbezier(316.839691,99.743736)(314.330322,103.162415)(314.330322,103.162415)
              \qbezier(317.294403,102.707710)(313.875641,100.198540)(313.875641,100.198540)
              \qbezier(306.791138,100.959183)(304.010773,104.162148)(304.010773,104.162148)
              \qbezier(307.001953,103.951363)(303.798981,101.170967)(303.798981,101.170967)
              \qbezier(296.721558,101.461052)(293.765503,104.507431)(293.765503,104.507431)
              \qbezier(296.764709,104.460258)(293.722321,101.508224)(293.722321,101.508224)
              \qbezier(286.656921,101.453018)(283.571411,104.366905)(283.571411,104.366905)
              \qbezier(286.570618,104.453239)(283.657715,101.366699)(283.657715,101.366699)
              \qbezier(276.607391,101.029442)(273.416473,103.821869)(273.416473,103.821869)
              \qbezier(276.407654,104.021622)(273.612213,100.829697)(273.612213,100.829697)
              \qbezier(266.585938,100.237778)(263.296661,102.919510)(263.296661,102.919510)
              \qbezier(266.279816,103.221634)(263.601807,99.935860)(263.601807,99.935860)
              \qbezier(256.599640,99.104645)(253.219009,101.672844)(253.219009,101.672844)
              \qbezier(256.191101,102.076355)(253.623520,98.696831)(253.623520,98.696831)
              \qbezier(191.355911,100.167221)(188.756180,103.519745)(188.756180,103.519745)
              \qbezier(191.732315,103.143341)(188.379776,100.543625)(188.379776,100.543625)
              \qbezier(181.298325,101.154907)(178.463730,104.311699)(178.463730,104.311699)
              \qbezier(181.458923,104.151108)(178.303131,101.316521)(178.303131,101.316521)
              \qbezier(171.225693,101.504219)(168.233505,104.511452)(168.233505,104.511452)
              \qbezier(171.233719,104.503418)(168.226501,101.512245)(168.226501,101.512245)
              \qbezier(161.165115,101.374733)(158.047455,104.252480)(158.047455,104.252480)
              \qbezier(161.046677,104.373940)(158.164902,101.257294)(158.164902,101.257294)
              \qbezier(151.123596,100.849770)(147.904572,103.610085)(147.904572,103.610085)
              \qbezier(150.895737,103.836929)(148.131409,100.617905)(148.131409,100.617905)
              \qbezier(141.109177,99.963257)(137.795807,102.613365)(137.795807,102.613365)
              \qbezier(140.779953,102.943588)(138.129044,99.630013)(138.129044,99.630013)
              \qbezier(76.132126,98.912529)(74.328484,102.750870)(74.328484,102.750870)
              \qbezier(77.151535,101.735085)(73.309082,99.931946)(73.309082,99.931946)
              \qbezier(65.933823,101.065582)(63.044132,104.170174)(63.044132,104.170174)
              \qbezier(66.043633,104.060760)(62.934319,101.174980)(62.934319,101.174980)
              \put(280.349365,111.758530){$\s_{n+1}$}
              \put(163.043121,110.924408){$\s_{n}$}
              \qbezier(52.265564,93.517380)(55.159172,91.815720)(55.159172,91.815720)
              \qbezier(55.159172,91.815720)(52.061699,90.521790)(52.061699,90.521790)
              \qbezier(52.061699,90.521790)(52.265564,93.517380)(52.265564,93.517380)
              \qbezier(62.491264,93.780060)(65.616241,92.556686)(65.616241,92.556686)
              \qbezier(65.616241,92.556686)(62.761776,90.792297)(62.761776,90.792297)
              \qbezier(62.761776,90.792297)(62.491264,93.780060)(62.491264,93.780060)
              \qbezier(72.524849,95.128799)(75.736046,94.168205)(75.736046,94.168205)
              \qbezier(75.736046,94.168205)(73.042381,92.172455)(73.042381,92.172455)
              \qbezier(73.042381,92.172455)(72.524849,95.128799)(72.524849,95.128799)
              \qbezier(137.647247,94.607346)(140.383453,92.666496)(140.383453,92.666496)
              \qbezier(140.383453,92.666496)(137.192535,91.639259)(137.192535,91.639259)
              \qbezier(137.192535,91.639259)(137.647247,94.607346)(137.647247,94.607346)
              \qbezier(147.707840,93.513466)(150.593613,91.803871)(150.593613,91.803871)
              \qbezier(150.593613,91.803871)(147.496048,90.517876)(147.496048,90.517876)
              \qbezier(147.496048,90.517876)(147.707840,93.513466)(147.707840,93.513466)
              \qbezier(157.784485,93.093895)(160.760605,91.549026)(160.760605,91.549026)
              \qbezier(160.760605,91.549026)(157.737289,90.094383)(157.737289,90.094383)
              \qbezier(157.737289,90.094383)(157.784485,93.093895)(157.784485,93.093895)
              \qbezier(167.850082,93.168373)(170.888428,91.752991)(170.888428,91.752991)
              \qbezier(170.888428,91.752991)(167.932388,90.168869)(167.932388,90.168869)
              \qbezier(167.932388,90.168869)(167.850082,93.168373)(167.850082,93.168373)
              \qbezier(177.894608,93.646759)(180.985153,92.348907)(180.985153,92.348907)
              \qbezier(180.985153,92.348907)(178.091339,90.655083)(178.091339,90.655083)
              \qbezier(178.091339,90.655083)(177.894608,93.646759)(177.894608,93.646759)
              \qbezier(187.913040,94.489716)(191.045746,93.301674)(191.045746,93.301674)
              \qbezier(191.045746,93.301674)(188.215164,91.505966)(188.215164,91.505966)
              \qbezier(188.215164,91.505966)(187.913040,94.489716)(187.913040,94.489716)
              \qbezier(197.891312,95.677742)(201.067169,94.595612)(201.067169,94.595612)
              \qbezier(201.067169,94.595612)(198.295822,92.701828)(198.295822,92.701828)
              \qbezier(198.295822,92.701828)(197.891312,95.677742)(197.891312,95.677742)
              \qbezier(263.135071,94.223114)(265.927490,92.356743)(265.927490,92.356743)
              \qbezier(265.927490,92.356743)(262.758667,91.247200)(262.758667,91.247200)
              \qbezier(262.758667,91.247200)(263.135071,94.223114)(263.135071,94.223114)
              \qbezier(273.204681,93.340919)(276.121582,91.682320)(276.121582,91.682320)
              \qbezier(276.121582,91.682320)(273.043091,90.345322)(273.043091,90.345322)
              \qbezier(273.043091,90.345322)(273.204681,93.340919)(273.204681,93.340919)
              \qbezier(283.277313,93.070412)(286.272491,91.560867)(286.272491,91.560867)
              \qbezier(286.272491,91.560867)(283.273285,90.070900)(283.273285,90.070900)
              \qbezier(283.273285,90.070900)(283.277313,93.070412)(283.277313,93.070412)
              \qbezier(293.337891,93.262527)(296.396332,91.878456)(296.396332,91.878456)
              \qbezier(296.396332,91.878456)(293.455353,90.263016)(293.455353,90.263016)
              \qbezier(293.455353,90.263016)(293.337891,93.262527)(293.337891,93.262527)
              \qbezier(303.375397,93.842796)(306.480988,92.576363)(306.480988,92.576363)
              \qbezier(306.480988,92.576363)(303.603241,90.851120)(303.603241,90.851120)
              \qbezier(303.603241,90.851120)(303.375397,93.842796)(303.375397,93.842796)
              \qbezier(313.385803,94.779892)(316.530548,93.619255)(316.530548,93.619255)
              \qbezier(316.530548,93.619255)(313.715027,91.799965)(313.715027,91.799965)
              \qbezier(313.715027,91.799965)(313.385803,94.779892)(313.385803,94.779892)
              \qbezier(378.284241,95.156296)(380.597900,92.725319)(380.597900,92.725319)
              \qbezier(380.597900,92.725319)(377.264465,92.333252)(377.264465,92.333252)
              \qbezier(377.264465,92.333252)(378.284241,95.156296)(378.284241,95.156296)
              \qbezier(388.568695,93.458557)(391.513672,91.847038)(391.513672,91.847038)
              \qbezier(391.513672,91.847038)(388.459259,90.459053)(388.459259,90.459053)
              \qbezier(388.459259,90.459053)(388.568695,93.458557)(388.568695,93.458557)
            \end{picture}
            \label{fig1}
            \caption{Structure of the Riemann surface $\L$.
            Crosses and triangles show how rims on $\L_1$ and $\L_2$ are joined together.}
        \end{center}
    \end{figure}

    We also introduce the following notation: if $\l \in \L_1$, then $\l_* \in \L_2$
    is projection of $\l$ on the other sheet. Conversely, if $\l \in \L_2$, then
    we introduce $\l_* \in \L_1$. We denote a clockwise oriented arc
    on any open circle gap $\g_n^c$ between $\a, \b \in \g_n^c$ by $\lan \a,\b \ran$ (see Fig.~2
    and definition in Section~\ref{p2}).

    \begin{figure}[h!]
        \begin{center}
            \begin{picture}(220.103,86.6066)(0,-5.79967)
              \put(166.634537,65.673355){$\a$}
              \put(170.409637,61.456200){\circle*{3}}
              \put(100.997322,6.249347){$\g_n^{(2)}$}
              \qbezier(170.409637,61.456200)(110.052307,74.789108)(49.693455,61.456200)
              \qbezier(49.693455,61.456200)(-0.309708,45.360840)(49.693455,29.265635)
              \linethickness{0.75mm}
              \qbezier(49.693455,29.265635)(110.052307,15.932574)(170.409637,29.265635)
              \qbezier(170.409637,29.265635)(220.412949,45.360840)(170.409637,61.456200)
              \put(46.045025,16.716452){$\b$}
              \put(49.693455,29.265635){\circle*{3}}
              \put(100.997322,74.672974){$\g_n^{(1)}$}
              \put(201.525894,42.250847){$\a_n^+$}
              \put(195.410034,45.360867){\circle*{3}}
              \put(3.916632,42.250847){$\a_n^-$}
              \put(24.693855,45.360867){\circle*{3}}
            \end{picture}
            \label{fig2}
            \caption{The circle gap $\g_n^{c}$ for some $n \in \Z$. The thick line denotes the
            clockwise oriented arc $\lan \a,\b \ran$ and the thin line denotes the clockwise
            oriented arc $\lan \b,\a \ran$.}
        \end{center}
    \end{figure}

    Let $\eta \in C_o^\iy(\R,\C^2)$. We introduce $f_{\eta}(\l,t)=((H_t-\l)^{-1}\eta,\eta)$,
    $(\l,t) \in \C_+ \ts \R$. Below we show that each function $f_{\eta}(\cdot,t)$,
    $t \in \R$, is analytic on $\C_+$ and admits a meromorphic continuation
    from $\C_+ \ss \L_1$ onto the Riemann
    surface $\L$ with at most two simple poles on each open circle gap $\g_n^c$. Moreover,
    their positions depend on $t$ and there are no other poles.
    \begin{definition*}
        Let $H_t$ be a Dirac operator with dislocation $t \in \R$. Then
        \begin{enumerate}[i), leftmargin=*]
            \item if $f_{\eta}(\cdot,t)$ has a pole $\l \in \L_1$ for some
            $\eta \in C_o^\iy(\R, \C^2)$, then $\l$ is an eigenvalue of $H_t$;
            \item if $f_{\eta}(\cdot,t)$ has a pole $\l \in \L_2$ for some
            $\eta \in C_o^\iy(\R,\C^2)$, then $\l$ is called a resonance of $H_t$;
            \item if $\l = \a_n^+$ or $\l = \a_n^-$ for some $n \in \Z$ and the function
            $z \mapsto f_{\eta}(\a_n^{\pm} \mp z^2)$ has a pole at $0$ for some
            $\eta \in C_o^{\iy}(\R,\C^2)$, then $\l$ is called a virtual state of $H_t$.
        \end{enumerate}
        If $\l$ is an eigenvalue, a resonance or a virtual state, then $\l$ is called a state of
        $H_t$.
    \end{definition*}

    \subsection{Schr{\"o}dinger operator}
    In our paper about Dirac operators with dislocation potentials we
    use the methods from the paper \cite{Kor01a}, where
    the same problem for the Schr{\"o}dinger operator on the line was
    considered. The main difference between them is the following:

\no    i)  The periodic potential for Dirac operators consists of
two         functions $q_1$, $q_2$.

\no    ii)  Roughly speaking, the spectral problem for Dirac
operators
        corresponds to Schr{\"o}dinger operator with distributions.

\no  iii) The Dubrovin equation for Dirac operators is
        more complicated than for Schr{\"o}dinger operator (see e.g. \cite{MokKor}). Maybe it is the
        main point.

    In order to explain these differences we describe the results from
    \cite{Kor01a} about the dislocation problem for the Schr{\"o}dinger
    operator on the line. We introduce a operator $h_t$ with a
    dislocation potential $p_t$ on $L^2(\R)$ given by
    $$
        h_t f = - f'' + p_t f, \qqq p_t = p\c_- + p (\cdot + t)\c_+,
    $$
    where the potential $p \in L^2(\T)$. The operator $h_t$ has only
    absolutely continuous spectrum (the union of bands $s_n$ separated
    by gaps $g_n$) plus at most two eigenvalues in each open gap. The
    bands $s_n$ and the gaps $g_n$ are given by
    $$
        s_n = [a_{n-1}^+,a_n^-],\qq g_n = (a_{n}^-,a_n^+),\qq
        {\rm satisfying}\qq a_{n-1}^+ < a_n^- \leq a_n^+\qq \forall n \ge 1.
    $$
    The sequence $a_0^+$, $a_n^\pm$, $n\ge 1$, is the spectrum of the equation $-y'' + py = \l y$
    with 2-periodic boundary conditions, i.e. $y(x+2)=y(x)$, $x\in \R$.
    If some gap degenerates, i.e. $g_n=\es$, then the corresponding bands $s_{n}$
    and $s_{n+1}$ touch. This happens when the number $a_n^- = a_n^+$ is a double eigenvalue
    of the 2-periodic problem.  Generally, the eigenfunctions corresponding to eigenvalues
    $a_{2n}^{\pm}$ are 1-periodic, those for $a_{2n+1}^{\pm}$ are anti-periodic, i.e.
    $y(x+1)=-y(x)$, $x\in\R$.

    We shortly recall the main properties of the operator
    $h^+_t y= -y''+p(\cdot+t) y$, $y(0)=0$ on $L^2(\R_+)$, see e.g.
    \cite{Dub, Kor99, trub77, Zh}. The operator $h^+_t$ has only the
    absolutely continuous spectrum $\s_{ac}(h^+_t)=\s_{ac}(h_t)$ and
    at most one eigenvalue in each open gap $g_n$, $n\ge 1$.

    The operators $h^+_t$, $h_t$ have the same two-sheeted Riemann
    surface $\cL$, which
    is obtained by joining the upper and lower rims of two copies of the cut plane $\C \sm \s(h_0)$
    in the usual (crosswise) way. We denote the $n$-th gap on the first, physical, sheet $\cL_1$
    by $g_n^{(1)}$ and its counterpart on the second, nonphysical, sheet $\cL_2$ by $g_n^{(2)}$,
    and we set a circle gap $g_n^c=\ol g_n^{(1)}\cup \ol g_n^{(2)}$. As above, if $\l \in \cL$, then
    $\l_* \in \cL$ is projection of $\l$ on the other sheet.

    For the operators $h^+_t$, $h_t$ we define eigenvalues, resonances and
    virtual states, and arcs as we have defined them for the
    Dirac operator. The positions of these states depend on $t$. In
    each open circle gap $g_n^c$, $n\ge 1$, the operator $h^+_t$ has the
    unique state $\r_n(t)$, and its projection on the complex plane
    coincides with the $n$-th Dirichlet eigenvalue on the unit interval,
    i.e. $-y'' + q(x+t)y=\r_n(t) y$, $y(0)=y(1)=0$.  The function
    $\r_n(\cdot)$ belongs to $C^1(\T)$, and $\r_n(t)$ changes sheets when it
    hits $a_n^+$ or $a_n^-$ and makes $n$ complete revolutions around
    the circle gap $g_n^c$ when $t$ runs through $[0,1]$. It is
    important that the point $\r_n(t)$ changes the direction of motion
    when it hits the gap end.
    In \cite{Kor01a} it is proved that each operator $h_t$, $t \in \R$, has exactly
    two distinct states $\x_n^{\pm}(t)$ in any open circle gap $g_n^c$, $n \ge 1$,  and they satisfy:

        i) The mappings $\R \ni t \mapsto \x_n^{\pm}(t) \in \ol \g_n^c$ are continuous,
        $2$-periodic, and satisfy\\
        $
            \x_n^{\pm}(0) = a_n^{\pm},\qq \x_{2n}^{\pm}(t+1) = \x_{2n}^{\pm}(t),
            \qq \x_{2n+1}^{\pm}(t+1) = \x_{2n+1}^{\mp}(t),\qq t \in \R;
        $

        ii) $\x_n^{\pm}(t)$ changes sheets when it hits $a_n^+$ or $a_n^-$ and makes ${n\/2}$
        complete revolutions around the circle gap $g_n^c$ when $t$ runs through $[0,1]$;

        iii) $\r_n(t)$ is a state of $h_t$ if and only if $\r_n(t) = \r_n(0)_*$;

        iv) $\x_n^{\pm}(t) \in \lan \r_n(t), \r_n(0)_* \ran$ if and only if
        $\x_n^{\mp}(t) \in \lan \r_n(0)_*,\r_n(t) \ran$.

    It is important that states $\r_n(t)$ and $\r_n(0)_*$ control the dynamics of $\x^{\pm}_n(t)$.

    \subsection{Literature survey}

    The Schr\"odinger operator with dislocation was studied in \cite{Kor01a}, where
    the different properties were obtained.
    In other cases, the various  dislocation problems were discussed in the papers
    \cite{DPR09, D19, Fef, Kor01a, Kor05, HK11a, Hemp, HKS15} (see a good review in \cite{D19}).
    Below we show that, the dynamics of
    the states of $H_t$ is controlled by states of Dirac operator on the half-line, as in case of
    Schr\"odinger operator. Thus, we need results for the Dirac operator on the half-line
    with the Dirichlet boundary condition at zero. This operator was studied in \cite{Kor01, MokKor}
    (see also reference therein). Note that the states of the operator $H_t$ are less smooth as
    function of $t$ than the states $\x^{\pm}(t)$ of the operator $h_t$. Thus we need
    a specific version of the implicit function theorem from \cite{MokKor} to eliminate
    arising problems and to use arguments from \cite{Kor01a}.

    The Dirac and the Schr{\"o}dinger operators with dislocation of a periodic potential are
    examples of the operators such that the potential has different asymptotics at positive
    and negative infinity. In general, such operators have the form $A = A_0 + W$, where $A_0$
    is a free operator acting on $L^2(\R)$ and the potential $W$ coincides in some sense with
    the model potentials $W^{\pm}$ at $\pm \iy$. Thus, spectral properties of the operator $A$
    are related to spectral properties of the operators $A^{\pm} = A_0 + W^{\pm}$ acting on
    $L^2(\R_{\pm})$. For a more detailed description of such operators, see \cite{DavSim78, Kor05}.
    In case of dislocation $W^-$ is a periodic potential, and $W^+(x) = W^-(x+t)$. In general,
    a periodic potential with a dislocation has a step discontinuity at zero. But adding a small
    local perturbation to a dislocation potential, we obtain a continuous
    potential with a different asymptotics. We also note that we can rewrite the potential $V_t$ as
    follows
    $$
        V_t(x) = V(x) + \chi_+(x) (V(x+t) - V(x)),\qq x \in \R,
    $$
    where $\chi_+(x) (V(x+t) - V(x))$ is a periodic potential (a perturbation) on the positive half-line.
    Moreover, this perturbation is not small for some $t \in \R$. These facts imply that we
    cannot use the standard perturbation theory.

    Operators with different asymptotics at positive and negative infinity have various physical
    and mathematical applications. Many of them arise from solid state physics, where such a
    model describes contact of two crystals, contact of crystal and vacuum, and also describes
    some irregularity within a crystal structure. See the papers \cite{D19, Fef, Hemp} about
    applications of these models to solid state physics. Applications also arise from the
    connection of operators with nonlinear equations (see e.g. \cite{Egor09, MonKot}).

    Scattering theory for Schr\"odinger operator on the line with different spatial asymptotics
    was considered in the papers \cite{DavSim78, Ges97}. The scattering theory for Dirac operator
    with different spatial asymptotics was investigated in the article \cite{RuiBon} in connection
    with the Klein paradox. Multi-dimensional Schr{\"o}dinger operators periodic in all but one
    dimension was considered in the papers \cite{DavSim78, Hemp}.

    Our paper is organized as follows:\\
    In Section \ref{p1} we present our main results.\\
    In Section \ref{p2} we describe notations and facts about the periodic Dirac operator.\\
    In Section \ref{p8} we study properties of the Weyl-Titchmarsh function.\\
    In Section \ref{p4} we consider the Dirac operator with dislocation.\\
    In Section \ref{p6} we prove the main theorems.\\
    In Section \ref{p5} (Appendix) we give technical lemmas and the specific implicit function
    theorem.

\section{Main results} \label{p1}
    In order to formulate main theorem we introduce a Dirac operator $H^{+}_t$, $t \in \R$,
    acting on $L^2(\R_{+},\C^2)$ and given by
    $$
        H^{+}_t f(x) = J f'(x) + V(x+t) f(x),\ \ x \in \R_{+},\ \
        f(x) = \ma f_1(x) \\ f_2(x) \am,\ \ f_1(0) = 0.
    $$
    Its spectrum consists of an absolutely continuous part
    $\s_{ac}(H^{+}_t) = \s(H_0) = \cup_{n \in \Z} \s_n$ plus at most one eigenvalue in
    each non-empty gap $\g_n$ (see \cite{W87, MokKor}). For this operator we introduce
    states (eigenvalues, resonances and  virtual states) as we have defined them above for
    $H_t$. In \cite{MokKor} it is proved that the operator $H^+_t$ has exactly one state $\m_n(t)$
    in each open circle gap $\g_n^c$ for any $t \in \R$ and there
    are no other states. Its projection on the complex plane coincides with
    the Dirichlet eigenvalue on the unit interval, i.e. $Jy' + V(\cdot+t)y=\m_n(t) y$,
    $y_1(0) =y_1(1)=0$, where $y = \ma y_1 & y_2 \am^{\top}$. Moreover,
    $\m_n(\cdot) \in \cH^1(\T, \g_n^c)$, where $\cH^{1}(\T,\g_n^c)$ is the Sobolev space of
    1-periodic functions $u: \R \to \g_n^c$, which will be defined in Section \ref{p2}.

    There exists a simple correspondence between resonances and eigenvalues of $H_t$.
    \begin{proposition} \label{p4l4}
        Let $H_t$ be the Dirac operator with the potential
        $$
            V_t(x) = V(x) \c_-(x) + V(x+t) \c_+(x),\qq x \in \R
        $$
        for some $(t,V) \in \R \ts \cP$, and let $\widetilde{H}_t$ be the Dirac operator with the potential
        $$
            \widetilde{V}_t(x) = V(x+t) \c_-(x) + V(x) \c_+(x),\qq x \in \R.
        $$
        Then $\l \in \L$ is an eigenvalue (a resonance) of $H_t$ if and only if $\l_ {*}$ is
        a resonance (an eigenvalue) of $\widetilde{H}_t$, where $\l_{*}$ is the projection of $\l$
        on the other sheet of $\L$.
    \end{proposition}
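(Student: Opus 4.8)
The plan is to describe the states of $H_t$ and $\widetilde H_t$ as zeros of suitable Wronskians on the common surface $\L$, and then to read off the statement from the antisymmetry of the Wronskian together with the sheet involution $\l\mapsto\l_*$. First I would note that the periodic operators with potentials $V$ and $V(\cdot+t)$ have the same $2$-periodic spectrum, so they share the Riemann surface $\L$, and the point $\l_*$ is unambiguous. For a potential $U\in\cP$ let $\psi^U(\cdot,\l)$, $\l\in\L$, denote the Bloch solution of $Jy'+Uy=\l y$, fixed as a single $\C^2$-valued function on $\L$ away from branch points. Since the two sheets of $\L$ carry the two Floquet exponents (whose product is $1$, as $\Tr J(\l-U)=0$), the involution $\l\mapsto\l_*$ interchanges the two Bloch solutions; with the normalization that $\psi^U(\cdot,\l)$ lies in $L^2$ near $+\iy$ for $\l\in\L_1$ in a gap, its companion $\psi^U(\cdot,\l_*)$ then lies in $L^2$ near $-\iy$.

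Next I would invoke the resolvent construction of Sections~\ref{p8}--\ref{p4} to continue $f_{\eta}(\cdot,t)$ meromorphically onto $\L$ as a ratio whose denominator is the Wronskian
$$ W_t(\l)=\{\psi^V(\cdot,\l_*),\,\psi^{V(\cdot+t)}(\cdot,\l)\}\big|_{x=0},\qquad \{u,v\}:=u^\top J v . $$
Here the first factor is the solution of $H_t$ that is $L^2$ near $-\iy$ (governed by $V$ on $\R_-$) and the second the solution that is $L^2$ near $+\iy$ (governed by $V(\cdot+t)$ on $\R_+$); because a Dirac solution is continuous at $0$ and $\{u,v\}$ is constant for solutions of one and the same equation, the matching of these two solutions is measured exactly by $W_t$. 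Its zeros on $\L$ are precisely the poles of $f_{\eta}(\cdot,t)$: a zero on $\L_1$ means the two genuinely decaying solutions are proportional and gives an $L^2$ eigenfunction, hence an eigenvalue, while a zero on $\L_2$ is a resonance. The identical construction for $\widetilde H_t$, whose left and right potentials are $V(\cdot+t)$ and $V$, yields
$$ \widetilde W_t(\l)=\{\psi^{V(\cdot+t)}(\cdot,\l_*),\,\psi^V(\cdot,\l)\}\big|_{x=0}. $$

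The conclusion is then formal. Using $(\l_*)_*=\l$ and antisymmetry $\{u,v\}=-\{v,u\}$,
$$ \widetilde W_t(\l_*)=\{\psi^{V(\cdot+t)}(\cdot,\l),\,\psi^V(\cdot,\l_*)\}\big|_{x=0} =-\{\psi^V(\cdot,\l_*),\,\psi^{V(\cdot+t)}(\cdot,\l)\}\big|_{x=0} =-W_t(\l). $$
Hence $W_t(\l)=0$ if and only if $\widetilde W_t(\l_*)=0$, i.e. $\l$ is a state of $H_t$ if and only if $\l_*$ is a state of $\widetilde H_t$; and since $\l\in\L_1\Leftrightarrow\l_*\in\L_2$, an eigenvalue of $H_t$ corresponds to a resonance of $\widetilde H_t$ and conversely, which is the asserted equivalence.

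The hard part will be the first step rather than the last: one must justify rigorously that the poles of the continued function $f_{\eta}(\cdot,t)$ on all of $\L$ are exactly the zeros of $W_t$, with the sheet deciding eigenvalue versus resonance. This rests on the explicit resolvent kernel built from the two Weyl/Bloch solutions and on the single-valuedness of $\psi^U$ on $\L$ with the interchange under $\l\mapsto\l_*$ — precisely the properties established in the Weyl--Titchmarsh analysis of Section~\ref{p8}. One should also check that $W_t$ does not vanish identically and that $\psi^V$ and $\psi^{V(\cdot+t)}$ are normalized compatibly, so that the displayed identity holds up to a nonzero scalar; since only the zero set enters, this is enough. Once the Wronskian description of states is secured, the proposition drops out of antisymmetry and the sheet involution alone.
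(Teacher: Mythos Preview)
Your approach is essentially that of the paper: reduce states to zeros of a Wronskian-type function, then use antisymmetry together with the sheet involution $\l\mapsto\l_*$ to obtain $\widetilde W_t(\l_*)=-W_t(\l)$. In the paper's notation this is exactly the identity $\widetilde w(\l_*,\t)=-w(\l,t)$, derived via the Weyl functions and $m_\pm(\l_*)=m_\mp(\l)$.

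There is, however, a genuine gap in your claim that the poles of $f_\eta(\cdot,t)$ on $\L$ coincide with the zeros of $W_t$. With the normalization $\psi^U=\vt+m_+\vp$ used in Section~\ref{p2}, the Bloch solution $\psi^{V(\cdot+t)}(\cdot,\l)$ has a \emph{pole} at $\l=\m_n(t)$ (the Dirichlet eigenvalue for the shifted potential), and $\psi^V(\cdot,\l_*)$ has a pole at $\l=\m_n(0)_*$. At these points your $W_t$ is not a finite number whose vanishing can be tested; the resolvent kernel has cancelling singularities in numerator and denominator, and whether a state is present is not decided by $W_t$ alone. The paper handles this explicitly in Lemma~\ref{p4l3} by introducing a second detector $\G(\l,t)=w(\l,t)/(m_+(\l,t)m_-(\l,0))$, which is regular at those points and vanishes there iff $\m_n(t)=\m_n(0)_*$. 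The proof of the proposition then requires, in addition to your Wronskian identity, the companion identity $\widetilde\G(\l_*,\t)=-\G(\l,t)$ and the observation that $\m_n(t)=\m_n(0)_*$ is equivalent to $\widetilde\m_n(\t)=\widetilde\m_n(0)_*$. Both follow from the same involution and are easy to add, but as written your argument does not cover the exceptional case $\l\in\{\m_n(t),\m_n(0)_*\}$, and your remark that ``only the zero set enters'' glosses over precisely the place where the zero set of $W_t$ fails to describe the states.
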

    \begin{remark}
        It is easy to see that $\widetilde{V}_t$ is also a dislocation potential, i.e.
        $\widetilde{V}_t(x) = Q_{\t}(x)$, $x \in \R$, where $Q(x) = V(x+t)$, $x \in \R$, and $\t = -t$.
        Thus we can study the motion of resonances of a Dirac operator with dislocation as
        the motion of eigenvalues of another Dirac operator with dislocation.
    \end{remark}

    We present our main theorem, about existence and smoothness of the states.

    \begin{theorem} \label{p0t1}
        Let $(t, V) \in \R \ts \cP$ and let a gap $\g_n$ be open for some $n \in \Z$. Then the
        operator $H_t$ has exactly two distinct states $\l_n^{\pm}(t) \in \g_n^c$ such that
        $\l_n^{\pm}(\cdot) \in \cH^1(2\T, \g_n^c)$, $\l_n^{\pm}(0) = \a_n^{\pm}$, and they satisfy:
        \begin{enumerate}[i)]
            \item $\l_n^{\pm}(t) \in \lan \m_n(t),\m_n(0)_* \ran$ if and only if
            $\l_n^{\mp}(t) \in \lan \m_n(0)_*,\m_n(t) \ran$, where
            $\lan \cdot,\cdot \ran$ is the arc on $\g_n^c$ defined above.
            \item $\l_n^{+}(t) = \m_n(t)$ or $\l_n^{-}(t) = \m_n(t)$ if and only if
            $\m_n(t) = \m_n(0)_*$.
            \item if $\m_n(t)$ makes $r(n)$ revolutions around $\g_n^c$ when $t$ runs through
            $[0,1]$, then $\l_n^{+}(t)$ and $\l_n^{-}(t)$ make $r(n)/2$ revolutions around $\g_n^c$.
            \item $\l_n^{\pm}(t+1) = \l_n^{\pm}(t)
            \text{ if $r(n)$ is even};\ \ \l_n^{\pm}(t+1) = \l_n^{\mp}(t)\text{ if $r(n)$ is odd}$.
        \end{enumerate}
    \end{theorem}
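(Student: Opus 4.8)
The plan is to realize the states of $H_t$ as the zeros, on the circle gap $\g_n^c$, of a Wronskian-type function built from the two half-line problems, and then to read off (i)--(iv) from the motion of a single moving pole. First I would introduce the Weyl solutions $\psi_\pm(\cdot,\l,t)$ of $Jy'+V_ty=\l y$ that are square-integrable at $\pm\iy$, together with their Weyl--Titchmarsh functions $m_+(\l,t)$ (for the right half-line operator $H_t^+$ with potential $V(\cdot+t)$) and $m_-(\l)$ (for the left half-line operator with potential $V$), whose properties are established in Section~\ref{p8}. A point $\l\in\L$ is a pole of $f_\eta(\cdot,t)$, i.e. a state of $H_t$, precisely when $\psi_+$ and $\psi_-$ are linearly dependent at $0$, that is when their Wronskian vanishes; equivalently when $F(\l,t):=m_+(\l,t)-m_-(\l)=0$. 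The key inputs are that, on $\g_n^c$, the only pole of $m_+(\cdot,t)$ is the half-line state $\m_n(t)$ of $H_t^+$ (from \cite{MokKor}), while the only pole of $m_-$ is $\m_n(0)_*$: this last identification follows from the reflection symmetry encoded in Proposition~\ref{p4l4}, together with the fact that at $t=0$ the operator $H_0$ is purely periodic and has no states in the open gap.

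Next I would prove existence, the exact count, and (i)--(ii) by a monotonicity argument. The two poles $\m_n(t)$ and $\m_n(0)_*$ split the circle $\g_n^c$ into the complementary arcs $\lan\m_n(t),\m_n(0)_*\ran$ and $\lan\m_n(0)_*,\m_n(t)\ran$. Using the Herglotz-type monotonicity of $m_\pm$ on the gap established in Section~\ref{p8}, $F(\cdot,t)$ is strictly monotone on each open arc and runs from $+\iy$ to $-\iy$ between consecutive poles; hence it has exactly one zero in each arc. This produces the two states $\l_n^\pm(t)$, one in each arc, which is precisely (i), and since the arcs are disjoint the two states are distinct for every $t$ (they can meet only at a common endpoint $\a_n^\pm$, and the monotonicity keeps them apart). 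For (ii): if a state coincides with $\m_n(t)$, then at $\l=\m_n(t)$ the right Weyl solution already satisfies the Dirichlet condition $y_1(0)=0$, so the Wronskian vanishes only if the left Weyl solution does as well, i.e. only if $\m_n(t)$ is simultaneously the left Dirichlet eigenvalue $\m_n(0)_*$; the converse is the same computation read backwards. At $t=0$ the two poles are mutual projections, the arcs each contain one band edge, and the zeros are forced to the edges, giving $\l_n^\pm(0)=\a_n^\pm$.

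For the dynamics (iii)--(iv) I would use a homotopy/winding argument, taking as given from \cite{MokKor} that $\m_n(t)$ makes $r(n)$ revolutions around $\g_n^c$ over $t\in[0,1]$ while $\m_n(0)_*$ stays fixed. The four points always occur in the cyclic order $\m_n(t),\ \l_n^{+},\ \m_n(0)_*,\ \l_n^{-}$, so the two zeros interlace the two poles; the local model $x\mapsto\cot\pi x+\cot\pi(x-s)$, whose zeros sit exactly at $s/2$ and $s/2+1/2$ when the moving pole is at $s$, shows that advancing the moving pole by one full turn advances each zero by exactly half a turn and interchanges the two zeros. By a degree argument this topological relation is stable under deformation to the actual $F$, so as $\m_n(t)$ completes $r(n)$ revolutions each of $\l_n^\pm$ completes $r(n)/2$, which is (iii); and after $t\mapsto t+1$ (where $V$ is $1$-periodic, so $H_{t+1}=H_t$ and the pair of states returns as a set) the labels are preserved if $r(n)$ is even and interchanged if $r(n)$ is odd, which is (iv); in particular $\l_n^\pm$ are $2$-periodic.

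The main obstacle is the regularity statement $\l_n^\pm(\cdot)\in\cH^1(2\T,\g_n^c)$. Away from the band edges one differentiates the identity $F(\l_n^\pm(t),t)=0$ by the implicit function theorem, using $\partial_\l F\neq0$ (strict monotonicity) and the $\cH^1$-regularity of $\m_n(\cdot)$. The difficulty is exactly at the branch points $\a_n^\pm$, where $\L$ has square-root ramification, the Weyl functions behave like $\sqrt{\cdot}$, $\partial_\l F$ degenerates, and a state may touch the edge; here the classical implicit function theorem fails and one must invoke the specific implicit function theorem of \cite{MokKor} (recalled in the Appendix, Section~\ref{p5}), adapted to the local coordinate $z$ with $\l=\a_n^\pm\mp z^2$, to conclude $\cH^1$-regularity through the edges. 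This lower smoothness, inherited from the more intricate Dubrovin equation for the Dirac case, is the essential point distinguishing the argument from the Schr\"odinger treatment of \cite{Kor01a}.
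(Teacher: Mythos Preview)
Your overall architecture is right and close to the paper's: states are zeros of $w(\l,t)=m_+(\l,t)-m_-(\l,0)$ on $\g_n^c$, the two poles $\m_n(t)$ and $\m_n(0)_*$ organize the picture, and the $\cH^1$ regularity through the branch points $\a_n^{\pm}$ indeed needs the special implicit function theorem from \cite{MokKor}. Two points deserve correction, one of which is a genuine gap.

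\textbf{The gap: coinciding poles.} Your monotonicity count ``one zero of $F$ per arc between consecutive poles'' breaks down precisely at the times $t$ where $\m_n(t)=\m_n(0)_*$. At such a $t$ both $m_+(\cdot,t)$ and $m_-(\cdot,0)$ have a simple pole at the same point, and a short residue computation (using $a(\m_n(t),t)=-b(\m_n(t))$ from Lemma~\ref{p2l4} and the fact that $\partial_\l\vp_1(1,\m_n(\cdot),\cdot)$ keeps a fixed sign) shows the residues do \emph{not} cancel. Hence $F$ has a simple pole there, not a zero; your argument then yields only \emph{one} zero on the punctured circle and misses the state sitting exactly at $\m_n(t)=\m_n(0)_*$. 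The paper handles this by switching from $w$ to $\G(\l,t)=1/m_-(\l,0)-1/m_+(\l,t)$ in a neighborhood of such points (Lemma~\ref{p4l3}\,ii)), and the local behavior there is controlled by Lemma~\ref{p4l8}. You must do the same; your treatment of (ii) actually hints at the right mechanism but does not feed back into the existence/count argument. Relatedly, the cases $\m_n(t)=\a_n^{\pm}$ or $\m_n(0)=\a_n^{\pm}$ (pole at a branch point) also need separate handling. A smaller remark: the identification of the pole of $m_-(\cdot,0)$ as $\m_n(0)_*$ comes straight from $m_-(\l)=m_+(\l_*)$ (see \er{p2e17}), not from Proposition~\ref{p4l4}.

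\textbf{The winding argument.} Your cotangent ``local model'' plus degree argument for (iii)--(iv) is unnecessary and, as written, not rigorous. Once (i) is established, (iii) follows directly: the two states are trapped one in each of the arcs $\lan\m_n(t),\m_n(0)_*\ran$ and $\lan\m_n(0)_*,\m_n(t)\ran$; since $\m_n(0)_*$ is fixed, each full revolution of $\m_n(t)$ swaps the two arcs, hence swaps $\l_n^+$ and $\l_n^-$, so each state advances by half a revolution. This is exactly how the paper proves (iii), and (iv) is then immediate from $V_{t+1}=V_t$. Your interlacing intuition is correct, but you should replace the homotopy sketch by this two-line deduction from (i).
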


    \begin{remark}
        Results of Theorem \ref{p0t1} are similar to the case of the Schr\"odinger operator from
        \cite{Kor01a}. The difference is that there is no control of $r(n)$,
        since the motion of the Dirichlet eigenvalue $\m_n(t)$ is non-monotone and more
        complicated for the Dirac operator than for the Schr\"odinger operator (see \cite{MokKor}).
        However, Theorem \ref{p0t1} associates the number of revolutions of the $\m_n(t)$ and
        $\l^{\pm}_n(t)$ as well as for the Schr\"odinger operator.
    \end{remark}

    We describe the number of the revolutions for specific case.
    \begin{theorem} \label{p0t2}
        Let $V \in \cP$ and let a gap $\g_n$ be open for some $n \in \Z$. Suppose that
        \[ \label{p1e16}
            \sign(q_1(t) + \a_n^-) = \sign(q_1(t) + \a_n^+) = \const \neq 0
            \text{ for almost all $t \in [0,1]$}.
        \]
        Then each state of $H_t$ in $\g_n^c$ makes $\left| n \right|/2$ complete revolutions
        when $t$ runs through $[0,1]$.
    \end{theorem}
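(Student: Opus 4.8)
The plan is to deduce everything from a single revolution count for the half-line Dirichlet eigenvalue $\m_n(t)$. By Theorem \ref{p0t1}(iii), if $\m_n(t)$ makes $r(n)$ revolutions around $\g_n^c$ as $t$ runs through $[0,1]$, then each state $\l_n^{\pm}(t)$ makes $r(n)/2$ revolutions. Hence it suffices to prove that hypothesis \er{p1e16} forces $r(n)=|n|$.

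First I would examine the motion of $\m_n(t)$ near the gap ends using the Dubrovin equation from \cite{MokKor}. Evaluating the Dirac equation $Jy'+V(\cdot+t)y=\l y$ at $x=0$ for a Dirichlet eigenfunction $y=(y_1,y_2)^{\top}$ (so $y_1(0)=0$), the second row gives $y_1'(0)=-(q_1(t)+\l)y_2(0)$, with $y_2(0)\neq 0$ by uniqueness. At a gap end $\l=\a_n^{\pm}$ this reads $y_1'(0)=-(q_1(t)+\a_n^{\pm})y_2(0)$, so in the local coordinate $z$ with $\l=\a_n^{\pm}\mp z^2$ the branch-point velocity of $\m_n(t)$ is proportional to $q_1(t)+\a_n^{\pm}$. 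Condition \er{p1e16} makes $q_1(t)+\a_n^-$ and $q_1(t)+\a_n^+$ share one fixed nonzero sign for almost every $t$; consequently $\m_n(t)$ meets each branch point transversally and always with the same orientation, so its motion around $\g_n^c$ is strictly monotone and never reflects at a gap end.

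Granting monotonicity, the number of revolutions equals the number of $t\in[0,1)$ at which $\m_n(t)$ passes a fixed reference point, which I would take to be $\a_n^+$. Now $\m_n(t)=\a_n^+$ exactly when $\a_n^+$ is the $n$-th Dirichlet eigenvalue of $V(\cdot+t)$ on $[0,1]$. Writing $\psi$ for the (anti)periodic Bloch function at the simple band edge $\a_n^+$, the shifted function $\psi(\cdot+t)$ solves the shifted equation and satisfies the Dirichlet conditions $\psi_1(t)=0$, $\psi_1(1+t)=\pm\psi_1(t)=0$ precisely when $\psi_1(t)=0$; a short monodromy computation gives the converse, so $\m_n(t)=\a_n^+$ if and only if $\psi_1(t)=0$. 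Thus $r(n)$ equals the number of zeros of $\psi_1$ in $[0,1)$, which by the oscillation theory for the periodic Dirac operator (see Section \ref{p2}) is exactly $|n|$. Therefore $r(n)=|n|$ and each state makes $|n|/2$ revolutions.

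The main obstacle is the orientation bookkeeping in the second step: one must verify from the Dubrovin equation of \cite{MokKor} that the sheet-changing velocities at $\a_n^+$ and $\a_n^-$ carry exactly the factors $q_1(t)+\a_n^+$ and $q_1(t)+\a_n^-$ multiplied by a manifestly signed quantity, and that a single rotational direction corresponds precisely to these two factors having the same sign (rather than opposite signs), so that \er{p1e16} is indeed the sharp no-reflection, constant-orientation condition. Once the velocity factors and their relative orientation are pinned down, the transversality rules out tangential returns (double zeros of $\psi_1$), and the reduction to counting simple zeros of $\psi_1$ together with the oscillation theorem closes the argument.
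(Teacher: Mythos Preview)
Your overall strategy matches the paper's exactly: reduce via Theorem~\ref{p0t1}(iii) to showing that $\m_n(t)$ makes $r(n)=|n|$ revolutions. The paper's proof of Theorem~\ref{p0t2} is then a one-line citation: condition~\er{p1e16} is precisely the hypothesis of Theorem~2.3 in \cite{MokKor}, which gives $r(n)=|n|$ directly.

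What you do differently is attempt to re-derive that cited result from scratch. Your sketch is along the right lines---the Dubrovin-type equation for $\m_n(t)$ in \cite{MokKor} does produce the factors $q_1(t)+\a_n^{\pm}$ at the branch points, and the zero-counting for the periodic eigenfunction component is the standard mechanism behind the revolution count---but the argument as written is incomplete in exactly the places you flag: you have not actually derived the branch-point velocity formula (the relation $y_1'(0)=-(q_1(t)+\l)y_2(0)$ is correct but does not by itself give $\dot z_n$), and the ``short monodromy computation'' identifying $\{\m_n(t)=\a_n^+\}$ with $\{\psi_1(t)=0\}$ as the \emph{$n$-th} eigenvalue (rather than some other index) requires the Pr\"ufer/oscillation bookkeeping that \cite{MokKor} carries out. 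Since all of this is already packaged in Theorem~2.3 of \cite{MokKor}, the cleanest fix is simply to cite it, as the paper does.
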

    \begin{remark}
        1) As an example, if $q_1 = 0$, then condition (\ref{p1e16}) is satisfied for any
        $n \in \Z$, in this case the Dirac operator is a supersymmetry charge
        (see Chapter 5 in \cite{Thall}).

        2) If $q_1 \in L^{\iy}(\T)$, then condition (\ref{p1e16})
        holds true for large enough $n \in \Z$.
    \end{remark}

    By Theorem \ref{p0t1}, $H_t$ at $t = 0$ has exactly two virtual states
    $\l_n^{\pm}(0) = \a_n^{\pm}$ in each open circle gap $\g_n^c$, $n \in \Z$.
    In order to describe the motion of states, we determine their local asymptotics as $t \to 0$.
    We introduce the following functions
    \[
        \begin{aligned}
            g_n^{\pm}(z,t) &= z^2(q_1(t) + \a_n^{\pm}) - 2zq_2(t) - (q_1(t)-\a_n^{\pm}),\quad
            Q_n^{\pm}(t) = \int_{0}^{t} \left| q_1(\t) + \a_n^{\pm} \right| d\t,\\
            G_n^{\pm}(t) &= \int_{0}^t \left| g_n^{\pm}(m_+(\a_n^{\pm},0),\t) \right| d\t,\quad
            W_n^{\pm}(t) = \left| t \right|^{1/2}
            \left| \int_{0}^t \left| V(\t) + \a_n^{\pm} \1_2 \right|^2 d\t \right|^{1/2},\\
            \kappa_n^{\pm}(t) &= \frac{\mp \left| 2 M_n^{\pm} \right|^{1/2}}
            {2\left\|\vp(\cdot,\a_n^{\pm},t)\right\|^2},\quad
            \varkappa_n^{\pm} = \frac{(-1)^n\vp_1(1,\a_n^{\pm},0)}
            {2 \left| 2 M_n^{\pm} \right|^{1/2}},
        \end{aligned}
    \]
    where $(n,t,z) \in \Z \ts \R \ts \C$, and $\1_2$ is the identity matrix $2 \times 2$. We
    define the norms and introduce the fundamental
    solution $\vp(x,\l,t)$, the Weyl-Titchmarsh function $m_+(\l,t)$, and the effective
    masses $\pm M_n^{\pm} > 0$ in Section \ref{p2}.
    \begin{theorem} \label{p0t5}
        Let $V \in \cP$ and let a gap $\g_n$ be open for some $n \in \Z$. Then there exist
        $z_n^{\pm} \in \cH^1([-\ve,\ve])$  for some $\ve > 0$ such that $z_n^{\pm}(0) = 0$ and:
        \begin{enumerate}[i)]
            \item $\l_n^{\pm}(t) = \a_n^{\pm} \mp (z_n^{\pm}(t))^2$ are states of $H_t$ in $\g_n^c$
            for any $t \in [-\ve,\ve]$;
            \item $\l_n^{\pm}(t) \in \L_j$ if and only if $(-1)^j z_n^{\pm}(t) < 0$ for any
            $t \in [-\ve,\ve]$ and $j = 1,2$;
        \end{enumerate}
        and for $t \to 0$ the following asymptotics hold true:
        \begin{align}
            z_n^{\pm}(t) &= \kappa_n^{\pm}(0) \int_{0}^t (q_1(\t) + \a_n^{\pm}) d\t +
            O(Q_n^{\pm}(t)W_n^{\pm}(t)), && \a_n^{\pm} = \m_n(0), \label{p1e7} \\
            z_n^{\pm}(t) &= \varkappa_n^{\pm} \int_{0}^t g_n^{\pm}(m_+(\a_n^{\pm},0),\t) d\t +
            O(G_n^{\pm}(t)W_n^{\pm}(t)), && \a_n^{\pm} \neq \m_n(0). \label{p1e8}
        \end{align}
    \end{theorem}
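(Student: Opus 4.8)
The plan is to characterize the states of $H_t$ lying in $\g_n^c$ near the band edge $\a_n^\pm$ as the zeros of a single scalar equation in the uniformizing variable $z$, and then to solve that equation for $z=z_n^\pm(t)$ by an implicit function argument. Near $\a_n^\pm$ the map $z\mapsto\a_n^\pm\mp z^2$ is a local biholomorphism from a neighbourhood of $0$ onto a neighbourhood of $\a_n^\pm$ in the circle gap $\g_n^c$, with the two signs of $z$ landing on the two sheets $\L_1,\L_2$; this already fixes the sheet assignment claimed in (ii) once $z_n^\pm(t)$ is found. Using the Weyl--Titchmarsh function $m_+(\cdot,t)$ from Section~\ref{p8} together with its left half-line counterpart, the condition that $\l=\a_n^\pm\mp z^2$ be a state of $H_t$ reduces to a matching equation $F_n^\pm(z,t)=0$, where $F_n^\pm$ is analytic in $z$ and, as a function of $t$, inherits only the $\cH^1$ regularity coming from $V\in\cP=L^2$. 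At $t=0$ the virtual state $\l_n^\pm(0)=\a_n^\pm$ gives $F_n^\pm(0,0)=0$.

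First I would establish the solvability of $F_n^\pm(z,t)=0$ near $(0,0)$. The nondegeneracy of the linearization in $z$ is exactly where the effective mass enters: the second-order expansion of the periodic discriminant (equivalently the quasimomentum) at the band edge is governed by $\pm M_n^\pm>0$, and this guarantees that the leading $z$-coefficient of $F_n^\pm$ is nonzero. Because $F_n^\pm$ is merely $\cH^1$ in $t$, the classical implicit function theorem does not apply, so I would invoke the specific implicit function theorem of the Appendix (Section~\ref{p5}, cf.\ \cite{MokKor}) to produce $z_n^\pm\in\cH^1([-\ve,\ve])$ with $z_n^\pm(0)=0$ solving $F_n^\pm(z_n^\pm(t),t)=0$. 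This yields the existence statement and, together with the previous paragraph, parts (i) and (ii).

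The asymptotics \er{p1e7}--\er{p1e8} then come from a first-variation (Feynman--Hellmann type) computation for $(z_n^\pm)'(t)$. Since the $t$-dependence of the potential is supported on $\R_+$ with $\partial_t V_t=V'(\cdot+t)\c_+$, differentiating $F_n^\pm$ in $t$ and pairing the variation against the relevant solution $\vp(\cdot,\a_n^\pm,t)$ produces, after integrating by parts through the Dirac equation $J\vp'+V\vp=\l\vp$, the algebraic integrand $g_n^\pm(\,\cdot\,,\t)$ in place of $V'$. The dichotomy in the statement is dictated by whether $\vp_1(1,\a_n^\pm,0)=0$, i.e.\ whether $\a_n^\pm=\m_n(0)$: in the nondegenerate case $\a_n^\pm\neq\m_n(0)$ the value $m_+(\a_n^\pm,0)$ is finite and is exactly what is substituted into $g_n^\pm$, giving the coefficient $\varkappa_n^\pm=\frac{(-1)^n\vp_1(1,\a_n^\pm,0)}{2|2M_n^\pm|^{1/2}}$ and hence \er{p1e8}; in the degenerate case $\a_n^\pm=\m_n(0)$ the function $m_+(\cdot,0)$ has a pole at the band edge, the $z^2$-term of $g_n^\pm$ dominates, and after normalising by $\|\vp(\cdot,\a_n^\pm,t)\|^2$ only the factor $q_1(\t)+\a_n^\pm$ survives, yielding $\kappa_n^\pm(0)$ and \er{p1e7}. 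The error terms $O(Q_n^\pm W_n^\pm)$ and $O(G_n^\pm W_n^\pm)$ are the Cauchy--Schwarz bounds for the remainder in this expansion, the factor $W_n^\pm$ recording the $L^2$-size of the potential increment on $[0,t]$.

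The main obstacle is the interplay between the low regularity and the degenerate double-root structure at the band edge. Because the potential increment $V(\cdot+t)-V(\cdot)$ is only $L^2$, and the state scales like $z\sim t$ while $\l-\a_n^\pm\sim z^2\sim t^2$, one cannot differentiate the state directly in $\l$; the correct object is $z$, and the effective mass $\pm M_n^\pm>0$ is precisely what converts the $\l$-degeneracy into a nondegenerate equation for $z$. Carrying the normalisation constants $\kappa_n^\pm,\varkappa_n^\pm$ correctly through this change of variable, and verifying that the nonstandard implicit function theorem applies with the claimed $\cH^1$ output, is the delicate part; the algebraic reduction of $V'$ to $g_n^\pm$ and the two case distinctions are then routine once this framework is in place.
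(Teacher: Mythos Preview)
Your proposal is correct and follows essentially the same route as the paper: the existence and $\cH^1$-regularity of $z_n^\pm$ come from applying the Appendix implicit function theorem (Theorem~\ref{p7t2}) to $w(\a_n^\pm\mp z^2,t)=0$ or $\G(\a_n^\pm\mp z^2,t)=0$ (this is Lemma~\ref{p4l6}), and the asymptotics come from integrating the resulting ODE \er{p4e64} for $\dot z_n^\pm$, whose right-hand side is exactly $g_\l(m_+,\cdot)$ via the Riccati identity $\dot m_+=g_\l(m_+,\cdot)$ of Lemma~\ref{a1l3}. The one point you understate is the remainder control: it is not a single Cauchy--Schwarz step but a bootstrap---one introduces $Z(t)=\max_{[0,t]}|z|$, bounds each error contribution by expressions of the form $O((Z+Q)W)$ or $O((Z+G)W)$, deduces $Z=O(Q)$ (resp.\ $Z=O(G)$) from the resulting integral inequality, and only then substitutes back to obtain the stated $O(QW)$ and $O(GW)$ errors.
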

    \begin{remark}
        1) In Theorem \ref{p0t5} the states $\l_n^{\pm}(t) \in \g_n^c$, and $z_n^{\pm}(t)$ are local
        coordinates on the Riemann surface $\L$ in neighborhoods of $\a_n^{\pm}$.

        2) Theorem \ref{p6t1} describes similar asymptotics as $t \to t_o$ for any $t_o \in \R$.

        3) If a gap $\g_n$ is open for some $n \in \Z$, then $\mp \kappa_n^{\pm}(0) > 0$,
        and $\mp \varkappa_n^{\pm} > 0$ (see Lemma \ref{p4l11}).
    \end{remark}
    It follows from (\ref{p1e7}-4) that if the integrands do not change their sign, then the
    states of $H_t$ are localy monotone functions of $t$. Using this fact, we prove that
    high-energy states are globally monotone if the potential is bounded. We introduce the following
    notation

    \no $\| V \|_{\iy} = \esssup_{x \in \T} \Tr V^2(x)$, $V \in \cP$.

    \begin{theorem} \label{p0t6}
        Let $V \in \cP$ and let a gap $\g_n$ be open for some $n \in \Z$.
        Suppose that $\| V \|_{\iy} < 2\left| \a_n^{\pm} \right|$.
        Then $\l_n^{+}(t)$ and $\l_n^{-}(t)$ run strictly monotonically and in one direction around
        $\g_n^c$, changing sheets when they hit $\a_n^+$ or $\a_n^-$ and making
        $\left| n \right|/2$ complete revolutions when $t$ runs through $[0,1]$. Moreover,
        $\l_n^{\pm}(\cdot)$ run clockwise (counterclockwise) around $\g_n^c$ if
        $\a_n^{\pm} > 0$ ($\a_n^{\pm} < 0$).
    \end{theorem}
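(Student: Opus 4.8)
The plan is to combine the local asymptotics of Theorem~\ref{p0t5} (and its extension to an arbitrary base point $t_o$, Theorem~\ref{p6t1}) with a sign–definiteness estimate that holds \emph{uniformly} under the boundedness hypothesis, and then to read off the number of revolutions from Theorem~\ref{p0t2}. The geometric picture is that $\l_n^{\pm}(t)$ travels along the circle gap $\g_n^c$; strict monotonicity means its velocity, measured in whatever local chart is appropriate, never vanishes and never reverses orientation.

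The heart of the argument is that the hypothesis forces the integrands in \er{p1e7}--\er{p1e8} to keep a constant, nonzero sign. Using $\Tr V^2=2(q_1^2+q_2^2)$, the bound $\|V\|_\iy<2|\a_n^{\pm}|$ keeps $q_1^2+q_2^2$ strictly below $(\a_n^{\pm})^2$ pointwise a.e.; this is the only place the boundedness assumption enters. In the band–edge case $\a_n^{\pm}=\m_n(0)$ the integrand is $q_1(\t)+\a_n^{\pm}$, and $|q_1|<|\a_n^{\pm}|$ makes its sign equal to $\sign\a_n^{\pm}$. In the generic case $\a_n^{\pm}\neq\m_n(0)$ the integrand is $g_n^{\pm}(z_0,\t)$ with $z_0=m_+(\a_n^{\pm},0)$; writing it as the affine function $(z_0^2-1)q_1(\t)-2z_0q_2(\t)+\a_n^{\pm}(z_0^2+1)$ of $(q_1(\t),q_2(\t))$ and using the identity $(z_0^2-1)^2+4z_0^2=(z_0^2+1)^2$, Cauchy--Schwarz gives
$$
    \left| g_n^{\pm}(z_0,\t)-\a_n^{\pm}(z_0^2+1) \right|
    \le (z_0^2+1)\sqrt{q_1^2(\t)+q_2^2(\t)} < (z_0^2+1)|\a_n^{\pm}|,
$$
so $g_n^{\pm}(z_0,\t)$ again keeps the sign of $\a_n^{\pm}$ (equivalently, the discriminant $4(q_1^2+q_2^2-(\a_n^{\pm})^2)$ of the quadratic $g_n^{\pm}(\cdot,\t)$ is negative). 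Crucially the bound is independent of $z_0$, hence it persists when $z_0$ is replaced by the Weyl value $m_+$ at the current position of the state.

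With sign-definiteness in hand, local monotonicity is immediate: by Theorem~\ref{p0t5} the leading coefficients satisfy $\mp\kappa_n^{\pm}(0)>0$ and $\mp\varkappa_n^{\pm}>0$, so $\dot z_n^{\pm}(0)$ has a fixed sign; the same estimate applied to the expansion around an arbitrary $t_o$ from Theorem~\ref{p6t1} shows that $t\mapsto z_n^{\pm}(t)$ is strictly monotone near every $t_o$, hence strictly monotone. Since $\l_n^{\pm}(t)=\a_n^{\pm}\mp(z_n^{\pm}(t))^2$ with the sheet fixed by $\sign z_n^{\pm}$ (Theorem~\ref{p0t5}~ii)), the state runs around $\g_n^c$ without reversing, changing sheets exactly when $z_n^{\pm}$ passes through $0$, i.e. when it hits $\a_n^+$ or $\a_n^-$. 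The orientation follows from the sign of the velocity together with the arc convention of Figure~2: the chart $\l=\a_n^{\pm}\mp z^2$ reverses orientation with the sign of $\a_n^{\pm}$, which yields the clockwise/counterclockwise dichotomy for $\a_n^{\pm}\gtrless 0$.

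Finally, the hypothesis implies condition \er{p1e16} of Theorem~\ref{p0t2}: $q_1(\t)+\a_n^{\pm}$ has the constant nonzero sign $\sign\a_n^{\pm}$, and for an open gap not meeting the origin these two signs coincide, so Theorem~\ref{p0t2} gives exactly $|n|/2$ complete revolutions as $t$ runs through $[0,1]$. I expect the main obstacle to be the passage from local to global monotonicity: one must verify that the sign-definite leading term in the expansion around a general $t_o$ genuinely dominates the remainder (the $O(Q_n^{\pm}W_n^{\pm})$ and $O(G_n^{\pm}W_n^{\pm})$ errors are of strictly higher order in $t-t_o$, so they cannot cancel it), and that the charts used near the band edges (the $z$-coordinate) and in the interior of the gap (the $\l$-coordinate, governed by $g_n^{\pm}(m_+,\cdot)$) patch together into a single globally monotone $\cH^1$ motion on $\g_n^c$.
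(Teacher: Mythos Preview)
Your proposal is correct and follows essentially the same route as the paper: both deduce from $q_1^2+q_2^2<(\a_n^{\pm})^2$ that $q_1+\a_n^{\pm}$ and $g_{\l}(x,\t)$ have the constant sign $\sign\a_n^{\pm}$ (the paper completes the square in $g_{\l}$, you use Cauchy--Schwarz/the discriminant---equivalent manipulations), feed this into the local asymptotics of Theorem~\ref{p6t1} to get strict local (hence global) monotonicity and the direction via Lemma~\ref{p4l11}, and then invoke Theorem~\ref{p0t2} for the $|n|/2$ revolutions. Your worry about patching charts is handled exactly as you anticipate: the sign-definite leading term makes the remainder in each local expansion strictly smaller, so local monotonicity at every $t_0$ yields global monotonicity of the continuous $\cH^1$ motion on $\g_n^c$.
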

    \begin{remark}
        Under the conditions of Theorem \ref{p0t6} $\m_n(t)$ also runs strictly monotonically
        around $\g_n^c$, changing sheets when it hits $\a_n^+$ or $\a_n^-$ and making
        $\left| n \right|$ complete revolutions when $t$ runs through $[0,1]$
        (see Theorem 2.3 in \cite{MokKor}).
    \end{remark}

    Now we describe low-energy states of $H_t$ for specific potentials with
    small dislocation. We introduce the Neumann eigenvalues $\nu_n$, $n \in \Z$,
    as eigenvalues of the equation $Jy'+Vy = \l y$, $y_2(0) = y_2(1) = 0$, where
    $y = \ma y_1 & y_2 \am^{\top}$. Note that in each interval $[\a_n^-,\a_n^+]$ there exists a unique
    Neumann eigenvalue and there are no other (see e.g. \cite{LevSar}). We also introduce the
    subspace of "even" potentials $\cP_{e}$ by
    $$
        \cP_{e} = \rt\{ \, \left. V= \ma q_1 & q_2 \\ q_2 & -q_1 \am \in \cP \, \right| \,
        q_1(x) = q_1(1-x),\, q_2(x) = -q_2(1-x),\, x \in \T \, \rt\}.
    $$
    If $V \in \cP_e$, then $\mu_n = \a_n^{j}$, and $\nu_n = \a_n^{-j}$ for some $j = \pm$
    and for any $n \in \Z$. Note also that if $q_1 = 0$, then the spectrum of $H_t$ is symmetric
    with respect to zero, i.e.
    $\a_n^{\pm} = -\a_{-n}^{\mp}$, $\m_n(t) = -\m_{-n}(t)$, $\nu_n(t) = -\nu_{-n}(t)$
    for any $(n,t) \in \Z \ts \R$.

    \begin{theorem} \label{p0t4}
        Let $V \in \cP_{e}$ such that $q_1 = 0$, and let $N \in \N$. Then there exists $\ve > 0$
        such that each $H_t$, $t \in (0, \ve)$, has exactly one eigenvalue and one resonance
        in each open circle gap $\g_n^c$, $1 \leq \left| n \right| \leq N$. Moreover, in this case
        $\g_0^c = \es$.
    \end{theorem}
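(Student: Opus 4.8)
The plan is to read off the sheet of each state from the local coordinate supplied by Theorem \ref{p0t5} and to feed this into the two-state dichotomy of Theorem \ref{p0t1}. By Theorem \ref{p0t5}(ii) a state $\l_n^{\pm}(t)=\a_n^{\pm}\mp(z_n^{\pm}(t))^2$ lies on $\L_1$ (an eigenvalue) exactly when $z_n^{\pm}(t)>0$ and on $\L_2$ (a resonance) exactly when $z_n^{\pm}(t)<0$. Since Theorem \ref{p0t1} says that an open $\g_n^c$ carries precisely the two states $\l_n^{+}(t),\l_n^{-}(t)$, it suffices to show that $z_n^{+}(t)$ and $z_n^{-}(t)$ have opposite signs for all small $t>0$, for each $n$ with $1\le|n|\le N$; then one state is an eigenvalue and the other a resonance, and taking $\ve$ to be the minimum of the finitely many radii from Theorem \ref{p0t5} finishes the proof.

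First I would settle $\g_0^c=\es$ and the sign of the band edges. For $V\in\cP_e$ one has $q_2(x)=-q_2(1-x)$, hence $\int_0^1 q_2=0$; with $q_1=0$ the equation $Jy'+Vy=0$ decouples into $y_1'=q_2y_1$, $y_2'=-q_2y_2$, so its monodromy is $\diag(e^{\int_0^1 q_2},e^{-\int_0^1 q_2})=\1_2$. Thus $\l=0$ is a double periodic eigenvalue, the gap through $0$ is closed, and by the symmetry $\a_n^{\pm}=-\a_{-n}^{\mp}$ forced by $q_1=0$ this gap is $\g_0$, so $\a_0^-=\a_0^+=0$. The same symmetry places every $\g_n$ with $n\ge 1$ in $(0,\iy)$ and every $\g_n$ with $n\le-1$ in $(-\iy,0)$, so that $\sign\a_n^-=\sign\a_n^+\neq 0$ for all $n\neq 0$; this sign equality is the arithmetic fact driving the conclusion.

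Next I would exploit the even structure: by the fact recorded just before the statement, for $V\in\cP_e$ one endpoint of $\g_n$ is the Dirichlet eigenvalue, $\a_n^{j}=\m_n(0)$, and the other is the Neumann eigenvalue, $\a_n^{-j}=\nu_n$, for some $j\in\{+,-\}$. At the Dirichlet endpoint I apply (\ref{p1e7}): since $q_1=0$ the integral is $\int_0^t(q_1+\a_n^{j})\,d\t=\a_n^{j}t$, while the remainder $O(Q_n^{j}(t)W_n^{j}(t))$ is $o(t)$ because $Q_n^{j}(t)=|\a_n^{j}|t$ and $W_n^{j}(t)=o(|t|^{1/2})$ as $\int_0^t|V+\a_n^{j}\1_2|^2\,d\t\to0$; thus $z_n^{j}(t)=\kappa_n^{j}(0)\a_n^{j}t+o(t)$. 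At the Neumann endpoint the decisive input is $m_+(\a_n^{-j},0)=0$: the band-edge Bloch solution there coincides (up to a scalar) with the Neumann eigenfunction, whose second component vanishes at $0$, which in the normalization of Section \ref{p2} forces the Weyl--Titchmarsh value to be $0$. Then $g_n^{-j}(m_+(\a_n^{-j},0),\t)=g_n^{-j}(0,\t)=\a_n^{-j}$, so (\ref{p1e8}) gives $z_n^{-j}(t)=\varkappa_n^{-j}\a_n^{-j}t+o(t)$, again with an $o(t)$ remainder.

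Finally I would do the sign bookkeeping. By Lemma \ref{p4l11}, $\mp\kappa_n^{\pm}(0)>0$ and $\mp\varkappa_n^{\pm}>0$, i.e. reading $j$ as $\pm1$ one has $\sign\kappa_n^{j}(0)=\sign\varkappa_n^{j}=-j$. Writing $\sigma=\sign\a_n^+=\sign\a_n^-\neq0$ from the second paragraph, the leading coefficient at the Dirichlet endpoint has sign $(-j)\sigma$ and that at the Neumann endpoint has sign $(-(-j))\sigma=j\sigma$, which are opposite; hence $z_n^{+}(t)$ and $z_n^{-}(t)$ have opposite signs for small $t>0$, independently of which endpoint is Dirichlet. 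Therefore one state lies on $\L_1$ and the other on $\L_2$, so each open $\g_n^c$ with $1\le|n|\le N$ contains exactly one eigenvalue and one resonance. I expect the main obstacle to be the identity $m_+(\a_n^{-j},0)=0$ at the Neumann band edge—identifying the Weyl solution at this branch point with the Neumann eigenfunction in the precise conventions of Section \ref{p2}—together with confirming that the $O(\cdot)$ remainders in (\ref{p1e7})--(\ref{p1e8}) are genuinely $o(t)$, so that the linear-in-$t$ terms control the signs.
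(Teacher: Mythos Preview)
Your proposal is correct and follows essentially the same route as the paper: you use the even structure to place the Dirichlet and Neumann eigenvalues at opposite band edges, feed $q_1=0$ and $m_+(\a_n^{-j},0)=0$ into the asymptotics (\ref{p1e7})--(\ref{p1e8}), and read off opposite signs for $z_n^{+}$ and $z_n^{-}$ via the sign identities $\sign\kappa_n^{\pm}(0)=\sign\varkappa_n^{\pm}=\mp$ together with $\sign\a_n^-=\sign\a_n^+$ for $n\neq0$. Your treatment is in fact slightly more explicit than the paper's on two points---the justification of $m_+(\a_n^{-j},0)=0$ and the verification that the remainders are $o(t)$---both of which the paper dismisses as easy.
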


    \begin{figure}[h!]
        \begin{center}
            \begin{picture}(220.103,86.4008)(0,-5.79967)
              \qbezier(170.409637,29.163654)(220.412949,45.258934)(170.409637,61.354218)
              \qbezier(170.409637,61.354218)(110.052307,74.687286)(49.693455,61.354218)
              \put(36.020573,68.570175){$\l_n^-(t)$}
              \put(49.693455,61.354218){\circle*{3}}
              \put(101,74.468216){$\g_n^{(1)}$}
              \qbezier(49.693455,61.354218)(-0.309708,45.258934)(49.693455,29.163654)
              \qbezier(49.693455,29.163654)(110.052307,15.830591)(170.409637,29.163654)
              \put(156.735550,15.758874){$\l_n^{+}(t)$}
              \put(170.409637,29.163654){\circle*{3}}
              \put(101,6.249347){$\g_n^{(2)}$}
              \put(201.525894,42.148464){$\a_n^+$}
              \put(195.410034,45.258888){\circle*{3}}
              \put(3.916632,42.148464){$\a_n^-$}
              \put(24.693855,45.258888){\circle*{3}}
            \end{picture}
            \label{fig3}
            \caption{The position of the states $\l_n^{\pm}(t)$ in $\g_n^{c}$ for some $n > 0$,
            and $t \in (0,\ve)$ under the conditions of Theorem \ref{p0t4}.}
        \end{center}
    \end{figure}
    Using Theorem \ref{p0t4}, we can construct the
    operator $H_t$ with exactly one eigenvalue in a finite number of gaps.
    \begin{corollary} \label{p0t8}
        For any $N \geq 1$ there exist a potential
        $V \in \cP_e$
        and $\ve > 0$ such that $q_1 = 0$, each gap $\g_n$ is open and each
        $H_t$, $t \in (0,\ve)$, has exactly one eigenvalue in $\g_n$,
        $1 \leq |n| \leq N$, and $\g_0 = \es$.
    \end{corollary}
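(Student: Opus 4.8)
The plan is to deduce the corollary directly from Theorem \ref{p0t4}, so that the entire content reduces to the construction of a single suitable potential. Indeed, suppose we can produce $V \in \cP_e$ with $q_1 = 0$ for which all the gaps $\g_n$, $1 \le |n| \le N$, are open. Then Theorem \ref{p0t4} supplies an $\ve > 0$ such that each $H_t$, $t \in (0,\ve)$, has exactly one eigenvalue and one resonance in each open circle gap $\g_n^c$, $1 \le |n| \le N$, and it guarantees $\g_0 = \es$. By definition an eigenvalue is a pole lying on the physical sheet $\L_1$, i.e. in $\g_n^{(1)} = \g_n$; hence $H_t$ has exactly one eigenvalue in each $\g_n$, $1 \le |n| \le N$, which is precisely the assertion. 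Thus only the construction step is substantive.

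For the construction I would take $q_1 = 0$ and
$$ q_2(x) = \ve_0 \sum_{k=1}^{N} \sin(2\pi k x), \qquad x \in \T, $$
with $\ve_0 > 0$ small, to be fixed at the end. First I check $V \in \cP_e$: $q_1 = 0$ trivially satisfies $q_1(x) = q_1(1-x)$, and each mode obeys $\sin(2\pi k(1-x)) = -\sin(2\pi k x)$, so $q_2(x) = -q_2(1-x)$ as required. Hence $V \in \cP_e$ with $q_1 = 0$, and for any such $V$ Theorem \ref{p0t4} already yields $\g_0 = \es$, so this endpoint of the statement comes for free.

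The key step is to show that the gaps $\g_n$, $1 \le |n| \le N$, open for $\ve_0$ small. Passing to $z_\pm = y_1 \pm i y_2$ turns the eigenvalue equation $Jy'+Vy=\l y$ into the AKNS-type system $z_+' = i\l z_+ + q_2 z_-$, $z_-' = q_2 z_+ - i\l z_-$; at $\l = \pi n$ the unperturbed solutions $e^{\pm i\pi n x}$ are coupled exactly by the Fourier mode $e^{2\pi i n x}$ of $q_2$. The standard first-order gap asymptotics for the periodic Dirac operator then read $\a_n^+ - \a_n^- = 2|\hat q_{2,n}| + O(\ve_0^2)$, where $\hat q_{2,n} = \int_0^1 q_2(x)\, e^{-2\pi i n x}\,dx$. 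For our choice $\hat q_{2,n} = \ve_0/(2i)$ when $1 \le n \le N$, so $\a_n^+ - \a_n^- = \ve_0 + O(\ve_0^2) > 0$ for $\ve_0$ small. The symmetry $\a_n^{\pm} = -\a_{-n}^{\mp}$, valid because $q_1 = 0$, gives $\a_{-n}^+ - \a_{-n}^- = \a_n^+ - \a_n^-$, so the gaps of negative index are open as well. Fixing such an $\ve_0$ and then invoking Theorem \ref{p0t4} completes the argument.

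The main obstacle I anticipate is making the gap-length asymptotic rigorous with the correct normalization: one must verify that the leading coefficient is genuinely nonzero and, crucially, that the $O(\ve_0^2)$ remainder is uniform over the finitely many indices $1 \le |n| \le N$, so that it cannot close any of these gaps. This is a standard perturbative computation for the monodromy matrix of the AKNS system and is available in the periodic Dirac literature (see e.g. \cite{Kor01, MokKor}); it is the only place where genuine analytic work is required, everything else being bookkeeping and an appeal to Theorem \ref{p0t4}.
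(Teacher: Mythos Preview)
Your proposal is correct and reduces to exactly the same scheme as the paper --- construct some $V\in\cP_e$ with $q_1=0$ whose gaps $\g_n$, $1\le|n|\le N$, are open, then invoke Theorem~\ref{p0t4} --- but you carry out the construction by a genuinely different and more elementary route. The paper does not use perturbation theory: it first invokes the inverse gap-length mapping for the Hill operator \cite{Kor99} to produce an even $p\in C^1(\T)$ whose first $N$ gaps are open, then solves the Riccati equation $p=q'+q^2$ to obtain an odd $q\in C^2(\T)$, and finally uses the supersymmetric identity $H^2=h^{(1)}\oplus h^{(2)}$ (the remark following the corollary) to transfer the open gaps from the Schr\"odinger side to the Dirac side. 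Your argument replaces this chain of inverse-spectral machinery by a single direct perturbative computation: a trigonometric polynomial in $q_2$ with prescribed low Fourier modes, plus the standard small-potential gap asymptotic $|\g_n|=c\,|\hat q_{2,n}|+O(\ve_0^2)$ for the periodic Dirac/AKNS system. What you gain is transparency and an entirely explicit potential; what the paper's route gains is that it avoids quoting the gap-length asymptotic (whose exact leading constant and remainder uniformity you correctly flag as the only point requiring care) and instead leans on the already-cited bijectivity of the gap map, while also illustrating the supersymmetric link to the Schr\"odinger dislocation problem that the surrounding remarks emphasize.
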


    \begin{remark}
        1) If $q_1 = 0$, then the following identity holds true
        $$
            H_t^2 = h^{(1)}_t \os h^{(2)}_t,
        $$
        where $h^{(j)}_t = -\frac{d^2}{dx^2} + p^{(j)}_t$, $j = 1,2$, are the Schr\"odinger
        operators acting on $L^2(\R)$ and the potentials $p^{(j)}_t$ have the form
        $$
            p^{(j)}_t(x) = v_t^2(x) - (-1)^j v_t'(x),\qq
            v_t(x) = q_2(x) + (q_2(x + t) - q_2(x)) \c_+(x),\qq x \in \R.
        $$
        It is well known that the spectra of
        $h^{(1)}_t$ and $h^{(2)}_t$ coincide away from zero (see \cite{Deift78}).
        Thus, using this identity, one can obtain similar result for the Schr\"odinger operator with
        dislocation of a singular periodic potential.

        2) Suppose in addition that $q_2 \in \cH^1(\T)$. We introduce the
        Schr\"odinger operators $\tilde{h}^{(j)}_t = -\frac{d^2}{dx^2} + \tilde{p}^{(j)}_t$,
        $j = 1,2$, acting on $L^2(\R)$, where the potentials $\tilde{p}^{(j)}_t$ have the form
        $$
            \tilde{p}^{(j)}_t(x) = u(x) + (u(x + t) - u(x)) \c_+(x),\qq
            u(x) = q_2^2(x) - (-1)^j q_2'(x),\qq x \in \R.
        $$
        Note that $u \in L^2(\T)$ and $\tilde{h}^{(j)}_t$ is the operator, which was considered in
        \cite{Kor01a}. It is easy to see that in this case we have for any $t \in \R$
        $$
            \tilde{h}^{(j)}_t = h^{(j)}_t + (q_2(0) - q_2(t)) \d(\cdot),
        $$
        where $\d$ is the Dirac delta function. So that one need take into consideration this
        singular point perturbation to obtain results for $\tilde{h}^{(j)}_t$ from $H_t$
(see a good review about such operators in \cite{KM13}). Thus,
        the results of Theorem \ref{p0t4}, \ref{p0t8} cannot be obtained directly from the results
        of paper \cite{Kor01a}.
    \end{remark}
    Note that in Theorem \ref{p0t6}, \ref{p0t4} the states move in one direction.
    Now we prove that there exist the operator $H_t$ such that its states move in opposite direction,
    i.e. it has exactly two eigenvalues or two resonance in any finite number of gaps.
    \begin{theorem} \label{p0t7}
        Let $V \in \cP_{e}$ such that $q_1 = c \chi_{[0,\d] \cup [1-\d,1]}$ for some $c \in \R$ and
        $\d > 0$. Then for any $N \in \N$ there exist $c \in \R$, $\d > 0$, $\ve > 0$, and
        $i_o \in \Z$ such that each $H_t$, $t \in (0,\ve)$, has exactly two
        eigenvalues if $\m_n(0) < \nu_n(0)$ or two resonances if $\m_n(0) > \nu_n(0)$ in each
        open circle gap $\g_n^c$, $\left| n + i_o \right| \leq N$.
    \end{theorem}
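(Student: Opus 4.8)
The plan is to convert the statement into a sign computation for the local coordinates $z_n^{\pm}(t)$ of Theorem \ref{p0t5}. By part (ii) of that theorem, for $t\in(0,\varepsilon)$ the state $\l_n^{\pm}(t)$ is an eigenvalue (lies on $\L_1$) precisely when $z_n^{\pm}(t)>0$ and a resonance (lies on $\L_2$) precisely when $z_n^{\pm}(t)<0$. For small $t$ the sign of $z_n^{\pm}(t)$ equals that of the leading term of (\ref{p1e7}) or (\ref{p1e8}): once $V$ is bounded the main terms are linear in $t$, while the remainders $O(Q_n^{\pm}W_n^{\pm})$ and $O(G_n^{\pm}W_n^{\pm})$ are $O(t^2)=o(t)$. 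So it suffices to fix the signs of the two leading coefficients in each gap of the window $|n+i_o|\le N$; I will therefore build $V$ with $q_2$ bounded.

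Next I would use that $V\in\cP_e$, so in every open gap one endpoint is the Dirichlet value $\m_n(0)=\a_n^{s_n}$ and the other the Neumann value $\nu_n(0)=\a_n^{-s_n}$ for some $s_n\in\{+,-\}$; the state at $\a_n^{s_n}$ is governed by (\ref{p1e7}) and the state at $\a_n^{-s_n}$ by (\ref{p1e8}). Feeding in $\mp\kappa_n^{\pm}(0)>0$ and $\mp\varkappa_n^{\pm}>0$, a short case check shows that, whether $s_n=-$ (so $\m_n(0)<\nu_n(0)$, target sheet $\L_1$) or $s_n=+$ (so $\m_n(0)>\nu_n(0)$, target sheet $\L_2$), requiring both states on the target sheet is in each case equivalent to the two conditions
$$\int_0^t\big(q_1(\t)+\a_n^{s_n}\big)\,d\t>0,\qquad \int_0^t g_n^{-s_n}\big(m_+(\a_n^{-s_n},0),\t\big)\,d\t<0.$$
This uniform reduction is exactly why the asserted eigenvalue/resonance dichotomy is governed by the order of $\m_n(0)$ and $\nu_n(0)$.

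Finally I substitute $q_1=c\,\chi_{[0,\d]\cup[1-\d,1]}$ and restrict to $0<t<\varepsilon\le\d$, so that $q_1\equiv c$ on $(0,t)$; I also choose the even component $q_2$ to vanish on $(0,\d)$, which is compatible with $V\in\cP_e$ and kills the $q_2$-integral occurring in $g_n^{-s_n}$. The first condition then reads $t(c+\a_n^{s_n})>0$, and, writing $m_n=m_+(\a_n^{-s_n},0)$, the second reads $t\big[c(m_n^2-1)+\a_n^{-s_n}(m_n^2+1)\big]<0$. The plan is now to choose $i_o$ so that the $2N+1$ gaps of the window are open and satisfy $m_n^2<1$, and then to take $c>0$ larger than the finitely many thresholds $-\a_n^{s_n}$ and $\a_n^{-s_n}(m_n^2+1)/(1-m_n^2)$ over the window; both inequalities then hold for every $n$ in the window and every $t\in(0,\varepsilon)$, which is the claim.

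The main obstacle is the step producing a window of consecutive gaps on which $m_+(\a_n^{-s_n},0)^2<1$: this is a statement about the Weyl--Titchmarsh function at the Neumann points and must be extracted from the analysis of Section \ref{p8} together with the half-line results of \cite{MokKor}, since the sign of $m_n^2-1$ is what simultaneously fixes the placement $i_o$ and the sign of $c$. Once that sign is in hand, the remaining points—openness of the window gaps and uniform domination of the Theorem \ref{p0t5} remainders over the finite window for small $t$—are routine.
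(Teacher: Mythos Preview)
Your reduction via Theorem \ref{p0t5} is the right starting point, but you are missing the key algebraic fact that drives the whole proof: at the Neumann endpoint one has $m_+(\a_n^{-s_n},0)=0$. Indeed, $V\in\cP_e$ gives $\nu_n(0)=\a_n^{-s_n}$, so $\vt_2(1,\a_n^{-s_n},0)=0$; combined with $b(\a_n^{-s_n})=0$ and the identity $a^2-b^2=-\vp_1\vt_2$ from (\ref{p2e22}), this forces $a(\a_n^{-s_n},0)=0$ and hence $m_+(\a_n^{-s_n},0)=a/\vp_1=0$. Consequently $g_n^{-s_n}(m_+(\a_n^{-s_n},0),\t)=-(q_1(\t)-\a_n^{-s_n})$ with no $q_2$-contribution at all, and for $\t\in(0,\d)$ this is just $-(c-\a_n^{-s_n})$. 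Your ``main obstacle'' (a window with $m_n^2<1$) therefore evaporates, and so does the need to make $q_2$ vanish on $(0,\d)$. The latter point matters: the theorem fixes $q_2$ in advance (see the remark following the statement), so you are not allowed to restrict it.

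With $m_n=0$ in hand, both sign conditions collapse to $|\a_n^{\pm}|<c$, and the remaining task is to choose $c$ and $\d$ so that at least $2N+1$ gaps lie in $(-c,c)$. Your scheme of first fixing a window and then enlarging $c$ is circular, since the endpoints $\a_n^{\pm}$ themselves depend on $c$ and $\d$. The paper resolves this with a priori bounds independent of the gap locations: $\sum_{i_o}^{i_o+N}|\g_i|\le 2\sqrt{2N}\,\|V\|_{\cP}=2\sqrt{2N(2\d c^2+\|q_2\|^2)}$ from the gap-length estimate in \cite{Kor05b}, together with $|\s_n|\le\pi$ from \cite{KK}. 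The inequality $2\sqrt{2N(2\d c^2+\|q_2\|^2)}+\pi N<c$ is then solvable for large $c$ and small $\d$, which places $2N+1$ consecutive gaps inside $(-c,c)$ regardless of where they sit.
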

    \begin{remark}
        The constants $c$ and $\d$ depend only on the norm $\|q_2\|$, and $N$. One can choose
        the constant $i_o$ as the number of the gap closest to zero (see details in
        the proof of the theorem).
    \end{remark}
        \begin{figure}[h!]
        \begin{center}
            \begin{picture}(416.984,75.225)(0,-5.79967)
              \qbezier(141.293869,49.462090)(121.948639,57.215729)(89.717979,56.676544)
              \qbezier(89.717979,56.676544)(63.561687,55.839439)(48.062359,49.462090)
              \put(34.390484,56.678749){$\l_n^-(t)$}
              \put(48.062359,49.462090){\circle*{3}}
              \put(85.623886,63.292458){$\g_n^{(1)}$}
              \qbezier(48.062359,49.462090)(22.221745,35.838989)(62.040382,25.685760)
              \qbezier(62.040382,25.685760)(107.454300,17.936134)(141.293869,29.877220)
              \qbezier(141.293869,29.877220)(161.872742,39.672665)(141.293869,49.462090)
              \put(127.618774,56.678749){$\l_n^{+}(t)$}
              \put(141.293869,49.462090){\circle*{3}}
              \put(85.623886,6.249347){$\g_n^{(2)}$}
              \put(157.700150,36.671001){$\n_n(0)$}
              \put(151.586319,39.671612){\circle*{3}}
              \put(3.916632,36.671001){$\m_n(0)$}
              \put(37.773922,39.671612){\circle*{3}}
              \qbezier(368.917267,29.877220)(394.760315,43.502346)(354.943054,53.657463)
              \qbezier(354.943054,53.657463)(309.526611,61.408218)(275.689972,49.462090)
              \qbezier(275.689972,49.462090)(255.111069,39.672665)(275.689972,29.877220)
              \put(262.012878,16.474548){$\l_n^{-}(t)$}
              \put(275.689972,29.877220){\circle*{3}}
              \put(313.246277,63.292458){$\g_n^{(1)}$}
              \qbezier(275.689972,29.877220)(295.032501,22.126665)(327.261627,22.662767)
              \qbezier(327.261627,22.662767)(353.423889,23.502806)(368.917267,29.877220)
              \put(355.241180,16.474548){$\l_n^+(t)$}
              \put(368.917267,29.877220){\circle*{3}}
              \put(313.246277,6.249347){$\g_n^{(2)}$}
              \put(385.322540,36.671001){$\m_n(0)$}
              \put(379.209717,39.671612){\circle*{3}}
              \put(232.801743,36.671001){$\n_n(0)$}
              \put(265.397522,39.671612){\circle*{3}}
            \end{picture}
            \label{fig4}
            \caption{The position of the states $\l_n^{\pm}(t)$ in $\g_n^{c}$ for some $n \in \Z$,
            and $t \in (0,\ve)$ under the conditions of Theorem \ref{p0t7}.}
        \end{center}
    \end{figure}
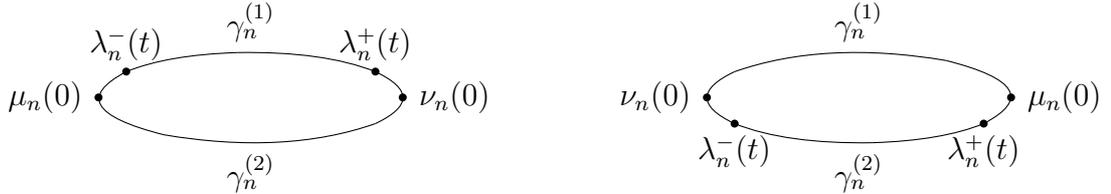
    Using the gap length mapping from \cite{Kor05b}, we can construct a potential $V \in \cP_{e}$
    such that $\m_n - \n_n$ have a given sign for any $n \in \Z$. Thus, we can construct the operator
    $H_t$, which has exactly two eigenvalues in any finite number of gaps.
    \begin{theorem} \label{p0t9}
        For any $N \geq 1$ there exist a potential $V \in \cP_{e}$ and $\ve > 0$
        such that each gap $\g_n$ is open and each $H_t$, $t \in (0,\ve)$, has exactly two
        eigenvalues in $\g_n$, $\left| n \right| \leq N$.
    \end{theorem}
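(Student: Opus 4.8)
The plan is to convert the desired spectral picture into a sign condition and then to realize that condition by an inverse spectral construction. The pivotal input is Theorem~\ref{p0t7}: for a potential with $q_1 = c\chi_{[0,\d]\cup[1-\d,1]}$ and $t>0$ small, the circle gap $\g_n^c$ carries two \emph{eigenvalues} precisely when $\m_n(0) < \n_n(0)$, and two resonances when the inequality reverses. Hence it is enough to produce a potential $V \in \cP_e$ of this form for which every gap $\g_n$ with $|n| \le N$ is open and satisfies $\m_n < \n_n$; feeding such a $V$ into Theorem~\ref{p0t7} then supplies $\ve>0$ and two eigenvalues in each $\g_n$, $|n|\le N$, for $t\in(0,\ve)$, the index shift $i_o$ being absorbed by centering the construction (or by running the theorem with a slightly larger parameter so that $|n+i_o|\le N$ covers $|n|\le N$).

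To interpret the target inequality I would recall that for an even potential the Dirichlet and Neumann eigenvalues are exactly the two endpoints of each gap, $\{\m_n,\n_n\}=\{\a_n^-,\a_n^+\}$, so $\m_n<\n_n$ simply means the Dirichlet eigenvalue sits at the lower endpoint $\a_n^-$. Building an even potential with this property for all $|n|\le N$ is what the gap length mapping of \cite{Kor05b} is designed to do: that inverse spectral result lets one prescribe, gap by gap, both that the gap is open and at which endpoint the Dirichlet eigenvalue lies, equivalently the sign of $\m_n-\n_n$. I would apply it within the family $q_1=c\chi_{[0,\d]\cup[1-\d,1]}$---using the freedom in the $q_2$-component together with the scalars $c,\d$---to obtain all gaps $|n|\le N$ open with $\m_n=\a_n^-$ throughout, and with $(c,\d)$ chosen to meet the hypotheses of Theorem~\ref{p0t7}.

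The step I expect to be the genuine obstacle is reconciling the \emph{uniform} sign requirement $\m_n<\n_n$ with the rigid structural constraint $q_1=c\chi$. This is no formality: in the symmetric case $q_1=0$ the relation $\a_n^\pm=-\a_{-n}^\mp$ forces $\m_n-\n_n=-(\m_{-n}-\n_{-n})$, so the sign of $\m_n-\n_n$ is necessarily anti-symmetric in $n$ and can never be negative for all $|n|\le N$. A genuinely nonzero $q_1$ is therefore indispensable, and the heart of the proof is to verify that, after restricting $q_1$ to the two-parameter family $c\chi_{[0,\d]\cup[1-\d,1]}$, the gap length mapping still has enough room in the $q_2$-direction to open every gap and pin each Dirichlet eigenvalue to the bottom simultaneously, while keeping the shift $i_o$ controlled.
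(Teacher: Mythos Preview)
Your strategy is right in spirit---arrange $\m_n<\n_n$ by inverse spectral data and then invoke the two-eigenvalue mechanism of Theorem~\ref{p0t7}---but the execution has the gap you yourself flag, and that gap is real. The gap length mapping of \cite{Kor05b} is a bijection $\cP\to\ell^2\oplus\ell^2$: once you prescribe the gap data (lengths and Dirichlet positions), the potential $V$ is \emph{determined}, and there is no reason its $q_1$-component will equal $c\chi_{[0,\d]\cup[1-\d,1]}$. Hoping the mapping ``still has enough room in the $q_2$-direction'' after clamping $q_1$ is not an argument; you would need a restricted inverse spectral result that the paper does not provide and that is not in \cite{Kor05b}.

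The paper sidesteps this entirely by not quoting Theorem~\ref{p0t7} as a black box. What the sign analysis in its proof actually needs is only $|\a_n^{\pm}|<q_1(0)$ together with $\m_n(0)<\n_n(0)$, not the global shape $q_1=c\chi$: the asymptotics \eqref{p1e7}--\eqref{p1e8} depend on the integrands near $\tau=0$, and for continuous $q_1$ the constant $c$ can be replaced by $q_1(0)$. So the paper first uses the gap length mapping to build $\widetilde V\in\cP_e$ with \emph{all} gaps open, $\m_n=\a_n^-$ for every $n$, and gap lengths summing to more than $2\pi N+2$; this forces $\widetilde V\in\cH^1$ (hence $q_1$ continuous) and, via the trace formula $\tilde q_1(0)=\tfrac12\sum|\g_n|$, gives $\tilde q_1(0)>\pi N+1$. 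It then applies a unitary shift $U^{M+N+1}$ (which preserves $\cP_e$, fixes the value at $0$, and translates all spectral data by $\pi$) to slide the gaps so that $\g_n\subset(\pi n-1,\pi n+1)\subset(-q_1(0),q_1(0))$ for $|n|\le N$. Now the sign computation from the proof of Theorem~\ref{p0t7} applies verbatim with $q_1(0)$ in place of $c$, yielding two eigenvalues in each such gap for small $t>0$. The two ideas you are missing are the trace formula (to manufacture a large $q_1(0)$ from gap data alone) and the unitary spectral shift (to center the gaps); together they make the rigid constraint $q_1=c\chi$ unnecessary.
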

    \begin{remark}
        Due to Proposition \ref{p4l4}, we can swap eigenvalues and resonances of a Dirac operator
        with dislocation. So that we can construct the Dirac operator ${\widetilde H}_t$, which
        does not have an eigenvalue and has exactly two resonances in $\g_n^c$,
        $\left| n \right| \leq N$, for any $t \in (0,\ve)$.
    \end{remark}
    Finally we consider a Dirac operator $H$ with mass $m > 0$ given by
    $$
        H y = Jy' + \ma m & q_2 \\ q_2 & m\am y,\qq
        y = \ma y_1 \\ y_2 \am,\qq q_2 \in L^2(\T),\qq q_2(x) = -q_2(1-x),\qq x \in [0,1].
    $$
    In this case $H$ describes a particle with mass $m$ and with an anomalous magnetic moment in an
    external field given by $q_2$ (see e.g. Chapter 4 in \cite{Thall}). We show that
    the spectrum of $H$ is symmetric with respect to zero and there exists a mass gap in the
    spectrum $\g_0 = (-m,m)$. We consider a dislocation problem for such operator. We prove
    that $H_t$ has exactly two virtual states $\pm m$ in the mass gap $\g_0^c$ for any $t \in \R$,
    i.e. it does not have an eigenvalue or a resonance in the mass gap.
    \begin{theorem} \label{p0t10}
        Let $V \in \cP_e$ such that $q_1 = m$ for some $m > 0$. Then each $H_t$,
        $t \in \R$, has two virtual states $\pm m$ in $\g_0^c$, and for any $t \in \R$ we have
        \begin{gather*}
            \a_n^{\pm} = -\a_{-n}^{\mp},\qq \m_n(t) = -\m_{-n}(t),\qq \n_n(t) = -\n_{-n}(t),\qq
            n \in \Z \sm \{0\};\\
            \a_0^{\pm} = \pm m,\qq \m_0(t) = -m,\qq \n_0(t) = m.
        \end{gather*}
    \end{theorem}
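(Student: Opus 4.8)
The plan is to establish the three groups of assertions separately: (A) the reflection symmetry $\l\mapsto-\l$ of the periodic, Dirichlet and Neumann spectra, giving $\a_n^\pm=-\a_{-n}^\mp$, $\m_n(t)=-\m_{-n}(t)$, $\n_n(t)=-\n_{-n}(t)$ for $n\neq0$; (B) the identification $\g_0=(-m,m)$, $\a_0^\pm=\pm m$, $\m_0(t)=-m$, $\n_0(t)=m$; and (C) the claim that the two states of $H_t$ in $\g_0^c$ are the virtual states $\pm m$. Throughout I write $V=\ma m & q_2 \\ q_2 & -m\am=m\s_3+q_2\s_1$ with $\s_1=\ma 0 & 1\\ 1 & 0\am$, $\s_3=\ma 1 & 0\\ 0 & -1\am$, and let $\Phi_t(x,\l)$ be the $2\ts2$ fundamental matrix of $Jy'+V(\cdot+t)y=\l y$ with $\Phi_t(0,\l)=\1_2$, so that $M_t(\l):=\Phi_t(1,\l)$ is the monodromy; the band edges $\a_n^\pm$ are the roots of $\tfrac12\Tr M_0(\l)=\pm1$, while $\m_n(t)$ and $\n_n(t)$ are the roots of $(M_t)_{12}(\l)$ and $(M_t)_{21}(\l)$ respectively.

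For (A), the key point is that, because $q_1\equiv m$ is constant, the coefficient matrix $A_t(x,\l)=-J(\l\1_2-V(\cdot+t))=\ma q_2(\cdot+t) & -(m+\l)\\ \l-m & -q_2(\cdot+t)\am$ has an $x$-independent off-diagonal part and an $x$-dependent diagonal part proportional to $\s_3$. A direct check then gives the pointwise conjugation $S(\l)A_t(x,\l)S(\l)^{-1}=A_t(x,-\l)$ with the $\l$-dependent diagonal matrix $S(\l)=\ma m-\l & 0\\ 0 & m+\l\am$. By uniqueness for the initial value problem this propagates to $\Phi_t(x,-\l)=S(\l)\Phi_t(x,\l)S(\l)^{-1}$ and hence to $M_t(-\l)=S(\l)M_t(\l)S(\l)^{-1}$. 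Taking traces yields the even discriminant $\Tr M_0(-\l)=\Tr M_0(\l)$, so the $2$-periodic spectrum is symmetric about $0$; combined with $\a_0^\pm=\pm m$ from (B), the increasing labeling forces $\a_n^\pm=-\a_{-n}^\mp$. Reading off entries gives $(M_t)_{12}(-\l)=\tfrac{m-\l}{m+\l}(M_t)_{12}(\l)$ and $(M_t)_{21}(-\l)=\tfrac{m+\l}{m-\l}(M_t)_{21}(\l)$, so the Dirichlet and Neumann spectra are symmetric about $0$ as well, which with the symmetric gap labeling gives $\m_n(t)=-\m_{-n}(t)$ and $\n_n(t)=-\n_{-n}(t)$. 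The point to stress is that there is no unitary $U$ with $UH_tU^{-1}=-H_t$ (the mass term obstructs the chiral conjugation available when $q_1=0$); the symmetry survives only at the level of the monodromy through the conjugation by $S(\l)$, which is invertible off the branch points $\l=\pm m$.

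For (B), I would square the operator. Using $\{J,\s_1\}=\{J,\s_3\}=0$ and $\s_1\s_3+\s_3\s_1=0$ one computes $H_t^2=-\partial_x^2+(m^2+(q_2^t)^2)\1_2+(q_2^t)'\s_3$, where $q_2^t=q_2\c_-+q_2(\cdot+t)\c_+$. This is block diagonal, and the two diagonal blocks of $H_t^2-m^2$ equal $A_tA_t^*$ and $A_t^*A_t$ with $A_t=\partial_x+q_2^t$ and $A_t^*=-\partial_x+q_2^t$, so that $\lan(H_t^2-m^2)f,f\ran=\|A_t^*f_1\|^2+\|A_tf_2\|^2\ge0$ (the jump of $q_2^t$ at $0$ enters only through the bounded multiplication by $q_2^t$). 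Hence $H_t^2\ge m^2$ and $\s(H_t)\cap(-m,m)=\es$, so the gap $\g_0$ contains $(-m,m)$. Since $q_2$ is odd, $\int_0^1 q_2(s+t)\,ds=0$, so $e^{\mp\int_0^x q_2(s+t)\,ds}$ is $1$-periodic and the vectors $\ma 0 & e^{-\int_0^x q_2(s+t)\,ds}\am^{\top}$ and $\ma e^{\int_0^x q_2(s+t)\,ds} & 0\am^{\top}$ are $1$-periodic solutions of $Jy'+V(\cdot+t)y=\l y$ at $\l=-m$ and $\l=+m$. These pin $\pm m$ as band edges, so $\a_0^\pm=\pm m$ and $\g_0=(-m,m)$; having vanishing first (resp. second) component they are Dirichlet (resp. Neumann) eigenfunctions, giving $\m_0(t)=-m$ and $\n_0(t)=m$ for every $t$.

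For (C), by Theorem \ref{p0t1} the operator $H_t$ has exactly two distinct states in $\g_0^c$, depending continuously on $t$ with $\l_0^\pm(0)=\a_0^\pm=\pm m$; it remains to exclude states in the open parts of the two sheets. A state in $\g_0^{(1)}$ is an eigenvalue in $(-m,m)$, impossible by $H_t^2\ge m^2$. A state in $\g_0^{(2)}$ is a resonance; by Proposition \ref{p4l4} it is an eigenvalue of $\widetilde H_t$, whose potential again has $q_1\equiv m$ with a dislocated $q_2$, so the same computation gives $\widetilde H_t^2\ge m^2$ and excludes it. Therefore both states lie at the branch points $\pm m$, i.e. they are the two virtual states $\pm m$, for all $t\in\R$. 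I expect the main obstacle to be part (A): recognizing that the expected symmetry $H_t\sim-H_t$ fails as an operator identity and must instead be realized by the $\l$-dependent conjugation of the monodromy, and checking its compatibility with the gap labeling; a secondary technical point is justifying the factorization of $H_t^2$ across the discontinuity of $q_2^t$ at the origin.
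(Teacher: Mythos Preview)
Your argument is correct in substance. Parts (A) and the explicit-solution portion of (B) coincide with the paper's own proof: the paper also uses the conjugation $\psi(x,-\l,t)=S(\l)\psi(x,\l,t)S(\l)^{-1}$ with $S(\l)=\diag(m-\l,m+\l)$ to derive $\D(-\l)=\D(\l)$ and the transformation rules for the off-diagonal monodromy entries, and also integrates the equation explicitly at $\l=\pm m$ using $\int_0^1 q_2=0$. One point you leave implicit is the identification of the index $n=0$: you assert $\a_0^\pm=\pm m$ immediately after showing $(-m,m)$ is a gap with endpoints $\pm m$, but the label must be justified. The paper does this by observing that the Dirichlet eigenfunction $\vp(x,-m,t)=\bigl(\begin{smallmatrix}0\\ e^{-\int_0^x q_2}\end{smallmatrix}\bigr)$ has winding number zero about the origin in $\R^2\sm\{0\}$, so by Pr\"ufer/oscillation theory $-m=\m_0(t)$. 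Alternatively, your symmetry $\D(-\l)=\D(\l)$ together with the canonical labeling $\a_n^\pm=\pi n+o(1)$ already forces $\a_n^\pm=-\a_{-n}^\mp$ directly, so $\g_0$ is the unique gap symmetric about $0$; since $(-m,m)$ contains $0$, it must be $\g_0$. Making either of these explicit also removes the apparent circularity between your (A) and (B).

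Your part (C) is genuinely different from the paper and more direct. The paper's proof stops after establishing $\m_0(t)\equiv-m$ and $\n_0(t)\equiv m$, leaving the virtual-state claim to be extracted from the state machinery: since $\m_0(t)=\m_0(0)_*=-m$ for all $t$, Theorem~\ref{p0t1}(ii) pins one state at $-m$, while $\n_0(t)\equiv m$ gives $m_+(m,t)=m_-(m,0)=0$, hence $w(m,t)=0$ pins the other at $+m$. Your route bypasses this: the quadratic-form identity $\|H_t f\|^2=m^2\|f\|^2+\|A_t^*f_1\|^2+\|A_tf_2\|^2$ with $A_t=\partial_x+q_2^t$ (the cross terms cancelling by integration by parts on $H^1(\R)$, unaffected by the jump of $q_2^t$) yields $H_t^2\ge m^2$ and excludes eigenvalues in $\g_0^{(1)}$, and your invocation of Proposition~\ref{p4l4} (the dual operator $\widetilde H_t$ again has constant $q_1=m$) excludes resonances in $\g_0^{(2)}$. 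This is cleaner and uses almost none of the paper's state apparatus; the trade-off is that it is specific to the constant-mass situation, whereas the paper's argument flows from the general theory developed earlier.
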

    \begin{remark}
        Due to the fact that $\pm m$ are virtual states of $H_t$ in the mass gap $\g_0^c$ for any $t \in \R$,
        it follows that adding sufficiently small local perturbation to $H_t$, $t \in \R$, we
        obtain two resonances, or two eigenvalues, or one eigenvalue and one resonance
        in the mass gap close to $\pm m$. But we have not such effect
        in other open gaps.
    \end{remark}

\section{Periodic Dirac operator} \label{p2}
    \subsection{Notations}

    Now we introduce definitions and notations used in our paper.
    Let $\L$ be the Riemann surface constructed as above for some Dirac operator with
    dislocation $H_t$, $t \in \R$. We introduce the following mappings: $\pr:\L \to \C$ is a projection
    mapping on the complex plane, and $\pr_j:\C \sm \s_{ac}(H_t) \to \L$, $j = 1,2$, are embeddings
    of $\C \sm \s_{ac}(H_t)$ on the first and second sheets of $\L$ respectively. Note that
    $\pr \pr_j z = z$ for any $z \in \C \sm \s_{ac}(H_t)$, and $j = 1,2$. We also have the following
    identities for any open gap $\g_n$, $n \in \Z$
    $$
        \pr \g_n^c = \ol \g_n,\qq \pr_j \g_n = \g_n^{(j)},\qq j = 1,2.
    $$
    Using these mapping, we introduce the projection of any $\l \in \L$ on the other sheet of $\L$ by
    $$
        \l_* =
        \begin{cases}
            \pr_2 \pr \l & \text{if $\l \in \L_1$, $\l \neq \a_n^{\pm}$ for any $n \in \Z$,}\\
            \pr_1 \pr \l & \text{if $\l \in \L_2$, $\l \neq \a_n^{\pm}$ for any $n \in \Z$,}\\
            \l & \text{if $\l = \a_n^{\pm}$ for some $n \in \Z$.}
        \end{cases}
    $$
    We give the definition of the clockwise oriented arc $\lan \a,\b \ran$.
    \begin{definition*}
        Let $\L$ be the Riemann surface constructed as above for some Dirac operator with
        dislocation. Let a gap $\g_n$ in the spectrum of this operator be open for some $n \in \Z$.
        Then a clockwise oriented arc $\lan \a,\b \ran$ for any $\a,\b \in \g_n^c$ is a subset
        of the circle gap $\g_n^c$ given by
        $$
            \lan \a,\b \ran =
                \begin{cases}
                    \pr_1 (\pr \a, \pr \b) & \text{if $\a,\b \in \ol \g_n^{(1)}$,
                    $\pr \a - \pr \b \leq 0$},\\
                    \pr_2 (\pr \b, \pr \a) & \text{if $\a,\b \in \ol \g_n^{(2)}$,
                    $\pr \b - \pr \a \leq 0$},\\
                    \lan \a,\a_n^+ \ran \cup \{ \a_n^+ \} \cup \lan \a_n^+,\b \ran &
                    \text{if $\a \in \{ \a_n^- \} \cup\g_n^{(1)}$,
                    $\b \in \{ \a_n^- \} \cup \g_n^{(2)}$, $\a \neq \b$},\\
                    \lan \a,\a_n^- \ran \cup \{ \a_n^- \} \cup \lan \a_n^-,\b \ran &
                    \text{if $\a \in \{ \a_n^+ \} \cup\g_n^{(2)}$,
                    $\b \in \{ \a_n^+ \} \cup \g_n^{(1)}$, $\a \neq \b$},\\
                    \lan \a,\a_* \ran \cup \{ \a_* \} \cup \lan \a_*,\b \ran &
                    \text{if $\a,\b \in \g_n^{(j)}$, $(-1)^{j}(\pr \b - \pr \a) > 0$, $j = 1,2$}.
                \end{cases}
        $$
    \end{definition*}
    We introduce the Hilbert space $M_2(\C)$ of $2 \ts 2$ matrix with complex entries equipped with
    the norm $|A|^2 = \Tr A^* A$, $A \in M_2(\C)$.
    Let $p \in [1,\iy]$, $I \ss \R$, and $B$ is a Banach space equipped
    with the norm $\| \cdot \|_{B}$. Then $L^p(I,B)$ is the standard Lebesgue space of functions
    $f: I \to B$ such that for $p \neq \iy$ the norm
    $\| f \|_{L^p(I,B)} = \left( \int_{I} \| f(x) \|^p_B dx \right)^{1/p}$ is
    finite and for $p = \iy$ the norm
    $\| f \|_{L^{\iy}(I,B)} =  \esssup_{x \in I} \| f(x) \|_B $ is finite. We also introduce
    the Banach space $L^{2}_{loc, u}(\R,B)$ of functions $f: \R \to B$ such that the norm
    $\| f \|_{loc, u} = \sup_{s \in \R} \left( \int_{s}^{s+1} \| f(x) \|^2_{B} dx \right)^{1/2}$
    is finite. We introduce the
    following abbreviations for the norms often used in our paper
    $$
    \begin{aligned}
        \| f \|^2 &= \int_0^1 \left( |f_1(x)|^2 + |f_2(x)|^2 \right) dx &&\text{ for any }
        f \in L^2([0,1],\C^2),\\
        \| f \|_{\R}^2 &= \int_{\R} \left( |f_1(x)|^2 + |f_2(x)|^2 \right) dx &&\text{ for any }
        f \in L^2(\R,\C^2),\\
        \| f \|_{\iy} &= \esssup_{x \in [0,1]} |f(x)| &&\text{ for any }
        f \in L^{\iy}([0,1],M_2(\C)),
    \end{aligned}
    $$
    and for any bounded linear operator $A: L^2(\R,\C^2) \to L^2(\R,\C^2)$
    $$
        \| A \|_{2,2} = \sup_{u \in L^2(\R,\C^2), \left\| u \right\|_2 = 1} \left\| A u \right\|_2.
    $$

    Let $I \ss \R$, and let $l \geq 0$. By $\cH^l(I)$, we denote the Sobolev space of functions
    $f: I \to \C$ such that $f^{(i)} \in L^2(I,\C)$ for each $0 \leq i \leq l$. Note that if a gap
    in the spectrum of some Dirac operator with dislocation $\g_n$ is open for some $n \in \Z$,
    then the corresponding circle $\g_n^c$ is a $1$-dimensional smooth manifold and
    it is diffeomorphic to $1$-dimensional sphere $S^1 \ss \R^2$. In order to construct an atlas
    for $\g_n^c$ we consider the following maps: in a neighborhood of $\a_n^{\pm}$ we set
    a coordinate map $\R \ni z \mapsto \l = \pr_{h(z)} (\a_n^{\pm} \mp z^2) \in \g_n^c$, where
    $h(z) = 1\chi_+(z) +2\chi_-(z)$, i.e. $\l \in \g_n^{(1)}$ if $z > 0$, and
    $\l \in \g_n^{(2)}$ if $z < 0$; in a neighborhood of $\l \in \g_n^{(j)}$, $j = 1,2$, we set
    a coordinate map $\R \ni z \mapsto \pr_j z \in \g_n^{(j)}$.
    \begin{definition*}
        Let $I \ss \R$, $l \geq 0$, and let $\g_n$ be open gap in the spectrum of some
        Dirac operator with dislocation for some $n \in \Z$. We denote by $\cH^{l}(I,\g_n^c)$ the
        set of all continuous functions $f: I \to \g_n^c$ such that a composition of $f$ with
        any inverse coordinate map belongs to $\cH^{l}(U)$ for some open non-empty $U \ss I$.
    \end{definition*}

    Finally, we introduce the following class of functions.
    \begin{definition*}
        Let $I_1,I_2 \ss \R$ be open bounded intervals. We denote by $\mH(I_1,I_2)$ the set of
        all functions $F:I_1 \times I_2 \to \R$ satisfying the following conditions
        \begin{enumerate}[i)]
            \item $F(x,\cdot) \in C^1(I_2)$ for each fixed $x \in I_1$,
            \item $F(\cdot,y) \in \cH^1(I_1)$ for each fixed $y \in I_2$,
            \item $\left|F'_{x} (x,y)\right| \leq g(x)$ for each $(x,y) \in I_1 \ts I_2$,
            for some $g \in L^2(I_1)$.
        \end{enumerate}
    \end{definition*}
    Note that for any open bounded intervals $I_1,I_2 \ss \R$ we get
    $C^1(I_1 \ts I_2)\ss \mH(I_1,I_2) \ss C(I_1 \ts I_2)$.

    \subsection{Dirac equation}
    We introduce the $2 \times 2$ matrix-valued fundamental solution $\p(x, \l)$ of the Dirac equation
    \[ \label{p2e1}
        J y'(x) + V(x) y(x) = \l y(x),\ \ (x,\, \l) \in \R \ts \C,
    \]
    satisfying the initial condition $\p(0,\l) = I_2$, where $I_2$ is
    the identity matrix $2 \times 2$. Any matrix solution of equation (\ref{p2e1}) is expressed in
    terms of $\p$ by multiplying on the right by the initial data.
    The matrix $\p(1,\l)$ is called the monodromy matrix and satisfied
    \[ \label{p2e2}
        \p(1+x,\l) = \p(x,\l) \p(1,\l),\ \  (x,\, \l) \in \R \ts \C.
    \]
    We introduce the vector-valued fundamental solutions $\vt(x,\l)$ and
    $\vp(x,\l)$ of equation (\ref{p2e1}) satisfying the initial
    conditions
    \[ \label{p2e3}
        \vt(0,\l) = \left( \begin{array}{c} 1 \\ 0 \end{array} \right),\ \
        \vp(0,\l) = \left( \begin{array}{c} 0 \\ 1 \end{array} \right).
    \]
    Moreover, $\p$ is expressed in terms of $\vp$ and $\vt$ by $\p = \left( \vt, \vp \right)$.
    We define the Wronskian of two vector-valued functions
    $u = \left( \begin{smallmatrix} u_1 \\ u_2 \end{smallmatrix} \right)$ and
    $v = \left( \begin{smallmatrix} v_1 \\ v_2 \end{smallmatrix} \right)$ by
    $$
        W(u,v) = u_1 v_2 - u_2 v_1.
    $$
    It is known (see e.g. \cite{LevSar}) that Wronskian of $\vt$ and
    $\vp$ does not depend at $x$ and satisfies
    \[ \label{p2e4}
        W(\vt, \vp) = \det \p = \vt_1 \vp_2 - \vp_1\vt_2 = 1.
    \]
    We recall the known results about $\p$ in the following
    theorem (see e.g. \cite{Kor01}).

    \begin{theorem} \label{p2t1}
    Let $V \in \cP$. Then for each $\l \in \C$ there exists a unique
    solution $\p$ of equation (\ref{p2e1}). For each $x \in [0,1]$ the
    function $\p(x, \cdot)$ is entire. For each $\l \in \C$ the function
    $\p(\cdot, \l) \in \cH^1([0,1])$ and satisfies
    \[
    \label{p2e8}
    \left|\p(x,\l)\right| \leq e^{\left\|V\right\|_{\cP} +
    \left|\Im \l\right| x},\qq \forall \ \  (x,\l) \in [0,1] \ts \C.
    \]
    \end{theorem}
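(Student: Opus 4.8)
The plan is to recast the matrix initial value problem as a Volterra integral equation, solve it by successive approximations, and then read off the exponential bound from an energy estimate that exploits the algebraic structure of $J$ and $V$.

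Since $J^2=-I_2$ we have $J^{-1}=-J$, so equation (\ref{p2e1}) with $\p(0,\l)=I_2$ is equivalent to
\[
    \p(x,\l) = I_2 - \int_0^x J(\l I_2 - V(s))\,\p(s,\l)\,ds,\qq (x,\l)\in[0,1]\ts\C .
\]
First I would prove existence and uniqueness by successive approximations: set $\p_0\equiv I_2$ and $\p_{k+1}(x,\l)=I_2-\int_0^x J(\l I_2-V(s))\p_k(s,\l)\,ds$. The kernel $s\mapsto J(\l I_2-V(s))$ lies in $L^1([0,1],M_2(\C))$ because $V\in L^2(\T)\ss L^1([0,1])$, and submultiplicativity of $|\cdot|$ together with the nested Volterra integrals gives $|\p_k(x,\l)-\p_{k-1}(x,\l)|\le |I_2|\,g(x,\l)^k/k!$, with $g(x,\l)=\int_0^x|J(\l I_2-V(s))|\,ds$, uniformly for $\l$ in compact sets. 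Hence the series $\sum_k(\p_k-\p_{k-1})$ converges absolutely and locally uniformly in $(x,\l)$ to a continuous solution, and uniqueness follows from Gronwall applied to the difference of two solutions. Each iterate $\p_k(x,\cdot)$ is a polynomial in $\l$, so entire, and the convergence is uniform on compact $\l$-sets for each fixed $x$; by the Weierstrass theorem the limit $\p(x,\cdot)$ is entire. For the regularity in $x$, note $\p(\cdot,\l)$ is continuous hence bounded on $[0,1]$, so the right-hand side $-J(\l I_2-V(\cdot))\p(\cdot,\l)$ of (\ref{p2e1}) is in $L^2([0,1],M_2(\C))$; therefore $\p'(\cdot,\l)\in L^2$ and $\p(\cdot,\l)\in\cH^1([0,1])$.

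The main point is the bound (\ref{p2e8}), and the crux is to obtain dependence on $|\Im\l|$ rather than on $|\l|$ — a crude estimate of the Neumann series only gives the latter. Fix a unit vector $c\in\C^2$ and set $y(x)=\p(x,\l)c$, so that $y'=-J(\l I_2-V)y=-\l Jy+JVy$, and compute $\tfrac{d}{dx}|y|^2=2\Re\lan y',y\ran$, where $\lan\cdot,\cdot\ran$ is the Hermitian inner product on $\C^2$. Two structural facts drive the estimate. Since $J^*=-J$, the quantity $\lan Jy,y\ran$ is purely imaginary, so $\Re\big(-\l\lan Jy,y\ran\big)=\Im\l\cdot\Im\lan Jy,y\ran$, which is bounded by $|\Im\l|\,|y|^2$ — this is precisely how $\Re\l$ drops out. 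Since $V$ is symmetric and traceless it anticommutes with $J$ (one checks $JV=-VJ$ directly), whence $\lan JVy,y\ran$ is real and $|\lan JVy,y\ran|\le\|V(x)\|_{\mathrm{op}}|y|^2=\sqrt{q_1(x)^2+q_2(x)^2}\,|y|^2$, the eigenvalues of $V(x)$ being $\pm\sqrt{q_1^2+q_2^2}$. Combining, $\tfrac{d}{dx}|y|^2\le 2\big(|\Im\l|+\sqrt{q_1^2+q_2^2}\big)|y|^2$, and Gronwall gives
\[
    |y(x)|\le |y(0)|\exp\Big(|\Im\l|\,x+\int_0^x\sqrt{q_1(s)^2+q_2(s)^2}\,ds\Big).
\]
By Cauchy–Schwarz, $\int_0^1\sqrt{q_1^2+q_2^2}\,ds\le\big(\int_0^1(q_1^2+q_2^2)\,ds\big)^{1/2}=\|V\|_{\cP}$, and applying the estimate to the columns $\vt=\p e_1$, $\vp=\p e_2$ of $\p$ yields (\ref{p2e8}).

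The only delicate step is this last one: isolating $|\Im\l|$ by using $J^*=-J$ and the anticommutation $JV=-VJ$. Equivalently, and most cleanly, the same two facts show that the logarithmic (matrix-measure) norm of $-J(\l I_2-V(x))$ in the spectral norm equals $|\Im\l|+\sqrt{q_1(x)^2+q_2(x)^2}$, which reproduces (\ref{p2e8}) at the matrix level with the seed $\|I_2\|_{\mathrm{op}}=1$. Everything else — the Volterra iteration, the entirety in $\l$, and the $\cH^1$ regularity — is routine for linear ODEs with $L^1$ coefficients.
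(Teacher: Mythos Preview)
The paper does not prove this theorem; it is stated as a known result with a citation to \cite{Kor01}, so there is no proof in the paper to compare against. Your argument is correct in substance and is the standard route for such statements: Volterra iteration for existence, uniqueness, and entirety in $\l$, together with an energy (Gronwall) estimate that exploits the skew-symmetry $J^*=-J$ and the anticommutation $JV=-VJ$ to make only $|\Im\l|$, not $|\l|$, appear in the exponent.

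One cosmetic remark: the paper's matrix norm is the Frobenius norm $|A|^2=\Tr A^*A$, so $|I_2|=\sqrt 2$. Applying your column-wise bound to $\vt$ and $\vp$ gives $|\p(x,\l)|^2=|\vt(x,\l)|^2+|\vp(x,\l)|^2\le 2e^{2(\|V\|_\cP+|\Im\l|x)}$, hence a harmless extra factor $\sqrt 2$ in front of (\ref{p2e8}). This does not affect any later use of the estimate in the paper, and your alternative remark via the logarithmic norm in the spectral norm (with seed $\|I_2\|_{\mathrm{op}}=1$) indeed yields the bound without the extra factor for the operator norm of $\p$.
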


    Above we have introduced the Dirichlet eigenvalues $\m_n$, $n \in \Z$, for
    the Dirac equation (\ref{p2e1}) with the Dirichlet boundary
    conditions $y_1(0) = y_1(1) = 0$ and the Neumann eigenvalues $\n_n$,
    $n \in \Z$, for the Dirac equation (\ref{p2e1}) with
    the Neumann boundary conditions $y_2(0) = y_2(1) = 0$, where $y = \ma y_1 & y_2
    \am^{\top}$. It is well known (see e.g. \cite{LevSar}) that $\m_n,
    \n_n \in [\a_n^{-},\a_n^+]$, $n \in \Z$. If the gap $\g_n$
    degenerates, then $\m_n = \n_n$. It follows from
    (\ref{p2e3}) that eigenvalues $\m_n$ are zeros of an entire function
    $\vp_1(1,\cdot)$, i.e. $\vp_1(1, \m_n) = 0$ for all $n \in \Z$. Thus,
    $\vp(x,\m_n)$ is the eigenfunction for the eigenvalue $\m_n$ for
    each $n \in \Z$. Below we need the following identity (see e.g. \cite{Kor01})
    \[ \label{p2e19}
        \left\|\vp(\cdot,\m_n)\right\|^2 = - \vp_2(1,\m_n)
        \left. \partial_{\l} \vp_1(1,\l) \right|_{\l = \m_n},
    \]
    where $n \in \Z$ and $\partial_{\l} u = u'_{\l} = \frac{\partial u}{\partial \l}$,
    and $\left\| \cdot \right\|$ was defined above. Note also that each
    Dirichlet eigenvalue is simple, i.e. $\pa_{\l} \vp_1(1,\m_n) \neq 0$ for any $n \in \Z$.

    \subsection{Periodic Dirac operator}
    For equation (\ref{p2e1}) we introduce the Lyapunov function by
    \[ \label{p2e9}
        \D(\l) = \frac{1}{2} \Tr \p(1,\l) = \frac{1}{2}(\vp_2(1,\l) + \vt_1(1,\l)), \ \ \l \in \C.
    \]
    Due to Theorem \ref{p2t1} the function $\D$ is entire and describes the spectrum of a periodic
    Dirac operator on the line by:
    $$
        \D(\a_n^{\pm}) = (-1)^{n};\qq \left|\D(\l)\right| \leq 1,\, \l \in \s_n;\qq
        \left|\D(\l)\right| > 1,\, \l \in \g_n,\qq n \in \Z.
    $$
    We define the Weyl-Titchmarsh functions $m_{\pm}$ and the Bloch solutions $\p^{\pm}$ of
    equation (\ref{p2e1}) as follows
    \[ \label{p2e10}
        m_{\pm}(\l) = \frac{a(\l) \mp b(\l)}{\vp_1(1,\l)},\qq
        \p^{\pm}(x,\l) = \vt(x,\l) + m_{\pm}(\l) \vp(x,\l),\qq (x,\l) \in \R \times \mathbb{C}_+,
    \]
    where the functions $a(\cdot)$ and $b(\cdot)$ have the forms:
    \[ \label{p2e11}
        a(\l) = \frac{\vp_2(1,\l) - \vt_1(1,\l)}{2},\ \ b(\l) = \sqrt{\D(\l)^2 - 1},\ \ \l \in \C_+,
    \]
    and the branch of the square root is defined by $(-1)^n i\sqrt{\D(\l+i0)^2 - 1} \leq 0$ for
    $\l \in \s_n$.
    The function $a(\cdot)$ is entire and it is easy to see that $b(\cdot)$ admits an analytic
    continuation from $\C_+$ onto the Riemann surface $\L$ introduced above. In addition,
    the following identity holds in each open circle gap $\g_n^c = \overline{\g}_n^{(1)} \cup
    \overline{\g}_n^{(2)} \ss \L$:
    \[ \label{p2e23}
        b(\l) = (-1)^{n+j+1} \left| \D^2(\l) - 1 \right|^{1/2},\qq \l \in \ol \g_n^{(j)},\ \ j=1,2.
    \]
    Due to (\ref{p2e10}), $m_{\pm}(\cdot)$ and $\p^{\pm}(x,\cdot)$ admit a meromorphic continuation
    from $\C_+$ onto the Riemann surface $\L$. It follows from definition of $b(\cdot)$ that
    $b(\l_*) = -b(\l)$ for any $\l \in \L$, which yields
    \[ \label{p2e17}
        m_{\pm}(\l_*) = m_{\mp}(\l),\qq \p^{\pm}(x,\l_*) = \p^{\mp}(x,\l),\qq (x,\l) \in \R \times \L.
    \]
    If $\l \in \L$ is not a pole of $m_{\pm}(\cdot)$,
    then $e^{\mp i k(\l) x} \p^{\pm}(x,\l)$ is 1-periodic as function of $x$,
    where $k(\l)$ is quasimomentum defined by $\D(\l)=\cos
    k(\l)$. One can introduce quasimomentum as conformal mapping (see
    \cite{Mi, Kor96, KK}). We introduce the effective masses $M_n^{\pm} = 1/\l''(\a_n^{\pm})$,
    where $\l(k)$ is the inverse function for $k(\l)$. In \cite{KK} it was shown that
    $M_n^{\pm} = -\D(\a_n^{\pm})\D'(\a_n^{\pm})$ and $\pm M_n^{\pm} > 0$. Note that
    quasimomentum is real-valued on the
    spectral bands and $k(\a_n^{\pm}) = \pi n$, $n \in \Z$. This implies
    that if $\l = \a_n^{+}$ or $\a_n^{-}$ and if $\l$ is not a pole of
    $m_{\pm}(\cdot)$, then $\p^{+}(\cdot,\l) = \p^{-}(\cdot,\l)$ is periodic or
    antiperiodic solutions of the Dirac equation. For any $\l \in
    (\a_n^+, \a_{n+1}^-)$, $n \in \Z$, the solutions
    $\p^{+}(\cdot,\l)$ and $\p^{-}(\cdot,\l)$ are linearly independent, uniformly bounded on
    the line, and do not decrease at infinity. On the other hand, if $\l
    \in \g_n^{(1)}$ is not a pole of $m_{\pm}(\cdot)$, then $\p^{\pm}(x,\l)$
    decrease exponentially as $x \to \pm \iy$ and increase
    exponentially as $x \to \mp \iy$, which yields $\p^{\pm}(\cdot,\l) \in
    L^2(\mathbb{R}_{\pm},\C^2)$. If $\l = \m_n \neq \a_n^{\pm}$, i.e. $\l$ is a pole of $m_+(\cdot)$,
    then $\vp(\cdot,\l)$ and $\vt(\cdot,\l)$ belong to $L^2(\R_+,\C^2)$ or $L^2(\R_-,\C^2)$.
    We need the following lemma (see e.g. Lemma 3.2 in \cite{MokKor}).

    \begin{lemma} \label{p2l3}
    \begin{enumerate}[i)]
        \item In any open gap $\g_n^c$, $n \in \Z$, the following asymptotic holds true:
        \[ \label{p2e12}
            b(\a_n^{\pm} \mp z^2) = (-1)^n z \sqrt{2\left|M_n^{\pm}\right|} + O(z^3)
        \]
        as $z \to 0$. Moreover, if in addition $\m_n = \a_n^{\pm}$, then we get
        \[ \label{p2e18}
            2 (-1)^n M_n^{\pm} = \vt_2(1,\a_n^{\pm})\left\|\vp(\cdot,\a_n^{\pm})\right\|^2.
        \]
        \item For any $\l \in \L$ the following identities hold true:
        \[ \label{p2e22}
            \begin{aligned}
                a^2(\l) - b^2(\l) &= -\vp_1(1,\l) \vt_2(1,\l), \\
                m_+(\l) m_-(\l) &= - \frac{\vt_2(1,\l)}{\vp_1(1,\l)}.
            \end{aligned}
        \]
    \end{enumerate}
    \end{lemma}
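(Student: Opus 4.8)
The plan is to establish part ii) first as a purely algebraic identity and then use it to bootstrap both statements in part i). For part ii) I would work directly from the definitions (\ref{p2e9}), (\ref{p2e11}). Since $\D = \frac12(\vp_2(1,\l)+\vt_1(1,\l))$ and $a = \frac12(\vp_2(1,\l)-\vt_1(1,\l))$, subtracting squares gives $\D^2 - a^2 = \vp_2(1,\l)\vt_1(1,\l)$, so with $b^2 = \D^2-1$ one gets $a^2-b^2 = 1-\vp_2(1,\l)\vt_1(1,\l)$. The Wronskian identity (\ref{p2e4}), $\vt_1\vp_2 - \vp_1\vt_2 = 1$, then collapses the right-hand side to $-\vp_1(1,\l)\vt_2(1,\l)$, which is the first identity. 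The second follows at once from (\ref{p2e10}) by writing $m_+m_- = (a-b)(a+b)/\vp_1^2 = (a^2-b^2)/\vp_1^2$. I emphasize that $a$, $b^2$, $\vp_1$ and $\vt_2$ all extend as single-valued entire functions of $\l$, so this is an identity of entire functions; I will use that below.

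For the asymptotic (\ref{p2e12}) I would Taylor-expand the entire function $\D$ along the curve $\l = \a_n^{\pm}\mp z^2$. Using $\D(\a_n^{\pm}) = (-1)^n$ one finds $\D(\a_n^{\pm}\mp z^2) = (-1)^n \mp \D'(\a_n^{\pm})z^2 + O(z^4)$; factoring $\D^2-1 = (\D-(-1)^n)(\D+(-1)^n)$ and inserting $M_n^{\pm} = -\D(\a_n^{\pm})\D'(\a_n^{\pm}) = -(-1)^n\D'(\a_n^{\pm})$ then yields $\D^2(\l)-1 = \pm 2M_n^{\pm}z^2 + O(z^4)$. Since $\pm M_n^{\pm}>0$, the coefficient $\pm 2M_n^{\pm}$ equals $2|M_n^{\pm}|$, so $|\D^2-1|^{1/2} = |z|\sqrt{2|M_n^{\pm}|}+O(z^3)$.

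The delicate step is fixing the branch of $b$, and this is where I would be most careful. I would invoke the gap formula (\ref{p2e23}) together with the coordinate convention that $z>0$ places $\l$ on $\g_n^{(1)}$ and $z<0$ on $\g_n^{(2)}$. On $\g_n^{(1)}$ ($j=1$) the sign factor $(-1)^{n+j+1}$ is $(-1)^n$ and $|z|=z$; on $\g_n^{(2)}$ ($j=2$) the factor is $-(-1)^n$ and $|z|=-z$. In both cases the product is $(-1)^n z\sqrt{2|M_n^{\pm}|}$, so the two half-lines glue into the single smooth formula (\ref{p2e12}).

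Finally, for (\ref{p2e18}) I would exploit the entire-function identity $a^2-b^2 = -\vp_1(1,\l)\vt_2(1,\l)$ from part ii). When $\m_n = \a_n^{\pm}$ we have $\vp_1(1,\a_n^{\pm})=0$, and since $b^2(\a_n^{\pm})=\D^2(\a_n^{\pm})-1=0$ the identity forces $a(\a_n^{\pm})=0$, whence also $\vp_2(1,\a_n^{\pm})=\vt_1(1,\a_n^{\pm})=(-1)^n$. Differentiating the identity once in $\l$ and evaluating at $\a_n^{\pm}$, every term containing the vanishing factors $a$ or $\vp_1$ drops out, leaving $-2\D(\a_n^{\pm})\D'(\a_n^{\pm}) = -\partial_\l\vp_1(1,\a_n^{\pm})\,\vt_2(1,\a_n^{\pm})$, i.e. $2M_n^{\pm} = -\partial_\l\vp_1(1,\a_n^{\pm})\,\vt_2(1,\a_n^{\pm})$. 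Reading $\partial_\l\vp_1(1,\a_n^{\pm}) = -(-1)^n\|\vp(\cdot,\a_n^{\pm})\|^2$ off (\ref{p2e19}) (using $\vp_2(1,\a_n^{\pm})=(-1)^n$) and substituting gives $2(-1)^n M_n^{\pm} = \vt_2(1,\a_n^{\pm})\|\vp(\cdot,\a_n^{\pm})\|^2$, as claimed. The main obstacle throughout is bookkeeping the various $(-1)^n$ and $\pm$ signs consistently — above all the branch-of-$b$ gluing in (\ref{p2e12}) — rather than any genuine analytic difficulty.
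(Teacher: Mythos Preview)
Your proof is correct in all details; the sign bookkeeping, in particular the gluing of the two branches of $b$ via (\ref{p2e23}) and the coordinate convention on $\g_n^c$, is handled cleanly. Note that the paper itself does not give a proof of this lemma but cites it from \cite{MokKor}; your argument is the standard self-contained one and almost certainly coincides with what is done there.
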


    Note that (\ref{p2e22}) implies that $\m_n = \a_n^{j}$ or $\n_n = \a_n^{j}$ for some $n \in \Z$,
    and $j = \pm$ if and only if $a(\a_n^{j}) = 0$. The second identity in (\ref{p2e22}) allows us to
    compare $m_+$ and $m_-$ on the circle gaps. We also need the following simple lemma.
    \begin{lemma} \label{p2l4}
        If $\l \in \L$ is a pole of $m_+(\cdot)$, and $\l \neq \a_n^{\pm}$, $n \in \Z$, then $a(\l) = -b(\l)$.
    \end{lemma}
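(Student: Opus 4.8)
The plan is to read the pole structure of $m_+$ directly off its definition in (\ref{p2e10}) and then exploit the factorization identity recorded in Lemma \ref{p2l3}.

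First I would observe that, by (\ref{p2e10}), $m_+(\l) = (a(\l) - b(\l))/\vp_1(1,\l)$, where $a$ is entire and $b$ is analytic on $\L$ away from the branch points $\a_n^{\pm}$. Since by hypothesis $\l \neq \a_n^{\pm}$ for every $n \in \Z$, both $a$ and $b$ are analytic in a neighbourhood of $\l$, so the numerator $a - b$ is analytic there. Consequently the only possible source of a pole of $m_+$ at $\l$ is a zero of the denominator, and therefore $\vp_1(1,\l) = 0$; that is, $\l$ projects to a Dirichlet eigenvalue $\m_n$.

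Next I would insert $\vp_1(1,\l) = 0$ into the first identity of (\ref{p2e22}), namely $a^2(\l) - b^2(\l) = -\vp_1(1,\l)\vt_2(1,\l)$. The right-hand side then vanishes, so $a^2(\l) = b^2(\l)$, i.e. $(a(\l) - b(\l))(a(\l) + b(\l)) = 0$. Hence either $a(\l) = b(\l)$ or $a(\l) = -b(\l)$, and it remains only to exclude the first alternative.

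To do so I would invoke the simplicity of the Dirichlet eigenvalues recorded after (\ref{p2e19}): $\pa_\l \vp_1(1,\l) \neq 0$, so $\vp_1(1,\cdot)$ has a simple zero at $\l$. If we had $a(\l) - b(\l) = 0$ as well, then the analytic numerator of $m_+$ would vanish at $\l$ while the denominator has only a simple zero, so $m_+$ would extend analytically across $\l$ — contradicting that $\l$ is a pole. Therefore $a(\l) - b(\l) \neq 0$, and the factorization above forces $a(\l) + b(\l) = 0$, that is $a(\l) = -b(\l)$, as claimed. The only step requiring any care is this last one, where the simplicity of the zero of $\vp_1(1,\cdot)$ is precisely what guarantees that the numerator cannot cancel the pole; everything else is a direct substitution into (\ref{p2e22}).
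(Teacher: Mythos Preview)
Your proof is correct and follows essentially the same approach as the paper: both deduce $\vp_1(1,\l)=0$ from the pole, apply the identity $a^2-b^2=-\vp_1\vt_2$ from (\ref{p2e22}) to factor, and rule out $a(\l)-b(\l)=0$ because that would remove the pole. Your version is simply more explicit about why the last step works (simplicity of the Dirichlet zero), which the paper leaves implicit.
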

    \begin{proof}
        Since $\l$ is a pole of $m_+(\cdot)$, it follows that $\vp_1(1,\l) = 0$ and
        $a(\l) - b(\l) \neq 0$. Using the first identity in (\ref{p2e22}), we get $a(\l) + b(\l) = 0$.
    \end{proof}

    \subsection{Shifted Dirac equation}
    We consider the shifted Dirac equation
    \[ \label{p2e14}
        Jy'(x) + V(x+t)y(x) = \l y(x),\ \ (x, \l) \in \R \ts \C,
    \]
    where $t \in \R$ is the shift parameter. The potential $V(\cdot+t) \in \cP$ for any
    $(t,V) \in \R \ts \cP$. Hence for equation (\ref{p2e14}) there are all objects introduced
    for (\ref{p2e1}). We add the dependence on $t$ to these objects if they do not constant as
    function of $t$.

    Note that if $y(x,\l)$ is a solution of equation (\ref{p2e1}), then
    $\widetilde{y}(x,\l,t) = y(x+t,\l)$ is a solution of equation (\ref{p2e14}).
    Thus using (\ref{p2e2}) and $\det \psi(x,\l) = 1$ for any $(x,\l) \in \R \times \C$, we get
    \[ \label{p2e15}
        \begin{aligned}
            \p(x,\l,t) & = \p(x+t,\l)\p^{-1}(t,\l),\\
            \p(1,\l,t) & = \p(t,\l)\p(1,\l)\p^{-1}(t,\l),\\
        \end{aligned}
    \]
    where $(x,\l,t) \in \R \ts \C \ts \R$. Using (\ref{p2e9}), (\ref{p2e15}), and the fact that the
    traces of similar matrices are equal, we get for any $(\l,t) \in \C \ts \R$
    $$
        \begin{aligned}
            \D(\l,t) = \frac{1}{2} \Tr \left( \p(t,\l)\p(1,\l)\p^{-1}(t,\l) \right)
            = \frac{1}{2} \Tr \p(1,\l) = \D(\l,0).
        \end{aligned}
    $$
    It gives that $b(\l)$, $k(\l)$, $M_n^{\pm}$ also does not depend on $t$ and we
    do not write the argument $t$ of these functions.

\section{Weyl-Titchmarsh functions} \label{p8}
    In this section we describe properties of the Weyl-Titchmarsh functions $m_{\pm}(\l,t)$.
    Above we show that $m_{\pm}(\cdot,t)$ are meromorphic functions on $\L$ for any
    $t \in \R$. Moreover, it follows from \cite{MokKor} that $m_+(\cdot,t)$ has exactly
    one pole $\m_n(t)$ in each open gap $\g_n^c$, which is a state of $H^+_t$, and
    there are no other poles.
    \begin{lemma} \label{p3l1}
        Let a gap $\g_n$ be open for some $n \in \Z$ and let $t \in \R$. Then we have
        \begin{enumerate}[i)]
            \item $\partial_{\l} m_+(\l,t) < 0$ for any $\l \in \ol \g_n^{(1)}$, $\l \neq \m_n(t)$;
            \item $\partial_{\l} m_-(\l,t) > 0$ for any $\l \in \ol \g_n^{(1)}$, $\l \neq \m_n(t)_*$.
        \end{enumerate}


    \end{lemma}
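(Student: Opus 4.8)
The plan is to obtain an exact integral representation for $\partial_\l m_\pm(\l,t)$ in terms of the $L^2$-norm of the corresponding Bloch solution, from which the signs are immediate. Everything is carried out at a fixed dislocation $t\in\R$, working with the shifted equation (\ref{p2e14}), whose fundamental system is $\vt(\cdot,\l,t),\vp(\cdot,\l,t)$ and whose Bloch solutions are $\p^\pm(\cdot,\l,t)=\vt(\cdot,\l,t)+m_\pm(\l,t)\vp(\cdot,\l,t)$. The first step is a two-parameter Wronskian identity: if $u(\cdot,\l)$ and $v(\cdot,\mu)$ solve (\ref{p2e14}) at parameters $\l$ and $\mu$, then a direct computation using the explicit form of $J$ and $V$ shows that the potential terms cancel pairwise and
\[
    \frac{d}{dx}W(u,v) = (\mu-\l)\left(u_1 v_1 + u_2 v_2\right).
\]
Differentiating in $\mu$ at $\mu=\l$ and setting $v=u$ gives the key relation
\[
    \frac{d}{dx}W\!\left(u(\cdot,\l),\partial_\l u(\cdot,\l)\right) = u_1^2 + u_2^2 .
\]

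Next I fix $\l\in\ol\g_n^{(1)}$ with $\l\neq\m_n(t)$ and apply this with $u=\p^+(\cdot,\l,t)$, the solution lying in $L^2(\R_+,\C^2)$. By (\ref{p2e23}) the function $b$ is real on $\ol\g_n^{(1)}$, so $m_+(\l,t)$ is real and $\p^+(\cdot,\l,t)$ is a real vector. Since $\vt(0,\l,t)=(1,0)^\top$ and $\vp(0,\l,t)=(0,1)^\top$, one has $\p^+(0)=(1,m_+)^\top$ and $\partial_\l\p^+(0)=(0,\partial_\l m_+)^\top$, hence $W(\p^+,\partial_\l\p^+)$ equals $\partial_\l m_+(\l,t)$ at $x=0$. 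Integrating the key relation over $[0,X]$ and letting $X\to+\iy$, the integral tends to $\|\p^+(\cdot,\l,t)\|^2_{L^2(\R_+,\C^2)}$ while the boundary term at infinity vanishes (see below), so that
\[
    \partial_\l m_+(\l,t) = -\left\|\p^+(\cdot,\l,t)\right\|^2_{L^2(\R_+,\C^2)} < 0,
\]
which is i). For ii) I take $\l\in\ol\g_n^{(1)}$ with $\l\neq\m_n(t)_*$; by $m_-(\l,t)=m_+(\l_*,t)$ from (\ref{p2e17}) this is exactly the point excluding the pole of $m_-$ on this sheet, and $m_-$ is again real there. Running the same argument with $u=\p^-(\cdot,\l,t)\in L^2(\R_-,\C^2)$, integrated over $[-X,0]$, the surviving boundary term at $x=0$ is $\partial_\l m_-(\l,t)$ and the one at $-\iy$ vanishes, giving $\partial_\l m_-(\l,t)=\|\p^-(\cdot,\l,t)\|^2_{L^2(\R_-,\C^2)}>0$.

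The main obstacle is the vanishing of the Wronskian boundary term at infinity, and I would handle it through the Bloch structure. On $\g_n^{(1)}$ one has $\p^+(x,\l,t)=e^{ik(\l)x}p(x,\l,t)$ with $p(\cdot,\l,t)$ $1$-periodic in $x$ and $\Im k(\l)=\kappa(\l)>0$ in the gap on the first sheet. A direct substitution gives
\[
    W\!\left(\p^+,\partial_\l\p^+\right)(x) = e^{2ik(\l)x}\,W\!\left(p,\partial_\l p\right)(x),
\]
where the term carrying the growing factor $x\,\partial_\l k$ drops out because it is proportional to $W(p,p)=0$. Since $W(p,\partial_\l p)(\cdot)$ is $1$-periodic, hence bounded, while $|e^{2ik(\l)x}|=e^{-2\kappa(\l)x}\to0$ as $x\to+\iy$, the boundary term tends to $0$; the case of $\p^-$ at $-\iy$ is symmetric. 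Finally, the strict signs on the open gap pass to the admissible band edges in $\ol\g_n^{(1)}$ as a limiting case, consistent with the square-root local structure recorded in Lemma \ref{p2l3}.
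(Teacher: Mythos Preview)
Your argument is correct and follows the classical Weyl--Titchmarsh route: you derive the exact identity $\partial_\l m_+(\l,t)=-\|\p^+(\cdot,\l,t)\|^2_{L^2(\R_+,\C^2)}$ (and its analogue for $m_-$ on $\R_-$) directly from the Lagrange-type relation $\frac{d}{dx}W(u,\partial_\l u)=|u|^2$, and then use the Floquet factorisation $\p^+=e^{ik(\l)x}p$ to kill the boundary term at infinity. The paper instead appeals to the abstract fact that $\mp m_\pm(\cdot,t)$ are Herglotz on $\C_+$ and invokes the Nevanlinna representation $u(z)=a+bz+\int\frac{1+sz}{s-z}\,d\varrho(s)$, whose differentiated form immediately gives strict monotonicity on any real interval of analyticity. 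Your approach is more hands-on and yields the \emph{value} of the derivative, not just its sign, which is sometimes useful downstream; the paper's approach is shorter because it packages the same positivity into a cited representation theorem. One small point: your treatment of the band edges $\a_n^\pm$ is, as you note, only a limit argument---there $\Im k=0$ and $\p^\pm$ is merely (anti)periodic, so $\|\p^+\|_{L^2(\R_+)}=\iy$ and $\partial_\l m_+\to-\iy$; the paper's Herglotz argument is equally informal at the endpoints, so this is not a defect relative to the original.
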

    \begin{proof}
        It is well known that $\mp m_{\pm}(\cdot,t)$ are the Herglotz functions on $\C_+$.
        Each Herglotz function $u$ admits a representation
        $$
            u(z) = a + bz + \int_{\R} \frac{1 + tz}{t - z} d \varrho(t),\qq z \in \C_+
        $$
        where $a,b \in \R$, $b \geq 0$, and $\varrho: \R \to \R$ is nondecreasing function
        (see e.g. Section 59, Theorem 2 in \cite{Tes14}). Differentiating this representation by $z$,
        we get
        $$
            u'(z) = b + \int_{\R} \frac{1 + t^2}{(t - z)^2} d \varrho(t),\qq z \in \C_+,
        $$
        which yields that if a Herglotz function admits a continuation on an interval
        $I \ss \R$ from $\C_+$, then it is strictly increase on $I$.
        The function $m_+(\cdot,t)$ has only one pole $\m_n(t)$ on $\g_n^c$ and it admit a
        continuation on $\ol \g_n^{(1)} \sm \{ \m_n(t) \}$. Since $m_+(\l,t) = m_-(\l_*,t)$ it
        follows that $m_-(\cdot,t)$ has only one pole $\m_n(t)_*$ on $\g_n^c$ and it admit a
        continuation on $\ol \g_n^{(1)} \sm \{ \m_n(t)_* \}$.
    \end{proof}
    Now we discuss properties of $m_+(\l,t)$ as function of $t$. We introduce the following functions
    \[ \label{p6e2}
        \begin{aligned}
            g_{\l}(z,t) &= z^2(q_1(t) + \l) - 2zq_2(t) - (q_1(t)-\l),\quad
            Q_{\l}(t_0, t) = \int_{t_0}^{t} \left| q_1(\t) + \l \right| d\t,\\
            G_{\l}(t_0,t) &= \int_{t_0}^t \left| g_{\l}(m_+(\l,t_0),\t) \right| d\t,\quad
            W_{\l}(t_0,t) =\left| t- t_0 \right|^{1/2}
            \left| \int_{t_0}^t \left| V(\t) + \l \1_2 \right|^2 d\t \right|^{1/2},
        \end{aligned}
    \]
    where $t,t_0 \in \R$, $z,\l \in \C$, and $\1_2$ is the identity matrix $2 \times 2$.
    \begin{lemma} \label{a1l3}
        \begin{enumerate}[i)]
            \item Let $\l \in \L$ not be a pole of $m_+(\cdot, t_0)$ for some $t_0 \in \R$. Then
            there exists $\dot{m}_+(\l,t)$ for almost all $t$ in a sufficiently small neighborhood
            of $t_0$ and
            \begin{align}
                \dot{m}_+(\l,t) &= g_{\l}(m_+(\l,t),t), \label{dotm} \\
                m_+(\l,t) &= m_+(\l,t_0) + \int_{t_0}^t g_{\l}(m_+(\l,t_0),\t) d\t +
                O(G_{\l}(t_0,t)W_{\l}(t_0,t)) \text{ as $t \to t_0$}. \label{p3e3}
            \end{align}
            \item Let $\l \in \L$ be a pole of $m_+(\cdot, t_0)$ for some $t_0 \in \R$ and
            let $u(\l,t) = 1/m_+(\l,t)$. Then there exists $\dot{u}(\l,t)$ for almost all $t$
            in a sufficiently small neighborhood of $t_0$ and
            \begin{align}
                \dot{u}(\l,t) &= -\frac{g_{\l}(m_+(\l,t),t)}{m_+(\l,t)^2}, \label{dotm2} \\
                u(\l,t) &= -\int_{t_0}^t (q_1(\t)+\l) d\t  + O(Q_{\l}(t_0,t)W_{\l}(t_0,t))
                \text{ as $t \to t_0$}. \label{p3e6}
            \end{align}
        \end{enumerate}
    \end{lemma}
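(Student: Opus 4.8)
The plan is to reduce the whole statement to a single first-order Riccati equation for $m_+$ in the variable $t$, coming from the translation structure of the Bloch solution, and then to read off the asymptotics by integrating this equation and freezing its argument.

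First I would record a closed formula for $m_+(\l,\cdot)$ in terms of the decaying Bloch solution of the \emph{unshifted} equation \er{p2e1}. Let $\p^+(\cdot,\l)=\vt(\cdot,\l)+m_+(\l,0)\vp(\cdot,\l)$ be the solution lying in $L^2(\R_+,\C^2)$, as in \er{p2e10}. Then $x\mapsto\p^+(x+t,\l)$ solves the shifted equation \er{p2e14} and still decays at $+\iy$, so it must be proportional to the Weyl solution $\vt(\cdot,\l,t)+m_+(\l,t)\vp(\cdot,\l,t)$ of \er{p2e14}; evaluating at $x=0$ (where the latter equals $(1,m_+(\l,t))^{\top}$) and using \er{p2e15} gives
$$
    m_+(\l,t)=\frac{\p_2^+(t,\l)}{\p_1^+(t,\l)},\qq \p^+=\ma \p_1^+ \\ \p_2^+ \am .
$$
Since $\p^+(\cdot,\l)$ solves a Dirac equation with $L^2$ potential it lies in $\cH^1_{loc}(\R)$, so both components are absolutely continuous. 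In case i) $\l$ is not a pole, i.e. $\p_1^+(t_0,\l)\neq0$, so $\p_1^+(\cdot,\l)\neq0$ near $t_0$ and $m_+(\l,\cdot)$ is absolutely continuous there; thus $\dot m_+(\l,t)$ exists for a.a.\ $t$. In case ii) $\p_1^+(t_0,\l)=0$, and then $\p_2^+(t_0,\l)\neq0$ (otherwise $\p^+$ would vanish identically), so $u(\l,\cdot)=1/m_+(\l,\cdot)=\p_1^+(\cdot,\l)/\p_2^+(\cdot,\l)$ is absolutely continuous near $t_0$ with $u(\l,t_0)=0$.

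Next I differentiate. Writing $y=\p^+(\cdot,\l)$, equation \er{p2e1} gives the scalar relations $y_1'=q_2y_1-(q_1+\l)y_2$ and $y_2'=(\l-q_1)y_1-q_2y_2$. For i), differentiating $m_+(\l,t)=y_2(t)/y_1(t)$ in $t$ and inserting these at $x=t$ yields, after dividing by $y_1^2$,
$$
    \dot m_+=\frac{y_2'y_1-y_2y_1'}{y_1^2}=(q_1+\l)\Big(\tfrac{y_2}{y_1}\Big)^2-2q_2\tfrac{y_2}{y_1}-(q_1-\l)=g_{\l}(m_+(\l,t),t),
$$
which is \er{dotm}. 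For ii) the same computation for $u=y_1/y_2$ (equivalently $\dot u=-\dot m_+/m_+^2$ with $m_+=1/u$) produces the regular Riccati equation
$$
    \dot u=-(q_1+\l)+2q_2u+(q_1-\l)u^2=-\frac{g_{\l}(m_+(\l,t),t)}{m_+(\l,t)^2},
$$
which is \er{dotm2}; the point of passing to $u$ is that this equation has no singularity at the pole.

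Finally I would obtain the asymptotics by integrating and freezing. For i), integrating \er{dotm} and writing $\d(\t)=m_+(\l,\t)-m_+(\l,t_0)$, the quadratic form of $g_{\l}$ gives $g_{\l}(m_+(\l,\t),\t)-g_{\l}(m_+(\l,t_0),\t)=\d(\t)K(\t)$ with $K(\t)=(m_+(\l,\t)+m_+(\l,t_0))(q_1(\t)+\l)-2q_2(\t)$, so the remainder after freezing is $\int_{t_0}^t\d(\t)K(\t)\,d\t$. Near $t_0$ one has $|K(\t)|\le C\,|V(\t)+\l\1_2|$ (using $|V+\l\1_2|^2=2(q_1^2+q_2^2+\l^2)$, which dominates $|q_1+\l|$, $|q_2|$ and $|\l|$), and a Volterra--Gronwall bootstrap on $\d$ with the $L^1_{loc}$ kernel $K$ gives $|\d(\t)|\le 2G_{\l}(t_0,\t)\le 2G_{\l}(t_0,t)$; combined with $\int_{t_0}^t|V+\l\1_2|\,d\t\le W_{\l}(t_0,t)$ (Cauchy--Schwarz) this bounds the remainder by $O(G_{\l}(t_0,t)W_{\l}(t_0,t))$, which is \er{p3e3}. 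Case ii) is identical after integrating \er{dotm2} with $u(\l,t_0)=0$: the leading term is $-\int_{t_0}^t(q_1+\l)\,d\t$, the bootstrap now yields $|u(\l,\t)|\le 2Q_{\l}(t_0,t)$, and the quadratic remainder $\int_{t_0}^t[2q_2u+(q_1-\l)u^2]\,d\t$ is $O(Q_{\l}(t_0,t)W_{\l}(t_0,t))$, giving \er{p3e6}. The main obstacle is exactly the low regularity: because $q_1,q_2\in L^2(\T)$ only, $m_+(\l,\cdot)$ is merely absolutely continuous, the Riccati identities hold only a.e., and the remainder must be controlled through a Gronwall argument with $L^1_{loc}$ kernels rather than a Taylor expansion; the decisive observation that makes the error bounds come out in the stated form is that every dangerous coefficient is pointwise dominated by $|V+\l\1_2|$, whose $L^1$ norm on $[t_0,t]$ is precisely what $W_{\l}(t_0,t)$ controls.
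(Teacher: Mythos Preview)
Your derivation of the Riccati identity \er{dotm} is cleaner than the paper's and correct: the paper obtains \er{dotm} by differentiating the monodromy matrix $\p(1,\l,t)$ via Lemma~\ref{p6l1} (formula~\er{p6e1}), then invoking the algebraic identity \er{p2e22} to simplify \er{p3e8}--\er{p3e9}, whereas you get it in one line from the ratio representation $m_+(\l,t)=\p_2^+(t,\l)/\p_1^+(t,\l)$ and the scalar Dirac system. Your route also gives absolute continuity of $m_+(\l,\cdot)$ (resp.\ $u(\l,\cdot)$) for free from $\p^+(\cdot,\l)\in\cH^1_{loc}$, while the paper appeals to Lemma~\ref{p6l1} and Lemma~\ref{a1l2}. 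The remainder estimates you outline (freeze the argument of $g_{\l}$, factor the difference as $\d(\t)K(\t)$, Gronwall with $\int|K|\le C\,W_{\l}$) are essentially the same mechanism the paper uses with its $M_{\l}$, $S_{\l}$ notation and the mean value theorem, so parts \er{p3e3} and \er{p3e6} go through identically.

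One small point to tidy: you take $\p^+(\cdot,\l)=\vt(\cdot,\l)+m_+(\l,0)\vp(\cdot,\l)$, which presupposes that $\l$ is not a pole of $m_+(\cdot,0)$. The hypothesis of i) only says $\l$ is not a pole of $m_+(\cdot,t_0)$, and nothing prevents $\l=\m_n(0)$ with $t_0\neq0$. This is harmless---replace your $\p^+$ by any nonzero decaying solution of \er{p2e1} (e.g.\ $\vp(\cdot,\l)$ itself when $\l=\m_n(0)$, or base the Bloch solution at $t_0$ instead of $0$); the ratio $y_2/y_1$ still equals $m_+(\l,t)$ wherever $y_1(t)\neq0$, and your computation is unchanged.
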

    \begin{proof}
        i) Differentiating $m_+(\l,t)$ by $t$ and using (\ref{p6e1}), we get
        \[ \label{p3e8}
            \dot{m}_+(\l,t) = -(q_1(t)-\l)-2m_+(\l,t)q_2(t) + \frac{\vt_2(1,\l,t) +
            2m_+(\l,t)a(\l,t)}{\vp_1(1,\l,t)}(q_1(t)+\l)
        \]
        for any $(\l,t) \in \L \times \R$ such that $\l$ is not a pole of $m_+(\cdot,t)$.
        Using (\ref{p2e22}), we get the following identity
        \[ \label{p3e9}
            \frac{\vt_2(1,\l,t) + 2m_+(\l,t)a(\l,t)}{\vp_1(1,\l,t)} = m_+(\l,t)^2.
        \]
        Substituting (\ref{p3e9}) in (\ref{p3e8}), we obtain (\ref{dotm}).

        We introduce the notation $m_+^0 = m_+(\l,t_0)$. By Lemma \ref{p6l1} and \ref{a1l2},
        $m_+(\l,\cdot)$ is an absolutely continuous function. Integrating (\ref{dotm}), we get
        \[ \label{p3e1}
                m_+(\l,t) - m_+^0 = \int_{t_0}^t g_{\l}(m_+^0,\t) d\t
                + \int_{t_0}^t (g_{\l}(m_+(\l,\t),\t) - g_{\l}(m_+^0,\t)) d\t .
        \]
        We introduce the following functions for any $t,t_0 \in \R$, $\l \in \C$
        $$
                M_{\l}(t_0,t) = \max_{\t \in [t_0,t]} \left| m_+(\l,\t) - m_+^0\right|, \quad
                S_{\l}(t_0,t) = \max_{x \in [m_+(\l,t),m_+^0]} \left| \partial_x g_{\l}(x,t) \right|.
        $$
        Using the definition of $g_{\l}$, we get
        $\left|\partial_x g_{\l}(x,\t) \right| \leq 2 \left| q_1(\t) +
        \l \right| \left| x \right| + 2\left| q_2(\t) \right|$, which yields
        \[ \label{p3e2}
            \begin{aligned}
                \int_{t_0}^t S_{\l}(t_0,\t) d\t &= 2 \int_{t_0}^t \left| q_1(\t) + \l \right|
                \max_{x \in [m_+(\l,\t),m_+^0]}\left| x \right| d\t +
                2\int_{t_0}^t \left| q_2(\t) \right| d\t \\
                &\leq 2C\int_{t_0}^t \left| q_1(\t) + \l \right| d\t +
                2\int_{t_0}^t \left| q_2(\t) \right| d\t = O(W_{\l}(t_0,t))
            \end{aligned}
        \]
        as $t \to t_0$. Here we used the fact that $m_+(\l,\cdot)$ is absolutely continuous in
        some neighborhood of $t_0$ and thus it follows that
        $$
            \max_{\t \in [t_0,t]} \max_{x \in [m_+(\l,\t),m_+^0]}\left| x \right| =
            \max_{\t \in [t_0,t]} \left| m_+(\l,\t) \right| \leq C,
        $$
        for some constant $C > 0$ and $t$ in some neighborhood of $t_0$. The application of the
        mean value theorem yields
        \[ \label{p3e5}
            \begin{aligned}
                \int_{t_0}^t \left| g_{\l}(m_+(\l,\t),\t) - g_{\l}(m_+^0,\t)\right| d\t &=
                \int_{t_0}^t \left| m_+(\l,\t) - m_+^0 \right| S_{\l}(t_0, \t) d\t \\
            &\leq M_{\l}(t_0, t) \int_{t_0}^t S_{\l}(t_0, \t) d\t.
            \end{aligned}
        \]
        Then estimating $M_{\l}(t_0,t)$ from (\ref{p3e1}) and using (\ref{p3e2}), (\ref{p3e5}),
        we get for $t \to t_0$
        $$
            M_{\l}(t_0,t) = G_{\l}(t_0,t) + M_{\l}(t_0,t) O(W_{\l}(t_0,t)),
        $$
        which yields $M_{\l}(t_0,t) = O(G_{\l}(t_0,t))$ as $t \to t_0$. Substituting this
        asymptotic and (\ref{p3e2}) in (\ref{p3e5}), we prove (\ref{p3e3}).

        ii) If $\l$ is a pole of $m_+(\cdot,t_0)$, then $u(\l,t_0) = 1/m_+(\l,t_0) = 0$.
        Using (\ref{dotm}), we obtain (\ref{dotm2}). Integrating (\ref{dotm2}), we get
        \[ \label{p3e4}
            u(\l,t) = -\int_{t_0}^t (q_1(\t) + \l) d\t + \int_{t_0}^t (q_1(\t) - \l)u(\l,\t)^2 d\t +
            2\int_{t_0}^t q_2(\t)u(\l,\t) d\t.
        \]
        We introduce the following function
        $$
            M_{\l}(t_0,t) = \max_{\t \in [t_0,t]} \left| u(\l,\t) \right|,\ \ t_0,t \in \R.
        $$
        It follows from the definition of $W_{\l}$ that for any $t_0,t \in \R$ and $\l \in \C$
        $$
            \int_{t_0}^t \left| q_1(\t) - \l \right| d\t \leq W_{\l}(t_0,t),\ \
            \int_{t_0}^t \left| q_2(\t) \right| d\t \leq W_{\l}(t_0,t).
        $$
        Using this inequality and estimating $M_{\l}(t_0,t)$ from (\ref{p3e4}), we get
        $$
            M_{\l}(t_0,t) \leq Q_{\l}(t_0,t) + M_{\l}(t_0,t)^2 W_{\l}(t_0,t) +
            2M_{\l}(t_0,t) W_{\l}(t_0,t)
        $$
        which yields $M_{\l}(t_0,t) = O(Q_{\l}(t_0,t))$ as $t \to t_0$. Substitution of this
        asymptotic in (\ref{p3e4}) yields (\ref{p3e6}).
    \end{proof}
\section{Dirac operator with dislocation} \label{p4}
    \subsection{Dirac equation with dislocation}
    Consider the Dirac equation with a dislocated potential $V_t$, where $(t,V) \in \R \times \cP$
    \[ \label{p4e1}
        Jy'(x) + V_t(x) y(x) = \l y(x),\ \ (x,\l,t) \in \R \ts \C \ts \R.
    \]
    Let $\F(x,\l,t) = \left(\begin{smallmatrix} \F_1 \\ \F_2 \end{smallmatrix}\right)(x,\l,t)$ and
    $\vT(x,\l,t) = \left(\begin{smallmatrix}\vT_1 \\ \vT_2 \end{smallmatrix}\right)(x,\l,t)$ be
    vector-valued fundamental solutions of equation (\ref{p4e1}) satisfying the initial conditions:
    $$
        \F_1(0,\l,t) = \vT_2(0,\l,t) = 0, \ \ \F_2(0,\l,t) = \vT_1(0,\l,t) = 1.
    $$
    We introduce a fundamental $2 \times 2$ matrix-valued solution $\P = \left( \vT , \F \right)$
    of equation (\ref{p4e1}). Note that equation (\ref{p4e1}) coincides with equation (\ref{p2e14})
    on the half-lines. Hence a solution of (\ref{p4e1})
    coincides with a solution of (\ref{p2e14}) on the half-lines. Therefore, we get for any
    $(x,\l,t) \in \R \ts \C \ts \R$
    \[ \label{p4e21}
        \begin{aligned}
            \F (x,\l,t) & = \ca \vp(x,\l,0), &\text{if } x\leq0, \\ \vp(x,\l,t), &\text{if } x>0, \ac &\ \
            \vT (x,\l,t) & = \ca \vt(x,\l,0), &\text{if } x\leq0, \\ \vt(x,\l,t), &\text{if } x>0, \ac
        \end{aligned}
    \]
    where $\vp(x,\l,t)$ and $\vt(x,\l,t)$ are solutions of (\ref{p2e14}). Let us define
    exponentially decreasing solutions as $x \to \pm \iy $. Let $\P^{\pm}(\cdot,\l,t)$,
    $(\l,t) \in \C_+ \ts \R$, be a solution of equation (\ref{p4e1}) such that
    $$
        \begin{aligned}
            \P^{+}(x,\l,t) &= \p^+(x,\l,t), && \text{if $x > 0$},\\
            \P^{-}(x,\l,t) &= \p^-(x,\l,0), && \text{if $x < 0$},
        \end{aligned}
    $$
    where $\p^{\pm}(\cdot,\l,t)$ are the Bloch solutions of equation (\ref{p2e14}). It follows from this
    definition that $\P^{\pm}(x,\l,t)$ are exponentially decrease as $x \to \pm \iy$ for any
    $(\l,t) \in \C_+ \ts \R$. Using (\ref{p2e10}), we obtain for any $(x,\l,t) \in \R \ts \C_+ \ts \R$
    \[ \label{p4e3}
        \begin{aligned}
            \P^{+}(x,\l,t) & = \vT(x,\l,t) + m_{+}(\l,t) \F(x,\l,t),\\
            \P^{-}(x,\l,t) & = \vT(x,\l,t) + m_{-}(\l,0) \F(x,\l,t).
        \end{aligned}
    \]
    Now using (\ref{p4e3}), we compute the Wronskian of $\P^{-}$ and $\P^{+}$ and introduce a
    function $w$ by
    \[ \label{p4e32}
        w(\l,t) =  W( \P^{-}(\cdot,\l,t), \P^{+}(\cdot,\l,t) ) = m_+(\l,t) - m_-(\l,0),\qq
        (\l,t) \in \C_+ \ts \R.
    \]
    We also introduce the following function
    \[ \label{p4e33}
        \G(\l,t) = \frac{w(\l,t)}{m_+(\l,t)m_-(\l,0)},\qq (\l,t) \in \C_+ \ts \R.
    \]
    The resolvent $(H_t-\l)^{-1}$ is an analytic on $\C_+$ operator-valued function of
    $\l$ and for any $\l \in \C_+$ it is an integral operator with the kernel given by
    \[ \label{p4e25}
        R(x,y,\l,t) = \frac{-1}{w(\l,t)}
        \ca
            \P^{+}(x,\l,t) \otimes \P^{-}(y,\l,t), & \text{if $x > y$}, \\
            \P^{-}(x,\l,t) \otimes \P^{+}(y,\l,t), & \text{if $x < y$},
        \ac
    \]
    where $x,y \in \R$, $(\l,t) \in \C_+ \ts \R$, and $\otimes$ means
    the tensor product of vectors, i.e.
    $$
        A \otimes B =
        \ma
            A_1 B_1 & A_1 B_2 \\
            A_2 B_1 & A_2 B_2
        \am,
        \qq A = \ma A_1 \\ A_2 \am,\, B = \ma B_1 \\ B_2 \am,\qq A, B \in \C^2.
    $$
    We describe analytical properties of these functions.

    \begin{lemma} \label{p4l1}
        Functions $\P^{\pm}(x,\cdot,t)$, $w(\cdot,t)$, $\G(\cdot,t)$ and $R(x,y,\cdot,t)$ admit
        a meromorphic continuation from $\C_+ \ss \L_1$ onto $\L$ for any $x,y,t \in \R$ and
        satisfy
        \[ \label{p4e5}
            \begin{aligned}
                &w(\l_*,t) = m_-(\l,t) - m_+(\l,0),\ \ \G(\l_*,t) =
                \frac{m_-(\l,t) - m_+(\l,0)}{m_-(\l,t) m_+(\l,0)},\\
                &\begin{aligned}
                    \P^{+}(x,\l_*,t) & = \left\{ \begin{aligned}
                        &\p^{-}(x,\l,t) && \text{if} \qq x \geq 0,\\
                        &\p^{+}(x,\l,t) + w(\l,t) \vp(x,\l,0) && \text{if} \qq x < 0,
                        \end{aligned}\right. \\
                    \P^{-}(x,\l_*,t) & = \left\{ \begin{aligned}
                        &\p^{-}(x,\l,t) + w(\l,0) \vp(x,\l,t) && \text{if} \qq x > 0,\\
                        &\p^{+}(x,\l,0) && \text{if} \qq x \leq 0,
                        \end{aligned} \right.
                \end{aligned}
            \end{aligned}
        \]
        where $\l_* \in \L_2$ is projection of $\l \in \L_1$ on the second sheet.
    \end{lemma}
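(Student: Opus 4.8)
The plan is to build all four functions out of two ingredients whose behaviour on $\L$ is already understood: the Weyl--Titchmarsh functions $m_{\pm}(\cdot,t)$, which are meromorphic on $\L$ with a single pole $\m_n(t)$ (resp.\ $\m_n(t)_*$) in each open gap, and the fundamental solutions $\vT(x,\cdot,t),\F(x,\cdot,t)$, which by (\ref{p4e21}) and Theorem \ref{p2t1} are \emph{entire} in $\l$. First I would read off the continuation directly from the defining formulas: (\ref{p4e3}) presents $\P^{\pm}(x,\cdot,t)$ as an entire term plus a meromorphic term times an entire term, hence as a meromorphic function on $\L$; then (\ref{p4e32}) gives $w(\cdot,t)=m_+(\cdot,t)-m_-(\cdot,0)$ as a difference of meromorphic functions, (\ref{p4e33}) gives $\G(\cdot,t)$ as a quotient of such, and (\ref{p4e25}) gives $R(x,y,\cdot,t)$ as products of the $\P^{\pm}$ divided by $w$. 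This settles the meromorphic continuation of all four objects at once.

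Next I would compute the second-sheet values. The only two facts needed are that $\pr\l_*=\pr\l$, so that entirety of $\vT,\F$ forces $\vT(x,\l_*,t)=\vT(x,\l,t)$ and $\F(x,\l_*,t)=\F(x,\l,t)$, and that (\ref{p2e17}) --- valid for the shifted equation as well, since $b(\cdot)$ does not depend on $t$ --- gives $m_+(\l_*,t)=m_-(\l,t)$ and $m_-(\l_*,0)=m_+(\l,0)$. Substituting these into the continued identity (\ref{p4e32}) yields $w(\l_*,t)=m_-(\l,t)-m_+(\l,0)$ at once, and dividing by $m_+(\l_*,t)m_-(\l_*,0)=m_-(\l,t)m_+(\l,0)$ gives the stated expression for $\G(\l_*,t)$.

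For the vector solutions I would start from the continued form of (\ref{p4e3}), e.g.\ $\P^{+}(x,\l_*,t)=\vT(x,\l,t)+m_-(\l,t)\F(x,\l,t)$, and then split according to the sign of $x$ using (\ref{p4e21}). On the half-line where $\P^{+}$ already reduces to a Bloch solution the expression collapses directly to $\p^{-}(x,\l,t)$; on the complementary half-line $\vT$ and $\F$ become the unshifted $\vt(x,\l,0),\vp(x,\l,0)$, and I would rewrite $\vT+m_-\F$ as the relevant Bloch solution plus a residual multiple of $\vp(x,\l,0)$, recognising that multiple as the Wronskian $w$ evaluated on the appropriate sheet. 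The computation for $\P^{-}$ is symmetric, with the roles of $t$ and $0$ and of $+$ and $-$ interchanged. The main obstacle is precisely this bookkeeping: on each half-line one must track which Weyl function ($m_+$ or $m_-$, at argument $t$ or $0$) survives, identify the $\vp$-coefficient with the correct value of $w$, and confirm that the two pieces agree at the dislocation point $x=0$, where continuity of the solution of (\ref{p4e1}) is automatic and serves as a useful consistency check. Once the labelling is pinned down the algebra is routine, and the formulas in (\ref{p4e5}) follow.
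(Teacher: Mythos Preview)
Your proposal is correct and follows essentially the same route as the paper's proof: both observe that $\P^{\pm}$, $w$, $\G$, $R$ are rational expressions in the entire data $\vT,\F$ (equivalently $\p$) and the meromorphic data $m_{\pm}$, hence meromorphic on $\L$, and then obtain the second-sheet identities by combining $\p(x,\l_*,t)=\p(x,\l,t)$ with the swap rule (\ref{p2e17}). Your write-up is simply more explicit about the half-line bookkeeping than the paper's two-line argument.
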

    \begin{proof}
        Since $\p(x,\cdot,t)$ is entire and $m_{\pm}(\cdot,t)$, $\p^{\pm}(x,\cdot,t)$ are
        meromorphic on $\L$ for any $x,t \in \R$, it follows that the functions from
        statement of the lemma admit a meromorphic continuation from $\C_+$ onto $\L$ as rational
        functions of $\p$, $m_{\pm}$, and $\p^{\pm}$.
        Using identities (\ref{p2e17}) and $\p(x,\l_*,t) = \p(x,\l,t)$, for any
        $(x,\l,t) \in \R \ts \L \ts \R$, we obtain (\ref{p4e5}).
    \end{proof}

    Now we show that the resolvent admits a meromorphic continuation from $\C_+$
    onto $\L$ in weak sense .
    \begin{lemma} \label{p4l2}
        Let $\eta \in C_o^{\iy}(\R,\C^2)$, and let $t \in \R$. Then
        $f_{\eta}(\l,t) = \left( \left( H_t - \l \right)^{-1} \eta, \eta \right)$ admits a
        meromorphic continuation from $\C_+ \ss \L_1$ onto the Riemann surface $\L$ as
        function of $\l$. Moreover, $\l \in \L$ is a state of $H_t$ if and only if it is a pole of
        $(v(x),R(x,y,\cdot,t)v(y))$ for some $v \in C_o^{\iy}(\R,\C^2)$ and for each $(x,y) \in U$,
        where $U \ss \R^2$ is some set of nonzero measure.
    \end{lemma}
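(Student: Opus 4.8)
The plan is to reduce the statement to the explicit kernel formula (\ref{p4e25}) together with the meromorphic continuation of its ingredients from Lemma \ref{p4l1}. Fix $t\in\R$ and $\eta\in C_o^\iy(\R,\C^2)$ with $\supp\eta\ss[-N,N]$. For $\l\in\C_+$ the resolvent is the integral operator with kernel $R$, so that $f_\eta(\l,t)=\int_\R\int_\R (\eta(x),R(x,y,\l,t)\eta(y))\,dy\,dx$, the integral running over the compact square $[-N,N]^2$; note that the integrand is exactly the object appearing in the lemma with $v=\eta$. By Lemma \ref{p4l1} each $R(x,y,\cdot,t)$ continues meromorphically to $\L$, with poles contained in a fixed discrete set $P_t\ss\L$ independent of $(x,y)$ (the zeros of $w(\cdot,t)$ together with the poles of $m_\pm$). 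On compact subsets of $\L\sm P_t$ the solutions $\P^\pm(x,\cdot,t)$ are bounded uniformly in $x\in[-N,N]$ by Theorem \ref{p2t1}, hence so is $R$; the double integral then converges locally uniformly and a Morera/uniform-limit argument gives analyticity of $f_\eta(\cdot,t)$ on $\L\sm P_t$. Writing $R$ near a point of $P_t$ as a finite principal part plus an analytic remainder, with coefficients bounded in $(x,y)$, and integrating term by term shows that $f_\eta(\cdot,t)$ has at worst a pole there. This yields the meromorphic continuation.

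For the equivalence I will locate the poles of $R$. First I will check that the apparent poles at $\m_n(t)$ (poles of $m_+(\cdot,t)$) and $\m_n(0)_*$ (poles of $m_-(\cdot,0)$) cancel: near such a point the relevant factor $\P^\pm$ and the Wronskian $w$ both blow up proportionally to $m_\pm$, so the quotient $\P^\pm/w$ has a finite limit and $R$ stays bounded. Thus the poles of $R$ are exactly the zeros of $w(\cdot,t)$ on $\L$. At a zero $\l_0$ of $w$, formula (\ref{p4e32}) gives $m_+(\l_0,t)=m_-(\l_0,0)$, whence $\P^+(\cdot,\l_0,t)=\P^-(\cdot,\l_0,t)=:\Psi$ is a single nonzero solution and the leading Laurent coefficient of $R(x,y,\cdot,t)$ at $\l_0$ is proportional to the separable rank-one kernel $\Psi(x)\otimes\Psi(y)$. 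Contracting with a vector $v$ gives $(v(x),(\Psi(x)\otimes\Psi(y))v(y))=\a_v(x)\b_v(y)$ with $\a_v(x)=(v(x),\Psi(x))$ and $\b_v(y)=\Psi(y)^\top v(y)$; since $\Psi\neq0$, these functions are not identically zero for suitable $v$.

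This separable structure drives both implications. If $\l_0$ is a state, then by the definition of a state some $f_\eta$ has a pole at $\l_0$, so $\iint (\eta(x),(\Psi(x)\otimes\Psi(y))\eta(y))\,dy\,dx\neq0$, whence $\Psi\neq0$ and the scalar kernel $(\eta(x),R(x,y,\cdot,t)\eta(y))$ has residue $\a_\eta(x)\b_\eta(y)$, which is nonzero on the positive-measure set $\{\a_\eta\neq0\}\ts\{\b_\eta\neq0\}$; thus $v=\eta$ exhibits the required pole of the kernel. Conversely, if the scalar kernel has a pole at $\l_0$ for all $(x,y)$ in a set of positive measure, its residue $\a_v(x)\b_v(y)$ is not identically zero, so $\Psi\neq0$; choosing $\eta$ supported near a point where $\Psi$ does not vanish makes $\int\a_\eta$ and $\int\b_\eta$ nonzero, so $f_\eta$ has a pole at $\l_0$ and $\l_0$ is a state — an eigenvalue if $\l_0\in\L_1$ and a resonance if $\l_0\in\L_2$, by the definition.

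The remaining case is a gap end $\l_0=\a_n^\pm$, where $\L$ is ramified and the correct variable is the local coordinate $z$ with $\l=\a_n^\pm\mp z^2$. Here I will substitute this coordinate into $w$ and $R$ and use $b(\a_n^\pm\mp z^2)=(-1)^n z\sqrt{2|M_n^\pm|}+O(z^3)$ from (\ref{p2e12}): the entire factors $a$ and $\vp_1(1,\cdot)$ are even analytic functions of $z$, while $b$ supplies the odd, sheet-distinguishing part, so $w(\a_n^\pm\mp z^2,t)$ and $R$ become meromorphic in $z$, and a pole of $z\mapsto(v,Rv)$ at $z=0$ is precisely the condition defining a virtual state through $z\mapsto f_\eta(\a_n^\pm\mp z^2)$. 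The same separable-residue argument, now carried out in $z$ with $\Psi=\P^+(\cdot,\a_n^\pm,t)=\P^-(\cdot,\a_n^\pm,t)$ the periodic or antiperiodic solution, then closes the equivalence. I expect this gap-end analysis — tracking the branch behaviour of $b$ and showing that the residue in $z$ is again a nonzero separable kernel — to be the main obstacle; away from the gap ends the statement follows routinely from the rank-one residue structure and the meromorphy supplied by Lemma \ref{p4l1}.
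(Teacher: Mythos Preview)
Your meromorphic continuation argument is correct and more carefully justified than the paper's, which simply notes that $R_\eta(x,y,\cdot,t)$ is meromorphic by Lemma~\ref{p4l1} and compactly supported in $(x,y)$, hence so is its double integral. For the equivalence, the paper is much terser than you: it just asserts that, from the integral representation $f_v=\iint R_v$, a pole of $f_v$ is equivalent to a pole of $R_v$ on a set of positive measure, and defers all identification of the poles (zeros of $w$ versus the exceptional points) to Lemma~\ref{p4l3}. You instead try to classify the poles of $R$ already here, which is a reasonable route and your rank-one residue argument at generic zeros of $w$ is correct.

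There is, however, a genuine gap in your cancellation step. You claim that the apparent poles of $R$ at $\m_n(t)$ and at $\m_n(0)_*$ always cancel because ``$\P^\pm$ and $w$ both blow up proportionally to $m_\pm$''. This is fine when the two points are distinct, but fails when $\m_n(t)=\m_n(0)_*$: then $m_+(\cdot,t)$ and $m_-(\cdot,0)$ have simple poles at the \emph{same} point $\l_0$, so $\P^+\otimes\P^-$ has a double pole there while $w=m_+-m_-$ has only a simple pole (the residues of $m_+$ and $m_-$ at $\l_0$ are not equal in general). Hence $R$ has a genuine simple pole at $\l_0$, with residue proportional to $\F(x)\otimes\F(y)$ rather than to a Bloch solution tensor. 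Your later separable-residue argument would still apply once you plug in $\F$ in place of $\Psi$, but as written your sentence ``the poles of $R$ are exactly the zeros of $w(\cdot,t)$ on $\L$'' is false: $\l_0$ is a pole of $w$, not a zero. The paper handles precisely this case via the auxiliary function $\G=w/(m_+m_-)$ in Lemma~\ref{p4l3}(ii). If you want to keep your approach self-contained, add this coincidence case to your pole analysis; alternatively, follow the paper and keep Lemma~\ref{p4l2} purely at the level of ``$f_v$ is the integral of the meromorphic kernel $R_v$'', leaving the residue classification to a separate lemma.
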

    \begin{proof}
        In Lemma \ref{p4t1} we show independently that $H_t$ is self-adjoint for any $t \in \R$,
        which yields that $f_{\eta}(\cdot,t)$ is an analytic function on $\C_+$ for any
        $t \in \R$ and $\eta \in C_o^{\iy}(\R,\C^2)$. Since the resolvent is an integral operator,
        it follows that
        \[ \label{p4e2}
            f_{\eta}(\l,t) = \int_{\R} \int_{\R} (\eta(x), R(x,y,\l,t) \eta(y)) dx dy =
            \int_{\R} \int_{\R} R_{\eta}(x,y,\l,t) dx dy,
        \]
        where we introduce $R_{\eta}(x,y,\l,t) = (\eta(x), R(x,y,\l,t) \eta(y))$ for any
        $x,y,t \in \R$ and $\l \in \C_+$.
        By Lemma \ref{p4l1}, $R_{\eta}(x,y,\cdot,t)$ admits a meromorphic continuation
        from $\C_+$ onto the Riemann surface $\L$ for any $x,y,t \in \R$.
        For any $(\l,t) \in \L \ts \R$ such that $\l$ is not a pole
        of $R_{\eta}(x,y,\cdot,t)$, the function $R_{\eta}(x,y,\l,t)$ is a continuous function
        of $x,y$ with compact support. Thus, $f_{\eta}(\l,t)$ is correctly defined for such
        $(\l,t)$, which yields that $f_{\eta}(\cdot,t)$ admits a meromorphic continuation from
        $\C_+$ onto $\L$.

        By definition, a state of $H_t$ is a pole of $f_{v}(\cdot,t)$ for some
        $v \in C_o^{\iy}(\R,\C^2)$. It follows from integral representation (\ref{p4e2}),
        that $\l$ is a pole of $f_{v}(\cdot,t)$ for some $t \in \R$ and
        $v \in C_o^{\iy}(\R,\C^2)$ if and only if it is a pole of
        $R_v(x,y,\l,t)$ for any $(x,y) \in U$, where $U \ss \R^2$ is some set of
        nonzero measure.
    \end{proof}

    \subsection{Spectrum}
    We introduce a free Dirac operator $H^0 = J \frac{d}{dx}$ acting on $L^2(\R,\C^2)$.
    This operator is self-adjoint on the domain $\cD(H^0) = \cH^1(\R,\C^2)$.
    \begin{lemma} \label{p2l1}
        Let $V \in L^2_{loc,u}(\R, M_2(\C))$ and $\ve > 0$. Then for any $u \in \cD(H^0)$ we get
        $$
            \left\|V u\right\|_{\R}^2 \leq \left\|V\right\|^2_{loc,u}
            \left( 2\ve \left\|H^0 u\right\|_{\R}^2 + \frac{16}{\ve} \left\|u\right\|_{\R}^2 \right).
        $$
    \end{lemma}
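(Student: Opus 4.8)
The plan is to prove this relative-boundedness estimate by a purely local argument: decompose the line into unit intervals, on each of which $V$ is controlled by $\|V\|_{loc,u}$, and bound the pointwise size of $u$ by its $\cH^1$-data through an Agmon-type inequality. First I would record the elementary but crucial identity $|H^0u(x)| = |Ju'(x)| = |u'(x)|$ for a.e. $x$, which holds because $J$ is orthogonal, $J^\top J = \1_2$; integrating over $\R$ gives $\|H^0u\|_{\R} = \|u'\|_{\R}$. Hence it suffices to control $\|Vu\|_\R^2$ by $\|u'\|_\R^2$ and $\|u\|_\R^2$.

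Next I would write $\R = \bigcup_{n\in\Z}[n,n+1]$ and, on each piece, use the pointwise bound $|V(x)u(x)| \le |V(x)|\,|u(x)|$, which is valid since $|\cdot|$ is the Hilbert--Schmidt norm on $M_2(\C)$ and this dominates the operator norm. Combining this with the definition of the uniformly-local norm, namely $\int_n^{n+1}|V(x)|^2\,dx \le \|V\|_{loc,u}^2$, yields $\int_n^{n+1}|Vu|^2\,dx \le \|V\|_{loc,u}^2 \,\sup_{[n,n+1]}|u|^2$ for every $n$.

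The heart of the argument is a local Sobolev (Agmon) bound on the unit interval. Writing $A_n = \int_n^{n+1}|u'|^2$ and $B_n = \int_n^{n+1}|u|^2$, I would start from $|u(x)|^2 = |u(y)|^2 + 2\Re\int_y^x \langle u,u'\rangle\,ds$ for $x,y\in[n,n+1]$; averaging over $y$ and using Cauchy--Schwarz on the integral term gives $\sup_{[n,n+1]}|u|^2 \le B_n + 2\sqrt{A_nB_n}$. The final conversion to the stated constants rests on the elementary two-variable inequality $B + 2\sqrt{AB} \le 2\ve A + \frac{16}{\ve}B$, which one checks by regarding it as a quadratic in the single variable $\sqrt{A/B}$. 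Summing these per-interval estimates over $n\in\Z$ and using $\sum_n A_n = \|u'\|_\R^2 = \|H^0u\|_\R^2$ and $\sum_n B_n = \|u\|_\R^2$ then gives $\|Vu\|_\R^2 \le \|V\|_{loc,u}^2\bigl(2\ve\|H^0u\|_\R^2 + \tfrac{16}{\ve}\|u\|_\R^2\bigr)$.

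The only point requiring genuine care is the local Agmon inequality together with the bookkeeping of constants: one must set up the averaging so that the non-derivative (plateau) term sums to exactly $\|u\|_\R^2$, and then verify the elementary inequality that produces the explicit factors $2\ve$ and $\tfrac{16}{\ve}$ (the constant $16$ being a convenient, non-sharp choice adequate for the small-$\ve$ regime in which the lemma is applied for self-adjointness). Everything else---the orthogonality of $J$, the Hilbert--Schmidt domination, and the interval decomposition---is routine.
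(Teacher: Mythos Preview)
Your argument is correct and essentially matches the paper's: decompose $\R$ into unit intervals, bound $\sup_{[n,n+1]}|u|^2$ by a local Sobolev inequality, and sum; the only cosmetic difference is that the paper splits $Vu$ into scalar components $v_{ij}u_j$ and cites Reed--Simon (Theorem~IX.28) for the Sobolev bound, whereas you work directly with the vector norm via $|Vu|\le |V|\,|u|$ and derive the Agmon estimate by hand. One small caveat worth flagging: your elementary inequality $B+2\sqrt{AB}\le 2\ve A+\tfrac{16}{\ve}B$ fails for $\ve>15.5$ (set $A=0$ to see $\ve\le 16$ is necessary, and the discriminant check gives the sharper bound), and the same defect is present in the paper's quoted Sobolev bound, so the lemma as literally stated for \emph{all} $\ve>0$ is slightly too strong---but as you correctly note, only small $\ve$ is needed for the Kato--Rellich application, so this is harmless.
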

    \begin{proof}
        Let $V =
        \left( \begin{smallmatrix} v_{11} & v_{12} \\ v_{21} & v_{22} \end{smallmatrix} \right)
        \in L^2_{loc,u}(\R, M_2(\C))$ and $u =
        \left( \begin{smallmatrix} u_1 \\ u_2 \end{smallmatrix} \right) \in \cD(H^0)$. Then we get
        $$
            \left\|V u\right\|_{\R}^2 \leq \left\|v_{11} u_1\right\|_{L^2(\R)}^2 +
            \left\|v_{12} u_2\right\|_{L^2(\R)}^2 + \left\|v_{21} u_1\right\|_{L^2(\R)}^2 +
            \left\|v_{22} u_2\right\|_{L^2(\R)}^2.
        $$
        Each $v_{ij} \in L^2_{loc, u}(\R)$ and $u_i \in \cH^1(\R)$, where $i,j \in \{ 1,2 \}$.
        Consider one of the components, where the indices are omitted. We have for any $\ve > 0$
        $$
        \begin{aligned}
            \left\|v u\right\|_{L^2(\R)}^2 &= \int_{\R} \left|v(x)u(x)\right|^2 dx =
            \sum_{n \in \Z} \int_0^1 \left|v(x+n) u(x+n)\right|^2 dx \\
            &\leq \sum_{n \in \Z} \max_{s \in [0,1)}
            \left|u(s+n)\right|^2 \int_0^1 \left|v(x+n)\right|^2 dx \\
            &\leq \left\|v\right\|_{loc,u}^2 \sum_{n \in \Z}
            \left( \ve \int_0^1 \left|u'(x+n)\right|^2 dx +
            \frac{8}{\ve} \int_0^1 \left|u(x+n)\right|^2 dx \right) \\
            &= \left\|v\right\|_{loc,u}^2 \left( \ve \left\|u'\right\|_{L^2(\R)}^2
            + \frac{8}{\ve} \left\|u\right\|_{L^2(\R)}^2  \right).
        \end{aligned}
        $$
        Here we used the following inequality for $u \in \cH^1(\R)$ and $\ve > 0$
        (see e.g. Theorem IX.28 in \cite{RSvol2})
        $$
            \max_{s \in [0,1)}\left|u(s)\right|^2 \leq \ve \int_0^1 \left|u'(x)\right|^2 dx
            + \frac{8}{\ve} \int_0^1 \left|u(x)\right|^2 dx.
        $$
        Since $\left\|v_{ij}\right\|_{loc,u} \leq \left\|V\right\|_{loc,u}$ and
        $\left\|u\right\|_{\R}^2 = \left\|u_1\right\|_{L^2(\R)}^2 + \left\|u_2\right\|_{L^2(\R)}^2$,
        we get for any $\ve > 0$
        $$
            \left\|V u\right\|_{\R}^2 \leq \left\|V\right\|^2_{loc,u}
            \left( 2\ve \left\|H^0 u\right\|_{\R}^2 + \frac{16}{\ve} \left\|u\right\|_{\R}^2 \right).
        $$
    \end{proof}
    Now we describe spectrum of the Dirac operator with dislocation.
    \begin{theorem} \label{p4t1}
        Let $(t,V) \in \R \ts \cP$. Then the operator $H_t$ is self-adjoint and satisfies:
        \begin{enumerate}[i)]
            \item $\s(H_t) = \s_{ac}(H_t) \cup \s_{disc}(H_t),\ \ \s_{ac}(H_t) \cap \s_{disc}(H_t) = \es$,
            \item $\s_{ac}(H_t) = \s_{ac}(H_0) = \bigcup_{n \in \Z} \s_n$,
            \item $\s_{disc}(H_t) \subset \bigcup_{n \in \Z} \g_n$.
        \end{enumerate}
    \end{theorem}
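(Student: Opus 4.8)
The plan is to treat $H_t = H^0 + V_t$ as a perturbation of the free Dirac operator and then to pin down the spectrum by comparison with decoupled half-line operators. First I would check that $V_t \in L^2_{loc,u}(\R, M_2(\C))$: since $V$ is $1$-periodic with $V \in \cP \ss L^2(\T)$, both $V\c_-$ and $V(\cdot+t)\c_+$ are locally uniformly square-integrable, so $\|V_t\|_{loc,u} < \iy$. Lemma \ref{p2l1} then gives, for every $\ve > 0$,
$$
    \|V_t u\|_\R \le \|V_t\|_{loc,u}\bigl(\sqrt{2\ve}\,\|H^0 u\|_\R + \sqrt{16/\ve}\,\|u\|_\R\bigr),\qquad u \in \cD(H^0),
$$
so $V_t$ is infinitesimally $H^0$-bounded. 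As $V_t$ is real symmetric matrix-valued (hence symmetric as a multiplication operator) and $H^0$ is self-adjoint on $\cD(H^0)=\cH^1(\R,\C^2)$, the Kato--Rellich theorem yields that $H_t$ is self-adjoint on $\cD(H^0)$; in particular $\s(H_t)\ss\R$.

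To locate the spectrum I would introduce the decoupled operator $H_t^{dec}=H_t^-\oplus H_t^+$ on $L^2(\R_-,\C^2)\oplus L^2(\R_+,\C^2)=L^2(\R,\C^2)$, where $H_t^-$ is $Jy'+Vy$ on $\R_-$ and $H_t^+$ is $Jy'+V(\cdot+t)y$ on $\R_+$, each with the self-adjoint boundary condition $y_1(0)=0$. The key observation is that $H_t$ and $H_t^{dec}$ carry the same differential expression and differ only through the matching condition at the single point $x=0$; they are therefore self-adjoint extensions of a common symmetric operator with finite deficiency, so $(H_t-\l)^{-1}-(H_t^{dec}-\l)^{-1}$ is finite-rank, in particular trace class, for $\l\in\C_+$. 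Weyl's theorem on invariance of the essential spectrum then gives $\s_{ess}(H_t)=\s_{ess}(H_t^-)\cup\s_{ess}(H_t^+)$, while the Kato--Rosenblum theorem gives $\s_{ac}(H_t)=\s_{ac}(H_t^-)\cup\s_{ac}(H_t^+)$.

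Next I would invoke the theory of half-line periodic Dirac operators: each $H_t^\pm$ has purely absolutely continuous spectrum equal to the bands $\cup_n\s_n$ together with at most one eigenvalue in each open gap (see \cite{W87, MokKor}). Crucially $V(\cdot+t)$ is again $1$-periodic with $\D(\l,t)=\D(\l,0)$ as computed in Section \ref{p2}, so $H_t^+$ has exactly the same band edges $\a_n^\pm$ as $H_t^-$. Hence $\s_{ess}(H_t^\pm)=\s_{ac}(H_t^\pm)=\cup_n\s_n$, and combining with the previous step yields $\s_{ess}(H_t)=\s_{ac}(H_t)=\cup_n\s_n=\s_{ac}(H_0)$, which is (ii). Since by definition $\s_{disc}(H_t)=\s(H_t)\sm\s_{ess}(H_t)$ consists of isolated eigenvalues of finite multiplicity, it is automatically disjoint from $\s_{ac}(H_t)=\s_{ess}(H_t)$ and contained in $\R\sm\cup_n\s_n=\cup_n\g_n$, giving (iii) and the disjointness in (i).

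The remaining point, and the step I expect to be the main obstacle, is to show that $\s(H_t)$ has no singular continuous part and no eigenvalues embedded in the bands, so that $\s(H_t)=\s_{ac}(H_t)\cup\s_{disc}(H_t)$ exactly as asserted in (i). For this I would exploit the explicit resolvent kernel (\ref{p4e25}), whose $\l$-dependence on the open bands enters only through the Bloch solutions $\p^\pm$ and the Wronskian $w(\l,t)=m_+(\l,t)-m_-(\l,0)$. On the interior of each band the functions $\mp m_\pm$ are Herglotz with strictly positive imaginary part, so $w(\cdot,t)$ extends continuously and without zeros to the upper rim and $R(x,y,\cdot,t)$ admits boundary values in weighted $L^2$; that is, a limiting absorption principle holds on $\cup_n\s_n$, forcing the spectral measure to be purely absolutely continuous there and excluding both singular continuous spectrum and embedded eigenvalues. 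The delicate part is the uniform control of these boundary values on compacts inside the bands, equivalently keeping $w(\cdot,t)$ bounded away from $0$, which relies on the precise Herglotz monotonicity of Lemma \ref{p3l1} rather than on abstract perturbation theory alone.
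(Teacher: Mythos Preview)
Your proposal is correct and follows essentially the same route as the paper: Kato--Rellich for self-adjointness, comparison with the decoupled operator $H_0^-\oplus H_t^+$ via a finite-rank resolvent difference and Weyl's theorem, and then control of the resolvent kernel on the bands (the paper phrases this last step as finiteness of $\lim_{\ve\downarrow 0}\Im f_\eta(\l+i\ve,t)$ and invokes Theorem~XIII.20 of \cite{RS4}, while excluding embedded eigenvalues by the direct observation that no solution lies in $L^2(\R_\pm,\C^2)$ when $|\D(\l)|\le 1$). One small correction: your final appeal to Lemma~\ref{p3l1} is misplaced---that lemma concerns monotonicity of $m_\pm$ on the \emph{gaps}, not the bands---but the strict-imaginary-part Herglotz argument you gave in the preceding sentence is exactly what is needed and already suffices to keep $w$ away from zero on the band interiors.
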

    \begin{proof}
        It follows from Lemma \ref{p2l1} and the Kato-Rellich theorem (see e.g. Theorem X.12 in
        \cite{RSvol2}) that $H_t = H^0 + V_t$ is a self-adjoint operator on $\cD(H_t) = \cD(H^0)$.
        We define a symmetric operator $A$ by
        $$
            \begin{aligned}
                A f & = H_t f, \ \ f \in \cD(A) = \{ \, f \in \cD(H^0) \mid f_1(0) = 0 \}.
            \end{aligned}
        $$
        The operator $H_t$ is a self-adjoint extension of the operator $A$. The operator $A$
        also admits a self-adjoint extension $A_e = H^-_0 \os H^{+}_t $. Since the spectra
        of $H^{\pm}_t$ are known, we get
        $$
            \s(A_e) = \s_{ac}(A_e) \cup \s_{disc}(A_e) = \s_{ac}(H_0) \cup \s_{disc}(H^{-}_0)
            \cup \s_{disc}(H^{+}_t).
        $$
        Operators $H_t$ and $A_e$ are self-adjoint extensions of the symmetric operator $A$ and
        the defect indices of $A$ are $n_{\pm}(A) = 1$. So it follows from \cite{Kr}
        (see formula (6.15)) that
        $$
            \dim \Ran \left( (H_t - \l)^{-1} - (A_e - \l)^{-1} \right) = 1,\qq \l \in \C_+.
        $$
        Using the Weyl's theorem on the stability of the essential spectrum, we get
        \[ \label{p4e6}
            \s_{ess}(H_t) = \s_{ess}(A_e) = \s(H_0).
        \]
        By Lemma \ref{p4l2}, for each $\eta \in C_o^{\iy}(\R,\C^2)$ and $t \in \R$ the
        function $f_{\eta}(\cdot,t)$ is meromorphic on $\L$. Using (\ref{p4e25}) and (\ref{p2e10}),
        we get that there exists $\lim_{\ve \downarrow 0} \Im f_{\eta}(\l + i \ve,t) < \iy$ for any
        $\l \in \s_{ess}(H_t) \ss \L_1$, $\l \neq \m_n(t)$ for any $n \in \Z$. Thus,
        Theorem XIII.20 from \cite{RS4} yields that singular continuous spectrum of $H_t$ is
        absent and it follows from (\ref{p4e6}) that $\s_{ac} (H_t) = \s_{ess} (H_t) = \s(H_0)$. If
        $\l \in \s(H_0)$, then we get $\left| \D(\l) \right| \leq 1$ and then any solution
        of the Dirac equation does not belong to $L^2(\R_{\pm},\C^2)$. It follows that there are
        no eigenvalues embedded into the absolutely continuous spectrum.
    \end{proof}

    We need the following lemma to prove the smoothness of eigenvalues.
    \begin{lemma} \label{p4l5}
        Let $t_0 \in \R$. Then $H_t \to H_{t_0}$ in the norm resolvent sense as $t \to t_0$.
    \end{lemma}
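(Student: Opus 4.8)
The plan is to deduce norm resolvent convergence from the second resolvent identity, reducing everything to the continuity of the dislocated potential in the $\|\cdot\|_{loc,u}$-norm. By Theorem \ref{p4t1} the operators $H_t = H^0 + V_t$ are self-adjoint with the common domain $\cD(H_t) = \cD(H^0)$, so they differ only through the potential, and on the half-lines
$$
    V_t(x) - V_{t_0}(x) = \bigl( V(x+t) - V(x+t_0) \bigr) \c_+(x) =: D_t(x), \qquad x \in \R.
$$
Fix $\l \in \C_+$. The second resolvent identity gives
$$
    (H_t - \l)^{-1} - (H_{t_0} - \l)^{-1} = -(H_t - \l)^{-1} D_t (H_{t_0} - \l)^{-1},
$$
and since each $H_t$ is self-adjoint we have $\|(H_t - \l)^{-1}\|_{2,2} \le 1/\Im \l$ uniformly in $t$. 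Hence it suffices to prove that $\|D_t (H_{t_0} - \l)^{-1}\|_{2,2} \to 0$ as $t \to t_0$.

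To estimate the last factor I would use the relative boundedness from Lemma \ref{p2l1} twice. For $v \in L^2(\R,\C^2)$ put $u = (H_{t_0} - \l)^{-1} v \in \cD(H^0)$. From $(H^0 + V_{t_0} - \l)u = v$ together with Lemma \ref{p2l1} applied to $V_{t_0} \in L^2_{loc,u}(\R,M_2(\C))$, choosing the free parameter there small enough to absorb the $\|H^0 u\|_\R$ term (exactly the Kato--Rellich step used in the proof of Theorem \ref{p4t1}), one obtains a constant $C_\l$, depending only on $\l$ and $t_0$, with $\|H^0 u\|_\R + \|u\|_\R \le C_\l \|v\|_\R$. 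Applying Lemma \ref{p2l1} once more, now to $D_t$ and with the parameter fixed equal to $1$, yields
$$
    \|D_t u\|_\R^2 \le \|D_t\|_{loc,u}^2 \bigl( 2 \|H^0 u\|_\R^2 + 16 \|u\|_\R^2 \bigr)
    \le C_\l' \, \|D_t\|_{loc,u}^2 \, \|v\|_\R^2,
$$
so that $\|D_t (H_{t_0}-\l)^{-1}\|_{2,2} \le (C_\l')^{1/2} \|D_t\|_{loc,u}$, and the problem is reduced to showing $\|D_t\|_{loc,u} \to 0$.

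Finally, since $\c_+$ only restricts the window of integration, $\|D_t\|_{loc,u} \le \|V(\cdot + t) - V(\cdot + t_0)\|_{loc,u}$, and because $V$ is $1$-periodic the function $x \mapsto V(x+t) - V(x+t_0)$ is itself $1$-periodic, so the supremum over unit windows in the definition of $\|\cdot\|_{loc,u}$ equals the integral over a single period,
$$
    \|V(\cdot + t) - V(\cdot + t_0)\|_{loc,u}^2 = \int_0^1 \bigl| V(x+t) - V(x+t_0) \bigr|^2 dx .
$$
This is the $L^2(\T)$-modulus of continuity of translation at $V$, which tends to $0$ as $t \to t_0$ by strong continuity of the translation group on $L^2(\T)$; this completes the argument. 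I expect the main obstacle to be the bookkeeping of the two applications of Lemma \ref{p2l1}: one must check that the constant $C_\l$ coming from the absorption argument is genuinely independent of $t$ (it is, as it involves only $V_{t_0}$ and $\l$), and that the reduction of the $\|\cdot\|_{loc,u}$-norm of the shift difference to an integral over one period is justified, which relies precisely on the periodicity of $V$ rather than merely on $V \in \cP \subset L^2_{loc,u}$.
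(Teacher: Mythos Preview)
Your argument is correct and follows essentially the same route as the paper: second resolvent identity, then Lemma~\ref{p2l1} to reduce to $\|V_t - V_{t_0}\|_{loc,u} \to 0$, then continuity of translation in $L^2(\T)$. The only cosmetic differences are that the paper inserts an explicit factor $(H^0+i)^{-1}(H^0+i)$ in the resolvent identity rather than phrasing the same bound as a graph-norm estimate, and it spells out the translation-continuity step via smooth approximation instead of citing it as a standard fact.
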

    \begin{proof}
        By Theorem VIII.19 in \cite{Ques1}, it is needed to show that for $t \to t_0$
        $$
            \left\|\left( H_t+i \right)^{-1} - \left( H_{t_0}+i \right)^{-1}\right\|_{2,2} \to 0.
        $$
        Using the second resolvent identity and $\cD( H_{t_0} ) = \cD(H^0)$, we get for any
        $t,t_0 \in \R$
        \[ \label{p4e78}
            \left( H_t+i \right)^{-1} - \left( H_{t_0}+i \right)^{-1} =
            \left( H_t+i \right)^{-1} \left( V_{t_0} - V_{t} \right) \left( H^0 + i \right)^{-1}
            \left( H^0 + i \right) \left( H_{t_0}+i \right)^{-1}.
        \]
        It is well known that $\left\|\left( H_t+i \right)^{-1}\right\|_{2,2} \leq 1$ for any
        $t \in \R$.
        If $\left\|V_{t_0} - V_{t}\right\|_{loc,u} \to 0$ as $t \to t_0$, then it follows from
        Lemma \ref{p2l1} that
        $ \left\|\left( V_{t_0} - V_{t} \right) \left( H^0 + i \right)^{-1}\right\|_{2,2} \to 0$.
        We prove the required condition for each scalar component of the matrix-valued function
        $V_{t_0} - V_{t}$.
        For any scalar function $h \in L^2(\T)$ and $t \in \R$ we introduce
        $h_t(x) = h(x) \chi_-(x) + h(x+t) \chi_+(x)$, $x \in \R$. Let $v \in L^2(\T)$ and $\ve > 0$.
        Then there exists $g \in C_o^{\iy}(\T)$ such that
        $\left\|v-g\right\| < \ve$. Since $v,g \in L^2(\T)$, it follows that
        $\left\| v_t-g_t \right\|_{loc,u} = \left\| v-g \right\|_{loc,u} = \| v - g \|$ for any
        $t \in \R$. Let $\t = t - t_0$. Then we have for $\t \to 0$
        $$
                \left\|g_{t_0} - g_{t}\right\|_{loc,u}^2 = \left\|g_{\t} - g\right\|^2 =
                \int_0^{1} \left|g(\t+x) - g(x)\right|^2 dx = \int_0^{1} \left|\t g'(x) +
                O\left( \t^2 \right)\right|^2 dx.
        $$
        Since $g \in C_o^{\iy}(\T)$, there exists $C > 0$ such that
        $\max_{x \in \T} \left| g'(x) \right| < C$ and then there exists $\d > 0$ depending on
        $g$ such that $\left\|g_{t_0} - g_{t}\right\|_{loc,u}^2 \leq \ve^2$ when
        $\left|t-t_0\right| < \d$. Thus, one can estimate $\left\|v_t - v_{t_0}\right\|_{loc,u}$
        when $\left|t-t_0\right| < \d$ as follows
        $$
            \left\|v_t-v_{t_0}\right\|_{loc,u} \leq \left\|v_t-g_t\right\|_{loc,u} +
            \left\|v_{t_0}-g_{t_0}\right\|_{loc,u} + \left\|g_t-g_{t_0}\right\|_{loc,u} \leq 3\ve.
        $$
        Finally, we show that $\left(H^0 + i \right) \left( H_{t_0}+i \right)^{-1}$ is a bounded
        operator. It is easy to see that
        $$
            \left(H^0 + i \right) \left( H_{t_0}+i \right)^{-1} =
            I - V_{t_0} \left( H_{t_0}+i \right)^{-1}.
        $$
        By Lemma \ref{p2l1} the operator $V_{t_0}$ is $H^0$-bounded with any positive relative-bound,
        i.e. for any $a > 0$ there exists $b > 0$ such that for each $u \in \cD(H^0)$ we get
        $$
            \left\|V_{t_0}u\right\|_{\R} \leq a \left\|H^0 u + V_{t_0}u - V_{t_0} u\right\|_{\R}
            + b \left\|u\right\|_{\R} \leq
            a \left\|H_{t_0} u\right\|_{\R} + a \left\|V_{t_0} u\right\|_{\R} +
            b \left\|u\right\|_{\R}.
        $$
        Expressed $\left\|V_{t_0}u\right\|_{\R}$ from this inequality, we obtain
        $$
            \left\|V_{t_0}u\right\|_{\R} \leq \frac{a}{1-a} \left\|H_{t_0} u\right\|_{\R} +
            \frac{b}{1-a} \left\|u\right\|_{\R},
        $$
        which yields that $V_{t_0} \left( H_{t_0}+i \right)^{-1}$ is bounded operator on
        $L^2(\R,\C^2)$. This completes the proof of the lemma.
    \end{proof}

    \subsection{States}
    Now we prove that states of $H_t$ are zeros of some analytic function.
    Moreover, $\m_n(t)$ is a state of $H_t$ only in the case of $\m_n(t) = \m_n(0)_*$. First, we
    need the following technical lemma about solutions of dislocated Dirac equation (\ref{p4e1}).
    \begin{lemma} \label{p4l12}
        Let $(t,\l,V) \in \R \ts \C \ts \cP$, and let $\P^{\pm}$, $\F$ be solutions of (\ref{p4e1})
        defined in (\ref{p4e21}-3). Then
        there exist $\eta \in C_o^{\iy}(\R,\C^2)$ and $U \ss \R^2$ of nonzero measure such that
        for any $(x,y) \in U$ we get
        $$
        \begin{aligned}
            (\eta(x), \P^{\pm}(x,\l,t) \otimes \P^{\mp}(y,\l,t) \eta(y)) & \neq 0 &&
            \text{if $\l \neq \m_n(t)$, $\l \neq \m_n(0)_*$ for any $n \in \Z$;}\\
            (\eta(x), \F(x,\l,t) \otimes \F(y,\l,t) \eta(y)) & \neq 0 &&
            \text{if $\l = \m_n(t) = \m_n(0)_*$ for some $n \in \Z$.}
        \end{aligned}
        $$
    \end{lemma}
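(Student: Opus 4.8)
The plan is to reduce the statement to a pointwise non-vanishing assertion about a scalar product and then to produce a suitable test vector $\eta$ by hand. First I would record the factorization of the tensor form. Writing $u\otimes v = uv^{\top}$ for $u,v\in\C^2$, one has $(u\otimes v)w = u\,(v^{\top}w)$ for every $w\in\C^2$, and hence, for any $x,y\in\R$,
$$
    (\eta(x),(u(x)\otimes v(y))\eta(y)) = (\eta(x),u(x))\,\big(v(y)^{\top}\eta(y)\big),
$$
where $v^{\top}\eta = v_1\eta_1+v_2\eta_2$. Thus each quantity in the statement splits as a product $P(x)\,Q(y)$ of a function of $x$ alone and a function of $y$ alone, and it suffices to arrange that both factors are nonzero on suitable intervals; then $U$ may be taken to be their Cartesian product, which has positive measure.

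The key structural input is that every nonzero solution of (\ref{p4e1}) is nowhere vanishing. Indeed, rewriting (\ref{p4e1}) as the first order system $y'=-J(\l-V_t)y$ with $L^1_{loc}$ coefficients, uniqueness for the initial value problem (valid across $x=0$ as well, since the solutions are continuous there) shows that a solution vanishing at a single point vanishes identically. I would then verify that the solutions occurring in each case are genuinely nonzero. In the first case the hypotheses $\l\neq\m_n(t)$ and $\l\neq\m_n(0)_*$ guarantee, via (\ref{p2e17}), that $m_+(\l,t)$ and $m_-(\l,0)$ are finite, so $\P^{\pm}(\cdot,\l,t)$ are well defined; evaluating (\ref{p4e3}) at $x=0$ with $\vT(0,\l,t)=(1,0)^{\top}$ and $\F(0,\l,t)=(0,1)^{\top}$ gives $\P^{\pm}(0,\l,t)=(1,m_{\pm})^{\top}\neq0$, so these solutions never vanish. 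In the second case $\F(0,\l,t)=(0,1)^{\top}\neq0$, so $\F(\cdot,\l,t)$ never vanishes either.

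It remains to choose $\eta$. Fix two distinct points $x_0\neq y_0$. Since $\P^{+}(x_0,\l,t)$ and $\P^{-}(x_0,\l,t)$ are each nonzero, the two subspaces $\{a:(a,\P^{\pm}(x_0,\l,t))=0\}$ do not exhaust $\C^2$, so there is a vector $a$ with $(a,\P^{+}(x_0,\l,t))\neq0$ and $(a,\P^{-}(x_0,\l,t))\neq0$; likewise there is a vector $b$ with $\P^{+}(y_0,\l,t)^{\top}b\neq0$ and $\P^{-}(y_0,\l,t)^{\top}b\neq0$. Choosing $\eta\in C_o^{\iy}(\R,\C^2)$ supported in small disjoint neighborhoods of $x_0$ and $y_0$ with $\eta(x_0)=a$ and $\eta(y_0)=b$, continuity makes both factors $P$ and $Q$ (for each of the two orderings in $\P^{\pm}\otimes\P^{\mp}$) nonzero on neighborhoods $I_{x_0}\ni x_0$ and $I_{y_0}\ni y_0$; setting $U=I_{x_0}\times I_{y_0}$ completes the first case. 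The second case is identical with $u=v=\F(\cdot,\l,t)$, requiring only $(a,\F(x_0,\l,t))\neq0$ and $\F(y_0,\l,t)^{\top}b\neq0$.

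The only real subtlety, and the step I would treat most carefully, is handling both tensor orderings $\P^{\pm}\otimes\P^{\mp}$ with a single pair $(\eta,U)$: this is exactly why $a$ and $b$ are selected to avoid two proper subspaces simultaneously, which is possible precisely because each relevant solution is nowhere zero. Everything else is routine.
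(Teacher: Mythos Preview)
Your proof is correct. The factorization $(\eta(x),(u(x)\otimes v(y))\eta(y))=P(x)\,Q(y)$ cleanly reduces the task to making two scalar factors nonzero, and the nowhere-vanishing of nonzero solutions of the first-order Dirac system, together with the elementary fact that two proper linear subspaces cannot cover $\C^2$, lets you pick the test values $a,b$ directly and then use continuity to get the open rectangle $U$.

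The paper takes a different, non-constructive route: it argues by contradiction, assuming the tensor form vanishes for \emph{every} $\eta\in C_o^{\iy}$ and almost every $(x,y)$ (hence everywhere, by continuity), and then inserts test vectors with values $e_1$ or $e_2$ at arbitrary points to force $\P^{\pm}_1\P^{\mp}_1\equiv0$ and $\P^{\pm}_2\P^{\mp}_2\equiv0$, whence one of the $\P^{\pm}$ has a vanishing component, contradicting $\P^{\pm}(0,\l,t)=(1,m_{\pm})^{\top}$. Your approach is more explicit --- it actually produces a concrete $\eta$ and $U$ --- and it isolates the analytic input (non-vanishing of nonzero solutions) from the linear-algebra step (avoiding finitely many hyperplanes). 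The paper's argument gets to the contradiction in fewer lines but leaves the case analysis (which of $\P^{+},\P^{-}$ has which component vanishing) rather compressed; your handling of both orderings $\P^{\pm}\otimes\P^{\mp}$ with a single $(\eta,U)$, by choosing $a$ and $b$ to avoid two hyperplanes each, is exactly the right way to close that loop.
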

    \begin{proof}
        Let $\l \neq \m_n(t)$, $\l \neq \m_n(0)_*$ for any $n \in \Z$. We prove the statement by
        contradiction. Let $(\eta(x), \P^{\pm}(x,\l,t) \otimes \P^{\mp}(y,\l,t) \eta(y)) = 0$
        for almost all $(x,y) \in \R^2$ and for any $\eta \in C_o^{\iy}(\R,\C^2)$.
        Choosing $\eta \in C_o^{\iy}(\R,\C^2)$ such that
        $\eta(x) = \eta(y) =(\begin{smallmatrix} 1 \\ 0 \end{smallmatrix})$ or
        $\eta(x) = \eta(y) = (\begin{smallmatrix} 0 \\ 1 \end{smallmatrix})$ for some
        $x,y \in \R$, we get $\P^{\pm}_1(x,\l,t)\P^{\mp}_1(y,\l,t) = 0$,
        $\P^{\pm}_2(x,\l,t)\P^{\mp}_2(y,\l,t) = 0$.
        Since we can choose such $\eta$ for any $x,y \in \R$, it follows that
        $\P^{\pm}_1(x,\l,t) = \P^{\mp}_2(x,\l,t) = 0$ for any $x \in \R$ and for some
        sign $\pm$. Thus, $\P^{+}(x,\l,t)$ or $\P^{-}(x,\l,t)$ satisfies the Dirichlet boundary
        conditions and $\l = \m_n(t)$ for some $n \in \Z$. So we have a contradiction.
        The second statement is proved similarly.
    \end{proof}
    Second, we prove the needed lemma.
    \begin{lemma} \label{p4l3}
        \begin{enumerate}[i),leftmargin=*]
            \item Let $\l \in \L$, and let $\l \neq \m_n(t)$, $\l \neq \m_n(0)_*$ for each
            $n \in \Z$ and for some $t \in \R$. Then $\l$ is a state of $H_t$ if and only if
            $w(\l,t) = 0$.
            \item Let $\l \in \L$, and let $\l = \m_n(t)$ or $\l = \m_n(0)_*$ for some $n \in \Z$
            and $t \in \R$. Then $\l$ is a state of $H_t$ if and only if $\G(\l,t) = 0$, moreover,
            in this case $\m_n(t) = \m_n(0)_*$.
        \end{enumerate}
    \end{lemma}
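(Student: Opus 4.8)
The plan is to reduce both statements to the explicit kernel (\ref{p4e25}), the weak pole criterion of Lemma \ref{p4l2} (a point $\l\in\L$ is a state exactly when $R_\eta(x,y,\cdot,t)=(\eta(x),R(x,y,\cdot,t)\eta(y))$ has a pole at $\l$ for some $\eta\in C_o^\iy(\R,\C^2)$ on a set $U$ of positive measure), and the numerator non-vanishing provided by Lemma \ref{p4l12}. Throughout I would use that $\vT,\F$ are analytic on $\L$, while by (\ref{p4e3}) the solution $\P^{+}(\cdot,\l,t)=\vT+m_+(\l,t)\F$ inherits poles only at the poles $\m_n(t)$ of $m_+(\cdot,t)$, and $\P^{-}(\cdot,\l,t)=\vT+m_-(\l,0)\F$ only at the poles $\m_n(0)_*$ of $m_-(\cdot,0)$.

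For part i) the hypothesis $\l\neq\m_n(t)$ and $\l\neq\m_n(0)_*$ for all $n$ makes both $\P^{\pm}(x,\cdot,t)$ analytic at $\l$, and by (\ref{p4e32}) the function $w(\cdot,t)$ is analytic at $\l$ as well. Hence, reading off (\ref{p4e25}), the function $R_\eta(x,y,\cdot,t)$ has a pole at $\l$ if and only if $w(\l,t)=0$ and the numerator does not vanish there. The first clause of Lemma \ref{p4l12} supplies $\eta$ and $U$ with $(\eta(x),\P^{\pm}(x,\l,t)\otimes\P^{\mp}(y,\l,t)\eta(y))\neq0$ on $U$, so a zero of $w$ genuinely produces a pole, while $w(\l,t)\neq0$ leaves $R_\eta$ analytic for every $\eta$. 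Lemma \ref{p4l2} then gives that $\l$ is a state if and only if $w(\l,t)=0$.

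For part ii) the quantities $w$ or $\P^{\pm}$ are singular at $\l$, so I would first record the algebraic identity (valid on a punctured neighbourhood and extending by meromorphy), with $u_+=1/m_+(\cdot,t)$, $u_-=1/m_-(\cdot,0)$,
$$
\frac{-1}{w(\l,t)}\,\P^{+}(x,\l,t)\otimes\P^{-}(y,\l,t)=\frac{-1}{\G(\l,t)}\,\bigl(\F+u_+\vT\bigr)(x,\l,t)\otimes\bigl(\F+u_-\vT\bigr)(y,\l,t),
$$
obtained by multiplying numerator and denominator in (\ref{p4e25}) by $u_+u_-$ and using $m_\pm u_\pm=1$ together with $\G=u_--u_+$, which follows from (\ref{p4e33}) and (\ref{p4e32}). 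Since $u_\pm$, and hence $\G$, $\F+u_+\vT$, $\F+u_-\vT$, are analytic at $\l=\m_n(t)$ and at $\l=\m_n(0)_*$, the right-hand side is the convenient regularization. Evaluating $\G=u_--u_+$ at these points settles the "moreover" clause: at $\l=\m_n(t)$ one has $u_+=0$, so $\G(\l,t)=u_-(\l,0)$, which vanishes exactly when $m_-(\cdot,0)$ has its pole there, i.e. when $\m_n(t)=\m_n(0)_*$; symmetrically at $\l=\m_n(0)_*$ one gets $\G(\l,t)=-u_+(\l,t)$, vanishing under the same condition.

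It then remains to match poles. If $\m_n(t)\neq\m_n(0)_*$ then $\G(\l,t)\neq0$ and the right-hand side of the identity is analytic at $\l$, so $R_\eta(x,y,\cdot,t)$ is regular and $\l$ is not a state. If $\m_n(t)=\m_n(0)_*=\l$ then $u_+(\l,t)=u_-(\l,0)=0$, the identity reduces near $\l$ to $-\G(\l,t)^{-1}\F(x,\l,t)\otimes\F(y,\l,t)$, the second clause of Lemma \ref{p4l12} furnishes $\eta,U$ with $(\eta(x),\F(x,\l,t)\otimes\F(y,\l,t)\eta(y))\neq0$, and since $\G$ vanishes at $\l$ the function $R_\eta$ has a pole there, so $\l$ is a state; the region $x<y$ is handled by the analogous identity with $\P^{-}\otimes\P^{+}$. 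Combined with Lemma \ref{p4l2} this yields that $\l$ is a state if and only if $\G(\l,t)=0$, equivalently $\m_n(t)=\m_n(0)_*$. The main obstacle I anticipate is the bookkeeping at the branch points $\a_n^{\pm}$: when $\m_n(t)$ or $\m_n(0)_*$ coincides with a gap endpoint, $\l$ is not a local coordinate and "analytic/pole" must be read in the coordinate $z$ with $\l=\a_n^{\pm}\mp z^2$; the identity and the vanishing analysis of $\G$ survive unchanged because $u_\pm,\F,\vT$ stay analytic in $z$, but one must confirm that the virtual-state clause of the definition (a pole of $z\mapsto f_\eta(\a_n^{\pm}\mp z^2)$) is the correct reading of Lemma \ref{p4l2} there and that the non-vanishing from Lemma \ref{p4l12} persists in the $z$-coordinate.
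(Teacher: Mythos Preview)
Your approach is essentially the paper's: part~i) is identical, and in part~ii) your identity
$$
\frac{-1}{w}\,\P^{+}\otimes\P^{-}=\frac{-1}{\G}\,(\F+u_+\vT)\otimes(\F+u_-\vT)
$$
is nothing other than the expansion~(\ref{p4e24}) multiplied through by $u_+u_-$, so your regularity analysis repackages the paper's direct computation of the limits $1/w\to 0$, $m_+/w\to 1$, $m_-/w\to 0$, $m_+m_-/w\to m_-(\l,0)$ at $\l=\m_n(t)\neq\m_n(0)_*$.

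There is, however, one slip. You claim that $u_-=1/m_-(\cdot,0)$ is analytic at $\l=\m_n(t)$ (and symmetrically $u_+$ at $\l=\m_n(0)_*$). Certainly $\l$ is not a \emph{pole} of $m_-(\cdot,0)$ when $\l\neq\m_n(0)_*$, but nothing in the hypotheses rules out $m_-(\m_n(t),0)=0$; in that case $u_-$ and hence $\F+u_-\vT$ have a pole there, $\G=u_--u_+$ has a pole as well, and your right-hand side is an indeterminate form, so the step ``$\G(\l,t)\neq 0$ and the right-hand side is analytic'' does not go through. The paper avoids this by working with the four coefficients of~(\ref{p4e24}) directly, which stay finite regardless of whether $m_-$ vanishes. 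You can repair your version with a one-sided regularization: when $\l=\m_n(t)\neq\m_n(0)_*$, multiply only by $u_+$ to get
$$
\frac{-1}{w}\,\P^{+}\otimes\P^{-}=\frac{-1}{1-m_-(\cdot,0)\,u_+}\,(\F+u_+\vT)\otimes\P^{-},
$$
whose denominator tends to $1$ and whose right factor $\P^{-}=\vT+m_-(\cdot,0)\F$ is analytic since $\l\neq\m_n(0)_*$; the symmetric half-regularization handles $\l=\m_n(0)_*\neq\m_n(t)$. Your full two-sided identity is then only needed (and only valid) in the coincidence case $\m_n(t)=\m_n(0)_*$, where both $u_\pm$ genuinely vanish and your argument with Lemma~\ref{p4l12} goes through as written.
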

    \begin{proof}
        First, we prove that each state of $H_t$, $t \in \R$, is a zero of corresponding function.
        It follows from Lemma \ref{p4l2} that a state of $H_t$ is a pole of $R_{\eta}(x,y,\cdot,t)$
        on the Riemann surface $\L$ for some $\eta \in C_o^{\iy}(\R,\C^2)$ and for any
        $(x,y) \in U$, where $U \ss \R^2$ is some set of
        nonzero measure. Substituting (\ref{p4e3}) in (\ref{p4e25}) for
        $\left|x\right| > \left|y\right|$, we get
        \[ \label{p4e24}
            \begin{aligned}
                &R(x,y,\l,t) = \frac{-1}{w(\l,t)} \P^{+}(x,\l,t) \otimes \P^{-}(y,\l,t) \\
                &= \frac{-1}{w(\l,t)} \lt( \vT (x,\l,t) \otimes \vT (y,\l,t) +
                m_-(\l,0) \vT (x,\l,t) \otimes \F (y,\l,t) \\
                &+ m_+(\l,t)m_-(\l,0) \F (x,\l,t) \otimes \F (y,\l,t)+ m_+(\l,t) \F (x,\l,t) \otimes \vT (y,\l,t) \rt).
            \end{aligned}
        \]

        i) Since $\l \neq \m_n(t)$ and $\l \neq \m_n(0)_*$ for each $n \in \Z$, it follows
        that $\l$ is not a pole of $m_+(\cdot,t)$ and $m_-(\cdot,0)$. Using (\ref{p4e24}), we
        deduce that $\l$ is a pole of $R_{\eta}(x, y, \cdot, t)$ only if $w(\l, t) = 0$.

        ii) Let $\l = \m_n(t) \neq \m_n(0)_*$ for some $n \in \Z$. Then $1/w(\l,t) = 0$,
        $m_+(\l,t)/w(\l,t) =1$ and $m_-(\l,0)/w(\l,t) = 0$. Substituting these identities in
        (\ref{p4e24}), we conclude that $\l$ is not a pole of $R_{\eta}(x,y,\cdot,t)$.
        The case $\l = \m_n(0)_* \neq \m_n(t)$ is considered similarly.

        Let $\l = \m_n(t) = \m_n(0)_*$ for some $n \in \Z$. Then we have
        $$
            \G(\l,t) = \frac{w(\l,t)}{m_+(\l,t) m_-(\l,0)} = \frac{1}{m_-(\l,0)} -
            \frac{1}{m_+(\l,t)} = 0
        $$
        and it follows from (\ref{p4e24}) that $\l$ is a pole of $R_{\eta}(x, y, \cdot, t)$ in
        this case.

        Second, we prove that each zero of the corresponding functions is a state of $H_t$,
        $t \in \R$.

        i) Let $\l \neq \m_n(t)$, $\l \neq \m_n(0)_*$ for any $n \in \Z$ and let $w(\l,t) = 0$.
        Then, by Lemma \ref{p4l12}, there exist $\eta \in C_o^{\iy}(\R,\C^2)$ and $U \ss \R^2$ of
        nonzero measure such that
        $(\eta(x), \P^{\pm}(x,\l,t) \otimes \P^{\mp}(y,\l,t) \eta(y)) \neq 0$
        for any $(x,y) \in U$. It follows that
        $$
            R_{\eta}(x,y,\cdot,t) = \frac{-1}{w(\cdot,t)}
            (\eta(x), \P^{\pm}(x,\cdot,t) \otimes \P^{\mp}(y,\cdot,t) \eta(y))
        $$
        has a pole $\l$ for any $(x,y) \in U$. Thus, by Lemma \ref{p4l2}, $H_t$ has a state $\l$.

        ii) Let $\l = \m_n(t) = \m_n(0)_*$ for some $n \in \Z$. Then we get $\G(\l,t) = 0$ and, by
        Lemma \ref{p4l12}, there exist $\eta \in C_o^{\iy}(\R,\C^2)$ and $U \ss \R^2$ of
        nonzero measure such that
        $(\eta(x), \F(x,\l,t) \otimes \F(y,\l,t) \eta(y)) \neq 0$
        for any $(x,y) \in U$. As above, $R_{\eta}(x,y,\cdot,t)$ has a pole $\l$ for any
        $(x,y) \in U$ and, by Lemma \ref{p4l2}, $H_t$ has a state $\l$.
    \end{proof}
    \begin{remark}
        Let $\l \neq \m_n(t) \neq \m_n(0)_*$ and $m_+(\l,t) \neq 0$, $m_-(\l,0) \neq 0$.
        Then we have $\G(\l,t) = 0$ if and only if $w(\l,t) = 0$. If $\l_0 = \m_n(t_0) = \m_n(0)_*$
        for some $t_0 \in \R$, then we get $m_+(\l,t) \neq 0$, $m_-(\l,0) \neq 0$ for $(\l,t)$
        in some punctured neighbourhood of $(\l_0,t_0)$. It now follows that one can use the
        equation $\G(\l,t) = 0$ instead of $w(\l,t) = 0$ in this neighborhood.
    \end{remark}

    Proposition \ref{p4l4} shows that the resonances of $H_t$ are the eigenvalues of
    another Dirac operator with dislocation. We use this proposition to analyse the
    motion of resonances of $H_t$.

    \begin{proof}[\bf{Proof of Proposition \ref{p4l4}}]
        Note that the operator $\widetilde{H}_t$ is the Dirac operator with the dislocation
        $\t = -t$ and the potential $Q(x) = V(x + t)$. Indeed, we get
        $$
            Q_{\t}(x) = Q(x) \c_-(x) + Q(x+\t) \c_+(x) =
            V(x+t) \c_-(x) + V(x) \c_+(x) = \widetilde{V}_t(x).
        $$
        Thus, by Lemma \ref{p4l3}, $\l$ is a state of $\widetilde{H}_t$ if and only if
        $\l$ is a zero of $\widetilde{w}(\cdot,\t)$ or $\widetilde{\G}(\cdot,\t)$, where the tilde
        indicates that the object corresponds to the operator $\widetilde{H}_t$. Since
        $Q(x) = V(x + t)$ for any $x,t \in \R$, it follows that
        $\widetilde{m}_{\pm}(\l,s) = m_{\pm}(\l,s+t)$ for any $\l \in \L$ and $s,t \in \R$. Using
        this fact and (\ref{p4e5}), we get
        \[ \label{p4e37}
            \begin{aligned}
                \widetilde{w}(\l_*,\t) &= \widetilde{m}_-(\l,\t) - \widetilde{m}_+(\l,0) =
                m_-(\l,0) - m_+(\l,t) = -w(\l,t),\\
                \widetilde{\G}(\l_*,\t) &= \frac{1}{\widetilde{m}_+(\l,0)} -
                \frac{1}{\widetilde{m}_-(\l,\t)} = \frac{1}{m_+(\l,t)} - \frac{1}{m_-(\l,0)} =
                -\G(\l,t).
            \end{aligned}
        \]
        Note also that $\widetilde{\m}_n(\t) = \m_n(0)$, and $\widetilde{\m}_n(0) = \m_n(t)$.
        Thus, we get that $\m_n(t) = \m_n(0)_*$ if and only if
        $\widetilde{\m}_n(\t) = \widetilde{\m}_n(0)_*$. This fact and (\ref{p4e37}) prove the
        proposition.
    \end{proof}
    \begin{remark}
        It follows from this proposition and Theorem \ref{p4t1} that each state of $H_t$ belongs to
        $\g_n^c$ for some $n \in \Z$. Indeed, each eigenvalue of $H_t$ belongs to $\g_n^{(1)}$
        for some $n \in \Z$ and each resonance of $H_t$ is an eigenvalue of $\widetilde{H}_t$,
        which also belongs to $\g_n^{(1)}$ for some $n \in \Z$.
    \end{remark}

    Now we show that each state is a simple zero of $w$ or $\G$. We introduce the following functions
    \[
        S_0(\l,t) = \frac{\vp_1}{\vp_1 + \vp_1^0},\quad
        S_1(\l,t) = \frac{\partial_{\l} \vp_1}{\partial_{\l} \vp_1 + \partial_{\l} \vp_1^0},\quad
        \Omega(\l,t) = \frac{1 - (\vp_2)^2}{\left\| \vp \right\|^2},
    \]
    where $(\l,t) \in \C \ts \R$, and $\vp_i = \vp_i(1,\l,t)$, $\vp_i^0 = \vp_i(1,\l,0)$, $i=1,2$,
    $\vp = \vp(\cdot,\l,t)$. It is easy to see that if
    $\l \neq \m_n(t), \m_n(t)_*, \m_n(0), \m_n(0)_*$ is real, then we get $S_0(\l,t) \in (0,1)$
    and if $\l = \m_n(t) = \m_n(0)_*$, then we get $S_1(\l,t) \in (0,1)$. Lemma 3.5 from
    \cite{MokKor} implies $\sign \Omega(\m_n(t),t) = -(-1)^j$ when $\m_n(t) \in \g_n^{(j)}$
    for some $n \in \Z$, $j=1,2$.
    \begin{lemma} \label{p4l11}
        Let a gap $\g_n$ be open for some $n \in \Z$ and let $\l \in \L$ be a state of $H_t$ for
        some $t \in \R$. Then for any $j = 1,2$ and any choose of sign $\pm$ we get
        \[ \label{p3e7}
            \begin{aligned}
                & & \sign w'_{\l}(\l,t) &= (-1)^j,  &&\text{if $\l \neq \m_n(t)$,
                $\l \in \g_n^{(j)}$},\\
                \G'_{\l}(\l,t)&= -(\Omega(\l,t) S_1(\l,t))^{-1}, & \sign \G'_{\l}(\l,t) &= (-1)^j,
                &&\text{if $\l = \m_n(t)$, $\l \in \g_n^{(j)}$},\\
                w'_z(\l,t) &= -(2\varkappa_n^{\pm} S_0(\l,t))^{-1}, & \sign w'_z(\l,t) &= \pm,
                &&\text{if $\l \neq \m_n(t)$, $\l = \a_n^{\pm}$},\\
                \G'_z(\l,t) &= -(2\kappa_n^{\pm}(t) S_1(\l,t))^{-1}, & \sign \G'_z(\l,t) &= \pm,
                &&\text{if $\l = \m_n(t)$, $\l = \a_n^{\pm}$},
            \end{aligned}
        \]
        where $w'_z(\a_n^{\pm},t) = \left. w'_z(\a_n^{\pm} \mp z^2,t)\right|_{z=0}$, and
        $\G'_z(\a_n^{\pm},t) = \left. \G'_z(\a_n^{\pm} \mp z^2,t)\right|_{z=0}$.
    \end{lemma}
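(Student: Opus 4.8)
The plan is to treat the four assertions by the location of the state (interior of the gap versus band edge) and by whether $\l$ is a Dirichlet eigenvalue, using throughout the representations $w(\cdot,t)=m_+(\cdot,t)-m_-(\cdot,0)$ and $\G(\cdot,t)=m_-(\cdot,0)^{-1}-m_+(\cdot,t)^{-1}$ from \er{p4e32}, \er{p4e33}, together with $m_{\pm}=(a\mp b)/\vp_1$ from \er{p2e10}. On $\ol\g_n^{(1)}$ all signs come from the Herglotz monotonicity of Lemma \ref{p3l1}: there $\pa_{\l}m_+<0$ and $\pa_{\l}m_->0$ everywhere except at the respective poles $\m_n(t)$, $\m_n(0)_*$, hence $\pa_{\l}(1/m_{\pm})$ are both negative throughout; on $\ol\g_n^{(2)}$ I transport the computation to the first sheet via \er{p4e5} and $b(\l_*)=-b(\l)$.

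For assertion 1, $\l$ is an interior state with $\l\ne\m_n(t)$, so by Lemma \ref{p4l3}(i) also $\l\ne\m_n(0)_*$, and $w'_{\l}=\pa_{\l}m_+(\l,t)-\pa_{\l}m_-(\l,0)<0=(-1)^1$ on $\g_n^{(1)}$; on $\g_n^{(2)}$ one rewrites $w(\l,t)=m_-(\l_*,t)-m_+(\l_*,0)$ by \er{p4e5}, whose $\pr$-derivative is $>0=(-1)^2$. For assertion 2, at $\l=\m_n(t)=\m_n(0)_*\in\g_n^{(j)}$ I use Lemma \ref{p2l4} at the pole of $m_+(\cdot,t)$ and its $\l_*$-counterpart for $m_-(\cdot,0)$, which reduce the denominators $a(\l,t)-b$ and $a(\l,0)+b$ to $-2b$ and $+2b$; since $\vp_1$ vanishes at the eigenvalue only the numerator derivatives survive, giving $\G'_{\l}=(\pa_{\l}\vp_1+\pa_{\l}\vp_1^0)/(2b)$. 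Writing $\pa_{\l}\vp_1+\pa_{\l}\vp_1^0=\pa_{\l}\vp_1/S_1$ and using \er{p2e19} together with the Floquet identity $\vp_2(1,\l)=\D(\l)-b(\l)$ on $\ol\g_n^{(j)}$ (equivalently $1-\vp_2^2=2b\,\vp_2$, which follows from $\vp_2^2-2\D\vp_2+1=0$ at a Dirichlet point and the branch rule \er{p2e23}) identifies the quotient as $-(\Omega S_1)^{-1}$; the sign $(-1)^j$ then follows from $\sign\Omega(\m_n(t),t)=-(-1)^j$ and $S_1\in(0,1)$.

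For the band-edge assertions 3 and 4 I pass to the coordinate $z$ with $\pr\l=\a_n^{\pm}\mp z^2$ and use $b(\a_n^{\pm}\mp z^2)=(-1)^n z\sqrt{2|M_n^{\pm}|}+O(z^3)$ from \er{p2e12}, noting that $a$ and $\vp_1$ are even in $z$. In assertion 3 ($\l=\a_n^{\pm}\ne\m_n(t)$) the even parts of $w$ contribute nothing to $w'_z(0)$, and only the odd factor $b$ does, giving $w'_z(0)=-(-1)^n\sqrt{2|M_n^{\pm}|}\,\bigl(\vp_1(1,\a_n^{\pm},t)^{-1}+\vp_1(1,\a_n^{\pm},0)^{-1}\bigr)$; collecting the bracket into $S_0$ and inserting the definition of $\varkappa_n^{\pm}$ yields $-(2\varkappa_n^{\pm}S_0)^{-1}$. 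In assertion 4 ($\l=\a_n^{\pm}=\m_n(t)$) the hypothesis forces $\a_n^{\pm}=\m_n(0)=\m_n(t)$, so by the remark after Lemma \ref{p2l3} both $a(\a_n^{\pm},0)=a(\a_n^{\pm},t)=0$ while $\vp_1(1,\a_n^{\pm},\cdot)=0$; hence each quotient $\vp_1/(a\mp b)$ is $\sim\pm z\,\pa_{\l}\vp_1/((-1)^n\sqrt{2|M_n^{\pm}|})$ and $\G'_z(0)=\mp(\pa_{\l}\vp_1+\pa_{\l}\vp_1^0)/((-1)^n\sqrt{2|M_n^{\pm}|})$. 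Using \er{p2e19} and $\vp_2(1,\a_n^{\pm})=(-1)^n$ (forced by $\vp_1=0\Rightarrow\vt_1\vp_2=1$ and $\vp_2+\vt_1=2\D=2(-1)^n$) this becomes $-(2\kappa_n^{\pm}(t)S_1)^{-1}$.

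It remains to pin the edge signs $\sign w'_z=\pm$ and $\sign\G'_z=\pm$. Here I argue by a one-sided limit: for small $z>0$ the point $\l(z)$ lies in $\g_n^{(1)}$ away from the isolated poles $\m_n(t),\m_n(0)_*$, where $w'_{\l}<0$ and $\G'_{\l}<0$ as in assertions 1 and 2, while $\tfrac{d}{dz}=(\mp 2z)\pa_{\l}$ fixes the sign of the $z$-derivative as $\pm$; continuity as $z\downarrow 0$ passes the sign to the finite nonzero limits $w'_z(0)$, $\G'_z(0)$. Combined with $S_0,S_1\in(0,1)$, these simultaneously deliver $\mp\varkappa_n^{\pm}>0$ and $\mp\kappa_n^{\pm}(t)>0$, as invoked in Remark 3 after Theorem \ref{p0t5}. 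The main obstacle is the bookkeeping of the sheet dependence of $b$ and of the Floquet multiplier $\vp_2=\D-b$, so that $1-\vp_2^2=2b\,\vp_2$ holds with the correct branch on each of $\ol\g_n^{(1)}$ and $\ol\g_n^{(2)}$, and the degeneracy $d(\pr\l)/dz=0$ at $z=0$, which is precisely what forces the even/odd expansion in $z$ rather than a naive chain rule at the edge.
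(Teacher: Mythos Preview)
Your proof is correct and largely follows the paper's, with one genuine methodological difference and one harmless sign slip.

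The difference is in the third line ($\sign w'_z(\a_n^{\pm},t)=\pm$ when $\a_n^{\pm}\ne\m_n(t)$). The paper establishes $\sign\varkappa_n^{\pm}=\mp$ directly, via a Pr\"ufer-angle computation showing $\sign\vp_1(1,\a_n^{\pm},0)=\mp(-1)^n$ (using $\b(1,\m_n(0))=-\pi n$ and $\pa_{\l}\b(1,\l)<0$), and then reads the sign of $w'_z$ off the explicit formula. You instead pull the sign back from the interior: for small $z>0$ the point lies in $\g_n^{(1)}$, where $w'_{\l}<0$ by Herglotz monotonicity, and the chain rule $w'_z=(\mp2z)w'_{\l}$ fixes the sign as $\pm$; continuity at $z=0$ (together with $S_0\in(0,1)$, stated just before the lemma, to guarantee $w'_z(0)\ne0$) gives $\sign w'_z(0)=\pm$ and hence $\sign\varkappa_n^{\pm}=\mp$. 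Your route sidesteps the oscillation-theory detour, at the cost of importing the qualitative fact $S_0\in(0,1)$ as a black box; the paper's Pr\"ufer computation is what actually justifies that fact.

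The slip: you claim $\pa_{\l}(1/m_{\pm})$ are \emph{both} negative on $\ol\g_n^{(1)}$. In fact $\pa_{\l}(1/m_+)=-m'_+/m_+^2>0$ there (since $m'_+<0$), while $\pa_{\l}(1/m_-)<0$. This does not hurt your argument: $\G'_{\l}=\pa_{\l}(1/m_-)-\pa_{\l}(1/m_+)$ is still negative on $\g_n^{(1)}$ (negative minus positive), so the limiting argument for $\G'_z$ survives. For the fourth line this limiting is in any case redundant, since $\sign\kappa_n^{\pm}(t)=\mp$ is immediate from the definition of $\kappa_n^{\pm}$, and $S_1\in(0,1)$ then gives $\sign\G'_z=\pm$ directly from the formula.
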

    \begin{proof}
        If $\l \neq \m_n(t)$, then it follows from Lemma \ref{p4l3} that $\l \neq \m_n(0)_*$ and
        thus, by Lemma \ref{p3l1},
        $\sign \partial_{\l} m_+(\l,t) = - \sign \partial_{\l} m_-(\l,0) = (-1)^j$ if
        $\l \in \g_n^{(j)}$. Using the definition of $w$, we get the first line in (\ref{p3e7}).

        Let $\l = \m_n(t)$ and $\l \in \g_n^{(j)}$, $j = 1,2$. Then Lemma \ref{p4l3} implies
        that $\l = \m_n(0)_*$. Using the definition of $\G$, we get
        $$
            \G'_{\l}(\l,t) = \frac{\partial_{\l} \vp_1(1,\l,t) +
            \partial_{\l} \vp_1(1,\l,t)}{2b(\l)} =
            \frac{1}{S_1(\l,t)} \frac{\partial_{\l} \vp_1(1,\l,t)}{2b(\l)} =
            \frac{-1}{S_1(\l,t) \Omega(\l,t)}.
        $$
        Here we used (\ref{p2e19}) and $a(\l,t) = -b(\l)$, which holds true by Lemma \ref{p2l4}.
        Now using $\sign \Omega(\m_n(t),t) = -(-1)^j$, $\m_n(t) \in \g_n^{(j)}$ and
        $S_1(\l,t) \in (0,1)$, we get the second line in (\ref{p3e7}).

        Let $\l = \a_n^{\pm} \neq \m_n(t)$. Then, by Lemma \ref{p4l3}, we have
        $\l \neq \m_n(t)_*, \m_n(0), \m_n(0)_*$. Note that if $f$ is analytic in a neighborhood
        of $0$, then $\left. \partial_z f(z^2) \right|_{z=0} = 0$. Using these facts and asymptotic
        (\ref{p2e12}), we get
        $$
            \left. w'_z(\a_n^{\pm} \mp z^2,t)\right|_{z=0} =
            -(-1)^n \sqrt{\left|2M_n^{\pm}\right|}\left( \frac{1}{\vp_1(1,\a_n^{\pm},t)} +
            \frac{1}{\vp_1(1,\a_n^{\pm},0)} \right) = \frac{-1}{2\varkappa_n^{\pm} S_0(\l,t)}.
        $$
        Since $\l \neq \m_n(t), \m_n(t)_*, \m_n(0), \m_n(0)_*$, we get $S_0(\l,t) \in (0,1)$.
        Let $\b(x,\l)$ be the Pr{\"u}fer angle of $\vp(x,\l,0)$, i.e.
        $$
            \vp(x,\l,0) = \varrho(x,\l) \ma \sin \b(x,\l) \\ \cos \b(x,\l) \am,\qq (x,\l) \in \R \ts \C,
        $$
        where $\varrho(x,\l) > 0$ for any $x \in \R$, $\b(0,\l) = 0$, and $\b(\cdot,\l)$ is a
        continuous function for any $\l \in \C$
        (see e.g. Section 16 in \cite{W87} about Pr{\"u}fer transformation). Thus we get
        $\sign \vp_1(1,\l,0) = \sign \sin \b(1,\l)$. Moreover, for any $n \in \Z$ we have
        $\b(1,\m_n(0)) = -\pi n$ and $\partial_{\l} \beta(1,\l) < 0$
        (see e.g. Section 5.3 in \cite{MokKor}). It follows that if $\m_n(0) \neq \a_n^-$, then
        $\b(1,\a_n^-) \in (-\pi n, -\pi (n-1))$ and if $\m_n(0) \neq \a_n^+$, then
        $\b(1,\a_n^+) \in (-\pi (n+1), -\pi n)$. Then we obtain $\sign \vp_1(1,\l,0) = \mp(-1)^n$
        if $\l = \a_n^{\pm} \neq \m_n(t)$, which yields $\sign \varkappa_n^{\pm} = \mp$.

        Let $\l = \a_n^{\pm} = \m_n(t) = \m_n(0)_*$. Then $a(\l,t) = a(\l,0) = 0$ and
        $\D(\l) = (-1)^n$ imply that $\vp_2(1,\l,t) = \vp_2(1,\l,0) = (-1)^n$.
        Combining (\ref{p2e19}) and (\ref{p2e18}), and substituting
        $\vp_2(1,\l,t) = \vp_2(1,\l,0) = (-1)^n$, we obtain
        $\vt_2(1,\l,t) = -2M_n^{\pm}/\partial_{\l} \vp_1(1,\l,t)$ and
        $\vt_2(1,\l,0) = -2M_n^{\pm}/\partial_{\l} \vp_1(1,\l,0)$. Using these facts and
        (\ref{p2e12}), we get
        $$
            \left. \G'_z(\a_n^{\pm} \mp z^2,t)\right|_{z=0} =
            (-1)^n \sqrt{\left|2M_n^{\pm}\right|}\left( \frac{1}{\vt_2(1,\a_n^{\pm},t)} +
            \frac{1}{\vt_2(1,\a_n^{\pm},0)} \right) = \frac{-1}{2\kappa_n^{\pm}(t) S_1(\l,t)}.
        $$
        Substituting $\vp_2(1,\l,t) = (-1)^n$ in (\ref{p2e19}), we get
        $\sign \kappa_n^{\pm}(t) = \mp$, which yields with $S_1(\l,t) \in (0,1)$
        the fourth line in (\ref{p3e7}).
    \end{proof}

    \subsection{Continuity of states}
    Now we describe local properties of the states. Let a gap $\g_n$ be open for some
    $n \in \Z$ and let $\l \in \L$ be a state of $H_t$ for some $t \in \R$. We introduce a
    function $F_{\l,t}(x,y)$ defined on some open neighborhood $I_1 \ts I_2 \ss \R^2$ of $(0,0)$ by
    \[ \label{p4e36}
        F_{\l,t}(x,y) =
        \begin{cases}
            w(\l+y,t+x) & \text{if $\l \neq \a_n^{\pm}$, $\l \neq \m_n(t)$},\\
            w(\l \mp y^2,t+x) & \text{if $\l = \a_n^{\pm}$, $\l \neq \m_n(t)$},\\
            \G(\l+y,t+x) & \text{if $\l \neq \a_n^{\pm}$, $\l = \m_n(t)$},\\
            \G(\l \mp y^2,t+x) & \text{if $\l = \a_n^{\pm}$, $\l = \m_n(t)$}.\\
        \end{cases}
    \]
    Recall that the class $\mH(I_1,I_2)$ has been defined in Section \ref{p2}.
    \begin{lemma} \label{p4l9}
        Let a gap $\g_n$ be open for some $n \in \Z$, and let $\l \in \L$ be a state of $H_t$ for
        some $t \in \R$. Then there exist an open neighborhood $I_1 \ts I_2 \ss \R^2$ of $(0,0)$
        such that for $F_{\l,t}$ defined on $I_1 \ts I_2$ by (\ref{p4e36}) we get
        $F_{\l,t},\partial_y F_{\l,t} \in \mH(I_1,I_2)$, $F_{\l,t}(0,0) = 0$, and
        $\partial_y F_{\l,t}(0,0) \neq 0$.
    \end{lemma}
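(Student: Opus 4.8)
The plan is to read off the two pointwise assertions from results already in hand and then to verify the regularity membership by hand, treating the spectral direction $y$ and the dislocation direction $x$ separately, since they carry very different regularity. First, by the very definition \er{p4e36}, the value $F_{\l,t}(0,0)$ is $w(\l,t)$ in the first two cases and $\G(\l,t)$ in the last two (in the band-edge cases $y=0$ is the branch coordinate $z=0$, i.e. $\l=\a_n^{\pm}$). Since $\l$ is a state, Lemma \ref{p4l3} gives $w(\l,t)=0$ in the cases $\l\neq\m_n(t)$ and $\G(\l,t)=0$ in the cases $\l=\m_n(t)=\m_n(0)_{*}$, so $F_{\l,t}(0,0)=0$. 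Likewise $\partial_y F_{\l,t}(0,0)$ equals $w'_{\l}(\l,t)$, $\G'_{\l}(\l,t)$, $w'_z(\l,t)$, or $\G'_z(\l,t)$ according to the four cases, and Lemma \ref{p4l11} shows each of these is nonzero (it even fixes the sign). This settles $\partial_y F_{\l,t}(0,0)\neq0$.

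Next I would check $F_{\l,t}\in\mH(I_1,I_2)$. In the $y$-direction everything is analytic, hence condition (i) of $\mH$ is automatic: away from band edges $y\mapsto w(\l+y,t)$ and $y\mapsto\G(\l+y,t)$ are analytic because $\l$ is a zero, not a pole, of the relevant function on $\L$; at a band edge the map $z\mapsto w(\a_n^{\pm}\mp z^2,t)$ (the point taken on $\L_1$ for $z>0$ and on $\L_2$ for $z<0$) is analytic in $z=y$ precisely because, by \er{p2e12}, $b(\a_n^{\pm}\mp z^2)$ is an analytic odd function of $z$, which is what glues the two sheets together smoothly. For the $x$-direction I invoke Lemma \ref{a1l3}: only the factor $m_+(\,\cdot\,,t+x)$ depends on $x$ (the term $m_-(\,\cdot\,,0)$ is frozen at shift $0$), and $m_+(\l,\cdot)$ is absolutely continuous with $\dot m_+=g_{\l}(m_+,t)$ by \er{dotm}. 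Computing $\partial_x F_{\l,t}$ by the chain rule therefore produces $g_{\l+y}(m_+(\l+y,t+x),t+x)$ and, in the pole cases, $-\dot u$ from \er{dotm2}; since $m_+$ (resp. $u=1/m_+$) and the spectral argument stay in a compact set on a small neighborhood, and $g_\l$ is affine in $(q_1(t),q_2(t))$ by \er{p6e2}, one obtains
\[
  \left|\partial_x F_{\l,t}(x,y)\right|\le C\big(1+|q_1(t+x)|+|q_2(t+x)|\big),\qquad (x,y)\in I_1\ts I_2 ,
\]
whose right-hand side is in $L^2(I_1)$ as a function of $x$ and is uniform in $y$. This is condition (iii) of $\mH$, and it gives condition (ii) as well, since an absolutely continuous function with $L^2$ derivative lies in $\cH^1(I_1)$.

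The real work lies in the same two checks for $\partial_y F_{\l,t}$ in the $x$-direction. Condition (i) is again free: $F_{\l,t}(x,\cdot)$ is analytic in $y$, so $\partial_y F_{\l,t}(x,\cdot)\in C^1(I_2)$. For (ii) and (iii) I would differentiate the integral form $m_+(\l,t)=m_+(\l,t_0)+\int_{t_0}^t g_{\l}(m_+(\l,s),s)\,ds$ in the spectral variable; this interchange $\partial_x\partial_y=\partial_y\partial_x$ is justified by dominated convergence, using the analyticity and local boundedness of $m_+$ and $\partial_\l m_+$ from Theorem \ref{p2t1}, and yields that $\partial_\l m_+(\l,\cdot)$ is again absolutely continuous with
\[
  \partial_t\big(\partial_\l m_+\big)=\big(\partial_\l g_\l\big)(m_+,t)+\big(\partial_z g_\l\big)(m_+,t)\,\partial_\l m_+ ,
\]
where $\partial_z$ is the derivative in the first slot of $g_\l(z,t)$. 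Both $\partial_\l g_\l=m_+^2+1$ and $\partial_z g_\l=2m_+(q_1(t)+\l)-2q_2(t)$ are affine in $(q_1(t),q_2(t))$ with coefficients bounded near the state, so $|\partial_x\partial_y F_{\l,t}(x,y)|\le C\big(1+|q_1(t+x)|+|q_2(t+x)|\big)\in L^2(I_1)$ uniformly in $y$, giving (ii) and (iii) for $\partial_y F_{\l,t}$. The pole cases $\l=\m_n(t)$ are handled identically after replacing $m_+$ by $u=1/m_+$ and $w$ by $\G$ via \er{dotm2}, and the band-edge cases by running the same computation in the branch coordinate $z$, where \er{p2e12} again secures analyticity in $z$. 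The main obstacle is thus this joint-regularity bookkeeping—legitimizing the differentiation under the integral sign and extracting a single $L^2(I_1)$ majorant valid uniformly in $y$—rather than any new spectral input, all of which is already supplied by Lemmas \ref{p4l3}, \ref{p4l11}, and \ref{a1l3}.
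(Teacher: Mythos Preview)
Your proposal is correct, and the two pointwise assertions are handled exactly as in the paper (via Lemma~\ref{p4l3} and Lemma~\ref{p4l11}). For the regularity membership, however, you take a genuinely different route. The paper never works with $m_+$ directly: it invokes Lemma~\ref{p6l1}, which proves once and for all that $(t,\l)\mapsto\p(1,\l,t)$ and $(t,\l)\mapsto\partial_\l\p(1,\l,t)$ lie in $\mH$, and then assembles $w$ and $\G$ as rational functions of the monodromy entries $\vt_1,\vt_2,\vp_1,\vp_2,a,b$ via the algebraic closure Lemmas~\ref{a1l1} and~\ref{a1l2}. Because the monodromy entries are entire in $\l$, the joint-regularity bookkeeping (interchange of $\partial_x$ and $\partial_\l$, uniform $L^2$ majorant) is already absorbed into Lemma~\ref{p6l1} and needs no further justification here. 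Your route instead runs the Riccati evolution from Lemma~\ref{a1l3} for $m_+$ (or $1/m_+$) and extracts the $L^2(I_1)$ majorant from the affine dependence of $g_\l$ on $(q_1,q_2)$; this is sound, but the differentiation-under-the-integral step for $\partial_\l m_+$ that you sketch is precisely the work that the paper offloads to Lemma~\ref{p6l1}, where entireness in $\l$ makes it free. In short, the paper separates the analytic input (Lemma~\ref{p6l1}) from the algebraic assembly (Lemma~\ref{a1l2}), while you fuse the two at the level of $m_+$; both reach the same conclusion, but the paper's decomposition is more modular and avoids re-proving regularity case by case.
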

    \begin{proof}
        In order to prove that $F_{\l,t},\partial_y F_{\l,t} \in \mH(I_1,I_2)$ for some $I_1$,
        $I_2$ we need the following properties. Since $\p(1,\cdot,0)$ and $b(\cdot)$ are analytic
        functions on $\L$, it follows that $\p(1,\l + y, 0)$ and $b(\l+y)$ (if $\l \neq \a_n^{\pm}$) or
        $\p(1,\l \mp y^2,0)$ and $b(\l \mp y^2)$ (if $\l = \a_n^{\pm}$) are analytic functions of
        $y$ in some open neighborhood of $0$. Thus, it follows from Lemma \ref{a1l1} that they
        and their partial derivatives with respect to $y$ as functions of $(x,y)$ belong to $\mH(I_1,I_2)$ for
        some open neighborhood $I_1 \ts I_2 \ss \R^2$ of $(0,0)$.

        Using Lemma \ref{p6l1}, we get that $\p(1,\l + y,t + x)$ (if $\l \neq \a_n^{\pm}$) or
        $\p(1,\l \mp y^2,t + x)$ (if $\l = \a_n^{\pm}$) as functions of $(x,y)$ belong to
        $\mH(I_1,I_2)$ for some open neighborhood $I_1 \ts I_2 \ss \R^2$ of $(0,0)$. Moreover,
        their partial derivatives with respect to $y$ also belong to $\mH(I_1,I_2)$.

        Let $\l \neq \m_n(t)$. From Lemma \ref{p4l3} it follows that $\l \neq \m_n(0)_*$. Then we
        get $\vp_1(1,\l,t) \neq 0$ and $\vp_1(1,\l,0) \neq 0$. Using the above properties,
        definition of $w$ and Lemma \ref{a1l2}, we have
        $F_{\l,t}, \partial_y F_{\l,t} \in \mH(I_1,I_2)$ for some $I_1$, $I_2$.

        Let $\l = \m_n(t)$. Then, by Lemma \ref{p4l3}, we get $\l = \m_n(0)_*$. It follows from
        Lemma \ref{p2l4} that $a(\l,t) - b(\l) \neq 0$ and $a(\l,0) + b(\l) \neq 0$. Using the
        above properties, definition of $\G$ and Lemma \ref{a1l2}, we have
        $F_{\l,t}, \partial_y F_{\l,t} \in \mH(I_1,I_2)$ for some $I_1$, $I_2$.

        Let $I_1 \ts I_2$ be some open neighborhood of $(0,0)$ and let $F_{\l,t}$ be defined on
        $I_1 \ts I_2$ by (\ref{p4e36}). Since $\l$ is a state of $H_t$, it follows from
        (\ref{p4e36}) and Lemma \ref{p4l3} that $F_{\l,t}(0,0) = 0$ and Lemma \ref{p4l11} implies
        that $\partial_y F_{\l,t}(0,0) \neq 0$.
    \end{proof}

    The following lemma is the main result of this section and it describes the motion of the
    states locally in neighborhood of $\a_n^{\pm}$.
    We prove that a state is a solution of some nonlinear ordinary differential equation which is
    an analogue of the Dubrovin equation for the Dirichlet eigenvalues.

    \begin{lemma} \label{p4l6}
        Let a gap $\g_n$ be open for some $n \in \Z$ and let $\a_n^{\pm}$ be a state of $H_{t_0}$
        for some $t_0 \in \R$ and one of the signs $\pm$. Then there exists a unique function
        $z_n^{\pm} \in \cH^1(I)$, where $I = (t_0-\ve,t_0+\ve)$ for some $\ve > 0$, such that
        $z_n^{\pm}(t_0) = 0$, and:
        \begin{enumerate}[i)]
            \item $\l_n^{\pm}(t) = \a_n^{\pm} \mp {z_n^{\pm}(t)}^2$ is a state of $H_t$ in $\g_n^c$
            for any $t \in I$;
            \item $\l_n^{\pm}(t) \in \L_j$ if and only if $(-1)^j z_n^{\pm}(t) < 0$ for any
            $t \in I$ and $j = 1,2$;
            \item $z_n^{\pm} = 0$ if and only if $I = \R$.
        \end{enumerate}
        Moreover, for almost all $t \in I$ we have
        \[ \label{p4e64}
            \begin{aligned}
                \dot{z}_n^{\pm} &= \frac{1}{w'_z} \left( (q_1-\l_n^{\pm}) + 2m_+q_2 -
                m_+^2(q_1+\l_n^{\pm}) \right), &&\text{if $\m_n(t_0) \neq \a_n^{\pm}$},\\
                \dot{z}_n^{\pm} &= \frac{1}{\G'_z} \left( \frac{q_1-\l_n^{\pm}}{m_+^2} +
                \frac{2q_2}{m_+} - (q_1+\l_n^{\pm}) \right), &&\text{if $\m_n(t_0) = \a_n^{\pm}$},
            \end{aligned}
        \]
        where
        \begin{gather*}
            z_n^{\pm} = z_n^{\pm}(t),\qq \l_n^{\pm} = \l_n^{\pm}(t),\qq
            w'_z = \left. w'_z(\a_n^{\pm} \mp z^2,t) \right|_{z = z_n^{\pm}(t)},\\
            m_+ = m_+(\l_n^{\pm}(t),t),\qq \G'_z =
            \left. \G'_z(\a_n^{\pm} \mp z^2,t) \right|_{z = z_n^{\pm}(t)}.
        \end{gather*}
    \end{lemma}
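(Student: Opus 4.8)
The plan is to deduce the whole statement from the specific implicit function theorem of the Appendix applied to the function $F_{\a_n^{\pm},t_0}$ constructed in Lemma \ref{p4l9}. Since $\a_n^{\pm}$ is a state of $H_{t_0}$, Lemma \ref{p4l3} dictates which defining equation to use: if $\m_n(t_0)\neq\a_n^{\pm}$ we set $F(x,y)=w(\a_n^{\pm}\mp y^2,t_0+x)$, and if $\m_n(t_0)=\a_n^{\pm}$ we set $F(x,y)=\G(\a_n^{\pm}\mp y^2,t_0+x)$, where in both cases $\a_n^{\pm}\mp y^2$ is read on the sheet prescribed by the coordinate chart at $\a_n^{\pm}$ (on $\L_1$ for $y>0$ and on $\L_2$ for $y<0$), so that $F$ is genuinely a function of the local coordinate $y$ across the branch point. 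By Lemma \ref{p4l9} we have $F,\partial_y F\in\mH(I_1,I_2)$, $F(0,0)=0$ and $\partial_y F(0,0)\neq0$, which are exactly the hypotheses of that theorem. It produces a unique $\cH^1$ solution $y=z_n^{\pm}(t_0+\cdot)$ near $0$ with $z_n^{\pm}(t_0)=0$ and $F(t-t_0,z_n^{\pm}(t))=0$; here the low regularity of $z_n^{\pm}$ (the whole reason we need the $\mH$-class version rather than the classical theorem) is built in. Reading the identity $F=0$ through Lemma \ref{p4l3} says precisely that $\l_n^{\pm}(t)=\a_n^{\pm}\mp(z_n^{\pm}(t))^2$ is a state of $H_t$, which is claim (i), and uniqueness is inherited from the theorem.

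Claim (ii) is then immediate from the definition of the coordinate map $z\mapsto\pr_{h(z)}(\a_n^{\pm}\mp z^2)$ recalled in Section \ref{p2}, with $h(z)=1$ for $z>0$ and $h(z)=2$ for $z<0$: the state lies on $\g_n^{(1)}\ss\L_1$ exactly when $z_n^{\pm}(t)>0$ and on $\g_n^{(2)}\ss\L_2$ exactly when $z_n^{\pm}(t)<0$, i.e. $\l_n^{\pm}(t)\in\L_j$ iff $(-1)^j z_n^{\pm}(t)<0$. To obtain the Dubrovin-type equation (\ref{p4e64}) I would differentiate the implicit relation $F(t-t_0,z_n^{\pm}(t))=0$, the theorem giving $\dot z_n^{\pm}=-\partial_x F/\partial_y F$ for almost all $t$. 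By construction $\partial_y F$ is $w'_z$ (resp. $\G'_z$), while $\partial_x F$ equals $\dot m_+(\l,t)=g_{\l}(m_+,t)$ in the $w$-case and $g_{\l}(m_+,t)/m_+^2$ in the $\G$-case, because $m_-(\l,0)$ does not depend on $t$; both expressions for $\dot m_+$ are supplied by Lemma \ref{a1l3}. Using $-g_{\l}(m_+,t)=(q_1-\l)+2m_+q_2-m_+^2(q_1+\l)$ and dividing by $w'_z$ (resp. dividing $g_\l$ by $m_+^2$ and then by $\G'_z$) yields exactly the two lines of (\ref{p4e64}).

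A point requiring care is that the case label in (\ref{p4e64}) is fixed by the value of $\m_n$ at the base point $t_0$, whereas the equation is asserted for almost every $t\in I$. When $\m_n(t_0)\neq\a_n^{\pm}$, continuity keeps $\l_n^{\pm}(t)\neq\m_n(t),\m_n(0)_*$ for $t$ near $t_0$, so $w$ stays the correct defining equation; when $\m_n(t_0)=\a_n^{\pm}$ one has $\m_n(t_0)=\m_n(0)_*$ by Lemma \ref{p4l3}, and the Remark following that lemma guarantees that $\G(\cdot,\cdot)=0$ may be used throughout a full neighborhood of $(\a_n^{\pm},t_0)$, so the $\G$-equation remains legitimate even where the Dirichlet eigenvalue and the state separate.

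Finally, for (iii) I would continue the local solution to its maximal interval $I$. If $z_n^{\pm}\equiv0$, then $\a_n^{\pm}$ is a (virtual) state of $H_t$ for every $t$, the constant function solves $F\equiv0$ on all of $\R$, and maximality forces $I=\R$. Conversely, if $z_n^{\pm}\not\equiv0$, then by uniqueness it cannot vanish on a subinterval, so the state $\l_n^{\pm}(t)$ moves into the interior of a sheet and leaves the finite coordinate patch around the branch point $\a_n^{\pm}$ (whose $z$-radius is bounded because $\g_n^c$ is a compact circle), after which the description based at $\a_n^{\pm}$ no longer applies; hence $I\neq\R$. I expect this equivalence to be the main obstacle, since it is where the low regularity of $z_n^{\pm}$ and the global geometry of $\g_n^c$ interact and where the distinction between a genuinely moving state and a static virtual state is pinned down; the remainder of the proof is essentially packaged into Lemma \ref{p4l9} and the specific implicit function theorem, with (\ref{p4e64}) a formal consequence of the chain rule together with Lemma \ref{a1l3}.
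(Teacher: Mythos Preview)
Your approach to parts (i), (ii) and the derivation of \eqref{p4e64} matches the paper exactly: reduce to $F_{\a_n^{\pm},t_0}$ via Lemma~\ref{p4l9}, invoke the $\mH$-class implicit function theorem (Theorem~\ref{p7t2}), read off the sheet from the sign of $z$, and differentiate using the formulas for $\dot m_+$ and $\dot u=\partial_t(1/m_+)$ from Lemma~\ref{a1l3}. Your remark that the Remark after Lemma~\ref{p4l3} legitimizes the $\G$-equation on a full neighborhood is also how the paper handles that case.

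The gap is in (iii), and it is precisely the direction you flagged as delicate. Your argument ``if $z_n^{\pm}\not\equiv0$ then the state leaves the finite coordinate patch, so $I\neq\R$'' does not hold: nothing prevents the state from oscillating near $\a_n^{\pm}$ with $|z_n^{\pm}|$ bounded for all $t$, and along such a trajectory Lemma~\ref{p4l11} keeps $\partial_y F\neq0$, so the implicit solution could extend to a maximal interval equal to $\R$ with $z_n^{\pm}\not\equiv0$. The paper does not try to characterize the maximal interval. Instead it argues the contrapositive constructively: assuming $z_n^{\pm}\equiv0$ on the initially produced $I$, it looks at the (closed) level set $\{t:\ m_+(\a_n^{\pm},t)=m_-(\a_n^{\pm},0)\}$ (or the analogue with $1/m_+$ in the $\G$-case), takes the connected component $I_1$ of $t_0$, and if $I_1\neq\R$ it picks a finite boundary point $t_3\in\partial I_1$, reapplies the implicit function theorem there, and exhibits a nearby $t_4$ with $z_n^{\pm}(t_4)\neq0$. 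The output interval is then $\mathrm{int}(I_1)\cup I_2$, on which $z_n^{\pm}$ is not identically zero. In other words, (iii) is arranged by \emph{choosing} $I$ appropriately---bounded whenever $z_n^{\pm}\not\equiv0$, and all of $\R$ only in the genuinely static case---rather than by proving a dichotomy for the maximal interval.
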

    \begin{proof}
        Let $\l_0 = \a_n^{\pm}$ be a state of $H_{t_0}$ for some $t_0 \in \R$ and the sign $\pm$.
        If $\l_0 \neq \m_n(t_0)$, then, by Lemma \ref{p4l3}, $\l_n^{\pm}(t)$ is a state of $H_t$
        in some neighborhood of $\l_0$ if and only if $\l_n^{\pm}(t)$ is a solution of the
        equation $w(\l_n^{\pm}(t),t)=0$. If $\l_0 = \m_n(t_0)$, then, by Lemma \ref{p4l3},
        $\l_n^{\pm}(t)$ is a state of $H_t$ in some neighborhood of $\l_0$ if and only if
        $\l_n^{\pm}(t)$ is a solution of the equation $\G(\l_n^{\pm}(t),t)=0$.

        We introduce a local coordinate $z_n^{\pm}(t) \in \R$ on $\L$ by
        $\l_n^{\pm}(t) = \a_n^{\pm} \mp z_n^{\pm}(t)^2$, where
        $\l_n^{\pm}(t) \in \L_j$ if and only if $(-1)^j z_n^{\pm}(t) < 0$, $j =  1,2$.

        Using (\ref{p4e36}), we get that the equations $w(\l_n^{\pm}(t),t)=0$
        (if $\l_0 \neq \m_n(t_0)$) and $\G(\l_n^{\pm}(t),t)=0$ (if $\l_0 = \m_n(t_0)$) are
        equivalent to $F_{\l_0,t_0}(t-t_0, z_n^{\pm}(t)) = 0$. Since $\l_0$ is a state of $H_{t_0}$,
        by Lemma \ref{p4l9}, $F_{\l_0,t_0}$ is correctly defined on some open neighborhood of $(0,0)$
        and the conditions of Theorem \ref{p7t2} are satisfied for $F_{\l_0,t_0}$. Thus there exist
        an open neighborhood $I$ of $t_0$ and a unique function $z_n^{\pm} \in \cH^1(I)$ such that
        $z_n^{\pm}(t_0) = 0$ and $F_{\l_0,t_0}(t-t_0, z_n^{\pm}(t)) = 0$ for each $t \in I$.
        Therefore, $\l_n^{\pm}(t) = \a_n^{\pm} \mp z_n^{\pm}(t)^2$ is a state of
        $H_t$ for any $t \in I$.

        Differentiating $w(\l_n^{\pm}(t),t)=0$ and $\G(\l_n^{\pm}(t),t)=0$ by $t$, and using
        (\ref{dotm}), (\ref{dotm2}), we get (\ref{p4e64}) for almost all $t \in I$.

        It remains to prove that $z_n^{\pm} = 0$ iff $I = \R$. Let for simplicity
        $\m_n(0) \neq \a_n^{\pm}$, otherwise, the proof is similar but it is needed to use
        $1/m_+(\l,t)$ instead of $m_+(\l,t)$ in the proof. Let $z_n^{\pm}(t) = 0$ for each $t \in I$.
        Then using Lemma \ref{p4l3} and the definition of $w$, we get
        $m_+(\a_n^{\pm},t) = m_-(\a_n^{\pm},0)$ for each $t \in I$. We introduce
        $f(t) = m_+(\a_n^{\pm}, t)$. It is easy to see that there exist $t_1,t_2 \in \R$ such that
        $f \in C(t_1,t_2)$ and $t_1 < t_0 < t_2$. We determine $t_2$ by
        $t_2 = \inf \{ \t > t_0 \mid \m_n(\t) = \a_n^{\pm} \}$ or $t_2 = +\iy$ if there is no
        $\t > t_0$ such that $\m_n(\t) = \a_n^{\pm}$. Similarly, we have
        $t_1 = \sup \{ \t < t_0 \mid \m_n(\t) = \a_n^{\pm} \}$ or $t_2 = -\iy$ if there is no
        $\t < t_0$ such that $\m_n(\t) = \a_n^{\pm}$. Since $f \in C(t_1,t_2)$, it follows that
        $f^{-1}(m_-(\a_n^{\pm},0))$ is a closed set. Let $I_1 \ss (t_1,t_2)$ be the connected
        component of $t_0$ in $f^{-1}(m_-(\a_n^{\pm},0))$. By Lemma \ref{p4l3}, $\a_n^{\pm}$ is a
        state of $H_t$ for each $t \in I_1$. If $I_1 \neq \R$, then there exist
        $t_3 \in \partial I_1$ such that $t_3 \neq \pm \iy$, where $\partial I_1$ is boundary of
        $I_1$. We can apply the first part of the lemma to the point $t_3$. Thus there exists an
        open neighborhood $I_2$ of $t_3$ such that $z_n^{\pm} \in \cH^1(I_2)$ and so on. Moreover,
        since $t_3 \in \partial I_1$, it follows that $I_2 \sm f^{-1}(m_-(\a_n^{\pm},0)) \neq \es$.
        It now follows that there exists $t_4 \in I_2$ such that
        $m_+(\a_n^{\pm},t_4) \neq m_-(\a_n^{\pm},0)$ and hence $z_n^{\pm}(t_4) \neq 0$.
        Let $I = int(I_1) \cup I_2$. Then $I$ is an open neighborhood of $t_0$ such that
        $z_n^{\pm}(I) \neq \{ 0 \}$ and other statements of the lemma are satisfied.
    \end{proof}
    Now we prove the similar lemma when a state is not in some neighborhood of $\a_n^{\pm}$.
    \begin{lemma} \label{p4l10}
        Let a gap $\g_n$ be open for some $n \in \Z$ and let $\l_0 \in \g_n^c$,
        $\l_0 \neq \a_n^{\pm}$, be a state of $H_{t_0}$ for some $t_0 \in \R$.
        Then there exists a unique function $\l_n \in \cH^1(I)$, where $I = (t_0-\ve,t_0+\ve)$
        for some $\ve > 0$, such that $\l_n(t_0) = \l_0$, and $\l_n(t)$ is a state of $H_t$ in
        $\g_n^c$ for any $t \in I$.

        Moreover, for almost all $t \in I$ we have
        \[ \label{p4e34}
            \begin{aligned}
                \dot{\l}_n &= \frac{1}{w'_{\l}} \left( (q_1-\l_n) + 2m_+q_2 -
                m_+^2(q_1+\l_n) \right), &&\text{if $\m_n(t_0) \neq \a_n^{\pm}$},\\
                \dot{\l}_n &= \frac{1}{\G'_{\l}} \left( \frac{q_1-\l_n}{m_+^2} + \frac{2q_2}{m_+}
                - (q_1+\l_n) \right), &&\text{if $\m_n(t_0) = \a_n^{\pm}$},
            \end{aligned}
        \]
        where
        $$
            \l_n = \l_n(t),\qq m_+ = m_+(\l_n(t),t),\qq w'_{\l} =
            \left. w'_{\l}(\l,t) \right|_{\l = \l_n(t)},\qq
            \G'_{\l} = \left. \G'_{\l}(\l,t) \right|_{\l = \l_n(t)}.
        $$
    \end{lemma}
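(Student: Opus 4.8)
The plan is to mimic the proof of Lemma~\ref{p4l6}, simplified by the fact that $\l_0 \neq \a_n^{\pm}$: near an interior point of a gap the projection $\pr$ is a local diffeomorphism, so $\l$ itself serves as the local coordinate on $\L$, no substitution $\l = \a_n^{\pm} \mp z^2$ is needed, and there is no analogue of statement~iii) of Lemma~\ref{p4l6}. First I would invoke Lemma~\ref{p4l3} to characterize the state locally: near $\l_0$ a point $\l_n(t)$ is a state of $H_t$ if and only if it solves $w(\l_n(t),t)=0$ in the case $\l_0 \neq \m_n(t_0)$, and $\G(\l_n(t),t)=0$ in the case $\l_0 = \m_n(t_0)$. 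In the latter case Lemma~\ref{p4l3} forces $\m_n(t_0)=\m_n(0)_*$, and by the remark following Lemma~\ref{p4l3} one may use the equation $\G(\cdot,t)=0$ throughout a punctured neighborhood of $(\l_0,t_0)$, where $m_+(\cdot,t)$ and $m_-(\cdot,0)$ stay finite and nonzero.

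Next I would set up the implicit function theorem. Define $F_{\l_0,t_0}$ on a neighborhood $I_1 \ts I_2$ of $(0,0)$ by the first branch of (\ref{p4e36}), namely $F_{\l_0,t_0}(x,y)=w(\l_0+y,\,t_0+x)$ when $\l_0 \neq \m_n(t_0)$, and by the third branch $F_{\l_0,t_0}(x,y)=\G(\l_0+y,\,t_0+x)$ when $\l_0 = \m_n(t_0)$. Since $\l_0$ is a state of $H_{t_0}$ with $\l_0\neq\a_n^{\pm}$, Lemma~\ref{p4l9} applies verbatim and gives $F_{\l_0,t_0},\partial_y F_{\l_0,t_0}\in\mH(I_1,I_2)$, $F_{\l_0,t_0}(0,0)=0$, and $\partial_y F_{\l_0,t_0}(0,0)\neq0$, the nonvanishing being exactly the first line ($w'_{\l}$) or the second line ($\G'_{\l}$) of (\ref{p3e7}) in Lemma~\ref{p4l11}. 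The specific implicit function theorem (Theorem~\ref{p7t2}) then yields $\ve>0$ and a unique $\cH^1$ function $y$ on a neighborhood of $0$ with $y(0)=0$ and $F_{\l_0,t_0}(t-t_0,\,y(t-t_0))=0$ for all $t\in I=(t_0-\ve,t_0+\ve)$. Setting $\l_n(t)=\l_0+y(t-t_0)$ gives the required $\l_n\in\cH^1(I)$ with $\l_n(t_0)=\l_0$, which by the previous paragraph is a state of $H_t$ in $\g_n^c$ for every $t\in I$; shrinking $\ve$ keeps $\l_n(t)$ bounded away from $\a_n^{\pm}$, so the $\l$-coordinate remains valid on $I$.

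Finally I would obtain (\ref{p4e34}) by differentiating the defining relation in $t$. When $\l_0\neq\m_n(t_0)$, differentiating $w(\l_n(t),t)=0$ and using $w=m_+(\cdot,t)-m_-(\cdot,0)$ from (\ref{p4e32}) together with $\dot m_+=g_{\l}(m_+,t)$ from (\ref{dotm}) gives $w'_{\l}\,\dot\l_n=-g_{\l}(m_+,t)$; inserting the definition of $g_{\l}$ from (\ref{p6e2}) produces the first line. When $\l_0=\m_n(t_0)$, differentiating $\G(\l_n(t),t)=0$, writing $\G=1/m_-(\cdot,0)-1/m_+(\cdot,t)$ and using (\ref{dotm}) (and (\ref{dotm2}) at the pole $t=t_0$) gives $\partial_t\G=g_{\l}(m_+,t)/m_+^2$, hence the second line, valid for almost all $t\in I$ since $\l_n(t_0)=\m_n(t_0)$ is a single point. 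I do not anticipate a genuine obstacle: all the analytic input — meromorphic continuation, the sign and nonvanishing of the relevant derivatives, and the $\mH$-regularity — is already assembled in Lemmas~\ref{p4l3}, \ref{p4l9}, \ref{p4l11} and~\ref{a1l3}. The one point deserving care is the selection of the branch of (\ref{p4e36}): it is governed by whether $\l_0$ coincides with the Dirichlet eigenvalue $\m_n(t_0)$ (equivalently with $\m_n(0)_*$), i.e. by whether the state is cut out near $(\l_0,t_0)$ by $w$ or by $\G$, and this is precisely the criterion distinguishing the two lines of (\ref{p4e34}).
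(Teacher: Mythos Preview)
Your proposal is correct and follows essentially the same route as the paper: reduce to $w=0$ or $\G=0$ via Lemma~\ref{p4l3}, invoke Lemma~\ref{p4l9} to verify the hypotheses of the implicit function theorem (Theorem~\ref{p7t2}) for the appropriate branch of $F_{\l_0,t_0}$ in (\ref{p4e36}), and then differentiate using (\ref{dotm}) and (\ref{dotm2}) to obtain (\ref{p4e34}). Your identification of the case distinction as $\l_0=\m_n(t_0)$ versus $\l_0\neq\m_n(t_0)$ is exactly what the paper's proof uses (the labels ``$\m_n(t_0)\neq\a_n^{\pm}$'' and ``$\m_n(t_0)=\a_n^{\pm}$'' in the displayed formula appear to be a carry-over from Lemma~\ref{p4l6}).
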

    \begin{proof}
        Let $\l_0 \neq \a_n^{\pm}$ be a state of $H_{t_0}$ for some $t_0 \in \R$. If
        $\l_0 \neq \m_n(t_0)$, then, by Lemma \ref{p4l3}, $\l_n(t)$ is a state of $H_t$ in some
        open neighborhood of $\l_0$ if and only if $\l_n(t)$ is a solution of the equation
        $w(\l_n(t),t)=0$. If $\l_0 = \m_n(t_0)$, then, by Lemma \ref{p4l3}, $\l_n(t)$ is a state
        of $H_t$ in some neighborhood of $\l_0$ if and only if $\l_n(t)$ is a solution of the
        equation $\G(\l_n(t),t)=0$.

        Using (\ref{p4e36}), we get that the equations $w(\l_n(t),t)=0$ (if $\l_0 \neq \m_n(t_0)$),
        and $\G(\l_n(t),t)=0$ (if $\l_0 = \m_n(t_0)$) are equivalent to
        $F_{\l_0,t_0}(t-t_0, \l_n(t)-\l_0) = 0$. Since $\l_0$ is a state of $H_{t_0}$,
        by Lemma \ref{p4l9}, $F_{\l_0,t_0}$ is correctly defined on some open neighborhood of $(0,0)$
        and the conditions of Theorem \ref{p7t2} are satisfied for $F_{\l_0,t_0}$. Thus there exist
        an open neighborhood $I$ of $t_0$ and a unique function $\l_n \in \cH^1(I)$ such that
        $\l_n(t_0) = \l_0$ and $F_{\l_0,t_0}(t-t_0, \l_n(t)-\l_0) = 0$ for each $t \in I$.
        Therefore, $\l_n(t)$ is a state of $H_t$ for any $t \in I$.

        Differentiating $w(\l_n(t),t)=0$ and $\G(\l_n(t),t)=0$ by $t$, and using
        (\ref{dotm}), (\ref{dotm2}), we get (\ref{p4e34}) for almost all $t \in I$.
    \end{proof}

    Now we show that if a state of $H_t$ is close to $\m_n(t)$, then they move synchronously.
    \begin{lemma} \label{p4l8}
        Let a gap $\g_n$ be open for some $n \in \Z$ and let $\l_0 = \m_n(t_0)$ be a state of
        $H_{t_0}$ for some $t_0 \in \R$. Let $\l_n(\cdot) \in \cH^1(I,\g_n^c)$ be a state of $H_t$,
        and $\m_n(\cdot) \in \cH^1(I,\g_n^c)$ be a state of $H^+_t$
        for each $t \in I$, where
        $\m_n(t_0) = \l_n(t_0) = \l_0$ and $I = (t_0 - \ve,t_0 + \ve)$ for some $\ve > 0$.
        \begin{enumerate}[i)]
            \item Let $\l_0 \neq \a_n^{\pm}$. Then there exists $\tilde \ve > 0$ such that for each
            $t \in (t_0 - \tilde \ve, t_0 + \tilde \ve)$ we have
            $$
                \begin{aligned}
                    \l_n(t) &\in \lan \m_n(t_0), \m_n(t) \ran \qq
                    \text{if and only if}\qq \m_n(t) \in \lan \m_n(t_0), \m_n(t_0)_* \ran, \\
                    \l_n(t) &\in \lan \m_n(t), \m_n(t_0) \ran \qq
                    \text{if and only if}\qq \m_n(t) \in \lan \m_n(t_0)_*, \m_n(t_0) \ran,
                \end{aligned}
            $$
            where $\lan \cdot,\cdot \ran$ is the clockwise oriented arc on $\g_n^c$.

            \item Let $\l_0 = \a_n^{\pm}$ and let $\m_n(t)$, $\l_n(t)$ have the following
            form for any $t \in I$
            $$
                \begin{aligned}
                    \m_n(t) &= \a_n^{\pm} \mp \z_n(t)^2,\qq \z_n(t_0) = 0,\qq
                    \m_n(t) \in \L_j\ \ \text{iff}\ \ (-1)^j \z_n(t) < 0,\ \ j=1,2,\\
                    \l_n(t) &= \a_n^{\pm} \mp z_n(t)^2,\qq z_n(t_0) = 0,\qq
                    \l_n(t) \in \L_j\ \ \text{iff}\ \ (-1)^j z_n(t) < 0,\ \ j = 1,2.
                \end{aligned}
            $$
            Then there exists $\tilde \ve > 0$ such that for each
            $t \in (t_0 - \tilde \ve, t_0 + \tilde \ve)$ we have
            $$
                \begin{aligned}
                    z_n(t) &\in (0, \z_n(t))\qq \text{if and only if}\qq \z_n(t) > 0, \\
                    z_n(t) &\in (\z_n(t), 0)\qq \text{if and only if}\qq \z_n(t) < 0.
                \end{aligned}
            $$
        \end{enumerate}
    \end{lemma}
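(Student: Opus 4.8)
The plan is to reformulate ``state'' through the Weyl--Titchmarsh functions and then run a trapping argument between two poles. By Lemma~\ref{p4l3} together with the remark following it, for $t\neq t_0$ close to $t_0$ the tracked point $\l_n(t)$ (which then differs from both $\m_n(t)$ and $\m_n(0)_*$) is a state of $H_t$ if and only if $w(\l_n(t),t)=0$, i.e. $m_+(\l_n(t),t)=m_-(\l_n(t),0)$. Thus $\l_n(t)$ is exactly the crossing point of the two meromorphic functions $m_+(\cdot,t)$ and $m_-(\cdot,0)$ on $\g_n^c$. The unique pole of $m_+(\cdot,t)$ on $\g_n^c$ is the Dirichlet eigenvalue $\m_n(t)$, while the unique pole of $m_-(\cdot,0)$ is $\m_n(0)_*$; by hypothesis and Lemma~\ref{p4l3} these coincide at $t=t_0$, since $\l_0=\m_n(t_0)=\m_n(0)_*$.

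For part i) (so $\l_0\neq\a_n^{\pm}$) I would localize near $\l_0$, where $\g_n^c$ is an ordinary real interval on one sheet. Using the monotonicity of Lemma~\ref{p3l1} (carried to the relevant sheet through $m_{\pm}(\l_*)=m_{\mp}(\l)$, see (\ref{p2e17})), the function $m_+(\cdot,t)$ is strictly monotone with its single pole at $\m_n(t)$, and $m_-(\cdot,0)$ is strictly monotone in the opposite sense with its single pole at $\l_0$. On the open sub-arc bounded by the two poles the difference $m_+(\cdot,t)-m_-(\cdot,0)$ is strictly monotone and blows up with opposite signs at the two ends, hence has exactly one zero there, which is the tracked state $\l_n(t)$. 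So $\l_n(t)$ lies strictly on the short arc between $\l_0=\m_n(t_0)$ and $\m_n(t)$. Translating ``strictly between'' into the clockwise orientation of Section~\ref{p2}: if $\m_n(t)$ has moved into $\lan\m_n(t_0),\m_n(t_0)_*\ran$ then this short arc is $\lan\m_n(t_0),\m_n(t)\ran$ and contains $\l_n(t)$, while if $\m_n(t)$ lies in the complementary arc the short arc is $\lan\m_n(t),\m_n(t_0)\ran$; recalling $\m_n(t_0)_*=\m_n(0)$, this is exactly the claimed pair of equivalences.

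For part ii) (so $\l_0=\a_n^{\pm}$) the pole of $m_-(\cdot,0)$ sits at the branch point, so I would pass to the local coordinate $z$ with $\l=\a_n^{\pm}\mp z^2$ and work with $u=1/m_+$ and $1/m_-(\cdot,0)$, which are finite (indeed vanish) at $z=0$. Writing $\G=\frac{1}{m_-(\cdot,0)}-u$ from (\ref{p4e33}), the state equation becomes $U(z_n(t),t)=R(z_n(t))$, where $U(z,t)=u(\a_n^{\pm}\mp z^2,t)$ and $R(z)=1/m_-(\a_n^{\pm}\mp z^2,0)$, while the Dirichlet eigenvalue satisfies $U(\z_n(t),t)=0$ and $R(0)=0$. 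Subtracting and applying the mean value theorem in $z$ yields $z_n(t)-\z_n(t)=R(z_n(t))/\partial_z U(\xi,t)$ for some $\xi$ between $z_n(t)$ and $\z_n(t)$, so the conclusion $0<z_n(t)<\z_n(t)$ (resp. $\z_n(t)<z_n(t)<0$) follows once the signs of $R$ near $0$ and of $\partial_z U$ are pinned down. Those signs are precisely the content of Lemma~\ref{p4l11}: the sign of $\G'_z$ (hence of $\kappa_n^{\pm}(t_0)$) together with the strict monotonicity of $u$ fixes both, and the strictness $|z_n|<|\z_n|$ reduces to $R(z_n)$ and $\z_n$ having opposite signs while $z_n,\z_n$ share one sign.

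I expect the genuine difficulty to lie entirely in part ii), where the branch point forces the $z$-coordinate and the comparison rests on the sign bookkeeping of Lemma~\ref{p4l11}. In particular, at the degenerate instant $q_1(t_0)+\a_n^{\pm}=0$ the first-order velocities of both $z_n$ and $\z_n$ vanish, so the strict inequality cannot be read off from derivatives and must come from the mean value comparison above (equivalently, from the leading integral asymptotics of Theorem~\ref{p0t5} and its $t_0$-analogue). By contrast part i) is conceptually transparent once the trapping-between-poles picture is in place, and the only care needed there is the orientation translation into the clockwise arcs.
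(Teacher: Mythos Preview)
Your approach for part i) is correct and in fact conceptually cleaner than the paper's. Both arguments are intermediate-value trappings of the state between $\m_n(t_0)=\m_n(0)_*$ and $\m_n(t)$, but the paper works with $\G$ and computes $\sign\G(\m_n(t_0),t)$ and $\sign\G(\m_n(t),t)$ explicitly from the formulas for $\vp_1$, $a$, $b$ (equations \er{p4e38}--\er{p4e45}), whereas you use the Herglotz monotonicity of Lemma~\ref{p3l1} to read off directly that $w=m_+-m_-$ runs from $+\iy$ to $-\iy$ on the short arc between the two simple poles. Your route avoids the explicit integral identities and the sign tracking of $\partial_\l\vp_1$ that the paper needs; the paper's route has the advantage that it sets up the machinery reused verbatim in part~ii).

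In part ii) there is a genuine gap. Your mean-value relation $z_n-\z_n=R(z_n)/\partial_z U(\xi,t)$ only yields the claimed ordering once you know that $R'(0)$ and $\partial_z U(0,t_0)$ have \emph{opposite} signs (so that $1-R'/\partial_z U>1$ and hence $z_n,\z_n$ share a sign with $|z_n|<|\z_n|$). You invoke Lemma~\ref{p4l11} for this, but that lemma only gives $\sign\G'_z=\sign(R'(0)-\partial_z U(0,t_0))$, which does not determine the two terms separately; and ``strict monotonicity of $u$'' in the $z$-coordinate across the branch point is not a consequence of Lemma~\ref{p3l1} alone, since the sheet changes at $z=0$ and one must check the signs glue. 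The missing computation is exactly what the \emph{proof} of Lemma~\ref{p4l11} carries out (using \er{p2e12}, \er{p2e18}, \er{p2e19} to evaluate each of the two summands $1/\vt_2(1,\a_n^{\pm},t_0)$ and $1/\vt_2(1,\a_n^{\pm},0)$), and the paper's own proof of Lemma~\ref{p4l8}~ii) essentially redoes it by evaluating $\G$ at the two endpoints $z=0$ and $z=\z_n$ via \er{p4e46}. Either supply that computation (show $\sign\partial_z U(0,t_0)=-\sign R'(0)$ directly from $U=\vp_1/(a-b)$, $R=\vp_1^0/(a^0+b)$ and the asymptotic \er{p2e12}), or switch to the paper's IVT on $\G$ at the two endpoints; your current reference to Lemma~\ref{p4l11} does not close the argument.
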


    \begin{proof}
        Since $\l_n(t_0) = \m_n(t_0)$ is a state of $H_t$, it follows from Lemma \ref{p4l3} that
        $\m_n(t_0) = \m_n(0)_*$, and $\l_n(t)$ is a solution of the equation $\G(\l_n(t),t) = 0$.
        We show that a solution of this equation is between $\m_n(t)$ and $\m_n(t_0)$ for each $t$
        in some neighborhood of $t_0$. Since $\G(\cdot,t)$ is continuous, it is enough to show
        $\sign \G(\m_n(t_0),t) = -\sign \G(\m_n(t),t) \neq 0$. Using the definition of $\G$, we get
        \[ \label{p4e43}
            \sign \frac{\G(\m_n(t_0),t)}{\G(\m_n(t),t)} =
            - \sign \frac{\vp_1(1,\m_n(t_0),t)}{\vp_1(1,\m_n(t),0)}
            \frac{a(\m_n(t_0),t) - b(\m_n(t_0))}{a(\m_n(t),0) + b(\m_n(t))}.
        \]
        Since $\vp_1(1,\cdot,t)$ is an entire function, it follows that for any $t \in \R$
        \[  \label{p4e38}
            \vp_1(1,\m_n(t_0),t) = \int_{\m_n(t)}^{\m_n(t_0)} \partial_{\l} \vp_1(1,\l,t) d\l.
        \]
        Note that we have $\partial_{\l} \vp_1(1,\m_n(t),t) \neq 0$ and
        $\partial_{\l} \vp_1(1,\m_n(t_0),0) \neq 0$. By Lemma \ref{p6l1},
        $\partial_{\l} \vp_1(1,\l,t)$, and $\partial_{\l} \vp_1(1,\l,0)$ belong to
        $\mH(I_1, I_2) \ss C(I_1 \ts I_2)$ as functions $(\l,t)$, so that there exist an open
        neighborhood $I_1 \ts I_2 \ss \R^2$ of $(t_0,\m_n(t_0))$ such that they do not change
        their sign for $(t,\l) \in I_1 \ts I_2$. Recall that $\m_n(\cdot) \in \cH^1(I,\g_n^c)$,
        so that choosing a smaller $I_1$, if necessary, we get $\m_n(t) \in I_2$ for each
        $t \in I_1$. It now follows from (\ref{p4e38}) that
        \[  \label{p4e39}
            \sign \vp_1(1,\m_n(t_0),t) = \sign(\m_n(t)-\m_n(t_0)) \partial_{\l} \vp_1(1,\m_n(t),t).
        \]
        Similarly we obtain
        \[ \label{p4e40}
            \sign \vp_1(1,\m_n(t),0) = \sign(\m_n(t_0)-\m_n(t)) \partial_{\l} \vp_1(1,\m_n(0),0),
        \]
        where we used $\vp_1(1,\m_n(0),0) = \vp_1(1,\m_n(t_0),0)$. By Lemma \ref{p6l1},
        \ref{p7l3}, $\partial_{\l} \vp_1(1,\m_n(t),t)$ is a continuous function of $t$
        and, by (\ref{p2e19}), it never vanishes. Thus, we get
        $\sign \partial_{\l} \vp_1(1,\m_n(t),t) = \sign \partial_{\l} \vp_1(1,\m_n(0),0) \neq 0$.
        Using this fact and combining (\ref{p4e39}), (\ref{p4e40}), we get
        $$
            \sign \vp_1(1,\m_n(t_0),t) = -\sign \vp_1(1,\m_n(t),0).
        $$
        Substituting this identity in (\ref{p4e43}), we see that it suffices to prove that
        \[ \label{p4e41}
            \sign (a(\m_n(t_0),t) - b(\m_n(t_0))) = - \sign(a(\m_n(t),0) + b(\m_n(t))).
        \]

        i) Let $\m_n(t_0) \neq \a_n^{\pm}$. Since $a(\cdot,t)$ and $b(\cdot)$ are analytic in a
        neighborhood of $\m_n(t_0)$,~we~get
        \[ \label{p4e44}
            \begin{aligned}
                a(\m_n(t_0),t) - b(\m_n(t_0)) &= a(\m_n(t),t) - b(\m_n(t)) +
                \int_{\m_n(t)}^{\m_n(t_0)} \partial_{\l} (a(\l,t) - b(\l)) d\l,\\
                a(\m_n(t),0) + b(\m_n(t)) &= a(\m_n(t_0),0) + b(\m_n(t_0)) +
                \int_{\m_n(t_0)}^{\m_n(t)} \partial_{\l} (a(\l,0) + b(\l)) d\l.
            \end{aligned}
        \]
        By Lemma \ref{p2l4}, we have $a(\m_n(t),t) - b(\m_n(t)) = -2b(\m_n(t)) \neq 0$ and
        $a(\m_n(t_0),0) + b(\m_n(t_0)) = 2b(\m_n(t_0)) \neq 0$. As above, using
        Lemma \ref{p6l1}, \ref{a1l1}, and \ref{a1l2}, we get that
        $\partial_{\l} (a(\l,t) - b(\l))$, and $\partial_{\l} (a(\l,0) + b(\l))$ are continuous
        function on $I_1 \ts I_2$ and hence they are bounded on $I_1 \ts I_2$. Using these facts
        and reducing $I_1 \ts I_2$, if necessary, we get from (\ref{p4e44})
        \[ \label{p4e45}
            \begin{aligned}
                \sign(a(\m_n(t_0),t) - b(\m_n(t_0))) &= -\sign b(\m_n(t)), \\
                \sign(a(\m_n(t),0) + b(\m_n(t))) &= \sign b(\m_n(t_0)).
            \end{aligned}
        \]
        Since $\m_n(t) \neq \a_n^{\pm}$ and $\m_n(t_0) \neq \a_n^{\pm}$ belong to one sheet of
        $\L$, it follows that $\sign b(\m_n(t)) = \sign b(\m_n(t_0)) \neq 0$. Substituting
        this identity in (\ref{p4e45}), we obtain (\ref{p4e41}), which proves i).

        ii) Now let $\m_n(t_0) = \a_n^{\pm}$ and $\m_n(t)$, $\l_n(t)$ have the form as in the
        statement of the lemma. The functions $a(\a_n^{\pm} \mp z^2, 0)$ and $b(\a_n^{\pm} \mp z^2)$
        are analytic functions of $z$ in a neighborhood of $0$ and $a(\l,t)$ is an analytic
        function of $\l$. Since $b(\m_n(t_0)) = 0$ and $a(\m_n(t_0),0) = 0$, we get
        \[ \label{p4e46}
            \begin{aligned}
                a(\m_n(t_0),t) &= a(\m_n(t),t) +
                \int_{\m_n(t)}^{\m_n(t_0)} \partial_{\l} a(\l,t) d\l,\\
                a(\m_n(t),0) + b(\m_n(t)) &=
                \int_0^{\z_n^{\pm}(t)} \partial_z (a(\m_n(t_0) \mp z^2,0) + b(\m_n(t_0) \mp z^2))dz.
            \end{aligned}
        \]
        As in previous case one can prove that $\partial_{\l} a(\l,t)$ is bounded on
        $I_1 \ts I_2$ and $a(\m_n(t),t) = -b(\m_n(t))$. It follows that
        $$
            a(\m_n(t_0),t) = -b(\m_n(t))(1 + O(\z_n^{\pm}(t)))\ \ \text{as $\z_n^{\pm}(t) \to 0$}.
        $$
        Moreover, using (\ref{p2e12}) and $\left. \partial_z f(z^2) \right|_{z=0} = 0$, where
        $f(\l)$ is an analytic function of $\l$ in an open neighborhood of $0$, we get
        $\sign \partial_z (a(\m_n(t_0) \mp z^2,0) + b(\m_n(t_0) \mp z^2)) = (-1)^n$ and it does
        not change sign for any $z$ in some open neighborhood of $0$. Thus, we get from
        (\ref{p4e46})
        $$
            \begin{aligned}
                \sign(a(\m_n(t_0),t) - b(\m_n(t_0))) &= -\sign b(\m_n(t)) =
                -(-1)^n \sign \z_n^{\pm}(t),\\
                \sign(a(\m_n(t),0) + b(\m_n(t))) &= (-1)^n \sign \z_n^{\pm}(t),
            \end{aligned}
        $$
        which yields (\ref{p4e41}) and proves ii).
    \end{proof}

\section{Proofs of the main theorems} \label{p6}

    \begin{proof}[\bf{Proof of Theorem \ref{p0t1}}]
        In order to prove the theorem we show that there exist exactly two continuous states in
        some open neighborhood of $t = 0$. After that, we show that a state is a continuous function
        of $t$ and it cannot appear or disappear, wherever it is. Thus, we can continuously
        propagate two initial states for all $t \in \R$ without getting any other states.

        Let a gap $\g_n$ be open for some $n \in \Z$. It is easy to see that
        $w(\a_n^{\pm},0) = 0$ (if $\m_n(0) \neq \a_n^{\pm}$) or $\G(\a_n^{\pm},0) = 0$
        (if $\m_n(0) = \a_n^{\pm}$) and $w(\l,0) \neq 0$, $\G(\l,0) \neq 0$ for any
        $\l \in \g_n^c$, $\l \neq \a_n^{\pm}$. Thus, by Lemma \ref{p4l3}, there exist only
        two states $\a_n^{\pm}$ of $H_0$ in $\g_n^c$. From Lemma \ref{p4l6},
        it follows that there exist an open neighborhood $I$ of $0$ and $\l_n^{\pm} \in \cH^1(I)$
        such that $\l_n^{\pm}(0) = \a_n^{\pm}$, $\l_n^{\pm}(t)$ are states of $H_t$ for each
        $t \in I$, and $\l_n^{\pm}(t_0) \neq \a_n^{\pm}$ for some $t_0 \in I$.

        Since Lemma \ref{p4l5} holds true, it follows from Theorem VIII.23 in \cite{Ques1} that
        each eigenvalue of $H_t$ can not appear or disappear inside the gap, moreover, it is a
        continuous function of $t$. By Proposition \ref{p4l4}, this also holds true for each resonance
        of $H_t$.

        We prove that a state cannot disappear at $\a_n^{\pm}$. Let $\l_n(t)$
        be a state of $H_t$ for each $t < t_0$ for some $t_0 \in \R$, and let
        $\l_n(t) \to \a_n^{\pm}$ as $t \to t_0$ for some sign $\pm$. We will prove
        that $\a_n^{\pm}$ is a state of $H_{t_0}$. Let for simplicity $\a_n^{\pm} \neq \m_n(0)_*$,
        in the case $\a_n^{\pm} = \m_n(0)_*$ the proof is similar. First, by Lemma \ref{p4l9},
        the mapping $(x,y) \mapsto w(\a_n^{\pm} \mp y^2, t_0 + x)$ belongs to
        $\mH(I_1, I_2) \ss C(I_1 \ts I_2)$ for some neighborhood $I_1 \ts I_2$ of $(0,0)$.
        Since $\l_n(t) \to \a_n^{\pm}$ as $t \to t_0$, it follows that
        $w(\a_n^{\pm},t_0) = \lim_{t \to t_0} w(\l_n(t),t) = 0$. Thus, by Lemma \ref{p4l3},
        $\l_0$ is a state of $H_{t_0}$. Now using Lemma \ref{p4l3}, we continuously propagate
        this state for $t$ in some neighborhood of $t_0$. Moreover, there exists $t_1 > t_0$ such
        that $\l_n(t_1) \neq \a_n^{\pm}$ and then we can propagate the state as above.
        The proof that a state cannot appear is similar.

        It now follows that we can continuously propagate two initial states for all $t \in \R$
        without getting any other states. Thus, there exist exactly two states $\l_n^{\pm}(t)$
        of $H_t$ in $\g_n^c$ for each $t \in \R$ such that $\l_n^{\pm}: \R \to \g_n^c$ are continuous
        mappings and $\l_n^{\pm}(0) = \a_n^{\pm}$.

        Since the potential $V \in \cP$, we have $V_{t+1} = V_{t}$ and either
        $\l_n^{\pm}(1+t) = \l_n^{\pm}(t)$ or $\l_n^{\pm}(1+t) = \l_n^{\mp}(t)$ for any
        $t \in \R$. In the first case we have
        $\l_n^{\pm}(2+t) = \l_n^{\pm}(1+t) = \l_n^{\pm}(t)$ and in the second case
        $\l_n^{\pm}(2+t) = \l_n^{\mp}(1+t) = \l_n^{\pm}(t)$. Hence
        $\l_n^{\pm}(2+t) = \l_n^{\pm}(t)$ for any $t \in \R$, i.e. $\l_n^{\pm}(\cdot)$ is 2-periodic.

        Using Lemma \ref{p4l6} and \ref{p4l10}, we get that for each $t_0 \in \R$ there
        exist an open neighborhood $I$ of $t_0$ such that $\l_n^{\pm}(\cdot) \in \cH^1(I)$, or
        $\l_n^{\pm}(\t) = \a_n^{\pm} \mp (z_n^{\pm}(\t))^2$, $\t \in I$, and
        $z_n^{\pm}(\cdot) \in \cH^1(I)$. It now follows from the definition of $\cH^1(2\T,\g_n^c)$
        that $\l_n^{\pm} \in \cH^1(2\T, \g_n^c)$.

        i) We consider the case when $\m_n(0) \neq \a_n^{\pm}$ and $\m_n(0) \in \L_1$. In the other
        cases the proof is similar. Then $\l_n^{+}(0) = \a_n^{+} \in \lan \m_n(0),\m_n(0)_* \ran$
        and $\l_n^{-}(0) = \a_n^{-} \in \lan \m_n(0)_*,\m_n(0) \ran$. It is easy to see that
        $\l_n^{\pm}(t_0)$ can leave the required arc only if it collide with $\m_n(t_0)$ or
        $\m_n(0)_*$ for some $t_0 \in \R$. By Lemma \ref{p4l3}, this is possible if and only if
        $\m_n(t_0) = \m_n(0)_*$. It follows from Lemma \ref{p4l8} that in this case $\l_n^{+}(t)$
        and $\l_n^{-}(t)$ also lie on the different arcs $\lan \m_n(t),\m_n(0)_* \ran$ and
        $\lan \m_n(0)_*, \m_n(t) \ran$ for $t$ in a neighborhood of $t_0$.

        ii) The statement follows from Lemma \ref{p4l3}.

        iii) It follows from i) of this theorem that $\l_n^{\pm}(t)$ are between $\m_n(t)$ and
        $\m_n(0)_*$ and change places when $\m_n(t)$ makes one revolution around $\g_n^c$. This
        imply that if $\m_n(t)$ makes $r(n)$ revolutions when $t$ runs through $[0,1)$, then
        $\l_n^{\pm}(t)$ make $r(n)/2$ revolutions.

        iv) The statement follows from iii).
    \end{proof}

    \begin{proof}[\bf{Proof of Theorem \ref{p0t2}}]
        Let $V \in \cP$ and let a gap $\g_n$ be open for some $n \in \Z$. Suppose that
        $$
            \sign(q_1(t) + \a_n^-) = \sign(q_1(t) + \a_n^+) = \const \neq 0
            \text{ for almost all $t \in [0,1]$}.
        $$
        Then the conditions of Theorem 2.3 from \cite{MokKor} are satisfied and it follows that
        $\m_n(t)$ makes exactly $\left|n\right|$ complete revolutions around $\g_n^c$ when $t$
        runs through $[0,1]$. Then Theorem \ref{p0t1}, iii) implies that $\l_n^{\pm}(t)$ make
        $\left|n\right|/2$ revolutions when $t$ runs through $[0,1]$.
    \end{proof}

    \begin{proof}[\bf{Proof of Theorem \ref{p0t5}}]
        Let a gap $\g_n$ be open for some $n \in \Z$ and let $\l_0 = \m_n(0) = \a_n^{\pm}$. Then it
        follows from Lemma \ref{p4l6} that there exists $z \in \cH^1((-\ve,\ve))$ for some
        $\ve > 0$ such that $z(0) = 0$, $\l(t) = \l_0 \mp z^2(t)$ is a state of $H_t$ and
        $\l(t) \in \L_j$ if and only if $(-1)^j z(t) < 0$, $j=1,2$ for any $t \in (-\ve,\ve)$.
        Let $\tilde \ve < \ve$. Then these properties hold for any $t \in [-\tilde \ve,\tilde \ve]$.
        Now we prove that asymptotic (\ref{p1e7}) holds true.
        We denote $\left. \G'_{z}(\l_0 \mp z^2,t_1) \right|_{z=z(t_2)}$ by
        $\G'_z(\l(t_2),t_1)$. Integrating (\ref{p4e64}), we get
        \[ \label{p4e48}
                z(t) = -\frac{1}{\G'_{z}(\l_0,0)}\int_{0}^t (q_1(\t) + \l_0) d\t + \eta_1(t),
        \]
        where
        $$
            \begin{aligned}
                \eta_1(t) &= \frac{1}{\G'_{z}(\l_0,0)}
                \int_{0}^t \left(\frac{q_1(\t) - \l_0}{m_+(\l(\t),\t)^2} +
                \frac{2q_2(\t)}{ m_+(\l(\t),\t)} - (\l(\t)-\l_0)
                \left( 1+\frac{1}{m_+(\l(\t),\t)^2} \right) \right) d\t \\
                &- \int_{0}^t \frac{g_{\l}(m_+(\l(\t),\t),\t)}{m_+(\l(\t),\t)^2}
                \left( \frac{1}{\G'_{z}(\l(\t),\t)} - \frac{1}{\G'_{z}(\l_0,0)}\right)d\t.
            \end{aligned}
        $$
        It follows from Lemma \ref{p4l11} that $-1/\G'_{z}(\l_0,0) = \kappa_n^{\pm}(0)$.
        In order to get (\ref{p1e8}), we prove that $\eta_1(t) = O(Q_{n}^{\pm}(t)W_{n}^{\pm}(t))$
        as $t \to 0$. In the proof we consider all asymptotics when $t \to 0$. We introduce
        $$
            Z(t) = \max_{\t \in [0,t]} \left| z(\t) \right|,\ \ t \in \R.
        $$
        As in proof of Lemma \ref{p4l9}, one can show that
        $F(t,z) = \partial_z^2 \G(\l_0 \mp z^2,t) \in \mH(I,I) \ss C(I \ts I)$, where $I \ts I$
        is some open neighborhood $(0,0)$. This fact and $\G'_z(\l_0,0) \neq 0$ yield that there
        exists a constant $C > 0$ such that
        $\left| \partial_z \left( \G'_z(\l_0 \mp z^2,t)\right)^{-1} \right| \leq C$ for each
        $(z,t) \in I \ts I$. Then the application of the mean value theorem yields
        \[ \label{p4e50}
            \max_{\t \in [0,t]} \left|  \frac{1}
            {\G'_{z}(\l(\t),\t)} - \frac{1}{\G'_{z}(\l_0,\t)}\right| = O(Z(t)).
        \]
        Using Lemma \ref{p6l1}, one can find $\partial_{t} (\G'_{z}(\l_0,t)^{-1})$.
        Integrating this derivative, we get
        \[ \label{p4e51}
            \max_{\t \in [0,t]} \left|  \frac{1}{\G'_{z}(\l_0,\t)} -
            \frac{1}{\G'_{z}(\l_0,0)}\right| = O(W_n^{\pm}(t)).
        \]
        Thus, combining (\ref{p4e50}) and (\ref{p4e51}), we get
        \[ \label{p4e52}
            \max_{\t \in [0,t]} \left|  \frac{1}{\G'_{z}(\l(\t),\t)} -
            \frac{1}{\G'_{z}(\l_0,0)}\right| = O(Z(t) + W_n^{\pm}(t)).
        \]
        Using Lemma \ref{a1l2}, \ref{p6l1}, it is easy to see that
        $F(t,z) = 1/m_+(\l_0 \mp z^2,t) \in \mH(I,I)$, where $I \ts I$ is some open
        neighborhood of $(0,0)$. Thus in the same way as we proved (\ref{p4e50})
        one can prove that
        \[ \label{p4e54}
            \max_{\t \in [0,t]} \left|  \frac{1}{m_+(\l(\t),\t)} -
            \frac{1}{m_+(\l_0,\t)} \right| = O(Z(t)).
        \]
        Combining (\ref{p4e54}) and (\ref{p3e6}), we obtain
        \[ \label{p4e55}
            \max_{\t \in [0,t]} \left|  \frac{1}{m_+(\l(\t),\t)} \right| =
            O(Z(t) + Q_n^{\pm}(t)),
        \]
        which yields
        \[ \label{p4e66}
                \int_{0}^t \left(\frac{q_1(\t) - \l_0}{m_+(\l(\t),\t)^2} +
                \frac{2q_2(\t)}{ m_+(\l(\t),\t)} \right) d\t = O((Z(t) + Q_n^{\pm}(t)) W_n^{\pm}(t)).
        \]
        Using (\ref{p4e55}) and $\l(\t) - \l_0 = \mp z(\t)^2$, we get
        \[ \label{p4e68}
            \int_{0}^t\left|(\l(\t)-\l_0)(1+m_+(\l(\t),\t)^{-2})\right| d\t =
            O(Z(t)^2(Z(t) + Q_n^{\pm}(t))\left|t\right|).
        \]
        It follows from (\ref{dotm2}) that
        $$
            \begin{aligned}
                \int_{0}^t \frac{g_{\l}(m_+(\l(\t),\t),\t)}{m_+(\l(\t),\t)^2} d\t &=
                -\int_{0}^t \left(\frac{q_1(\t) - \l_0}{m_+(\l(\t),\t)^2} +
                \frac{2q_2(\t)}{ m_+(\l(\t),\t)} - (q_1(\t) + \l_0) \right) d\t \\
                &+ \int_{0}^t (\l(\t)-\l_0) \left( 1+m_+(\l(\t),\t)^{-2} \right) d\t.
            \end{aligned}
        $$
        Substituting (\ref{p4e66}) and (\ref{p4e68}) in this formula, we get
        \[ \label{p4e57}
            \begin{aligned}
                \int_{0}^t \left| \frac{g_{\l}(m_+(\l(\t),\t),\t)}{m_+(\l(\t),\t)^2} \right| d\t &=
                O(Q_n^{\pm}(t)) + O((Z(t) + Q_n^{\pm}(t)) W_n^{\pm}(t))\\
                &+O(Z(t)^2(Z(t) + Q_n^{\pm}(t))\left|t\right|).
            \end{aligned}
        \]
        Estimating $Z(t)$ from (\ref{p4e48}), and using (\ref{p4e52}), (\ref{p4e66}-9), we obtain
        $$
            \begin{aligned}
                Z(t) &= Q_n^{\pm}(t) + O((Z(t) + Q_n^{\pm}(t)) W_n^{\pm}(t)) + O(Z(t)^2(Z(t) +
                Q_n^{\pm}(t))\left|t\right|)\\
                &+(O(Q_n^{\pm}(t)) + O((Z(t) + Q_n^{\pm}(t)) W_n^{\pm}(t)) + O(Z(t)^2(Z(t) +
                Q_n^{\pm}(t))\left|t\right|))O(Z(t) + W_n^{\pm}(t)),
            \end{aligned}
        $$
        which yields $Z(t) = O(Q_n^{\pm}(t))$. Recall that $Q_n^{\pm}(t) \leq W_n^{\pm}(t)$.
        Substituting $Z(t) = O(Q_n^{\pm}(t))$ in (\ref{p4e52}), (\ref{p4e66}-9) and estimating
        $\eta_1(t)$, we get $\eta_1(t) = O(Q_n^{\pm}(t)W_n^{\pm}(t))$.

        Let now $\l_0 = \a_n^{\pm} \neq \m_n(0)$, and let $\l(t) = \l_0 \mp z(t)^2 \in \g_n^c$
        be a state of $H_t$ for $t$ in some neighborhood of $0$ given by Lemma \ref{p4l6}.
        Using the definition of $g_{\l}$ and $g_n^{\pm}$, we get
        \[ \label{p4e69}
            g_{\l}(x,t) = g_n^{\pm}(x,t) - (\l - \l_0)(1 + x^2).
        \]
        We denote $\left. w'_{z}(\l_0 \mp z^2,t_1) \right|_{z=z(t_2)}$ by $w'_z(\l(t_2),t_1)$ and
        $m_+(\l_0,0)$ by $m_+^0$. Integrating (\ref{p4e34}) and using (\ref{p4e69}), we get
        \[ \label{p4e72}
                z(t) = -\frac{1}{w'_{z}(\l_0,0)}\int_{0}^t g_n^{\pm}(m_+^0,\t) d\t + \eta_2(t),
        \]
        where
        $$
            \begin{aligned}
                \eta_2(t) &= \int_{0}^t \frac{g_n^{\pm}(m_+^0,\t)-g_n^{\pm}(m_+(\l(\t),\t),\t)}
                {w'_{z}(\l_0,0)} d\t
                - \int_{0}^t \frac{(\l(\t)-\l_0)(1+m_+(\l(\t),\t)^2)}{w'_{z}(\l_0,0)} d\t \\
                &- \int_{0}^t g_{\l(\t)}(m_+(\l(\t),\t),\t) \left( \frac{1}{w'_{z}(\l(\t),\t)} -
                \frac{1}{w'_{z}(\l_0,0)}\right)d\t.
            \end{aligned}
        $$
        It follows from Lemma \ref{p4l11} that $-1/w'_{z}(\l_0,0) = \varkappa_n^{\pm}$. In order
        to get (\ref{p1e8}), we prove that $\eta_2(t) = O(G_n^{\pm}(t)W_n^{\pm}(t))$ as $t \to 0$.
        As above we consider all asymptotics when $t \to 0$ and introduce $Z(t)$. Similarly, one
        can prove that
        \[ \label{p4e70}
            \max_{\t \in [0,t]} \left|  \frac{1}{w'_{z}(\l(\t),\t)} - \frac{1}{w'_{z}(\l_0,0)}\right|
            = O(Z(t) + W_n^{\pm}(t)),
        \]
        and
        \[ \label{p4e71}
            \max_{\t \in [0,t]} \left|  m_+(\l(\t),\t) - m_+(\l_0,\t) \right| = O(Z(t)).
        \]
        Combining (\ref{p4e71}) and (\ref{p3e3}), we obtain
        \[ \label{p4e74}
            \max_{\t \in [0,t]} \left|  m_+(\l(\t),\t) - m_+^0 \right| = O(Z(t) + G_n^{\pm}(t)).
        \]
        Using the mean value theorem, we get
        \begin{multline} \label{p4e73}
                \int_{0}^t \left| g_n^{\pm}(m_+^0,\t) -
                g_n^{\pm} (m_+(\l(\t),\t),\t) \right| d\t \leq\\
                \max_{\t \in [0,t]} \left|  m_+(\l(\t),\t) - m_+^0 \right|
                \int_{0}^t \max_{x \in [m_+(\l(\t),\t), m_+^0]}
                \left| \partial_x g_n^{\pm}(x,\t) \right| d\t.
        \end{multline}
        Substituting (\ref{p4e74}) and (\ref{p3e2}) in (\ref{p4e73}), we have
        \[ \label{p4e76}
                \int_{0}^t \left| g_n^{\pm}(m_+(\l(\t),\t),\t) - g_n^{\pm}(m_+^0,\t)\right| d\t =
                O((Z(t) + G_n^{\pm}(t)) W_n^{\pm}(t)).
        \]
        Above we have proved that $m_+(\a_n^{\pm} \mp z^2,t_0+\t)$ belongs to $\mH^1(I,I)$ for some
        open $I$ and then it follows from Lemma \ref{p7l3} that $m_+(\l(\t),\t)$ is an
        absolutely continuous function of $\t$, which yields
        \[ \label{p4e75}
            \int_{0}^t\left|(\l(\t)-\l_0)(1+m_+(\l(\t),\t)^2)\right| d\t = O(Z(t)^2 \left| t \right|).
        \]
        Substituting (\ref{p4e76}) and (\ref{p4e75}) in (\ref{p4e69}), we get
        \[ \label{p4e77}
                \int_{0}^t \left| g_{\l(\t)}(m_+(\l(\t),\t),\t) \right| d\t =
                O(G_n^{\pm}(t)) + O((Z(t) + G_n^{\pm}(t)) W_n^{\pm}(t)) + O(Z(t)^2 \left| t \right|).
        \]
        Estimating $Z(t)$ from (\ref{p4e72}), and using (\ref{p4e70}), (\ref{p4e76}-18), we obtain
        $$
            \begin{aligned}
                Z(t) &= G_n^{\pm}(t) + O((Z(t) + G_n^{\pm}(t)) W_n^{\pm}(t)) +
                O(Z(t)^2 \left| t \right|)\\
                &+ (O(G_n^{\pm}(t)) + O((Z(t) + G_n^{\pm}(t)) W_n^{\pm}(t)) +
                O(Z(t)^2 \left| t \right|))O(Z(t) + W_n^{\pm}(t)),
            \end{aligned}
        $$
        which yields $Z(t) = O(G_n^{\pm}(t))$. Since
        $\left| ax^2 + bx + c\right| \leq \max \{1, x^2 \}
        (\left| a \right| + \left| b \right| + \left| c \right|)$ for any $a,b,c,x \in \R$, it
        follows that $G_n^{\pm}(t) = O(W_n^{\pm}(t))$. Substituting these asymptotics in
        (\ref{p4e70}), (\ref{p4e76}-18) and estimating $\eta_2(t)$, we get
        $\eta_2(t) = O(G_n^{\pm}(t)W_n^{\pm}(t))$.
    \end{proof}
    Now we get the asymptotics of
    $\l_n^{\pm}(t)$ or $z_n^{\pm}(t)$ as $t \to t_0$ for any $t_0 \in \R$.
    Recall that we have
    defined $g_{\l}$, $G_{\l}$, $Q_{\l}$, and $W_{\l}$ in (\ref{p6e2}). We introduce also
    $U_{\l}(t_0,t) = \left|t-t_0\right| + W_{\l}(t_0,t)$ for any $\l \in \L$, $t,t_0 \in \R$.
    \begin{theorem} \label{p6t1}
        Let $V \in \cP$ and let a gap $\g_n$ be open for some $n \in \Z$. Let $\l_0 \in \g_n^c$ be a
        state of $H_{t_0}$ for some $t_0 \in \R$.

        i) If $\l_0 \neq \a_n^{\pm}$, then
        there exists $\l \in \cH^1([t_0-\ve,t_0+\ve])$ for some $\ve > 0$ such that $\l(t_0) = \l_0$,
        $\l(t)$ is a state of $H_t$ in $\g_n^c$ for any $t \in [t_0-\ve,t_0+\ve]$, and we get
        for $t \to t_0$
        $$
            \begin{aligned}
                \l(t) &= \l_0 + \Omega(\l_0,t_0) S_1(\l_0,t_0) \int_{t_0}^t (q_1(\t) + \l_0) d\t +
                O(Q_{\l_0}(t_0,t)U_{\l_0}(t_0,t)), &&\text{if $\l_0 = \m_n(t_0)$},\\
                \l(t) &= \l_0 - \frac{1}{w'_{\l}(\l_0,t_0)} \int_{t_0}^t g_{\l_0}(m_+(\l_0,t_0),\t) d\t +
                O(G_{\l_0}(t_0,t)U_{\l_0}(t_0,t)), &&\text{if $\l_0 \neq \m_n(t_0)$}.
            \end{aligned}
        $$

        ii) If $\l_0 = \a_n^{\pm}$, then there exist
        $z \in \cH^1([-\ve,\ve])$  for some $\ve > 0$ such that $z(0) = 0$ and:
        \begin{enumerate}[1)]
            \item $\l(t) = \a_n^{\pm} \mp (z(t))^2$ are states of $H_t$ in $\g_n^c$
            for any $t \in [-\ve,\ve]$;
            \item $\l(t) \in \L_j$ if and only if $(-1)^j z(t) < 0$ for any
            $t \in [-\ve,\ve]$ and $j = 1,2$.
        \end{enumerate}

        Moreover, we get for $t \to t_0$
        $$
            \begin{aligned}
                z(t) &= 2\kappa_n^{\pm}(t) S_1(\l_0,t_0) \int_{t_0}^t (q_1(\t) + \l_0) d\t +
                O(Q_{\l_0}(t_0,t)W_{\l_0}(t_0,t)), &&\text{if $\l_0 = \m_n(t_0)$},\\
                z(t) &= 2\varkappa_n^{\pm} S_0(\l_0,t_0) \int_{t_0}^t g_{\l_0}(m_+(\l_0,t_0),\t) d\t +
                O(G_{\l_0}(t_0,t)W_{\l_0}(t_0,t)), &&\text{if $\l_0 \neq \m_n(t_0)$}.
            \end{aligned}
        $$
    \end{theorem}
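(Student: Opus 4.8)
The plan is to follow the scheme used for Theorem~\ref{p0t5}, now anchored at an arbitrary base point $t_0$ in place of $0$, and to split the argument according to the two cases of the statement. In case i), where $\l_0\neq\a_n^{\pm}$, existence of a unique $\l\in\cH^1([t_0-\ve,t_0+\ve])$ with $\l(t_0)=\l_0$ and $\l(t)$ a state of $H_t$, together with the governing equation~(\ref{p4e34}), is supplied directly by Lemma~\ref{p4l10}. In case ii), where $\l_0=\a_n^{\pm}$, I would instead invoke Lemma~\ref{p4l6}, which yields the local coordinate $z\in\cH^1([-\ve,\ve])$ with $\l(t)=\a_n^{\pm}\mp z(t)^2$, the sheet identification $\l(t)\in\L_j\Leftrightarrow(-1)^jz(t)<0$, and the equation~(\ref{p4e64}). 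Each case splits once more according to whether $\l_0=\m_n(t_0)$, a pole of $m_+$ where the defining equation is $\G(\l,t)=0$, or $\l_0\neq\m_n(t_0)$, where it is $w(\l,t)=0$.

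Next I would read off the leading coefficients from Lemma~\ref{p4l11}. When $\l_0\neq\m_n(t_0)$, the right-hand side of~(\ref{p4e34}) equals $-g_{\l}(m_+,t)/w'_{\l}$, so integrating the equation and factoring out $-1/w'_{\l}(\l_0,t_0)$ produces the coefficient in front of $\int_{t_0}^t g_{\l_0}(m_+(\l_0,t_0),\t)\,d\t$; in the endpoint case the same computation in the $z$-coordinate, with $-1/w'_z(\l_0,t_0)=2\varkappa_n^{\pm}S_0(\l_0,t_0)$, gives the stated factor. When $\l_0=\m_n(t_0)$, the terms of~(\ref{p4e34}) carrying $1/m_+$ are subdominant since $m_+$ has a pole there, and the surviving term $-(q_1+\l)/\G'_{\l}$ together with the identity $\G'_{\l}=-(\Omega S_1)^{-1}$ from Lemma~\ref{p4l11} produces the coefficient $\Omega(\l_0,t_0)S_1(\l_0,t_0)$; the endpoint version with $-1/\G'_z(\l_0,t_0)=2\kappa_n^{\pm}(t)S_1(\l_0,t_0)$ gives the corresponding $z$-asymptotic.

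The remainder estimates are the technical core, and I would mirror the bootstrap from the proof of Theorem~\ref{p0t5}. Setting $Z(t)=\max_{\t\in[t_0,t]}|\l(\t)-\l_0|$ (respectively $\max|z|$ in case ii)), I would split the integrand by the elementary identity $g_{\l}(x,t)-g_{\l_0}(x,t)=(\l-\l_0)(1+x^2)$, estimate the shift $g_{\l_0}(m_+(\l(\t),\t),\t)-g_{\l_0}(m_+(\l_0,t_0),\t)$ through the mean value theorem combined with the $m_+$-asymptotic~(\ref{p3e3}) (respectively the $u=1/m_+$ asymptotic~(\ref{p3e6}) in the pole case), and control the variation of $1/w'_{\l}$ or $1/\G'_{\l}$ by $O(Z+|t-t_0|+W_{\l_0})$, using that these are $\mH$-functions near $(t_0,\l_0)$ by Lemma~\ref{p4l9} (and hence continuous with bounded $z$-derivative). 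Feeding these bounds into the integrated equation and solving the resulting inequality for $Z$ yields $Z(t)=O(G_{\l_0}(t_0,t))$ in the $w$-branch and $Z(t)=O(Q_{\l_0}(t_0,t))$ in the $\G$-branch, after which the error terms collapse, using $G_{\l_0},Q_{\l_0}=O(W_{\l_0})$, to the announced magnitudes.

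The one genuinely new feature, and the step I expect to be the main obstacle, is accounting for why the interior case i) carries the coarser scale $U_{\l_0}=|t-t_0|+W_{\l_0}$ whereas the endpoint case ii) retains the finer $W_{\l_0}$. The difference is traced to the correction term $(\l-\l_0)(1+m_+^2)$: at an interior point the coordinate $\l-\l_0$ is first order in the driving integral, so $\int_{t_0}^t|(\l-\l_0)(1+m_+^2)|\,d\t=O(Z\,|t-t_0|)=O(G_{\l_0}|t-t_0|)$ injects the extra factor $|t-t_0|$, while at an endpoint $\l-\l_0=\mp z^2$ is quadratic and this contribution is of lower order, leaving only $W_{\l_0}$. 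Keeping the two scales consistent through the bootstrap, and checking that the $|t-t_0|$ and $W_{\l_0}$ pieces combine exactly into $U_{\l_0}$ in the final estimate, is the part requiring the most care.
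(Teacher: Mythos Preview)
Your proposal is correct and follows exactly the approach the paper takes: its entire proof of Theorem~\ref{p6t1} reads ``Acting as in the proof of Theorem~\ref{p0t5}, we prove this theorem,'' and you have spelled out precisely that scheme, including the correct identification of the leading coefficients via Lemma~\ref{p4l11} and the bootstrap for the remainders. Your explanation of why the interior case acquires the coarser scale $U_{\l_0}=|t-t_0|+W_{\l_0}$ (because $\l-\l_0$ is first order there, versus $\mp z^2$ at an endpoint) is exactly the point, and in fact supplies more detail than the paper itself gives.
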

    \begin{proof}
        Acting as in the proof of Theorem \ref{p0t5}, we prove this theorem.
    \end{proof}

    \begin{proof}[\bf{Proof of Theorem \ref{p0t6}}]
        Let $V \in \cP$ and let a gap $\g_n$ be open for some $n \in \Z$. Suppose that
        $\left\| V \right\|_{\iy} < 2\left|\a_n^{\pm}\right|$, i.e.
        $q_1(t)^2 + q_2(t)^2 < (\a_n^j)^2$ for almost all $t \in \R$ and for any $j = \pm$.
        It follows that $\sign (q_1(t) + \a_n^{\pm}) = \sign \a_n^{\pm}$ for almost all
        $t \in \R$. Using the definition of $g_{\l}$, we get for any $\l,x,t \in \R$
        $$
            g_{\l}(x,t) = (q_1(t) + \l)\left( \left( x - \frac{q_2(t)}{q_1(t)+\l}\right)^2 +
            \frac{\l^2-q_1(t)^2-q_2(t)^2}{(q_1(t) + \l)^2}\right),
        $$
        which yields that $\sign g_{\l}(x,t) = \sign \a_n^{\pm}$ for any $x \in \R$, $\l \in \g_n^c$
        and for almost all $t \in \R$. Thus for any $t,t_0 \in \R$ and $\l_0 \in \g_n^c$ we get
        \[ \label{p6e3}
            \left| \int_{t_0}^t g_{\l_0}(m_+(\l_0,t_0),\t) d\t \right| = G_{\l_0}(t_0,t),\quad
            \left| \int_{t_0}^t (q_1(\t) + \l_0) d\t \right| = Q_{\l_0}(t_0,t).
        \]
        Substituting (\ref{p6e3}) in asymptotics from Theorem \ref{p6t1}, it is easy to see that
        each state $\l(t) \in \g_n^c$ of $H_t$ is strictly monotone function of $t$ in neighborhood
        of any $t_0 \in \R$ and $\l(t)$ changes the sheets when $\l(t_0) = \a_n^{\pm}$.
        Using Lemma \ref{p4l11} and asymptotics from Theorem \ref{p6t1}, we get that $\l(t)$ runs
        clockwise around $\g_n^c$ if $\a_n^{\pm} > 0$ and runs counterclockwise if $\a_n^{\pm} < 0$.
        Since $\sign (q_1(t) + \a_n^{\pm}) = \sign \a_n^{\pm}$ for almost all $t \in \R$,
        it follows from Theorem \ref{p0t2} that $\l(t)$ makes $\left| n \right|/2$ complete
        revolutions around $\g_n^c$ when $t$ runs through $[0,1]$.
    \end{proof}

    \begin{proof}[\bf{Proof of Theorem \ref{p0t4}}]
        Since $V \in \cP_{e}$, it follows that $\m_n(0) = \a_n^{j}$ and $\n_n(0) = \a_n^{-j}$
        for some $n \in \Z$, and $j = +$ or $-$. It is easy to see that in this case
        $m_+(\a_n^{-j},0) = 0$. Using this identity, and $q_1 = 0$, we get
        $q_1(t) + \a_n^{j} = \a_n^{j}$ and
        $g_n^{-j}(m_+(\a_n^{-j},0),t) = \a_n^{-j}$ for any $t \in \R$, $n \in \Z$, $j = \pm$.
        Using this identities and asymptotics (\ref{p1e7}-4), we get for $t \to 0$
        \[ \label{p6e6}
            \begin{aligned}
                \sign z_n^{j}(t) &= \sign (\kappa_n^{j}(0) \a_n^{j} t) = -j\sign(\a_n^{j} t),\\
                \sign z_n^{-j}(t) &= \sign (\varkappa_n^{-j} \a_n^{-j} t) = j\sign(\a_n^{-j} t).
            \end{aligned}
        \]
        Since $q_1 = 0$, it follows that $\sign \a_n^{-} = \sign \a_n^{+}$ if $n \neq 0$.
        Using (\ref{p6e6}), we get that there exists $\ve > 0$ such that
        $\sign z_n^{-}(t) = - \sign z_n^{+}(t)$ for any $1 \leq \left| n \right| \leq N$, and
        $t \in (0,\ve)$, i.e. there exist one eigenvalue and one resonance if $\g_n \neq \es$.

        Now we prove that in this case $\g_0 = \es$. It is easy to see that if $q_1 = 0$,
        and $\l = 0$, then the solutions of (\ref{p2e14}) for $\l = 0$ have the form
        $$
            \vt(x,0,t) = \ma e^{\int_0^x q_2(t+\t)d\t} \\ 0 \am,\qq
            \vp(x,0,t) = \ma 0 \\ e^{-\int_0^x q_2(t+\t)d\t} \am,\qq
            x,t \in \R.
        $$
        Since $V \in \cP_{e}$, it follows that $\int_0^1 q_2(\t) d\t = 0$. Thus,
        $\vt(\cdot,0,t)$ and $\vp(\cdot,0,t)$ are 1-periodic, which yields
        $\a_0^{+} = \a_n^- = 0$, i.e. $\g_0 = \es$.
    \end{proof}

    \begin{proof}[\bf{Proof of Corollary \ref{p0t8}}]
        Using the gap length mapping for the Hill operator from \cite{Kor99}, we
        construct a potential $p \in C^1(\T)$ such that $p(x) = p(1-x)$, $x \in \T$, and the
        operator $hy = -y'' + py$ acting on $L^2(\R)$ has $N$ first open gaps in the spectrum.
        Using the Riccati mapping (see e.g. \cite{Miura}), we construct
        $q \in C^2(\T)$ such that $p = q' + q^2$, and $q(x) = - q(1-x)$, $x \in \T$.
        If follows from the remark after Corollary \ref{p0t8}, that if $H$ is the
        Dirac operator with the potentials $q_1 = 0$ and $q_2 = q$, then $\s(H^2) = \s(h)$.
        Thus, for the operator $H$ we have $\g_n \neq \es$, $1 \leq |n| \leq N$.
        By Theorem \ref{p0t4}, there exists $\ve > 0$ such that each $H_t$, $t \in (0,\ve)$,
        has exactly one eigenvalue in $\g_n$, $1 \leq |n| \leq N$, and $\g_0 = \es$.
    \end{proof}

    \begin{proof}[\bf{Proof of Theorem \ref{p0t7}}]
        Let $V \in \cP_{e}$ such that
        $q_1(x) = c \chi_{(0,\d) \cup (1-\d,1)}(x)$, $x \in [0,1)$, for some $c, \d > 0$,
        where $\chi_I(x)$ is a characteristic function of the set $I$. Since
        $V \in \cP_{e}$, it follows that $\m_n(0) = \a_n^{j}$, $\n_n(0) = \a_n^{-j}$ for any
        $n \in \Z$ and for some $j = \pm$.
        As in proof of Theorem \ref{p0t4} we get $q_1(t) + \a_n^{j} = c+\a_n^{j}$ and
        $g_n^{-j}(m_+(\a_n^{-j},0),t) = -c+\a_n^{-j}$ for any $n \in \Z$, $j = \pm$ and for almost
        all $t \in \R$. Using this identities and asymptotics (\ref{p1e7}-4), we get for $t \to 0$
        $$
            \begin{aligned}
                \sign z_n^{j}(t) &= \sign (\kappa_n^{j}(0) (c+\a_n^{j}) t) =
                -j\sign t \sign (c+\a_n^{j}),\\
                \sign z_n^{-j}(t) &= \sign (-\varkappa_n^{-j} (c-\a_n^{-j}) t) =
                -j\sign t \sign(c-\a_n^{-j}).
            \end{aligned}
        $$
        Thus if $\left| \a_n^{\pm} \right| < c$ for some $n \in \Z$, then we get
        $\sign z_n^{j}(t) = \sign z_n^{-j}(t) = -j$ for sufficiently small $t > 0$, i.e.
        if the gap $\g_n \neq \es$, then we have two eigenvalues if $\m_n(0) < \n_n(0)$ ($j=-$)
        and two resonances if $\m_n(0) < \n_n(0)$ ($j=+$).

        Now we show that for any $N > 0$ there exists $c,\d > 0$ such that
        there exist at least $2N+1$ gaps on the interval $(-c,c)$. We introduce the Hilbert
        space $\el2$ of a sequence $y = (y_n)_{n \in \Z}$ equipped with the norm
        $\| y \|_{\el2}^2 = \sum_{n \in \Z} \left| y_n \right|^2 < \iy$. Now we introduce the gap
        length mapping $\cG: \cP \to \el2 \os \el2$ by $V \mapsto \cG(V)$, where
        $\cG_n = (\cG_{n1}, \cG_{n2}) \in \R^2$, with the length $|\cG_n|^2 =
        \cG_{n1}^2 + \cG_{n2}^2 = |\g_n|/2$ and the components are given by
        $$
            \cG_{n1} = \frac{\a_n^- + \a_n^+}{2}-\m_n,\qq \cG_{n2} = -(-1)^j\left|
            |\cG_n|^2 - \cG_{n1}^2 \right|^{1/2},
            \qq \m_n \in \L_j,\ \ j=1,2.
        $$
        Now we need the following theorem.
        \begin{theorem}[Theorem 1.1 from \cite{Kor05b}] \label{p6t2}
            The mapping $\cG: \cP \to \el2 \os \el2$ is a real analytic isomorphism between $H$ and
            $\el2 \os \el2$ and the following estimates are fulfilled:
            \[ \label{p6e4}
                \frac{1}{\sqrt{2}}\| \cG(V) \|_{\el2} \leq \| V \|_{\cP} \leq
                2\| \cG(V) \|_{\el2} \left( 1 + \| \cG(V) \|_{\el2} \right).
            \]
        \end{theorem}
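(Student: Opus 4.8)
The plan is to prove that $\cG$ is a real-analytic isomorphism by assembling three ingredients: real-analyticity together with sharp $n$-uniform asymptotics, the two-sided a priori bound (\ref{p6e4}), and a global inverse function theorem for real-analytic maps between Hilbert spaces. First I would check that each spectral datum $\a_n^{\pm}(V)$ and $\m_n(V)$ depends real-analytically on $V \in \cP$: this follows from the joint analyticity of the monodromy matrix $\p(1,\cdot)$ in $\l$ and in $V$ (Theorem \ref{p2t1}), from the simplicity of the Dirichlet eigenvalues $\partial_{\l}\vp_1(1,\m_n) \neq 0$, and from the analyticity of the gap ends via the Lyapunov function $\D$. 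Using the known high-energy asymptotics of $\a_n^{\pm}$ and $\m_n$ one verifies that the components $\cG_{n1},\cG_{n2}$ are square-summable, so that $\cG$ maps $\cP$ into $\el2 \os \el2$ and is real-analytic there.

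The heart of the matter is the estimate (\ref{p6e4}). I would obtain it from a trace-type identity expressing $\|V\|_{\cP}^2 = \frac{1}{2}\int_{\T}\Tr V^2$ through the gap data. At leading order such an identity forces $\sum_n |\cG_n|^2 \le 2\|V\|_{\cP}^2$, giving the lower bound $\frac{1}{\sqrt{2}}\|\cG(V)\|_{\el2} \le \|V\|_{\cP}$; the upper bound $\|V\|_{\cP} \le 2\|\cG(V)\|_{\el2}(1 + \|\cG(V)\|_{\el2})$ comes from bounding each Fourier coefficient of $q_1,q_2$ by the gap lengths $|\cG_n|$ plus a quadratic correction reflecting the nonlinear dependence of $V$ on the spectral data. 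The point of these bounds is that they make $\cG$ a proper map: the preimage of any bounded set of $\el2 \os \el2$ is bounded in $\cP$.

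Next I would show that $\cG$ is a local real-analytic diffeomorphism by proving that its Fr\'echet derivative $d\cG(V):\cP \to \el2 \os \el2$ is a linear isomorphism at every $V$. Writing $d\cG$ through the gradients $\partial_V \a_n^{\pm}$ and $\partial_V \m_n$, each of which is a quadratic expression in the fundamental solutions $\vt(\cdot,\l)$ and $\vp(\cdot,\l)$, one recognizes $d\cG(V)$ as a compact perturbation of the isomorphism realized at $V=0$, where the spectral data degenerate to Fourier coefficients. Fredholm theory then reduces invertibility to injectivity, and injectivity amounts to the linear independence and completeness of these gradient vectors in $\cP$. I expect this to be the main obstacle, since it requires $n$-uniform bounds on the solutions and the nonvanishing, throughout $\cP$, of the Wronskian-type combinations that control the gradients.

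Finally I would combine the parts. Since $\cG$ is real-analytic, a local diffeomorphism everywhere, and proper, while the target $\el2 \os \el2$ is connected and simply connected, the global inverse function theorem (a proper local homeomorphism onto a simply connected space is a homeomorphism) yields that $\cG$ is a bijection; real-analyticity of $\cG^{-1}$ then follows from the analytic inverse function theorem applied locally. This proves that $\cG$ is a real-analytic isomorphism of $\cP$ onto $\el2 \os \el2$, with the estimates (\ref{p6e4}) recorded along the way.
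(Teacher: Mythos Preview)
This theorem is not proved in the paper at all: it is quoted verbatim as ``Theorem 1.1 from \cite{Kor05b}'' inside the proof of Theorem \ref{p0t7} and used there only as a black box (to control gap lengths via \eqref{p6e4} and to manufacture potentials with prescribed gap data). There is therefore no proof in this paper to compare your proposal against.

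For what it is worth, your outline --- real-analyticity of the spectral data from the analyticity of the monodromy matrix, a trace-type identity for the two-sided estimate, Fredholm analysis of $d\cG$ as a compact perturbation of the Fourier isomorphism at $V=0$, and a global inverse function theorem using properness plus simple connectivity of the target --- is exactly the strategy carried out in \cite{Kor05b} (and in the analogous Hill-operator paper \cite{Kor99}). The step you correctly flag as the main obstacle, namely injectivity of $d\cG(V)$ via completeness of the gradient system, is indeed the technical core there and requires substantial work with the fundamental solutions; it is not something one can fill in from the present paper alone. If your task was to reproduce the paper's treatment of this statement, the correct answer is simply to cite \cite{Kor05b}.
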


        Using (\ref{p6e4}), we get for any $i_o \in \Z$
        \[ \label{p6e13}
            \sum_{i = i_o}^{i_o + N} \left| \g_i \right| \leq \sqrt{N} \lt|
            \sum_{i = i_o}^{i_o + N} \left| \g_i \right|^2 \rt|^{1/2} \leq
            2\sqrt{N}\| \cG(V) \|_{\el2} \leq
            2\sqrt{2N(2\d c^2 + \left\| q_2 \right\|^2)}.
        \]
        It follows from Theorem 3.2 in \cite{KK} that $\left| \s_n \right| \leq \pi$ for any $n
        \in \Z$. Using this estimate and (\ref{p6e13}), we get for any $i_o \in \Z$
        $$
            \sum_{i = i_o}^{i_o + N} \left( \left| \g_i \right| + \left| \s_i \right| \right) \leq
            A = 2\sqrt{2N(2\d c^2 + \left\| q_2 \right\|^2)} + \pi N.
        $$
        Now we consider inequality $A < c$, which yields
        $$
            c > \pi N,\quad 2\d c^2 + \left\| q_2 \right\|^2 <
            \frac{(c-\pi N)^2}{8N}.
        $$
        It is easy to see that for sufficiently large $c$ and sufficiently small $\d > 0$ these
        inequalities are holds true. Thus, we have $(-A, A) \ss (-c,c)$, which yields that
        $\g_n \ss (-c,c)$ for any $n \in \Z$ such that $|n+i_o| \leq N$, where $i_o$ is a number
        of the gap closest to zero.
    \end{proof}

    \begin{proof}[\bf{Proof of Theorem \ref{p0t9}}]
        Let $N > 0$ and let $(y_n)_{n \in \Z} \in \el2$ be a sequence such that
        \[ \label{p6e12}
            y_n > 0,\qq n \in \Z;\qq \sum_{n \in \Z} y_n^2 n^2 < \iy;\qq
            2\pi N + 2 < \sum_{n \in \Z} y_n < \iy.
        \]
        Using Theorem \ref{p6t2}, we construct a potential
        $\widetilde{V} \in \cP_e$ such that for $\widetilde{H} = H^0 + \widetilde{V}$ we have
        \[ \label{p6e14}
            |\g_n(\widetilde{H})| = y_n,\qq \m_n(\widetilde{H}) = \a_n^-(\widetilde{H}),\qq
            \n_n(\widetilde{H}) = \a_n^+(\widetilde{H}),\qq n \in \Z.
        \]
        Moreover, $\widetilde{V} \in \cH^1(\T,M_2(\R))$ since $|\g_n|$ decrease fast enough as
        $n \to \iy$ (see e.g. Theorem~11 in \cite{DjMi}). Thus, we can use the trace formula
        to calculate $\tilde{q}_1(0)$ (see e.g. Theorem 3.3 in \cite{GreGui}).
        Substituting (\ref{p6e14}) in the trace formula, and using (\ref{p6e12}), we obtain
        \[ \label{p6e9}
            \tilde{q}_1(0) = \frac{1}{2}\sum_{n \in \Z} (\a_n^+(\widetilde{H}) +
            \a_n^-(\widetilde{H}) - 2\m_n(\widetilde{H})) =
            \frac{1}{2}\sum_{n \in \Z} y_n > \pi N+1.
        \]
        It is well known that $\a_n^{\pm} = \pi n + o(1)$ as $n \to \iy$ (see e.g. Lemma 3.2
        in \cite{Kor05b}). So that there exists $M > 0$ such that
        \[ \label{p6e15}
            \g_n(\widetilde{H}) \ss (\pi n -1,\pi n+1),\qq |n| \geq M.
        \]
        We introduce a unitary operator $U: L^2(\R,\C^2) \to L^2(\R,\C^2)$ given by
        $(Uf)(x) = e^{\pi x J}f(x)$, $x \in \R$. We need the following properties of $U$
        (see Theorem 1.1 in \cite{Kor01}) :

            \no  i) $U \cP_e \ss \cP_e,\qq U(0) = U^*(0) = I_2$;\\
            \no ii)  Let $H^1 = H^0 + V$, $H^2 = H^0 + U V U^*$ for some $V \in \cP$.

            Then for each
            $n \in \Z$:
            \[ \label{p6e17}
                \m_n(H^2) = \m_{n-1}(H^1) + \pi,\qq \n_n(H^2) = \n_{n-1}(H^1) + \pi,
                \qq \g_n(H^2) = \g_{n-1}(H^1) + \pi.
            \]

        Let $V = U^{M+N+1} \widetilde{V} (U^*)^{M+N+1} \in \cP_e$ and $H = H^0 + V$.
        Then it follows from (\ref{p6e15}-27) that
        \[ \label{p6e16}
            \g_n(H) = \g_{n-N-M-1}(\widetilde{H}) + \pi(N+M+1) \ss (\pi n -1, \pi n + 1),\qq
            |n| \leq N.
        \]
        Due to (\ref{p6e9}), and $V(0) = \widetilde{V}(0)$, we have $q_1(0) > \pi N+1$, which
        yields from (\ref{p6e16}) that $\g_n(H) \ss (-q_1(0),q_1(0))$, $|n| \leq N$. It follows
        from (\ref{p6e14}), and (\ref{p6e17}) that $\m_n(H) < \n_n(H)$. Now using
        arguments from the proof of Theorem \ref{p0t7}, we get $\ve > 0$ such that each $H_t$,
        $t \in (0,\ve)$, has exactly two eigenvalues in $\g_n(H)$, $|n| \leq N$.
    \end{proof}

    \begin{proof}[\bf{Proof of Theorem \ref{p0t10}}]
        Let $q_1 = m > 0$. Then the solution
        $\psi(x,\l,t)$ of equation (\ref{p2e14}) satisfies:
        $$
            \psi(x,-\l,t) = M(\l) \psi(x,\l,t) M^{-1}(\l),\qq M(\l) = \ma m-\l & 0 \\ 0 & m+\l \am
        $$
        for each $(x,\l,t) \in \R \ts \C \ts \R$. Thus we have for each $(\l,t) \in \C \ts \R$
        $$
            \D(-\l) = \D(\l),\qq \vp_1(1,-\l,t) = \frac{m-\l}{m+\l} \vp_1(1,\l,t),
            \qq \vt_2(1,-\l,t) = \frac{m+\l}{m-\l} \vt_2(1,\l,t),
        $$
        which yields $\a_n^{\pm} = -\a_{-n}^{\mp}$, $n \in \Z$, and if $\m_n(t) \neq \pm m$
        ($\n_n(t) \neq \pm m$) for some $n \in \Z$ and $t \in \R$, then $\m_n(t) = -\m_{-n}(t)$
        ($\n_n(t) = -\n_{-n}(t)$). If $q_1 = m$ and $\l = m$, then equation (\ref{p2e14}) have
        the form
        $$
            \begin{cases}
                y_2'(x,\l,t) + q_2(x+t) y_2(x,\l,t) = 0 \\
                -y_1'(x,\l,t) + q_2(x+t) y_1(x,\l,t) = 2 m y_2(x,\l,t).
            \end{cases}
        $$
        If $\l=-m$, then the equation has the similar form. Thus, we can explicitly integrate the
        equation, and then the solution $\psi(x,\l,t)$ of equation (\ref{p2e14}) for $\l = \pm m$
        has the following form
        \[ \label{p6e18}
            \psi(\cdot,m,t) = \ma  F^+ & G^+ \\
            0 & F^+ \am,\qq
            \psi(\cdot,-m,t) = \ma F^- & 0 \\
            G^- & F^- \am,
        \]
        where $F^{\pm}(x) = e^{\mp \int_0^x q_2(t+\t) d \t}$ and
        $G^{\pm}(x) = -2 m F^{\mp}(x) \int_0^x F^{\pm2}(\t)d\t$, $x \in \R$.
        Due to $V \in \cP_e$, we get
        $\int_0^1 q_2(t+\t) d\t = 0$, which yields that $F(\cdot)$ is 1-periodic. Thus, it follows
        from (\ref{p6e18}) that $\pm m = \a_{\pm 2n}^{\pm}$ or $\pm m = \a_{\pm 2n}^{\mp}$
        for some $n \geq 0$.
        Moreover, by (\ref{p6e18}), $\vp_1(1,-m,t) = 0$ and $\vt_2(1,m,t) = 0$, so that
        $-m = \m_{-n}(t)$ and $m = \n_n(t)$ for any $t \in \R$. At last, using (\ref{p6e18}), we see
        that the winding number of the curve $[0,1] \ni x \mapsto \vp(x,-m,\t) \in \R^2 \sm \{ 0 \}$
        equals zero, which yields that $n = 0$ (see e.g. Proposition 7.3 in \cite{GreKap09}).
        Due to $-\a_0^- = \a_0^+$ it follows that $j = +$. Thus, we obtain $-m = \m_0(t)$,
        $m = \n_0(t)$ for any $t \in \R$, and $\pm m = \a_{0}^{\pm}$.
    \end{proof}

\section{Appendix} \label{p5}
    We need the following simple lemmas. Recall that $\mH(I_1,I_2)$ has been defined in Section \ref{p2}.
    \begin{lemma} \label{a1l1}
        Let $f$ be real-analytic on $I = [-1,1]$ and a function $G$ be defined on $I \ts I$ by
        $G(x,y) = f(y)$, $(x,y) \in I \ts I$. Then we get $G, G'_y \in \mH(I, I)$.
    \end{lemma}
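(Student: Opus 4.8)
The plan is to verify the three defining conditions of $\mH(I,I)$ directly, first for $G$ and then for $G'_y$, exploiting the single structural fact that $G(x,y)=f(y)$ does not depend on $x$. First I would observe that, since $G$ is independent of $x$, its partial derivative in the first variable vanishes identically, $G'_x \equiv 0$. This immediately yields condition iii) of the definition of $\mH(I,I)$ with the dominating function $g \equiv 0 \in L^2(I)$, and it yields condition ii) as well: for each fixed $y$ the map $x \mapsto G(x,y) = f(y)$ is constant, hence lies in $\cH^1(I)$ (a constant is bounded and has zero derivative, so both it and its derivative are in $L^2$ of the bounded interval $I$). Condition i) then reduces to $G(x,\cdot) = f \in C^1(I)$, which holds because $f$ is real-analytic on $I$ and therefore $C^\infty$, in particular $C^1$. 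Thus $G \in \mH(I,I)$.

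For the second assertion I would note that $G'_y(x,y) = f'(y)$. Since $f$ is real-analytic on $I$, so is its derivative $f'$; hence the function $(x,y) \mapsto f'(y)$ is again independent of $x$ and has real-analytic (in particular $C^1$) dependence on $y$. Repeating the argument of the previous paragraph verbatim with $f'$ in place of $f$ — namely $\partial_x G'_y \equiv 0$ giving conditions ii) and iii), and real-analyticity of $f'$ giving $f' \in C^1(I)$ for condition i) — shows that $G'_y \in \mH(I,I)$ as well.

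There is no substantive obstacle here; the entire content lies in unwinding the definition of $\mH$, and the one point worth flagging is the independence of $G$ from $x$, which trivializes conditions ii) and iii). The only mild issue is notational: $\mH$ is defined for open bounded intervals while $f$ is assumed real-analytic on the closed interval $[-1,1]$. This is harmless, since analyticity on the closed interval supplies boundedness together with $C^1$-regularity on the open interior, which is all that the verification actually uses; one simply works on the open interval underlying $I$.
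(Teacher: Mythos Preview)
Your proof is correct. The paper states this lemma without proof, calling it one of the ``simple lemmas'' in the Appendix, so there is no argument to compare against; your direct verification of the three defining conditions of $\mH(I,I)$, exploiting that $G$ is independent of $x$ so that $G'_x\equiv 0$, is exactly the intended routine check.
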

    \begin{lemma} \label{a1l2}
        Let $G,G'_y,H,H'_y \in \mH(I,I)$, where $I=[-1,1]$, and let $c \in \R$. Then
        $G+H$, $GH$, $c G$, and their partial derivatives with respect to $y$ belongs to $\mH(I,I)$.
        Moreover, if $H(0,0) \neq 0$, then there exists $J \ts J$ an open bounded neighborhood of
        $(0,0)$ such that $G/H, \partial_y \left( G/H \right) \in \mH(J,J)$.
    \end{lemma}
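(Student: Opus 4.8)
The plan is to treat the four operations in order of increasing difficulty, reducing each to the preceding ones, and to isolate at the outset the one genuinely analytic ingredient: a uniform boundedness fact. Recall that $\mH(I,I) \ss C(I \ts I)$. I claim that any $F \in \mH(I_1,I_2)$ is bounded on $I_1 \ts K$ for each compact $K \ss I_2$. Indeed, fixing $x_0 \in I_1$ and using that $F(\cdot,y)$ is the (absolutely) continuous representative of its $\cH^1$-class, one has $F(x,y) = F(x_0,y) + \int_{x_0}^x \partial_x F(s,y)\, ds$, so condition iii) gives $|F(x,y)| \le |F(x_0,y)| + \|g\|_{L^1(I_1)} \le |F(x_0,y)| + |I_1|^{1/2}\|g\|_{L^2(I_1)}$, while $F(x_0,\cdot) \in C^1(I_2)$ is bounded on $K$; hence $F$ is bounded on $I_1 \ts K$ uniformly in $y$. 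This converts the merely pointwise control of iii) into the $L^{\iy}$-control needed below, and it is the only nontrivial step; everything else is bookkeeping with the product and quotient rules.

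For the sum $G+H$ and the scalar multiple $cG$, conditions i), ii), iii) are preserved by linearity: the sum of two $C^1$ (resp. $\cH^1$) functions is again such, and $|\partial_x(G+H)| \le g_G + g_H \in L^2(I_1)$. Since $\partial_y(G+H) = G'_y + H'_y$ is again a sum of elements of $\mH$, the assertion for the $y$-derivatives follows from the very same computation.

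The product $GH$ is the central step. Condition i) holds because $G(x,\cdot)H(x,\cdot) \in C^1(I_2)$; condition ii) holds because on the bounded interval $I_1$ one has $\cH^1(I_1) \ss L^{\iy}(I_1)$, so that $(GH)'_x = G'_x H + G H'_x$ is a sum of products of an $L^2$ function with a bounded one, hence $L^2$; and condition iii) follows from $|\partial_x(GH)| \le g_G |H| + |G| g_H$, which lies in $L^2(I_1)$ thanks to the boundedness fact of the first paragraph. To control $\partial_y(GH) = G'_y H + G H'_y$, I would apply the product closure just established to the pairs $(G'_y,H)$ and $(G,H'_y)$ — legitimate since $G'_y, H'_y \in \mH$ by hypothesis — and then conclude by the additivity of the second paragraph.

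Finally, for $G/H$ with $H(0,0) \neq 0$: by continuity of $H$ (from $\mH \ss C$) there is an open neighborhood $J \ts J$ of $(0,0)$ on whose closure $|H| \ge \d$ for some $\d > 0$. I would first check $1/H \in \mH(J,J)$ directly, namely $1/H(x,\cdot) \in C^1(J)$, $\partial_x(1/H) = -\partial_x H / H^2$ satisfies $|\partial_x(1/H)| \le \d^{-2} g_H \in L^2(J)$ and $|1/H| \le \d^{-1}$; and likewise that $\partial_y(1/H) = -H'_y/H^2 = -H'_y \cdot (1/H)^2$ lies in $\mH(J,J)$ by the product closure applied to $H'_y$ and $(1/H)^2$. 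Writing $G/H = G \cdot (1/H)$ and $\partial_y(G/H) = G'_y \cdot (1/H) + G \cdot \partial_y(1/H)$, the product and sum results then yield $G/H, \partial_y(G/H) \in \mH(J,J)$. The main obstacle is thus concentrated in obtaining the uniform bounds (the $L^{\iy}$-bound for the product, and the uniform lower bound $|H| \ge \d$ for the quotient); once these are in hand, the remaining verifications are routine.
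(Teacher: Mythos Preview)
Your proof is correct. The paper itself omits the proof of this lemma entirely, labeling it a ``simple lemma'' and leaving the verification to the reader, so there is no argument to compare against; your write-up supplies precisely the routine details the authors had in mind.

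One minor simplification: since the lemma is stated for the closed interval $I=[-1,1]$, the uniform boundedness you extract in the first paragraph follows immediately from $\mH(I,I)\ss C(I\ts I)$ together with compactness of $I\ts I$, without the integral estimate via condition~iii). Your argument is more robust (it would survive if $I$ were open, up to restricting to compact subsets), but for the statement as written it is not needed. Everything else---the product rule verification of~iii), the bootstrapping of $\partial_y(GH)$ via the product closure applied to $(G'_y,H)$ and $(G,H'_y)$, and the reduction of the quotient to $G\cdot(1/H)$ after securing $|H|\ge\d$ on a neighborhood---is exactly right.
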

    \begin{lemma} \label{p7l3}
        Let $F,F'_y \in \mH(I,I)$, where $I = [-1,1]$, and let $f \in \cH^1(I)$ such that
        $f(I) \ss I$. Then $F(\cdot,f(\cdot)) \in \cH^1(I)$.
    \end{lemma}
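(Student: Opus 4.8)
The plan is to show that the composite $\Phi:=F(\cdot,f(\cdot))$ is absolutely continuous on $I$ with derivative in $L^2(I)$, which is equivalent to $\Phi\in\cH^1(I)$. Since $F\in\mH(I,I)\ss C(I\ts I)$ and $f\in\cH^1(I)\ss C(I)$, the function $\Phi$ is continuous, hence lies in $L^2(I)$; it remains to identify its weak derivative. The natural candidate is the chain-rule expression
\begin{equation*}
    h(x)=F'_x(x,f(x))+F'_y(x,f(x))\,f'(x),
\end{equation*}
and I first check $h\in L^2(I)$. By condition iii) for $F$ we have $|F'_x(x,y)|\le p(x)$ with $p\in L^2(I)$, so $|F'_x(x,f(x))|\le p(x)$; and since $F'_y\in\mH(I,I)\ss C(I\ts I)$, the function $F'_y$ is bounded on the compact square, say by $M$, so $|F'_y(x,f(x))\,f'(x)|\le M|f'(x)|$ with $f'\in L^2(I)$. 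Hence $h\in L^2(I)\ss L^1(I)$, and the goal reduces to proving, for a fixed $a\in I$,
\begin{equation*}
    \Phi(x)-\Phi(a)=\int_a^x h(s)\,ds,\qquad x\in I.\tag{$\star$}
\end{equation*}

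The crucial preliminary step, and the place where the hypothesis $F'_y\in\mH(I,I)$ (not merely $F\in\mH$) is used, is a Lipschitz bound for $F'_x$ in the $y$-variable. For fixed $y_1,y_2$, writing $F(x,y_1)-F(x,y_2)=\int_{y_2}^{y_1}F'_y(x,u)\,du$ (valid since $F(x,\cdot)\in C^1$), inserting the absolute continuity $F'_y(x,u)=F'_y(a,u)+\int_a^x\partial_x F'_y(t,u)\,dt$ (valid since $F'_y(\cdot,u)\in\cH^1(I)$), and applying Fubini (legitimate because $|\partial_x F'_y(t,u)|\le r(t)$ with $r\in L^2(I)\ss L^1(I)$), I obtain, after comparison with the absolute continuity of each $F(\cdot,y_i)$, that
\begin{equation*}
    F'_x(x,y_1)-F'_x(x,y_2)=\int_{y_2}^{y_1}\partial_x F'_y(x,u)\,du\qquad\text{for a.e. }x,
\end{equation*}
whence $|F'_x(x,y_1)-F'_x(x,y_2)|\le r(x)\,|y_1-y_2|$. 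Letting $y_1,y_2$ range over a fixed countable dense subset of $I$ and discarding a single null set, I may select a representative of $F'_x$ for which, for a.e.\ $x$, the map $y\mapsto F'_x(x,y)$ is Lipschitz with constant $r(x)$; this representative is Carath\'eodory, hence jointly measurable, so $s\mapsto F'_x(s,f(s))$ is measurable.

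To prove $(\star)$ I approximate. Extending $f$ to $\R$ with values kept in $I$ (e.g.\ by its constant boundary values) and mollifying, I produce $f_n\in C^\infty$ with $f_n(I)\ss I$, $f_n\to f$ uniformly and $f_n'\to f'$ in $L^2(I)$ (and a.e.\ along a subsequence). For smooth $f_n$ I establish
\begin{equation*}
    F(x,f_n(x))-F(a,f_n(a))=\int_a^x\bigl(F'_x(s,f_n(s))+F'_y(s,f_n(s))\,f_n'(s)\bigr)\,ds\tag{$\star_n$}
\end{equation*}
by a telescoping argument over a partition $a=s_0<\dots<s_m=x$: splitting each increment as $[F(s_k,f_n(s_k))-F(s_{k-1},f_n(s_k))]+[F(s_{k-1},f_n(s_k))-F(s_{k-1},f_n(s_{k-1}))]$, the first bracket equals $\int_{s_{k-1}}^{s_k}F'_x(s,f_n(s_k))\,ds$ and the second equals $\int_{s_{k-1}}^{s_k}F'_y(s_{k-1},f_n(\sigma))\,f_n'(\sigma)\,d\sigma$ after the change of variables $u=f_n(\sigma)$. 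As the mesh tends to $0$ the $F'_y$-sum converges to $\int_a^x F'_y(s,f_n(s))\,f_n'(s)\,ds$ by uniform continuity of $F'_y$, while the $F'_x$-sum converges to $\int_a^x F'_x(s,f_n(s))\,ds$ precisely because the Lipschitz bound gives $|F'_x(s,f_n(s_k))-F'_x(s,f_n(s))|\le r(s)\,\omega_n(\mathrm{mesh})$ with $r\in L^1(I)$ and $\omega_n$ the modulus of continuity of $f_n$. Passing $n\to\infty$ in $(\star_n)$ then gives $(\star)$: the left side tends to $\Phi(x)-\Phi(a)$ by continuity of $F$; $\int_a^x F'_x(s,f_n(s))\,ds\to\int_a^x F'_x(s,f(s))\,ds$ by dominated convergence (the Lipschitz bound yields pointwise convergence and $|F'_x(s,f_n(s))|\le p(s)$ gives domination); and $\int_a^x F'_y(s,f_n(s))\,f_n'(s)\,ds\to\int_a^x F'_y(s,f(s))\,f'(s)\,ds$ using boundedness of $F'_y$ together with $f_n'\to f'$ in $L^2$. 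Thus $\Phi$ is absolutely continuous with $\Phi'=h\in L^2(I)$ a.e., i.e.\ $\Phi\in\cH^1(I)$.

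The main obstacle is that $F$ is only Sobolev, not $C^1$, in $x$, so the non-smooth partial $F'_x(\cdot,y)$ must be composed with the moving argument $y=f(s)$; a naive chain rule is unjustified because $F'_x$ need not even be continuous in $y$. The hypothesis $F'_y\in\mH(I,I)$ is exactly what repairs this, upgrading $F'_x$ to a function that is Lipschitz in $y$ with an $L^2(I)$ Lipschitz constant, which is what makes both the Riemann-sum limit in $(\star_n)$ and the dominated-convergence passage $n\to\infty$ succeed. The remaining points—joint measurability of the chosen representative, the Fubini application, and the boundary behaviour of the mollification—are routine.
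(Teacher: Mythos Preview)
The paper does not supply a proof of this lemma: it is listed among three ``simple lemmas'' in the Appendix and stated without argument, so there is nothing to compare your approach against line by line.

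Your proof is correct. The structure---identify the chain-rule candidate $h$, show $h\in L^2$, then verify the absolute-continuity identity $(\star)$ by approximation---is the natural one. The key insight, that the hypothesis $F'_y\in\mH(I,I)$ upgrades $F'_x$ to a function Lipschitz in $y$ with an $L^2$ Lipschitz constant $r$, is exactly right and is what makes the composition $s\mapsto F'_x(s,f(s))$ well-behaved despite $F'_x$ being defined only almost everywhere in its first slot. Your derivation of that bound via Fubini and comparison of two representations of $F(x,y_1)-F(x,y_2)$ is clean. The telescoping argument for $(\star_n)$ and the dominated-convergence passage to $(\star)$ are standard and handled carefully. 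One very minor remark: the change of variables $u=f_n(\sigma)$ in the second bracket is more directly justified as the ordinary chain rule for the $C^1$ composite $\sigma\mapsto F(s_{k-1},f_n(\sigma))$, avoiding any monotonicity issue; you effectively do this anyway. Overall the argument is more thorough than what the authors evidently had in mind when they labelled the lemma ``simple''.
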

    Now we give the specific implicit function theorem.
    \begin{theorem}[Theorem 5.5 in \cite{MokKor}] \label{p7t2}
        Let $F, F'_y \in \mH(I,I)$, where $I = [-1,1]$, and let
        \begin{enumerate}[i)]
            \item $F(0,0) = 0$,
            \item $F'_{y}(0,0) \neq 0$.
        \end{enumerate}
        Then there exist an open neighborhood $U_1 \times U_2 \ss I \times I$ of $(0, 0)$ and a
        unique $f \in \cH^1(U_1)$ such that $f(U_1) \ss U_2$ and for any point
        $(x,y) \in U_1 \times U_2$
        $$
            F(x,y) = 0 \text{ if and only if $y = f(x)$}.
        $$
    \end{theorem}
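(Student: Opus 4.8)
The plan is to prove this as a quantitative, Sobolev-regularity version of the classical implicit function theorem, exploiting \emph{separately} the $C^1$-dependence on $y$ and the $\cH^1$-dependence on $x$ that are built into the class $\mH(I,I)$. Throughout I use the inclusion $\mH(I,I)\ss C(I\ts I)$ recorded in Section~\ref{p2}. First I would localize: since $F'_y\in\mH(I,I)\ss C(I\ts I)$ and $F'_y(0,0)\neq0$, assume without loss of generality $F'_y(0,0)>0$; by joint continuity there is an open box $U_1\ts U_2\ss I\ts I$ around $(0,0)$ and a constant $c>0$ with $F'_y(x,y)\ge c$ on $U_1\ts U_2$. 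In particular, for each fixed $x\in U_1$ the map $y\mapsto F(x,y)$ is $C^1$ and strictly increasing on $U_2$.

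Next I would produce a continuous $f$. Using $F(0,0)=0$ and strict monotonicity, pick $y_-<0<y_+$ in $U_2$ with $F(0,y_+)>0>F(0,y_-)$; since $F(\cdot,y_\pm)\in\cH^1(U_1)\ss C(U_1)$, after shrinking $U_1$ we still have $F(x,y_+)>0>F(x,y_-)$ for all $x\in U_1$. The intermediate value theorem together with strict monotonicity then yields, for each $x\in U_1$, a unique $f(x)\in(y_-,y_+)\ss U_2$ with $F(x,f(x))=0$, and simultaneously gives the equivalence $F(x,y)=0\Leftrightarrow y=f(x)$ on $U_1\ts U_2$; uniqueness of $f$ as a function is immediate from monotonicity.

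The heart of the argument is to upgrade $f$ to an element of $\cH^1(U_1)$. For $x_1,x_2\in U_1$, using $F(x_1,f(x_1))=F(x_2,f(x_2))=0$, I would split
$$
0=\bigl[F(x_2,f(x_2))-F(x_2,f(x_1))\bigr]+\bigl[F(x_2,f(x_1))-F(x_1,f(x_1))\bigr].
$$
The first bracket equals $F'_y(x_2,\eta)\bigl(f(x_2)-f(x_1)\bigr)$ for some $\eta$ between $f(x_1)$ and $f(x_2)$, by the mean value theorem in $y$ (only $F(x_2,\cdot)\in C^1$ is used); the second equals $\int_{x_1}^{x_2}F'_x(s,f(x_1))\dd s$, since $F(\cdot,f(x_1))\in\cH^1(U_1)$ is absolutely continuous. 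Solving for the increment of $f$, bounding $F'_y\ge c$ below and $|F'_x(s,\cdot)|\le g(s)$ above with $g\in L^2(I)$, gives the modulus estimate
$$
\bigl|f(x_2)-f(x_1)\bigr|\le\frac1c\Bigl|\int_{x_1}^{x_2}g(s)\dd s\Bigr|.
$$
Because $g\in L^2(I)\ss L^1(I)$, the right-hand side is the increment of an absolutely continuous function, so this inequality shows at once that $f$ is continuous and absolutely continuous on $U_1$. Dividing by $x_2-x_1$ and letting $x_2\to x_1$ at Lebesgue points of $g$ yields $|f'(x)|\le g(x)/c$ almost everywhere, hence $f'\in L^2(U_1)$ and $f\in\cH^1(U_1)$; if the explicit formula is wanted, passing to the limit in the displayed identity gives $f'(x)=-F'_x(x,f(x))/F'_y(x,f(x))$ a.e.

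I expect the main obstacle to be precisely this final regularity step: because $F$ is only $\cH^1$, not $C^1$, in the variable $x$, the classical implicit function theorem does not apply and one cannot differentiate $F(x,f(x))$ directly. The decisive idea is to treat the two variables asymmetrically — the mean value theorem in the $C^1$ variable $y$ and the fundamental theorem of calculus in the absolutely continuous variable $x$ — so that the increment of $f$ is controlled by $\int g$ alone. This single estimate delivers both the absolute continuity of $f$ and the $L^2$-bound on $f'$, with the full hypothesis $F'_y\in\mH(I,I)$ entering only through the joint continuity needed for the uniform lower bound $F'_y\ge c$.
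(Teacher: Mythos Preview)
The paper does not prove this theorem: it is quoted in the Appendix as Theorem~5.5 of \cite{MokKor} and no argument is given, so there is nothing in the paper to compare your proposal against. Your proof is correct for the stated conclusion and follows the natural low-regularity route --- localize via the joint continuity of $F'_y$ to secure $F'_y\ge c>0$, build $f$ from strict monotonicity in $y$ plus the intermediate value theorem, and then obtain the $\cH^1$-regularity from the asymmetric increment identity (mean value theorem in the $C^1$ variable $y$, fundamental theorem of calculus in the $\cH^1$ variable $x$) together with the uniform domination $|F'_x|\le g\in L^2$.

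One small caveat, which does not affect the theorem itself: the closing remark about the explicit formula $f'(x)=-F'_x(x,f(x))/F'_y(x,f(x))$ is not quite immediate from your displayed identity, because the integrand $F'_x(s,f(x_1))$ has its second argument frozen at $f(x_1)$ rather than at $f(s)$, so Lebesgue differentiation alone does not give the limit at $s=x_1$ uniformly in the parameter $x_1$. Since the hypotheses do not include any continuity of $F'_x$ in $y$, justifying the formula needs an extra step; but you do not need it --- absolute continuity of $f$ together with $|f'|\le g/c\in L^2$ already yields $f\in\cH^1(U_1)$, which is all the statement claims.
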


    In order to apply the implicit function theorem, we need specific properties of solutions
    of the Dirac equation. Recall that $\p(x,\l,t)$ is the fundamental solution of the shifted
    Dirac equation (\ref{p2e14}) and $\p(x,\l) = \p(x,\l,0)$ is the fundamental solution of the
    Dirac equation~(\ref{p2e1}).

    We say that a matrix-valued function is from $\cH^1(I)$ or $\mH(I_1,I_2)$ if each of its
    components belongs to $\cH^1(I)$ or $\mH(I_1,I_2)$. Recall that the dot denotes the derivative
    with respect to $t$, i.e. $\dot{u} = du/dt$. Let $[A,B] = AB - BA$ be the commutator of two
    matrices.

    \begin{lemma} \label{p6l1}
        Let $I_1,I_2 \ss \R$ be open bounded intervals and let $F:I_1 \ts I_2 \to \R$ be defined
        by $F(t,\l) = \p(1,\l, t)$, $(t,\l) \in I_1 \ts I_2$. Then we have
        $F,F'_{\l} \in \mH(I_1,I_2)$. Moreover, for almost all $t \in \R$ and for each
        $\l \in \C$ we get
        \begin{align}
            \dot{\p}(1, \l, t) &= \left[ J(V(t)-\l), \p(1, \l, t) \right], \label{p7e5} \\
            \partial_{t} \p'_{\l}(1, \l, t) &= \left[ J(V(t)-\l), \p'_{\l}(1, \l, t) \right] -
            \left[ J, \p(1, \l, t) \right]. \label{p6e11}
        \end{align}
    \end{lemma}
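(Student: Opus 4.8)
The plan is to reduce everything to the explicit product representation $\p(1,\l,t)=\p(t,\l)\p(1,\l)\p(t,\l)^{-1}$ from \er{p2e15} and to differentiate it, first in $t$ and then in $\l$. The key input is that the fundamental solution $\p(\cdot,\l)$ of the unshifted equation \er{p2e1} is absolutely continuous on $\R$ (it lies in $\cH^1([0,1])$ by Theorem \ref{p2t1} and extends to $\R$ via \er{p2e2}) and satisfies the first-order system $\p'(x,\l)=J(V(x)-\l)\p(x,\l)$ for almost all $x$, while $\det\p\equiv1$ makes $\p^{-1}$ everywhere defined. Writing $A(t)=\p(t,\l)$, I have $\dot A=J(V(t)-\l)A$ and hence $\pa_t(A^{-1})=-A^{-1}\dot A A^{-1}=-A^{-1}J(V(t)-\l)$ for almost all $t$. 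Differentiating $\p(1,\l,t)=A\,\p(1,\l)\,A^{-1}$ by the product rule, the constant middle factor $\p(1,\l)$ drops out and I obtain
$$
\dot\p(1,\l,t)=J(V(t)-\l)\p(1,\l,t)-\p(1,\l,t)J(V(t)-\l)=[J(V(t)-\l),\p(1,\l,t)],
$$
which is \er{p7e5}.

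For \er{p6e11} I would avoid a delicate interchange of mixed partials and instead integrate \er{p7e5} from a base point $t_0$, writing $\p(1,\l,t)=\p(1,\l,t_0)+\int_{t_0}^t[J(V(s)-\l),\p(1,\l,s)]\,ds$. Since $\l\mapsto\p(1,\l,s)$ is entire and uniformly bounded on compact sets (see below), the difference quotients in $\l$ are dominated by an integrable function of $s$, so I may differentiate under the integral sign. Using $\pa_\l[J(V(s)-\l)]=-J$ and $\pa_\l\p=\p'_\l$ this yields
$$
\p'_\l(1,\l,t)=\p'_\l(1,\l,t_0)+\int_{t_0}^t\Big([J(V(s)-\l),\p'_\l(1,\l,s)]-[J,\p(1,\l,s)]\Big)\,ds,
$$
so that $\p'_\l(1,\l,\cdot)$ is absolutely continuous and its a.e. $t$-derivative is exactly \er{p6e11}.

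It remains to verify the three defining conditions of $\mH(I_1,I_2)$ for $F(t,\l)=\p(1,\l,t)$ and for $F'_\l=\p'_\l(1,\l,t)$ (componentwise). Condition (i), $C^1$-dependence in $\l$, is immediate since $\p(1,\cdot,t)$ is entire (it is the fundamental solution of \er{p2e14} for the potential $V(\cdot+t)\in\cP$, to which Theorem \ref{p2t1} applies), and likewise $\p'_\l(1,\cdot,t)$ is entire. Conditions (ii) and (iii) follow from \er{p7e5} and \er{p6e11}: their right-hand sides are, for each fixed $\l$, in $L^2(I_1)$ as functions of $t$, giving $F(\cdot,\l),F'_\l(\cdot,\l)\in\cH^1(I_1)$; and the pointwise bounds $|\dot F(t,\l)|,|\pa_t F'_\l(t,\l)|\le g(t)$ with $g\in L^2(I_1)$ follow from $V\in L^2$ together with uniform bounds on $\p$ and $\p'_\l$ over $I_1\ts I_2$. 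The latter I obtain from $|\p(1,\l,t)|\le e^{\|V\|_{\cP}+|\Im\l|}$ (Theorem \ref{p2t1}, valid for the shifted potential since $\|V(\cdot+t)\|_{\cP}=\|V\|_{\cP}$), and for $\p'_\l$ from a Cauchy estimate on a fixed complex disc around each real $\l_0\in I_2$, which bounds $|\p'_\l(1,\l_0,t)|$ uniformly in $(t,\l_0)\in I_1\ts I_2$.

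I expect the main obstacle to be the regularity bookkeeping forced by $V$ being only $L^2$: justifying that \er{p7e5} and \er{p6e11} hold for almost every $t$ rather than everywhere, producing the uniform-in-$\l$ dominating function $g\in L^2(I_1)$ demanded by condition (iii), and legitimizing the differentiation under the integral sign. Once the uniform boundedness of $\p$ and $\p'_\l$ on $I_1\ts I_2$ is secured via the Cauchy estimate, these points are under control and the algebraic identities \er{p7e5}--\er{p6e11} reduce to routine matrix calculus.
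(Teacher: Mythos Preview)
The paper states Lemma \ref{p6l1} in the Appendix without proof (it is followed only by the explicit componentwise formula \er{p6e1}), so there is no argument to compare against. Your approach is correct and is the natural one: the identity \er{p7e5} drops out immediately from differentiating the similarity relation \er{p2e15} using $\p'(t,\l)=J(V(t)-\l)\p(t,\l)$ and $\det\p\equiv1$, and your derivation of \er{p6e11} by integrating \er{p7e5} in $t$ and differentiating in $\l$ under the integral sign is a clean way to avoid interchanging mixed partials at the low regularity of $V$. The verification of the three $\mH(I_1,I_2)$ conditions via the uniform bound \er{p2e8} (applied to the shifted potential) together with a Cauchy estimate for $\p'_\l$ is exactly what is needed; the resulting dominating function $g(t)=C(|V(t)|+1)\in L^2(I_1)$ handles condition (iii) for both $F$ and $F'_\l$.
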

    Substituting the potential $V$ in (\ref{p7e5}), we obtain the following explicit formula
    \[ \label{p6e1}
        \dot{\p}(1,\l, t) =
        \ma
            \dot{\vt}_1 & \dot{\vp}_1 \\
            \dot{\vt}_2 & \dot{\vp}_2 \\
        \am =
        \ma
            \vp_1(q_1 - \l) - \vt_2(q_1+\l) & 2\vp_1q_2 - 2a\left(q_1 + \l\right) \\
            - 2\vt_2q_2 + 2a \left( q_1 - \l \right) & -\vp_1(q_1 - \l) + \vt_2(q_1 + \l)
        \am,
    \]
    where $a = a(\l,t)$, $q_i = q_i(t)$, $\vt_i = \vt_i(1,\l,t)$, and $\vp_i = \vp_i(1,\l,t)$,
    $i =1,2$.

    \footnotesize
    \no {\bf Acknowledgments.} Our study was supported by the RFBR grant No. 19-01-00094.
    \medskip

    \footnotesize\footnotesize

\end{document}